\documentclass[aps,pra,reprint,superscriptaddress,amsmath,amssymb,nofootinbib,longbibliography,floatfix,nobalancelastpage]{revtex4-2}
\usepackage{amssymb}
\usepackage{soul}
\usepackage[T1]{fontenc}
\usepackage{amsthm,mathtools,mathrsfs,booktabs,microtype,xcolor,xurl,needspace}
\usepackage{tikz}
\usepackage[most]{tcolorbox}
\usepackage{algpseudocode}
\algrenewcommand\algorithmicrequire{\textbf{Input:}}
\algrenewcommand\algorithmicensure{\textbf{Output:}}
\algrenewcommand\alglinenumber[1]{{\scriptsize\color{black!50}#1}}
\algrenewcommand{\algorithmiccomment}[1]{\hfill{\small\color{black!55}$\triangleright$ #1}}
\usetikzlibrary{arrows.meta,calc,positioning,shapes.geometric,patterns}
\tikzset{
  pfnode/.style={draw=black!55,rounded corners=1pt,text=black,
    align=center,font=\footnotesize,inner xsep=4pt,inner ysep=3pt},
  pfgoal/.style={pfnode,draw=black!80,line width=.6pt},
  pfedge/.style={-{Latex},draw=black!75,line width=.45pt},
  pfline/.style={draw=black!75,line width=.45pt},
  pflab/.style={font=\scriptsize,text=black,align=center,inner sep=1.5pt}}
\usepackage[colorlinks=true,linkcolor=black,citecolor=blue!45!black,urlcolor=blue!45!black,hypertexnames=false]{hyperref}
\usepackage{bookmark}
\hypersetup{pdfpagemode=UseOutlines,bookmarksopen=true}
\bookmarksetup{open,openlevel=2,depth=2}
\makeatletter
\def\toclevel@section{0}
\def\toclevel@subsection{1}
\AddToHook{cmd/appendix/before}{\def\p@subsection{\thesection.}}
\makeatother
\ifdefined\XeTeXversion
\AddToHook{shipout/firstpage}{}
\fi
\theoremstyle{plain}
\newtheorem{theorem}{Theorem}
\newtheorem{proposition}[theorem]{Proposition}

\newtheorem{lemma}[theorem]{Lemma}
\newtheorem{conjecture}{Conjecture}
\newtheorem*{conjecture*}{Conjecture}
\theoremstyle{definition}

\newtheorem{example}[theorem]{Example}
\newtheorem{remark}[theorem]{Remark}
\theoremstyle{plain}
\makeatletter
\protected\def\rep@title{}
\newtheorem*{rep@theorem}{\rep@title}

\makeatother
\AddToHook{cmd/subsection/before}{\Needspace{10\baselineskip}}
\makeatletter\AddToHook{env/itemize/before}{\if@inlabel\leavevmode\fi}\makeatother
\AtBeginEnvironment{theorem}{\Needspace{8\baselineskip}}
\AtBeginEnvironment{lemma}{\Needspace{5\baselineskip}}
\makeatletter
\newenvironment{inlinefigure}{\par\medskip\noindent\begin{minipage}{\columnwidth}\def\@captype{figure}\centering}{\end{minipage}\par\medskip}
\makeatother
\makeatletter
\AddToHook{env/finitefact/before}{\if@inlabel\leavevmode\fi}
\newcounter{certfact}
\renewcommand{\thecertfact}{C\arabic{certfact}}
\NewTColorBox{finitefact}{m m m}{enhanced,breakable,width=\dimexpr\hsize-\@totalleftmargin\relax,
  code={\refstepcounter{certfact}},before={\par\medskip\label{#1}},after={\par\medskip},
  colback=white,colframe=black!45,boxrule=.4pt,arc=1pt,
  left=2mm,right=2mm,top=1mm,bottom=1.2mm,
  colbacktitle=black!6,coltitle=black,fonttitle=\small,
  colbacklower=black!2,segmentation style={black!35,dashed},
  title={\textbf{(\thecertfact)}\enspace\textbf{#2}}}
\makeatother
\newcommand{\certrepo}{\url{https://github.com/numberingeometry/clifford-hierarchy-plane-certificate}}
\newcommand{\certlink}[1]{\nolinkurl{certificates/#1} in \certrepo}
\newsavebox{\fittablebox}
\newcommand{\FitTable}[1]{\sbox{\fittablebox}{#1}\ifdim\wd\fittablebox>\linewidth\resizebox{\linewidth}{!}{\usebox{\fittablebox}}\else\usebox{\fittablebox}\fi}
\newcommand{\SC}{\mathsf{SC}}
\newcommand{\GSC}{\mathsf{GSC}}
\newcommand{\C}{\mathcal{C}}
\newcommand{\F}{\mathbb F}
\newcommand{\Z}{\mathbb Z}
\newcommand{\Q}{\mathbb Q}
\newcommand{\CC}{\mathbb C}
\newcommand{\T}{\mathbb T}
\newcommand{\A}{\mathcal A}
\newcommand{\Pau}{\mathcal P}

\newcommand{\ket}[1]{|#1\rangle}
\newcommand{\bra}[1]{\langle#1|}

\newcommand{\CCZ}{\mathrm{CCZ}}
\newcommand{\CZ}{\mathrm{CZ}}
\newcommand{\CNOT}{\mathrm{CNOT}}
\newcommand{\step}[2]{\textbf{#1}\ \emph{#2}\par\nopagebreak}
\newcommand{\tool}[1]{\hyperref[tool:#1]{(#1)}}
\DeclareMathOperator{\diag}{diag}
\DeclareMathOperator{\rank}{rank}
\DeclareMathOperator{\Span}{span}
\DeclareMathOperator{\Sp}{Sp}

\DeclareMathOperator{\im}{im}

\DeclareMathOperator{\tr}{tr}

\allowdisplaybreaks[1]
\hypersetup{pdftitle={When is the Clifford hierarchy generalized semi-Clifford?},pdfauthor={Rongbiao Thomas Wang and Bobby Zixuan Zhang},bookmarksdepth=2}
\begin{document}
\bookmarksetup{depth=-1}
\raggedbottom
\title{When is the Clifford hierarchy generalized semi-Clifford?}
\author{Rongbiao Thomas Wang}
\email{rbwang@uchicago.edu}
\affiliation{Computational and Applied Mathematics, University of Chicago, Chicago, Illinois 60637, USA}
\author{Bobby Zixuan Zhang}
\email{bobby.zhang.math@duke.edu}
\affiliation{Department of Mathematics, Duke University, Durham, North Carolina 27708, USA}
\begin{abstract}
We study semi-Clifford and generalized semi-Clifford structures in the Clifford hierarchy as a function of hierarchy level $k$ and qubit number $n$. We show that $\mathcal{C}_k(n)$ is semi-Clifford if and only if $k\le2$, $n\le2$, or $k=3$ and $n\le6$. We also establish that $\mathcal{C}_k(n)$ is generalized semi-Clifford when $k\le4$ or $n\le3$ and construct counterexamples when $k,n \geq 5$. For $n=4$ and $k\ge5$, we show that every gate defined over $\mathbb{Q}(i)$ up to a scalar is generalized semi-Clifford, reduce the general case to a single level, and propose a strategy to prove our conjecture that $\mathcal{C}_k(n)$ is generalized semi-Clifford if and only if $k\le4$ or $n\le4$. Lastly, we go beyond this structure by introducing $n_c$-generalized semi-Clifford gates, which we show are exactly the gates of the form $\sum_x|\pi(x)\rangle\langle x|\otimes V_x$ up to Clifford gates on both sides, where $\pi$ permutes the computational basis states of $n_c$ control qubits. We conjecture that for $k\ge3$ every gate in $\mathcal{C}_k(n)$ has this form for some $n_c\ge1$ with every $V_x\in\mathcal{C}_{k-3}(n-n_c)$, refining the conjecture of Zeng, Chen, and Chuang.
\end{abstract}
\maketitle

\section{Introduction}\label{sec:intro}

Let $\mathcal{P}(n)$ denote the Pauli group with $n$ qubits. With $\mathcal{C}_1(n)=\mathcal{P}(n)$, the Clifford hierarchy is recursively defined as
\[
\mathcal{C}_{k+1}(n)
=
\left\{
U : U\mathcal{P}(n) U^\dagger\subseteq \mathcal{C}_k(n)
\right\}, \quad \mathcal{C}(n) = \bigcup_{k=1}^\infty \mathcal{C}_k(n).
\]
The hierarchy was introduced by Gottesman and Chuang~\cite{GC1999} in the context of fault-tolerant quantum computation, where gates in the hierarchy can be implemented using teleportation.

However, the structure of the Clifford hierarchy remains obscure. In contrast to the Pauli group $\mathcal{C}_1(n)$ and the Clifford group $\mathcal{C}_2(n)$, $\mathcal{C}_k(n)$ is not a group for $k\geq 3$. To better understand their structure, Zeng, Chen, and Chuang~\cite{sc-ZCC2008} introduced the notions of \emph{semi-Clifford (SC)} and \emph{generalized semi-Clifford (GSC)} operators: A gate $U$ is semi-Clifford if conjugation by $U$ maps a maximal abelian subgroup of $\mathcal{P}(n)$ to a maximal abelian subgroup of $\mathcal{P}(n)$; $U$ is generalized semi-Clifford if it maps the linear span of some maximal abelian subgroup of $\mathcal{P}(n)$, under conjugation, to the linear span of another maximal abelian subgroup of $\mathcal{P}(n)$. Equivalently~\cite{sc-ZCC2008,sc-BS2010}, $U$ is semi-Clifford if $U= C_1 DC_2$ and generalized semi-Clifford if $U = C_1 \Pi D C_2$, where $C_1, C_2 \in \mathcal{C}_2(n)$, $D$ is diagonal, and $\Pi$ is a permutation in the computational basis. We denote the set of semi-Clifford gates by $\mathsf{SC}(n)$ and the set of generalized semi-Clifford gates by $\mathsf{GSC}(n)$.

In this work, we investigate when $\mathcal{C}_k(n)$ is semi-Clifford and generalized semi-Clifford as $k$ and $n$ vary. More precisely, we determine the complete picture of when $\mathcal{C}_k(n)$ is semi-Clifford (Theorem~\ref{thm:sc-c3-small}) and an almost complete picture of when it is generalized semi-Clifford (Theorems~\ref{thm:main-k4}, \ref{thm:main-small-width}, and~\ref{thm:gsc-counterexamples}), with the exception of $n=4$. In the latter case, we show that every gate defined over $\Q(i)$ up to a scalar is generalized semi-Clifford (Theorem~\ref{thm:n4-gaussian}), reduce the general case to a single level (Proposition~\ref{prop:n4-finite-cutoffs}), and propose a strategy to prove that $\C_k(4)\subseteq\GSC(4)$ for every $k$ (Conjecture~\ref{conj:gsc-boundary}). Going beyond the generalized semi-Clifford structure, we introduce \emph{$n_c$-generalized semi-Clifford} gates, which up to Clifford gates on both sides permute the computational basis states of $n_c$ control qubits and apply a gate $V_x$ to the remaining qubits, and we conjecture that every gate in the Clifford hierarchy has this form for some $n_c\ge1$ with each $V_x$ three levels lower (Conjecture~\ref{conj:fibers}). The results are visualized in Fig.~\ref{fig:combined}, where this article's contributions are circled. We summarize the proven results in the following theorem.

\begin{figure}[t]
\centering
\begin{tikzpicture}[x=.74cm,y=.6cm,font=\footnotesize]
\foreach \k in {1,...,7}{\foreach \n in {1,...,7}{
\pgfmathtruncatemacro{\issc}{(\k<=2)||(\n<=2)||((\k==3)&&(\n<=6))}
\pgfmathtruncatemacro{\isgsc}{(\k<=4)||(\n<=3)}
\ifnum\issc=1
\fill[black!20](\k-.5,\n-.5)rectangle(\k+.5,\n+.5);
\node at(\k,\n){$+$};
\else\ifnum\isgsc=1
\fill[black!7](\k-.5,\n-.5)rectangle(\k+.5,\n+.5);
\node at(\k,\n){$+$};
\else\ifnum\n=4
\fill[black!3](\k-.5,\n-.5)rectangle(\k+.5,\n+.5);
\begin{scope}
\clip(\k-.5,\n-.5)rectangle(\k+.5,\n+.5);
\foreach \t in {-.75,-.5,-.25,0,.25,.5}{\draw[black!55,line width=.35pt](\k+\t,\n-.5)--(\k+\t+.45,\n+.5);}
\end{scope}
\node at(\k,\n){$\star$};
\else\node at(\k,\n){$-$};\fi\fi\fi}}
\foreach \j in {0,...,7}{
\draw[black!20,very thin](.5+\j,.5)--(.5+\j,7.5);
\draw[black!20,very thin](.5,.5+\j)--(7.5,.5+\j);}
\begin{scope}
\clip(.5,.5)rectangle(8.2,8.2);
\foreach \k in {1,...,8}{\foreach \n in {1,...,8}{
\pgfmathtruncatemacro{\isedge}{(\k==8)||(\n==8)}
\ifnum\isedge=1
\pgfmathtruncatemacro{\issc}{(\k<=2)||(\n<=2)||((\k==3)&&(\n<=6))}
\pgfmathtruncatemacro{\isgsc}{(\k<=4)||(\n<=3)}
\ifnum\issc=1
\fill[black!11](\k-.5,\n-.5)rectangle(\k+.5,\n+.5);
\node[black!40]at(\k,\n){$+$};
\else\ifnum\isgsc=1
\fill[black!4](\k-.5,\n-.5)rectangle(\k+.5,\n+.5);
\node[black!40]at(\k,\n){$+$};
\else\ifnum\n=4
\fill[black!2](\k-.5,\n-.5)rectangle(\k+.5,\n+.5);
\begin{scope}
\clip(\k-.5,\n-.5)rectangle(\k+.5,\n+.5);
\foreach \t in {-.75,-.5,-.25,0,.25,.5}{\draw[black!27,line width=.3pt](\k+\t,\n-.5)--(\k+\t+.45,\n+.5);}
\end{scope}
\node[black!40]at(\k,\n){$\star$};
\else\node[black!40]at(\k,\n){$-$};\fi\fi\fi\fi}}
\foreach \j in {0,...,7}{
\draw[black!10,very thin](.5+\j,7.5)--(.5+\j,8.5);
\draw[black!10,very thin](7.5,.5+\j)--(8.5,.5+\j);}
\draw[line width=.65pt,black!40](2.5,7.5)--(2.5,8.5);
\draw[line width=.65pt,black!40](7.5,2.5)--(8.5,2.5);
\draw[line width=.65pt,black!40](4.5,7.5)--(4.5,8.5);
\draw[line width=.65pt,black!40](7.5,3.5)--(8.5,3.5);
\draw[line width=.7pt,black!40](8,3)ellipse(.19 and .24);
\draw[line width=.7pt,black!40](8,4)ellipse(.19 and .24);
\draw[line width=.7pt,black!40](4,8)ellipse(.19 and .24);
\foreach \j in {5,6,7}{\draw[line width=.7pt,black!40](8,\j)ellipse(.19 and .24);\draw[line width=.7pt,black!40](\j,8)ellipse(.19 and .24);}
\draw[line width=.7pt,black!40](8,8)ellipse(.19 and .24);
\end{scope}
\draw[-{Stealth[length=1.5mm]}](.2,.2)--(8.5,.2)node[right]{$k$};
\draw[-{Stealth[length=1.5mm]}](.2,.2)--(.2,8.6)node[above]{$n$};
\foreach \j in {1,...,6}{\node[below]at(\j,.18){$\j$};\node[left]at(.18,\j){$\j$};}
\node[below]at(7,.18){$\ge7$};\node[left]at(.18,7){$\ge7$};
\draw[line width=.65pt](2.5,7.5)--(2.5,6.5)--(3.5,6.5)--(3.5,2.5)--(7.5,2.5);
\draw[line width=.65pt](4.5,7.5)--(4.5,3.5)--(7.5,3.5);
\foreach \n in {5,6}{\draw[line width=.7pt](3,\n)ellipse(.19 and .24);}
\foreach \n in {3,...,7}{\draw[line width=.7pt](4,\n)ellipse(.19 and .24);}
\foreach \k in {5,6,7}{\draw[line width=.7pt](\k,3)ellipse(.19 and .24);}
\foreach \k in {5,6,7}{\draw[line width=.8pt](\k,4)ellipse(.19 and .24);}
\foreach \k in {5,6,7}{\foreach \n in {5,6,7}{\draw[line width=.7pt](\k,\n)ellipse(.19 and .24);}}
\draw[{Stealth[length=1.5mm]}-](8.3,1.5)--(8.6,1.5)node[right,align=left,font=\scriptsize,inner xsep=2pt]{semi-Clifford};
\draw[{Stealth[length=1.5mm]}-](8.3,3)--(8.6,3)node[right,align=left,font=\scriptsize,inner xsep=2pt]{generalized\\semi-Clifford};
\draw[{Stealth[length=1.5mm]}-](8.3,6.5)--(8.6,6.5)node[right,align=left,font=\scriptsize,inner xsep=2pt]{not\\generalized\\semi-Clifford};
\end{tikzpicture}
\caption{At position $(k,n)$, a $+$ sign represents that $\mathcal{C}_k(n)$ belongs to the indicated class and a $-$ sign represents that $\mathcal{C}_k(n)$ is not generalized semi-Clifford. For the cases of $n=4$ and $k \geq 5$, the $\star$ sign represents a partial answer where gates defined over $\Q(i)$ up to scalar are generalized semi-Clifford.}
\label{fig:combined}
\end{figure}

\begin{theorem}\label{thm:sc-boundary}
For positive integers $k,n$,
\begin{enumerate}
    \item $\displaystyle \mathcal C_k(n)\subseteq\SC(n) \quad\Longleftrightarrow\quad
\begin{cases}
k\leq2,\quad\text{or}\\
n\leq2,\quad\text{or}\\
k=3\ \text{and}\ n\leq6,
\end{cases}$
    \item $\displaystyle k\leq4\quad\text{or}\quad n\leq3 \quad\Longrightarrow\quad \C_k(n)\subseteq\GSC(n)$.
    \item If $U\in \C_k(4)$ and up to a scalar $U$ can be defined over $\Q(i)$, then $U\in \GSC (4)$.
    \item $\displaystyle k\geq 5\quad\text{and}\quad n\geq 5 \quad\Longrightarrow\quad \C_k(n)\not \subseteq\GSC(n)$.
\end{enumerate}
\end{theorem}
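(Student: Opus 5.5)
This theorem is a summary of results proved later in the paper, so the plan is to assemble it from those results. Each item is then checked against its source.

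For item~1, the implication $\Leftarrow$ in the cases $k\le 2$ and $n\le 2$ is classical. Clifford gates are trivially semi-Clifford, and for $n\le 2$ the semi-Clifford property of every hierarchy gate is known from Zeng--Chen--Chuang. For $k=3$, $n\le 6$, I would invoke Theorem~\ref{thm:sc-c3-small}. The strategy behind that theorem is as follows. A third-level gate $U$ determines, via $P\mapsto UPU^\dagger$, a map from $\mathcal P(n)$ into Clifford gates. The gate $U$ is semi-Clifford exactly when there is a Lagrangian subspace of $\F_2^{2n}$ whose images are again Paulis forming a maximal abelian group. One shows that the set $S$ of Paulis mapped to Paulis is a symplectic configuration large enough to contain a Lagrangian subspace whenever $n\le 6$.

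For the converse $\Rightarrow$, I need non-semi-Clifford gates in $\C_3(7)$ and in $\C_4(3)$. The first is the known seven-qubit counterexample in $\C_3(7)$ (Gottesman--Mochon type). The second is a gate in $\C_4(3)$ such as a suitably conjugated $\CCZ$ composed with diagonal gates. Padding with identities (tensoring with $I$) preserves hierarchy level and preserves non-semi-Cliffordness, since a maximal abelian subgroup must be compatible with the factorization. Raising the level is done by multiplying by a suitable diagonal gate, or by tensoring with a $T$-type gate on an extra qubit; here one verifies that non-semi-Cliffordness persists. This propagates the counterexamples to all $(k,n)$ outside the listed region.

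Item~2 is Theorems~\ref{thm:main-k4} and~\ref{thm:main-small-width} combined. Item~3 is Theorem~\ref{thm:n4-gaussian}. Item~4 is Theorem~\ref{thm:gsc-counterexamples}. For item~4 I would again build one counterexample in $\C_5(5)$ and propagate it in both $k$ and $n$. Propagation in $n$ uses the tensor padding $U\otimes I$: one argues that a generalized semi-Clifford decomposition of $U\otimes I$ restricts to one of $U$, using that the image of the span of a maximal abelian group must respect the tensor factor on which $U\otimes I$ acts trivially. Propagation in $k$ uses controlled or diagonal-phase modifications that raise the level while preserving the obstruction.

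The main obstacle is this monotonicity of the counterexamples under padding and level-raising, particularly for GSC, where the linear-span condition is less rigid than the group condition. I would first try a characterization: $U$ is generalized semi-Clifford iff some maximal abelian Pauli algebra $\A$ has $U\A U^\dagger$ equal to another such algebra. I would then show that for $U\otimes I$ the stabilizer algebra can be taken to contain the Paulis on the padded qubits, after conjugating by a Clifford fixing $U\otimes I$ up to Clifford equivalence. Given the cited theorems, the remaining assembly is bookkeeping.
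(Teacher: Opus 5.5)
Your plan of reading each item off the later theorems is what the paper does. The step you call the main obstacle is where your argument as written breaks, and it is much easier than you make it.

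\textbf{Propagation in $k$.} There is nothing to prove here. The hierarchy is nested, $\C_k(n)\subseteq\C_{k+1}(n)$, by induction starting from $\Pau(n)\subseteq\C_2(n)$. So a gate in $\C_k(n)\setminus\mathsf X(n)$ already lies in $\C_r(n)\setminus\mathsf X(n)$ for every $r\ge k$. The level-raising devices you propose are unnecessary and, as stated, unsupported. ``Multiply by a suitable diagonal gate and verify the obstruction persists'' is never verified. Tensoring with a $T$-type gate adds a qubit, so from the $\C_4(3)$ seed it can never reach $(k,3)$ with $k\ge5$. A diagonal twist that preserves the obstruction does appear in Theorem~\ref{thm:exact-cells}, but only to obtain \emph{exact} level $k$, which item~4 does not require.

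\textbf{Propagation in $n$.} Your justification that a maximal abelian subgroup ``must be compatible with the factorization'' is false as a general principle. For $U=I$ on one qubit padded by one qubit, $\langle X\otimes X,Z\otimes Z,iI\rangle$ is maximal abelian and does not split. What is true, and what your last paragraph gestures at, is that any witness can be replaced by a split one. Let $R=\Span\{z_{n+1},\dots,z_{n+m}\}$. The gate $U\otimes I$ fixes $\A_R$, so Lemma~\ref{lem:root-witness} turns any $(L,M)\in\mathscr L(U\otimes I)$ into a pair of the form $(R\oplus\bar L,R\oplus\bar M)$, and then $(\bar L,\bar M)\in\mathscr L(U)$. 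For $\SC$, note that $R\subseteq\mathsf P(U\otimes I)$ and $\mathsf P(U\otimes I)$ is a subspace, so the new input Lagrangian stays inside it and $\bar L\subseteq\mathsf P(U)$. The paper reaches the same conclusion by compressing with $E_y=I\otimes\bra{y}$ in Lemma~\ref{lem:basic-operations}(2). The last sentence of that lemma is exactly the monotonicity in $(k,n)$ you need, and the argument is identical for $\SC$ and $\GSC$, so the span condition adds no difficulty.

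\textbf{Smaller points.}
\begin{itemize}
\item Item~4 needs no propagation at all: Theorem~\ref{thm:gsc-counterexamples} already gives a gate in $\C_k(n)\setminus\GSC(n)$ for every $k,n\ge5$.
\item In item~1, Theorem~\ref{thm:sc-c3-small} only treats $n=5,6$. For $k=3$ and $n\le4$ you need Zeng--Chen--Chuang and Anderson--Connelly, or Theorem~\ref{thm:sc-six}, which covers $1\le n\le6$.
\item Your description of that $k=3$ proof restates the claim (``$S$ is large enough to contain a Lagrangian'') rather than sketching it. In the paper the threshold $6$ comes from reducing to $\sum_x\zeta(x)\ket{x}\bra{x}\otimes P_{v(x)}$ and from the two alternating-form dimension counts of Lemma~\ref{lem:sc-forms}.
\item Your $\C_4(3)$ seed should be the explicit Zeng--Chen--Chuang gate, not an unspecified conjugated $\CCZ$ times diagonal gates. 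Alternatively, $H_*$ from Proposition~\ref{gauss-lem:exceptional} lies in $\C_4(3)$ and has $\mathsf P(H_*)=\Span\{z_x,z_y\}$, which is too small to contain a Lagrangian, so $H_*$ is not semi-Clifford.
\item Item~2 for $n\le2$ follows from item~1, because $\SC\subseteq\GSC$.
\item Item~3 needs one extra remark before Theorem~\ref{thm:n4-gaussian} applies: if $U\doteq U_0$ with $U_0$ over $\Q(i)$, then every $h_v=U_0P_vU_0^{-1}$ is Gaussian.
\end{itemize}
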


This article continues a line of work investigating the relationship between the Clifford hierarchy and semi-Clifford and generalized semi-Clifford gates, initiated in~\cite{sc-ZCC2008}. There, the authors showed that $\mathcal{C}_k(n) \not\subseteq \mathsf{SC}(n)$ in general and conjectured that $\mathcal{C}_k(n)\subseteq\mathsf{GSC}(n)$ for all $k$ and $n$. Subsequent works~\cite{sc-BS2010,pllahaEtAl2020,Anderson_2024,sc-AC2025,sc-HRT2025arxiv} further investigated the semi-Clifford and generalized semi-Clifford structure of gates in the Clifford hierarchy. The conjecture was recently disproved in~\cite{deSilvaLautsch2026False} by a counterexample in $\mathcal{C}_5(5)\setminus\mathsf{GSC}(5)$. The inclusion $\mathcal{C}_4(n) \subseteq \mathsf{GSC}(n)$, which we prove in Section~\ref{sec:gsc}, was recently obtained independently in~\cite{level4} by similar methods. We discuss the contributions of these related works in more detail in Sections~\ref{sec:SC} and~\ref{sec:gsc}.

\section{Symplectic structures in the Clifford hierarchy}\label{sec:symp}

A recurring theme throughout the article is exploiting symplectic structures in the Clifford hierarchy, inspired by~\cite{sc-ZCC2008,sc-BS2010}. Consider the vector space $\F_2^{2n}$ over the field $\F_2$. For $v = (x,z)$ and $v' = (x',z')$ where $x, z, x', z' \in \F_2^n$, consider the symplectic form
\[
\omega(v,v') = x \cdot z' + x' \cdot z \mod 2.
\]
This vector space is of particular interest since every Hermitian Pauli gate is $\pm P_v$ for some $v$, where
\[
P_v \coloneqq i^{x \cdot z} X^x Z^z = i^{x \cdot z} \bigotimes_{j=1}^n X^{x_j} Z^{z_j}.
\]
The quotient map
\[
q:\mathcal{P}(n) \longrightarrow \mathcal{P}(n)/\langle iI\rangle\cong\mathbb F_2^{2n},\qquad \lambda P_v\longmapsto v,
\]
is a surjective group homomorphism. It is straightforward to check that for all $v, w \in \F_2^{2n}$, $P_v P_w = P_w P_v$ if and only if $\omega(v,w) = 0$.

Recall that a subspace $L \subseteq \F_2^{2n}$ is isotropic if $\omega(v,w)=0$ for all $v,w\in L$. It is Lagrangian if it is an isotropic subspace with maximal dimension, i.e., $\dim_{\mathbb F_2}L=n$. Let
\[
\mathcal A_L \coloneqq \operatorname{span}_{\mathbb{C}}\{P_v:v\in L\}.
\]
We write $[U]$ for the class of a unitary in $\mathrm{PU}(2^n)=\mathrm U(2^n)/\{\lambda I:|\lambda|=1\}$, and $A\doteq B$ for $A=\lambda B$ with $\lambda\in\CC^\times$. Since global phases cancel in conjugation, the conditions $U \in \mathsf{SC}(n)$, $U \in \mathsf{GSC}(n)$, and $U \in \C_k(n)$ are determined by $[U]$.

The following lemmas characterize the generalized semi-Clifford property in terms of Lagrangian subspaces.

\begin{lemma}\label{lem:lagrangian}
A gate $U$ is generalized semi-Clifford if and only if there exist Lagrangian subspaces $L$, $M \subseteq \F_2^{2n}$ such that $U \mathcal{A}_L U^\dagger = \mathcal{A}_M$.
\end{lemma}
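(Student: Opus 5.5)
We prove the lemma in two stages. First we translate maximal abelian subgroups of $\Pau(n)$ into Lagrangian subspaces; this makes the lemma a restatement of the defining property of $\GSC(n)$. Then we check that this property agrees with the factorization $U=C_1\Pi D C_2$.

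\emph{Step 1: subgroups versus Lagrangians.} Let $S\subseteq\Pau(n)$ be a maximal abelian subgroup. The scalars $\langle iI\rangle$ are central, so maximality forces $\langle iI\rangle\subseteq S$. Hence $S=q^{-1}(q(S))$. Since $P_vP_w=P_wP_v$ iff $\omega(v,w)=0$, the image $q(S)$ is isotropic. Conversely, $q^{-1}(L)$ is abelian for every isotropic $L$. Maximality of $S$ among abelian subgroups is therefore equivalent to maximality of $q(S)$ among isotropic subspaces. Every isotropic subspace extends to one of dimension $n$, so this means $q(S)$ is Lagrangian. Finally, $S$ consists of the elements $i^aP_v$ with $v\in q(S)$, so $\Span_{\CC}S=\A_{q(S)}$. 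Thus the condition ``$U$ maps the span of some maximal abelian subgroup onto the span of another'' is exactly the condition $U\A_LU^\dagger=\A_M$ for Lagrangians $L,M$.

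\emph{Step 2: agreement with the factorization.} We need two standard facts.
\begin{itemize}
\item The Clifford group surjects onto $\Sp(2n,\F_2)$ via $CP_vC^\dagger\doteq P_{\sigma_C(v)}$, and hence $C\A_LC^\dagger=\A_{\sigma_C(L)}$.
\item $\Sp(2n,\F_2)$ acts transitively on Lagrangians (Witt's theorem, or an explicit symplectic basis extension).
\end{itemize}
Let $L_Z=0\oplus\F_2^n$. Then $\A_{L_Z}$ is the algebra $\mathcal D$ of diagonal matrices, because the operators $Z^z$ span all diagonals (Fourier/Hadamard transform on $\F_2^n$).

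Suppose $U=C_1\Pi DC_2$. Take $L=\sigma_{C_2}^{-1}(L_Z)$, so that $C_2\A_LC_2^\dagger=\mathcal D$. The diagonal unitary $D$ fixes $\mathcal D$, and the permutation $\Pi$ normalizes $\mathcal D$. Finally $C_1\mathcal DC_1^\dagger=\A_M$ with $M=\sigma_{C_1}(L_Z)$.

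Conversely, suppose $U\A_LU^\dagger=\A_M$. By transitivity choose Cliffords $C_2,C_1$ with $C_2^\dagger\mathcal DC_2=\A_L$ and $C_1\mathcal DC_1^\dagger=\A_M$. Then $V=C_1^\dagger UC_2^\dagger$ satisfies $V\mathcal DV^\dagger=\mathcal D$.

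\emph{Main point: a unitary normalizing $\mathcal D$ is a monomial matrix.} Conjugation by $V$ is an algebra automorphism of $\mathcal D$. It therefore permutes the minimal projections $\ket{x}\bra{x}$: we have $V\ket{x}\bra{x}V^\dagger=\ket{\pi(x)}\bra{\pi(x)}$ for a permutation $\pi$. Consequently $V\ket{x}=e^{i\theta_x}\ket{\pi(x)}$, i.e.\ $V=\Pi D$ with $\Pi$ the permutation matrix of $\pi$ and $D$ diagonal. This gives $U=C_1\Pi DC_2$.

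The only delicate points are bookkeeping ones: remembering that maximal abelian subgroups contain the scalars, and that global phases in $CP_vC^\dagger$ do not affect spans. The rest is standard.
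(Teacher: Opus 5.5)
Your proposal is correct and follows the paper's route. Step 1 is exactly the equivalence of (1) and (3) in Proposition~\ref{lem:lagrangian-expanded}: the quotient map $q$ turns maximal abelian subgroups, which must contain the scalars, into Lagrangians, and the paper reads the lemma off directly from the span definition. Your Step 2, which checks agreement with the factorization $U=C_1\Pi DC_2$, is something the paper takes from \cite{sc-ZCC2008,sc-BS2010} rather than proving here; your monomial argument there, that conjugation permutes the rank-one projections $\ket{x}\bra{x}$, is the same one the paper uses in the proof of Proposition~\ref{lem:tool-monomial}.
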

We call a pair $(L,M)$ of Lagrangians a \emph{generalized semi-Clifford Lagrangian pair} of $U$ if $U\A_LU^\dagger=\A_M$. By Lemma~\ref{lem:lagrangian}, $U$ is generalized semi-Clifford exactly when it has such a pair.

On the other hand, we can also characterize the generalized semi-Clifford property of $U$ using \emph{corrections} $h_v \coloneqq UP_vU^\dagger$ for all $v\in \F_2^{2n}$. Notice that if $U\in \C_k(n)$, then $h_v\in \C_{k-1}(n)$. The following simple fact is the motivation for our induction proof in Theorem~\ref{thm:main-small-width}.
\begin{lemma}\label{lem:main-local-global}
A gate $U$ is generalized semi-Clifford if and only if there is a Lagrangian $M\le\F_2^{2n}$ such that
\[
h_v\A_Mh_v^\dagger=\A_M\qquad \text{for every $v\in \F_2^{2n}$}.
\]
For such $M$, $U^\dagger\A_MU=\A_L$ for a Lagrangian $L$.
\end{lemma}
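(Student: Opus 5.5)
The plan is to reduce Lemma~\ref{lem:main-local-global} to Lemma~\ref{lem:lagrangian}, using that the $P_v$ span all of $M_{2^n}(\CC)$, so that conjugation by $U$ is pinned down by the corrections $h_v$.

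\emph{Forward direction.} Suppose $U$ is generalized semi-Clifford. By Lemma~\ref{lem:lagrangian} there are Lagrangians $L,M$ with $U\A_LU^\dagger=\A_M$, equivalently $U^\dagger\A_MU=\A_L$. For any $v$ we compute
\[
h_v\A_Mh_v^\dagger=UP_vU^\dagger\A_MUP_v^\dagger U^\dagger=UP_v\A_LP_v^\dagger U^\dagger .
\]
Since $P_vP_wP_v^\dagger=\pm P_w$, conjugation by a Pauli preserves $\A_L$. Hence the right-hand side equals $U\A_LU^\dagger=\A_M$.

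\emph{Converse.} Suppose $M$ satisfies $h_v\A_Mh_v^\dagger=\A_M$ for all $v$. Put $\mathcal B\coloneqq U^\dagger\A_MU$. Then $P_v\mathcal BP_v^\dagger=U^\dagger h_v\A_Mh_v^\dagger U=\mathcal B$ for every $v$, so $\mathcal B$ is a subspace of $M_{2^n}(\CC)$ invariant under conjugation by the whole Pauli group. The main step is to show that $\mathcal B$ is spanned by Paulis. Decompose $M_{2^n}(\CC)=\bigoplus_w\CC P_w$. Conjugation by $P_v$ acts on $\CC P_w$ by the character $(-1)^{\omega(v,w)}$. These characters $w\mapsto(-1)^{\omega(v,\cdot)}$ are pairwise distinct as $w$ ranges over $\F_2^{2n}$, because $\omega$ is nondegenerate. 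An invariant subspace of a representation of the finite abelian group $\F_2^{2n}$ whose isotypic components are one-dimensional and distinct is a direct sum of some of those components. Concretely, apply the projector $\frac{1}{4^n}\sum_v(-1)^{\omega(v,w)}P_v(\cdot)P_v^\dagger$ to an element of $\mathcal B$; this extracts its $P_w$-coefficient times $P_w$, and the result stays in $\mathcal B$. Therefore $\mathcal B=\A_S$ with $S=\{w:P_w\in\mathcal B\}$.

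It remains to show $S$ is Lagrangian. Conjugation by $U$ is an algebra automorphism, so $\mathcal B$ is a commutative $*$-algebra of dimension $2^n$, since $\A_M$ is. Closure under multiplication makes $S$ closed under addition, so $S$ is a subspace. Commutativity gives $\omega(w,w')=0$ on $S$, so $S$ is isotropic. Finally $|S|=\dim\mathcal B=2^n$ gives $\dim S=n$. Thus $S=L$ is Lagrangian, and $U\A_LU^\dagger=\A_M$. Lemma~\ref{lem:lagrangian} then yields that $U$ is generalized semi-Clifford, and it also gives the final assertion $U^\dagger\A_MU=\A_L$. The only nontrivial step is the isotypic decomposition showing that a Pauli-conjugation-invariant subspace is a Pauli span.
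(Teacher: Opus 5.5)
Your proposal is correct and follows the paper's proof essentially verbatim: the forward direction is the same computation, and in the converse you show that $U^\dagger\A_MU$ is invariant under Pauli conjugation and then identify it as $\A_L$ with $L$ Lagrangian. The only difference is that you write out the coefficient-extracting projector and the count $|S|=\dim U^\dagger\A_MU=2^n$, whereas the paper cites that step from Proposition~\ref{lem:lagrangian-expanded}; the final assertion $U^\dagger\A_MU=\A_L$ comes from this identification itself, not from Lemma~\ref{lem:lagrangian} as your last sentence suggests.
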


Furthermore, it is useful to consider how the semi-Clifford and generalized semi-Clifford properties change after basic operations like taking tensor products.
\begin{lemma}\label{lem:basic-operations}
Let $U\in\C_k(n)$ and $V\in\C_\ell(m)$. Write $\mathsf X(n)$ for either $\SC(n)$ or $\GSC(n)$ and $\T = \{z \in \CC: \lvert z \rvert = 1\}$.
\begin{enumerate}
\item \label{itm:op1} For $C_1,C_2\in\C_2(n)$ and $\lambda\in\T$,
\[
U\in\mathsf X(n)\quad\Longleftrightarrow\quad
\lambda C_1UC_2\in\mathsf X(n).
\]
If $k\geq2$, then $\lambda C_1UC_2\in\C_k(n)$.
\item \label{itm:op2} $U\otimes V\in\C_{\max\{k,\ell\}}(n+m)$. If $U\in\mathsf X(n)$ and $V\in\mathsf X(m)$, then $U\otimes V\in\mathsf X(n+m)$. If $V$ is diagonal, then
\[
U\otimes V\in\mathsf X(n+m)
\quad\Longleftrightarrow\quad U\in\mathsf X(n).
\]
\end{enumerate}
In particular, if $\C_k(n)\not\subseteq\mathsf X(n)$, then $\C_r(N)\not\subseteq\mathsf X(N)$ for every $r\geq k$ and $N\geq n$.
\end{lemma}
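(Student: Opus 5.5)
The plan is to prove each part directly from the definitions and the decompositions $U=C_1DC_2$ (for $\SC$) and $U=C_1\Pi DC_2$ (for $\GSC$), with Lemma~\ref{lem:lagrangian} used whenever a Lagrangian formulation is more convenient.

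\emph{Part~\ref{itm:op1}.} Write $U=C_1'\Pi DC_2'$, or $U=C_1'DC_2'$ in the $\SC$ case. Then
\[
\lambda C_1UC_2=(C_1C_1')\,\Pi\,(\lambda D)\,(C_2'C_2).
\]
Here $\lambda D$ is still diagonal and products of Cliffords are Clifford, so $\lambda C_1UC_2\in\mathsf X(n)$. The converse follows by applying the same step with $C_1^\dagger,C_2^\dagger,\bar\lambda$.

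For the hierarchy claim, use induction on $k$. Since $C_2$ permutes $\Pau(n)$ up to phases, we have $(\lambda C_1UC_2)\Pau(n)(\cdots)^\dagger = C_1\,U\Pau(n)U^\dagger\, C_1^\dagger$. So it suffices to show that $C\,\C_{k-1}(n)\,C^\dagger\subseteq\C_{k-1}(n)$ for every Clifford $C$. This is itself the statement ``$\C_j$ is closed under Clifford conjugation'', proved by induction on $j$. The base case is $j=1$, i.e.\ $C\Pau C^\dagger=\Pau$. The inductive step uses $(CWC^\dagger)P(CWC^\dagger)^\dagger = C\,W(C^\dagger PC)W^\dagger\, C^\dagger$. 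The hypothesis $k\ge2$ is used exactly so that $C_1,C_2$ lie in the appropriate level, and global phases are harmless.

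\emph{Part~\ref{itm:op2}.} The membership $U\otimes V\in\C_{\max\{k,\ell\}}(n+m)$ follows by induction on $\max\{k,\ell\}$. Every Pauli on $n+m$ qubits is $\mu\,P\otimes Q$, and $(U\otimes V)(P\otimes Q)(U\otimes V)^\dagger=UPU^\dagger\otimes VQV^\dagger$, which is a tensor product of elements of $\C_{k-1}(n)$ and $\C_{\ell-1}(m)$. This needs the auxiliary fact $\C_j\subseteq\C_{j+1}$, itself proved by induction, together with a convention for level $0$: when $k=1$, $UPU^\dagger$ is a scalar times $P$. A cleaner alternative is to strengthen the induction to ``$A\in\C_a,\ B\in\C_b\Rightarrow A\otimes B\in\C_{\max}$'' with the level-1 base case $\Pau\otimes\Pau\subseteq\Pau$.

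For closure of $\mathsf X$ under tensor products, tensor the decompositions: $(C_1\Pi D C_2)\otimes(C_1'\Pi'D'C_2')$ has Clifford, permutation and diagonal factors of the same shape.

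The equivalence when $V$ is diagonal is the main step. The direction ($\Leftarrow$) follows from closure, since $V$ is diagonal and hence in $\SC\subseteq\GSC$. For ($\Rightarrow$), suppose $W=U\otimes V$ has a GSC pair $(L,M)$ on $n+m$ qubits, so that $W\A_LW^\dagger=\A_M$ by Lemma~\ref{lem:lagrangian}. Write $I\otimes Z$-type elements for the diagonal Paulis on the second factor. These commute with $W$, so $W$ normalizes the algebra $\mathcal B=\mathbf 1\otimes\A_{Z_m}$.

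The first approach to try is a ``restriction'' argument. Take $L'=L\cap(\F_2^{2n}\oplus Z_m)$ and project it to $\F_2^{2n}$, which gives an isotropic, but possibly non-Lagrangian, subspace. The cleaner route is to combine $\A_L$ with $\mathcal B$. One would like to pass from $L$ to a Lagrangian $\tilde L\supseteq 0\oplus Z_m$, namely $\tilde L=(L\cap (0\oplus Z_m)^{\perp_\omega})+ (0\oplus Z_m)$, and similarly for $M$. The key claim is that $W\A_{\tilde L}W^\dagger=\A_{\tilde M}$; verifying this is the expected obstacle, because the algebra generated by $\A_L\cap \mathcal B'$ and $\mathcal B$ must be shown to map correctly. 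Once it holds, $\tilde L=L_0\oplus Z_m$ and $\tilde M=M_0\oplus Z_m$, where $L_0$ and $M_0$ are Lagrangians in $\F_2^{2n}$. Compressing by $\bra{0}$ on the last $m$ qubits then gives $U\A_{L_0}U^\dagger=\A_{M_0}$, up to the diagonal phase of $V$, which cancels.

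The $\SC$ version is identical, with maximal abelian Pauli subgroups in place of spans.

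The final sentence follows from both parts together. Given $U\in\C_k(n)\setminus\mathsf X(n)$, the gate $U\otimes I_{N-n}$ lies in $\C_k(N)$ and, since $I$ is diagonal, not in $\mathsf X(N)$. It also lies in $\C_r(N)$ for $r\ge k$ by the inclusion $\C_k\subseteq\C_r$.
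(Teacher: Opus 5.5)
Your outline is correct. For Part~\ref{itm:op1}, the tensor-level claim, tensor closure, and the final sentence, it matches the paper up to presentation. You multiply the decompositions $C_1\Pi DC_2$, while the paper transports GSC pairs and $\mathsf P(U)$ through the Clifford factors; the two are equivalent.

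The real difference is in the converse for diagonal $V$, and here the ``restriction'' approach you set aside is exactly the paper's proof. Your worry about it is unfounded. Let $R=\Span\{z_{n+1},\dots,z_{n+m}\}$. The projection of $L\cap R^\perp$ onto $V_n$ has kernel $L\cap R$ and isotropic image $\bar L$. Since $(L\cap R^\perp)^\perp=L+R$, we get $\dim(L\cap R^\perp)=n+\dim(L\cap R)$, so $\bar L$ always has dimension $n$ and is Lagrangian. The paper then uses $E_y=I\otimes\bra y$, with $E_y\A_LE_y^\dagger=\A_{\bar L}$ and $E_yW=V_{yy}UE_y$, to obtain $(\bar L,\bar M)\in\mathscr L(U)$.

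Your saturation route also works, and the step you call the expected obstacle is short: it is Lemma~\ref{lem:root-witness} with $S=T=R$.
\begin{itemize}
\item Since $W$ commutes with $\mathcal B=\A_R$, it preserves the commutant $\mathcal B'=\Span_{\CC}\{P_v:v\in R^\perp\}$.
\item Expanding in the Pauli basis gives $\A_L\cap\mathcal B'=\A_{L\cap R^\perp}$, so $W$ maps $\A_{L\cap R^\perp}$ onto $\A_{M\cap R^\perp}$.
\item The algebra generated by $\A_{L\cap R^\perp}$ and $\A_R$ is $\A_{\tilde L}$, which gives $W\A_{\tilde L}W^\dagger=\A_{\tilde M}$.
\item $\tilde L$ is Lagrangian: it is isotropic, and $\tilde L^\perp=(L+R)\cap R^\perp=(L\cap R^\perp)+R$, using $R\subseteq R^\perp$.
\end{itemize}

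Your $L_0=\tilde L\cap V_n$ coincides with the paper's $\bar L$, so both routes produce the same pair. Yours trades the paper's direct compression computation for the factorization $\A_{\tilde L}=\A_{L_0}\otimes\A_{Z_m}$, which makes the compression trivial.

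For $\SC$, calling the argument ``identical'' hides one check: you need $\tilde L\subseteq\mathsf P(W)$. This holds because $R\subseteq\mathsf P(W)$ and $\mathsf P(W)$ is a subspace. Then $W(P_w\otimes I)W^\dagger=UP_wU^\dagger\otimes I$ gives $L_0\subseteq\mathsf P(U)$.
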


The proofs are given in Appendix~\ref{app:prerequisites}.

\section{Semi-Clifford gates}\label{sec:SC}

Most cases of whether $(k,n)$ satisfies $\C_k(n) \subseteq \SC(n)$ have been established in the literature. Zeng, Chen, and Chuang~\cite{sc-ZCC2008} proved that every gate is semi-Clifford at every level on at most two qubits and at the third level on three qubits, and constructed a gate in $\C_4(3)\setminus\SC(3)$. Beigi and Shor~\cite{sc-BS2010} reported a seven-qubit third-level counterexample by Gottesman and Mochon and proved that every third-level gate is generalized semi-Clifford. Anderson and Connelly established $\C_3(4)\subseteq\SC(4)$~\cite{sc-AC2025}.

It thus remains to check $\C_3(5)$ and $\C_3(6)$, which were left open in~\cite{sc-AC2025}. For permutation gates, these two cases follow from~\cite[Appendix~A]{sc-HRT2025arxiv}. We settle these two cases below to complete the picture as shown in Fig.~\ref{fig:sc-plane}.

\begin{inlinefigure}
\centering
\begin{tikzpicture}[x=.85cm,y=.6cm,font=\footnotesize]
\foreach \k in {1,...,7}{\foreach \n in {1,...,7}{
\pgfmathtruncatemacro{\positive}{(\k<=2)||(\n<=2)||((\k==3)&&(\n<=6))}
\ifnum\positive=1
\fill[black!7](\k-.5,\n-.5)rectangle(\k+.5,\n+.5);
\node at(\k,\n){$+$};
\else\node at(\k,\n){$-$};\fi}}
\foreach \j in {0,...,7}{
\draw[black!20,very thin](.5+\j,.5)--(.5+\j,7.5);
\draw[black!20,very thin](.5,.5+\j)--(7.5,.5+\j);}
\begin{scope}
\clip(.5,.5)rectangle(8.2,8.2);
\foreach \k in {1,...,8}{\foreach \n in {1,...,8}{
\pgfmathtruncatemacro{\edge}{(\k==8)||(\n==8)}
\ifnum\edge=1
\pgfmathtruncatemacro{\positive}{(\k<=2)||(\n<=2)||((\k==3)&&(\n<=6))}
\ifnum\positive=1
\fill[black!4](\k-.5,\n-.5)rectangle(\k+.5,\n+.5);
\node[black!40]at(\k,\n){$+$};
\else\node[black!40]at(\k,\n){$-$};\fi\fi}}
\foreach \j in {0,...,7}{
\draw[black!10,very thin](.5+\j,7.5)--(.5+\j,8.5);
\draw[black!10,very thin](7.5,.5+\j)--(8.5,.5+\j);}
\draw[line width=.65pt,black!40](2.5,7.5)--(2.5,8.5);
\draw[line width=.65pt,black!40](7.5,2.5)--(8.5,2.5);
\end{scope}
\draw[-{Stealth[length=1.5mm]}](.2,.2)--(8.5,.2)node[right]{$k$};
\draw[-{Stealth[length=1.5mm]}](.2,.2)--(.2,8.6)node[above]{$n$};
\foreach \j in {1,...,6}{\node[below]at(\j,.18){$\j$};\node[left]at(.18,\j){$\j$};}
\node[below]at(7,.18){$\ge7$};\node[left]at(.18,7){$\ge7$};
\foreach \n in {5,6}{\draw[line width=.7pt](3,\n)ellipse(.19 and .24);}
\draw[line width=.65pt](2.5,7.5)--(2.5,6.5)--(3.5,6.5)--(3.5,2.5)--(7.5,2.5);
\end{tikzpicture}
\caption{$\mathcal{C}_k(n) \subseteq \mathsf{SC}(n)$ if and only if $(k,n)$ lies in a region marked $+$.}
\label{fig:sc-plane}
\end{inlinefigure}

Our proof uses two structural results from the previous literature. By~\cite[Theorem~1.1]{sc-BS2010} and~\cite[Proposition~2]{sc-ZCC2008}, for $U\in\C_3(n)$, there are $C_{1},C_{2}\in\C_2(n)$, a diagonal gate $D$, and a computational-basis permutation $\Pi$ such that
\[
C_{1}UC_{2}=D\Pi.
\]
Furthermore, for $n\leq6$, by~\cite[Lemma~3.8, Appendix~A, and Corollary~B.6]{sc-HRT2025arxiv}, there exist permutations $C_{3},C_{4} \in \mathcal{C}_2(n)$ with integers $r,m$ satisfying $n=r+m$ such that
\[
\begin{aligned}
C_3\Pi C_4\ket{x,y}
  &=\ket{x,y+q(x)},\\
\text{where } q(x)&=\sum_{i<j}x_ix_jq_{ij},
  \qquad q_{ij}\in\F_2^m.
\end{aligned}
\]
We now sketch the proof of the result whose full proof is presented in Appendix~\ref{app:sc-boundary}.

\begin{theorem}\label{thm:sc-c3-small}
For $n =5,6$, $\mathcal{C}_3(n)\subseteq \mathsf{SC}(n)$.
\end{theorem}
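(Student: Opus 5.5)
By Lemma~\ref{lem:basic-operations}(\ref{itm:op1}) it suffices to show that $U'=C_1UC_2$ is semi-Clifford. The two structural results above give $U'=D\Pi$ with $C_3\Pi C_4=Q$, where $Q\ket{x,y}=\ket{x,y+q(x)}$. So I will study
\[
W \coloneqq C_3 U' C_4 = (C_3 D C_3^\dagger)\,Q .
\]
Since $C_3$ is a permutation Clifford, it is an affine map of $\F_2^n$ and sends diagonal gates to diagonal gates. Hence $D'\coloneqq C_3DC_3^\dagger$ is diagonal, and $W=D'Q\in\C_3(n)$ by Lemma~\ref{lem:basic-operations}(\ref{itm:op1}).

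The strategy is to find a Lagrangian subspace spanned by Paulis that $W$ conjugates into Paulis. Then $W$ maps a maximal abelian subgroup of $\mathcal P(n)$ to another one, which is the definition of semi-Clifford. The natural candidate is
\[
L=\langle Z_{x_1},\dots,Z_{x_r},\ X_{y_1},\dots,X_{y_m}\rangle .
\]
The $X_y$-type Paulis commute with $Q$, because $Q$ acts on $y$ only by translations depending on $x$. The $Z_x$-type Paulis commute with $Q$ as well. Therefore $W P W^\dagger = D' P D'^\dagger$ for every $P\in L$. This is automatically Pauli for the $Z$'s. For $X_{y_j}$ it equals $X_{y_j}$ times the diagonal factor $D'(\cdot)\overline{D'(\cdot+e_{y_j})}$, which is Pauli only if the finite difference of $D'$ in the $y_j$ direction is a character.

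So the key step is to extract constraints on $D'$ from $W\in\C_3(n)$. For each $j$, the gate $W Z_{y_j} W^\dagger = D' Q Z_{y_j} Q^\dagger D'^\dagger = Z_{y_j}(-1)^{q_j(x)}$ must be Clifford. Here $(-1)^{q_j(x)}$ is a diagonal gate of the form $\mathrm{CZ}$-products, so this holds automatically. The nontrivial conditions come from $W X_{x_i} W^\dagger$, which must also be Clifford. Writing $D'=\diag(e^{i\phi(x,y)})$, these conditions say that $\phi(x+e_i,y+q(x+e_i)-q(x))-\phi(x,y)$ is a quadratic phase with the appropriate structure. Together with $W X_{y_j}W^\dagger\in\C_2$, this forces $\phi$ to be, modulo quadratic forms, a cubic-type phase with restricted $y$-dependence.

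I would then choose a better Lagrangian $L$, possibly modified by absorbing quadratic parts of $\phi$ into Cliffords on the left. The goal is that the $X_y$-directions see only linear differences of $\phi$. If $y$ cannot be cleaned this way, the fallback is to mix the $x$- and $y$-directions, guided by the structure $q_{ij}\in\F_2^m$.

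The main obstacle is this last step. For general $q$ no single Lagrangian works, as the seven-qubit Gottesman--Mochon example shows. So the argument must use $n\le 6$ through the classification of the tensors $(q_{ij})$ up to $\mathrm{GL}$-equivalence on $x$ and $y$. My plan is a case analysis over the equivalence classes of $q$ occurring for $r+m\le 6$, which are the normal forms of~\cite{sc-HRT2025arxiv}. For each normal form, I would exhibit an explicit Lagrangian, built from $Z_x$'s on a subset of controls and suitable $X$-combinations, on which the residual cubic part of $\phi$ is forced to be linear. Where the case count is large, I would verify these finitely many linear-algebra conditions over $\F_2$ by computer.
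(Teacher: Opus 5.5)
Your reduction to $W=D'Q$ is correct, as are the observations that $Z_x$ and $X_y$ commute with $Q$ and that $WZ_{y_j}W^\dagger$ is automatically Clifford. Your candidate also has the right shape: in the paper the input Lagrangian is $\Span\{z_{x_1},\dots,z_{x_r}\}\oplus L_t$ for a target Lagrangian $L_t$.

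The gap is that the proposal stops exactly where the theorem's content begins. You never turn $W\in\mathcal C_3(n)$ into usable constraints; ``a cubic-type phase with restricted $y$-dependence'' is not a checkable statement. You also have not reduced anything to finitely many $\F_2$-conditions, since $\phi$ is real-valued. Converting the $\mathcal C_3$ hypothesis into $\F_2$-data is precisely the missing step.

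Moreover, one Lagrangian per normal form of $q$ cannot work, because $L_t$ depends on the phase. Take $q(x)=x_1x_2$, $m=1$, and the gate $\mathrm{Toffoli}\cdot S_y=(QS_yQ^\dagger)Q$, which is of your form. It sends $Z_1,Z_2,Y_y$ to Paulis but sends $X_y$ to the non-Pauli $(-1)^{x_1x_2}Y_y$. Your proposed fix of absorbing quadratic parts of $\phi$ into Cliffords \emph{on the left} cannot help. Left multiplication by a Clifford never changes which input Paulis are mapped to Paulis. The twist must be a phase-dependent Clifford on the targets, applied on the right. Knowing such a Clifford exists requires structural information about the target blocks that your plan does not provide. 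Nothing in the proposal explains why the threshold is $n\le6$.

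The paper supplies two arguments, and they are exactly where $n\le 6$ enters.

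\emph{Step (i): the blocks are Cliffords with a common action.} First, move the part of the target register not reached by the polar form $\beta$ of $q$ into the controls. For $w$ in the target space, $W(I\otimes X^w)W^\dagger=D_w(I\otimes X^w)$ with $D_w$ a diagonal Clifford. Encode $D_w$ by a symmetric matrix $B_w=\begin{pmatrix}A_w&E_w\\E_w^{\mathsf T}&F_w\end{pmatrix}$. Commuting with $W(X^b\otimes I)W^\dagger$, whose linear part is $\begin{pmatrix}I&0\\G_b&I\end{pmatrix}$ with $G_ba=\beta(b,a)$, gives $F_wG_b=0$ and $E_wG_b$ symmetric. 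Commuting the corrections with each other gives $E_ww'=E_{w'}w$. Since the values of $\beta$ span the targets, $F_w=0$. Then $x^{\mathsf T}E_w\beta(a,b)$ is an alternating $3$-form; it can be nonzero only if $r,m\ge3$, and the case $r=m=3$ is excluded by $E_ww'=E_{w'}w$. With $E_w=0$, every block $M_x$ is a Clifford with the same action as $M_0$, so $W(I\otimes M_0^\dagger)=\sum_x\zeta(x)\ket x\bra x\otimes P_{v(x)}$.

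\emph{Step (ii): the Pauli labels span an isotropic subspace.} Let $\phi$ denote the second differences of $v$. The identity $K_b^2=I$ gives $\omega(\phi(b,a),\phi(b,c))=0$. Hence $\omega(\phi(a,b),\phi(c,d))$ is an alternating $4$-form, which can be nonzero only if $r\ge4$ and $m\ge3$. So the span of the second differences is isotropic, and a single Clifford on the targets diagonalizes the gate.

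Neither dimension count appears in your plan. No enumeration of normal forms of $q$ is needed beyond the initial reduction.
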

\begin{proof}[Sketch of Proof]
By the two reductions above, we reduce the problem to investigating $U\in\mathcal C_3(n)$ of the form
\[
U\ket{x,y}=e^{i\theta(x,y)}\ket{x,y+q(x)},
\]
where $q$ is a quadratic function over $\mathbb F_2$. The key observation is that $\mathcal C_3(n)$ imposes strong restrictions on the phase $\theta(x,y)$. We show that, after a further change of basis by a target Clifford, we may reduce it to a gate of the form
\[
M=\sum_x\zeta(x)\ket{x}\bra{x}\otimes P_{v(x)},
\]
where $P_{v(x)}$ is a Pauli operator acting on the $y$-qubits. Thus it remains to check that the $P_{v(x)}$'s mutually commute, which holds for $n\leq6$ by a dimension argument. Since $U=C_1MC_2$ for $C_1,C_2\in\mathcal C_2(n)$, this proves that $U\in\mathsf{SC}(n)$.
\end{proof}

\section{Generalized semi-Clifford gates}
\label{sec:gsc}

In~\cite{sc-ZCC2008}, the authors established that $\mathcal{C}_k(n) \subseteq \mathsf{GSC}(n)$ for all $k$ and $n \leq 2$. Subsequently,~\cite{sc-BS2010} established that $\mathcal{C}_3(n) \subseteq \mathsf{GSC}(n)$ for all $n$. In this section and the subsequent one, we determine the rest of the cases with the exception of $n=4$ and $k\geq 5$, for which we provide some arithmetic characterizations and a heuristic strategy. The results are illustrated in Fig.~\ref{fig:gsc-plane}.

\begin{inlinefigure}
\centering
\begin{tikzpicture}[x=.85cm,y=.6cm,font=\footnotesize]
\foreach \k in {1,...,7}{\foreach \n in {1,...,7}{
\pgfmathtruncatemacro{\positive}{(\k<=4)||(\n<=3)}
\ifnum\positive=1
\fill[black!7](\k-.5,\n-.5)rectangle(\k+.5,\n+.5);
\node at(\k,\n){$+$};
\else\ifnum\n=4
\fill[black!3](\k-.5,\n-.5)rectangle(\k+.5,\n+.5);
\begin{scope}
\clip(\k-.5,\n-.5)rectangle(\k+.5,\n+.5);
\foreach \t in {-.75,-.5,-.25,0,.25,.5}{\draw[black!55,line width=.35pt](\k+\t,\n-.5)--(\k+\t+.45,\n+.5);}
\end{scope}
\node at(\k,\n){$\star$};
\else\node at(\k,\n){$-$};\fi\fi}}
\foreach \j in {0,...,7}{
\draw[black!20,very thin](.5+\j,.5)--(.5+\j,7.5);
\draw[black!20,very thin](.5,.5+\j)--(7.5,.5+\j);}
\begin{scope}
\clip(.5,.5)rectangle(8.2,8.2);
\foreach \k in {1,...,8}{\foreach \n in {1,...,8}{
\pgfmathtruncatemacro{\edge}{(\k==8)||(\n==8)}
\ifnum\edge=1
\pgfmathtruncatemacro{\positive}{(\k<=4)||(\n<=3)}
\ifnum\positive=1
\fill[black!4](\k-.5,\n-.5)rectangle(\k+.5,\n+.5);
\node[black!40]at(\k,\n){$+$};
\else\ifnum\n=4
\fill[black!2](\k-.5,\n-.5)rectangle(\k+.5,\n+.5);
\begin{scope}
\clip(\k-.5,\n-.5)rectangle(\k+.5,\n+.5);
\foreach \t in {-.75,-.5,-.25,0,.25,.5}{\draw[black!27,line width=.3pt](\k+\t,\n-.5)--(\k+\t+.45,\n+.5);}
\end{scope}
\node[black!40]at(\k,\n){$\star$};
\else\node[black!40]at(\k,\n){$-$};\fi\fi\fi}}
\foreach \j in {0,...,7}{
\draw[black!10,very thin](.5+\j,7.5)--(.5+\j,8.5);
\draw[black!10,very thin](7.5,.5+\j)--(8.5,.5+\j);}
\draw[line width=.65pt,black!40](4.5,7.5)--(4.5,8.5);
\draw[line width=.65pt,black!40](7.5,3.5)--(8.5,3.5);
\draw[line width=.7pt,black!40](8,3)ellipse(.19 and .24);
\draw[line width=.7pt,black!40](8,4)ellipse(.19 and .24);
\draw[line width=.7pt,black!40](4,8)ellipse(.19 and .24);
\foreach \j in {5,6,7}{\draw[line width=.7pt,black!40](8,\j)ellipse(.19 and .24);\draw[line width=.7pt,black!40](\j,8)ellipse(.19 and .24);}
\draw[line width=.7pt,black!40](8,8)ellipse(.19 and .24);
\end{scope}
\draw[-{Stealth[length=1.5mm]}](.2,.2)--(8.5,.2)node[right]{$k$};
\draw[-{Stealth[length=1.5mm]}](.2,.2)--(.2,8.6)node[above]{$n$};
\foreach \j in {1,...,6}{\node[below]at(\j,.18){$\j$};\node[left]at(.18,\j){$\j$};}
\node[below]at(7,.18){$\ge7$};\node[left]at(.18,7){$\ge7$};
\foreach \n in {3,...,7}{\draw[line width=.7pt](4,\n)ellipse(.19 and .24);}
\foreach \k in {5,6,7}{\draw[line width=.7pt](\k,3)ellipse(.19 and .24);}
\foreach \k in {5,6,7}{\draw[line width=.8pt](\k,4)ellipse(.19 and .24);}
\foreach \k in {5,6,7}{\foreach \n in {5,6,7}{\draw[line width=.7pt](\k,\n)ellipse(.19 and .24);}}
\draw[line width=.65pt](4.5,7.5)--(4.5,3.5)--(7.5,3.5);
\end{tikzpicture}
\caption{$\mathcal{C}_k(n) \subseteq \mathsf{GSC}(n)$ if $(k,n)$ lies in a region marked $+$ and $\mathcal{C}_k(n) \not\subseteq \mathsf{GSC}(n)$ if $(k,n)$ lies in a region marked $-$. For $n=4$ and $k\ge5$, gates defined over $\Q(i)$ up to scalar are generalized semi-Clifford.}
\label{fig:gsc-plane}
\end{inlinefigure}

We begin with fourth-level gates, which can be handled cleanly by a group-theoretic argument with full details in Appendix~\ref{app:finite-group}.

\begin{theorem}\label{thm:main-k4}
    For any $n$, $\C_4(n) \subseteq \GSC (n)$.
\end{theorem}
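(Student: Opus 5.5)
The plan is to verify the criterion of Lemma~\ref{lem:main-local-global}: find one Lagrangian $M$ with $h_v\A_Mh_v^\dagger=\A_M$ for every $v$, where $h_v=UP_vU^\dagger\in\C_3(n)$. The corrections form the group $Q\coloneqq U\Pau(n)U^\dagger$. Modulo scalars, $Q$ is isomorphic to $\F_2^{2n}$, so its image $\bar Q$ in $\mathrm{PU}(2^n)$ is a finite elementary abelian $2$-group. The idea is to get $M$ from a parity count. Let $\mathcal X$ be the set of algebras $\A_M$ with $M$ Lagrangian. Then
\[
|\mathcal X|=\prod_{i=1}^n(2^i+1),
\]
which is odd. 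If $\bar Q$ acts on $\mathcal X$ by conjugation, every orbit has size a power of $2$. An odd-size set must then contain an orbit of size $1$, i.e.\ a common fixed $\A_M$, which is exactly what Lemma~\ref{lem:main-local-global} requires.

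The work is to show that each $h_v$ maps $\mathcal X$ into itself. First, $h_v\in\C_3(n)\subseteq\GSC(n)$ by~\cite{sc-BS2010}, so each $h_v$ has at least one generalized semi-Clifford Lagrangian pair. That alone is not enough. I will use the level-$3$ normal form $h_v=C_1D\Pi C_2$, together with $h_vP_wh_v^\dagger\in\C_2(n)$ for all $w$. From these I will show: if $\A_M$ is any Lagrangian algebra, then $h_v\A_Mh_v^\dagger$ is a commutative algebra of dimension $2^n$ spanned by Clifford conjugates. I then want to prove it is again of the form $\A_{M'}$. If this fails for general $M$, the fallback is to replace $\mathcal X$ by a smaller $\bar Q$-stable subset of odd size. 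A natural candidate is the set of algebras $\A_M$ that are mapped into $\mathcal X$ by every element of $Q$; its oddness would be checked by an involution-pairing argument on the complement.

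A second ingredient is that the $h_v$ pairwise commute up to sign, since $h_vh_w=\pm h_wh_v$. Hence $\bar Q$ is abelian, and the action on any invariant set is by commuting involutions. This gives a cleaner orbit count: decompose along a basis $v_1,\dots,v_{2n}$ and apply the odd-cardinality fixed-point argument one involution at a time. At each step, restrict to the fixed set of the previous involutions. Each such fixed set stays odd because the non-fixed points of an involution pair off.

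The main obstacle is the stability step: showing that conjugation by a third-level correction $h_v$ sends Lagrangian algebras to Lagrangian algebras, or at least preserves a suitable odd-size family of them. This is where the hypothesis $U\in\C_4(n)$, rather than merely $h_v\in\C_3(n)$, must enter. I would first try to use that $h_vP_wh_v^\dagger$ lies in the group generated by the Cliffords $h_vP_uh_v^\dagger$, and that these Cliffords are themselves corrections of a third-level gate. Once stability is in hand, the parity count and Lemma~\ref{lem:main-local-global} finish the proof.
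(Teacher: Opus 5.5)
There is a genuine gap, at the stability step you flagged, and it cannot be repaired inside the set you chose. Your outer framework is right: an odd number of Pauli MASAs, a $2$-group fixed point, and Lemma~\ref{lem:main-local-global} to finish. The trouble is that your $\mathcal X$ (the Pauli MASAs) is stable under a gate $h$ exactly when $h$ is Clifford. To see this, suppose $h$ maps every $\A_L$ onto some $\A_{\phi(L)}$, and fix $v\neq0$. Intersecting over the Lagrangians $L\ni v$ gives $h\,\Span_{\CC}\{I,P_v\}\,h^\dagger=\bigcap_L\A_{\phi(L)}=\A_S$, where $S=\bigcap_L\phi(L)$ must be a line $\{0,w\}$. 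Hence $hP_vh^\dagger=\pm P_w$. So $\bar Q$ acts on $\mathcal X$ only when $U\in\C_3(n)$. For example, with $U=\diag(1,\zeta_{16})\in\C_4(1)$, the correction $UXU^\dagger$ sends $\Span_{\CC}\{I,X\}$ to $\Span_{\CC}\{I,(X+Y)/\sqrt2\}$, which is not a Pauli MASA.

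Your fallback set $\mathcal Y$, the Pauli MASAs whose whole $Q$-orbit stays in $\mathcal X$, is indeed $\bar Q$-stable. But proving $|\mathcal Y|$ odd is circular. Since $\bar Q$ is a $2$-group acting on $\mathcal Y$, we have $|\mathcal Y|\equiv|\mathcal X^{\bar Q}|\pmod 2$, and $\mathcal X^{\bar Q}$ is exactly the set of common fixed $\A_M$ you are trying to show is nonempty. There is also no involution available to pair off $\mathcal X\setminus\mathcal Y$: Paulis act trivially on $\mathcal X$, and elements of $Q$ carry that complement outside $\mathcal X$. Finally, $U\in\C_4(n)$ is the same hypothesis as $h_v\in\C_3(n)$ for all $v$, so there is no extra assumption left to invoke.

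The missing idea is to enlarge both the group and the set, and let the Pauli group supply the parity.
\begin{itemize}
\item Take $\Gamma(U)=\langle\overline{\Pau}(n),\bar Q\rangle\le\mathrm{PU}(2^n)$ acting on $\widetilde{\mathcal X}=\{g\A g^\dagger:g\in\Gamma(U),\ \A\in\mathcal X\}$, a set that includes non-Pauli MASAs.
\item By Proposition~\ref{lem:lagrangian-expanded}, the $\overline{\Pau}(n)$-fixed points of $\widetilde{\mathcal X}$ are exactly $\mathcal X$, so $|\widetilde{\mathcal X}|\equiv|\mathcal X|$ is odd.
\item If $\Gamma(U)$ is a finite $2$-group, it therefore fixes some element of $\widetilde{\mathcal X}$. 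That element is Pauli-fixed, so it is a Pauli MASA $\A_M$ normalized by every $h_v$, and Lemma~\ref{lem:main-local-global} applies.
\end{itemize}
Level four enters in showing that $\Gamma(U)$ is a finite $2$-group.
\begin{itemize}
\item For $h\in\bar Q$ we have $h\in\C_3(n)$, so $h\overline{\Pau}(n)h^\dagger$ consists of Cliffords. These groups generate a finite group $G$, which $\bar Q$ normalizes.
\item $G$ has an odd number of Sylow $2$-subgroups, so the $2$-group $\bar Q$ fixes one of them, say $R$.
\item $R$ contains the normal $2$-subgroup $\overline{\Pau}(n)$, hence contains every $h\overline{\Pau}(n)h^\dagger=h\overline{\Pau}(n)h^\dagger\subseteq hRh^\dagger=R$.
\item Therefore $G=R$, and $\Gamma(U)=G\bar Q$ is a finite $2$-group.
\end{itemize}
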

\begin{proof}[Sketch of Proof]
For $U\in\C_4(n)$ with arbitrary $n$, our argument relies on investigating the group
\[
\Gamma(U)=\langle\mathcal P(n),U\mathcal P(n)U^\dagger\rangle\subseteq\mathrm{PU}(2^n).
\]
The proof rests on the observation that if $\Gamma(U)$ is a finite $2$-group, then $U$ is generalized semi-Clifford. It thus suffices to show that $\Gamma(U)$ is a finite $2$-group for $U\in\C_4(n)$. Since the corrections lie in $\C_3(n)$, the conjugates $h_v\mathcal P(n)h_v^\dagger$ consist of Clifford gates and generate a finite group $G$, normalized by the $2$-group $U\mathcal P(n)U^\dagger$. One can show that $U\mathcal P(n)U^\dagger$ fixes a Sylow $2$-subgroup of $G$, which contains $\mathcal P(n)$ and hence every $h_v\mathcal P(n)h_v^\dagger$, so it equals $G$ and $\Gamma(U)$ is a $2$-group.
\end{proof}

We then consider the cases $n=3$ and $n=4$. We begin with the observation that although there are ways to reduce the generalized semi-Clifford problem directly to a finite problem, they are impractical to implement.

\begin{example}
To prove $\C_k(n)\subseteq\GSC(n)$, we can first reduce to a finite list of gates. Fix a basis $v_1,\dots,v_{2n}$ of $\F_2^{2n}$. If two gates $U,U'$ have the same corrections there, that is, $UP_{v_i}U^\dagger\doteq U'P_{v_i}U'^\dagger$ for every $i$, then $U\doteq U'P$ for a Pauli $P$. Indeed, $U'^\dagger U$ commutes with every $P_{v_i}$ up to a scalar, hence with every Pauli, so it is a Pauli up to a scalar. Now since all corrections lie in $\C_{k-1}(n)$, counting gates up to scalars gives
\[
 |\C_k(n)|\le4^n\,|\C_{k-1}(n)|^{2n}.
\]
We can then test each gate $U$ on the list. By Lemma~\ref{lem:main-local-global} and $h_{u+v}\doteq h_uh_v$, $U\in\GSC(n)$ exactly when $h_{v_1},\dots,h_{v_{2n}}$ normalize one common algebra $\A_M$, among $\prod_{j=1}^n(2^j+1)$ choices of the Lagrangian $M$. Starting from the $4^n|\Sp(2n,2)|$ Clifford gates, where $\Sp(2n,2)$ is the symplectic group of $(\F_2^{2n},\omega)$, the bound gives, up to scalars:
\begin{center}
\small
\begin{tabular}{@{}c|c|c|c|c@{}}
$n$&$|\C_2(n)|$&$|\C_3(n)|\le$&$|\C_4(n)|\le$&tests per gate\\\hline
$3$&$\approx9.3\times10^{7}$&$\approx10^{50}$&$\approx10^{299}$&$6\cdot135=810$\\
$4$&$\approx1.2\times10^{13}$&$\approx10^{107}$&$\approx10^{859}$&$8\cdot2{,}295=18{,}360$
\end{tabular}
\end{center}
Here a test asks whether one correction normalizes one algebra $\A_M$, and it has to be repeated for every level $k$.
\end{example}

Our strategy is to leverage the arithmetic information of a gate $U$ to reduce the problem to a small enough number of cases that we can check exhaustively using computer programs, which we produced in \texttt{Python} and \texttt{C} and made available at \begin{quote} \certrepo.
\end{quote}
The general theoretical tools in this section are developed in Appendix~\ref{app:common-strategy} with results on three qubits in Appendix~\ref{app:three-qubit} and four qubits in Appendix~\ref{app:four-qubit-results}.

Let $U\in\C_k(n)$. In general, the coefficients of $U$ lie in the tower of dyadic cyclotomic fields
\[
\Q(i)\subsetneq\Q(e^{i\pi/4})\subsetneq\Q(e^{i\pi/8})\subsetneq\cdots.
\]
In fact, there exists $\lambda\in \CC^\times$ such that all the coefficients of $W=\lambda U$ can be defined over a single field $K_r=\Q(\zeta_{2^r})$, where $\zeta_{2^r}=e^{2\pi i/2^r}$ and $r\ge2$ with $r\le\max\{2,k\}$ (Proposition~\ref{lem:conductor-plane}). We call the least such $r$ the \emph{conductor} of $U$ (the conductor of $K_r$ in the number-theoretic sense is $2^r$, and here we only record its exponent). More precisely,
\[
\operatorname{cond}(U)\coloneqq
\min\left\{r\ge2:
\begin{array}{l}
\lambda U\in\operatorname{Mat}_{2^n}(K_r)\\
\text{for some }\lambda\in\CC^\times
\end{array}\right\}.
\]
We call $U$ \textit{Gaussian} if $\operatorname{cond}(U)=2$, that is, some nonzero scalar multiple of $U$ has entries in $\Q(i)$.

An interesting observation is that for a fixed $n$, the conductor sometimes lets us check a single level to infer the information for all $k$'s at once.
\begin{example}
For four qubits, we have
\[
\text{$\C_k(4)\subseteq\GSC(4)$ for every $k$ $\ \Longleftrightarrow\ $ $\C_{4290}(4)\subseteq\GSC(4)$},
\]
proved as Proposition~\ref{prop:n4-finite-cutoffs} in the appendix: a counterexample, if one exists, can be taken of conductor at most $285$ and level at most $4290$. We remark that although the problem is now only on a single level, it is still far beyond enumeration. The diagonal gates of conductor at most $285$, that is, those with entries in $\mu_{2^{285}}=\{z\in\CC:z^{2^{285}}=1\}$, alone give $2^{285\cdot15}\approx10^{1287}$ gates of $\C_{4290}(4)$ up to scalars~\cite[Lemma~3 and Theorem~1]{CGK2017}. Even getting rid of the Clifford factors does not reduce it much. Since $\lambda C_1UC_2$ with $C_1,C_2$ Clifford is generalized semi-Clifford exactly when $U$ is (Lemma~\ref{lem:basic-operations}(1)), testing $U$ covers all of them. But there are only about $(1.2\times10^{13})^2\approx1.5\times10^{26}$ pairs $(C_1,C_2)$ up to scalars, so the diagonal gates alone still leave more than $10^{1287}/(1.5\times10^{26})>10^{1260}$ gates to test.
\end{example}

The conductor describes the arithmetic complexity of a gate and plays two roles in our proof. First, when every correction of a gate $U$ is Gaussian, i.e., has the minimal conductor, we can reduce the problem to a case check on a single exceptional gate for $n=3$ and $1{,}020$ gates for $n=4$. Moreover, it also acts as a second induction parameter. Once we fix the qubit number $n$, the conductor $r$, along with the hierarchy level $k$, can be used to do a nested double induction as in Theorem~\ref{thm:main-small-width} with $r=2$ or $k=4$ as the base case.

\begin{remark}
    We remark here that the notion of arithmetic invariants of gates is well-studied in the literature. Exact synthesis identifies the Clifford$+T$ circuits with the unitaries over $\Z[1/\sqrt2,i]$, graded by the \emph{least denominator exponent}, the least $k$ with $\sqrt2^{\,k}U$ over $\Z[\zeta_8]$~\cite{KMM2013,GilesSelinger2013}, with analogues over $\Z[e^{i\pi/n},1/2]$ for some even $n$ and over subrings of $\Z[1/\sqrt2,i]$~\cite{FGKM2015,AGR2020}. In the Clifford hierarchy, a diagonal gate $\operatorname{diag}_xe^{2\pi if(x)}$ with $f(x)=\sum_A\theta_A\prod_{i\in A}x_i$ lies in $\C_k$ exactly when $2^{k+1-|A|}\theta_A\in\Z$ for every $A\ne\varnothing$~\cite{CGK2017}, and a semi-Clifford gate lies in the hierarchy exactly when it is $C_1\prod_je^{i\pi a_jZ_j/2^{m_j}}C_2$ with Clifford $C_1,C_2$ and $Z$-type Paulis $Z_j$~\cite{Anderson_2024}. On the qudit side,~\cite{ChenDeSilva2024} shows that third-level gates on at most two qudits of prime dimension are semi-Clifford, by comparing algebraic sets reduced modulo the dimension. We notice here that the notion of the conductor is coarser than these invariants: it is only the least dyadic cyclotomic field of definition, and we track it for a family of gates rather than as a cost of a single gate.
\end{remark}

Our starting point is to use all the Pauli corrections $h_v=UP_vU^\dagger$ to characterize the structure of a gate $U$ itself. Recall from Lemma~\ref{lem:main-local-global} that a gate $U$ is generalized semi-Clifford if and only if all $h_v$'s have a common Lagrangian $M$ such that
\[
h_v\A_Mh_v^\dagger=\A_M\qquad \text{for every $v\in \F_2^{2n}$},
\]
which is strictly stronger than requiring every $h_v$ to be generalized semi-Clifford, and hence very difficult to check in practice. However, for $n\leq 4$, it suffices to show that all $h_v$ are Gaussian.

\begin{theorem}\label{gauss-thm:rigidity}\label{thm:n3-gaussian}\label{thm:n4-gaussian}
Let $n\in\{3,4\}$ and $U\in\C(n)$, and suppose that every correction $h_v=UP_vU^\dagger$ is Gaussian. Then $U$ and every $h_v$ are generalized semi-Clifford. Moreover,
\begin{enumerate}
\item if $n=3$, then $h_v\in\C_3(3)$ for every $v$ and $U\in\C_4(3)$;
\item if $n=4$, then $h_v\in\C_5(4)$ for every $v$ and $U\in\C_6(4)$.
\end{enumerate}
\end{theorem}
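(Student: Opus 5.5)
The plan is to turn the arithmetic hypothesis into a finite combinatorial one, and then settle that finite problem by an exhaustive certified computation.

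\textbf{Step 1: rescaled corrections are integral Gaussian unitaries.} Each $h_v$ is Gaussian, so after rescaling we may take $h_v\in\operatorname{Mat}_{2^n}(\Q(i))$ with $h_vh_v^\dagger\doteq I$. Since $h_v^2\doteq I$ ($P_v$ squares to a scalar), the normalization can be chosen so that $h_v$ is a unitary with entries in $\Q(i)$. A unitary over $\Q(i)$ lying in the Clifford hierarchy has entries in $\Z[i,1/2]$, by a denominator argument like the least-denominator-exponent argument of exact synthesis. Moreover, the Hermitian condition $h_v^\dagger\doteq h_v$ together with $h_v^2=I$ forces $h_v=\sqrt2^{-t}A$ with $A$ a Gaussian-integer matrix and $t$ bounded in terms of $2^n$. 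The bound comes from the column-norm identity: the squared norms of the entries of each column sum to $2^t$, and $\mathrm{tr}$-type constraints such as $\mathrm{tr}\,h_v\in\{0,\pm 2^n\}$ combined with commutation cut down the possible $t$.

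\textbf{Step 2: classify the Gaussian involutions, up to Clifford conjugation.} For $n=3,4$ I would classify, up to Clifford conjugation and scalars, all Gaussian unitary Hermitian involutions $h$ that lie in $\C(n)$. The $\C(n)$ condition is needed so that the relevant conjugates stay in the hierarchy. Two tools drive this. First, the stabilizer of the $\pm1$-eigenspaces is a Gaussian-defined subspace of dimension $2^{n-1}$. Second, Clifford conjugation orbits reduce the search; concretely, act by $\C_2(n)$ on the eigenspace projectors $(I\pm h)/2$, which have entries in $\tfrac12\Z[i,1/2]$. The paper's remark about ``a single exceptional gate for $n=3$ and $1{,}020$ gates for $n=4$'' suggests the output: a short list of orbit representatives. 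For each representative, the computation records its hierarchy level ($\le3$ for $n=3$, $\le5$ for $n=4$) and the set of Lagrangians $M$ with $h\A_Mh^\dagger=\A_M$. From this list we read off the level bounds on the $h_v$, and that each $h_v$ is GSC.

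\textbf{Step 3: pass from individual corrections to a common Lagrangian, then to $U$.} The corrections satisfy $h_uh_w\doteq h_{u+w}$, and they commute or anticommute according to $\omega(u,w)$. So $\{h_v\}$ is a projective image of the Pauli group, i.e.\ $U\mathcal P(n)U^\dagger$. I would fix a symplectic basis $v_1,\dots,v_{2n}$ and enumerate the tuples $(h_{v_1},\dots,h_{v_{2n}})$ from the classified list, taken up to simultaneous Clifford conjugation and subject to the Pauli commutation relations. For each tuple, the computer intersects the sets of normalized Lagrangians and checks that the intersection is nonempty; Lemma~\ref{lem:main-local-global} then gives $U\in\GSC(n)$. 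Since each $h_v$ is in $\C_3$ (resp.\ $\C_5$), we get $U\in\C_4(3)$ (resp.\ $\C_6(4)$) directly from the recursive definition.

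\textbf{Main obstacle.} I expect Step 3 for $n=4$ to be the hardest. The tuple space is large, and the simultaneous-Clifford-orbit reduction must be done carefully, perhaps by fixing $h_{v_1}$ to a representative and using its Clifford stabilizer to reduce the later choices. I would first try to exploit the fact that pairwise products must again lie in the classified list, which prunes the search drastically.
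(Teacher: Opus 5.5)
\textbf{The gap.} Two things are missing: an argument coupling the corrections to each other (needed for the level bounds), and an effective way to carry out the searches.

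\textbf{The level bounds cannot be read off a per-gate list.} On three qubits, any list of Gaussian traceless Hermitian involutions in $\C(3)$ must contain the gate $H_*=\operatorname{diag}(A,B)$ of Proposition~\ref{gauss-lem:exceptional}. It is monomial over $\Z[i]$, traceless and generalized semi-Clifford. It lies in $\C_4(3)\setminus\C_3(3)$, because $(BA)^2\notin\mu_4\Pau(2)$. This is the ``single exceptional gate'' you allude to, and it contradicts the output ``level $\le3$'' you expect from Step~2. So $h_v\in\C_3(3)$ is a property of the corrections of one common $U$, not of each involution separately, and you need an argument that couples them.

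The paper's argument runs as follows.
\begin{enumerate}
\item If some $h_s$ is Clifford conjugate to $\pm H_*$, then $\mathscr L(h_s)=\{(S,S)\}$.
\item A case analysis (Lemma~\ref{gauss-lem:three-correction-compatibility}) shows that every $h_v$ normalizes $\A_S$.
\item $h_s$ fixes exactly two rank-one projections $E,F\in\A_S$. Since $h_vh_s=\pm h_sh_v$, each $h_v$ permutes $\{E,F\}$.
\item Hence $U^\dagger(E+F)U$ commutes with every Pauli and is scalar, which is absurd for a rank-two projection (Proposition~\ref{gauss-prop:n3-gsc}).
\end{enumerate}
Then $U\in\C_4(3)$, and $U\in\GSC(3)$ follows from the finite $2$-group argument of Theorem~\ref{thm:main-k4}, with no tuple search at all. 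For $n=4$ the level bound is not per-gate either: the paper first proves $U\in\GSC(4)$, then Proposition~\ref{lem:level-bound} with $m=2$ gives $U\in\C_6(4)$, and hence $h_v\in\C_5(4)$.

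\textbf{Neither search is effective as proposed.} Step~1 can be made rigorous: Pauli coefficients are $2^{-n}$ times algebraic integers, so the relevant involutions form a finite set. But finiteness is not an enumeration. Letting Cliffords act on eigenprojectors gives no way to generate candidates.

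The missing reduction is the induction on the level in Lemma~\ref{app:lem-ge-equiv}. Assuming the statement below level $k$, you may take the corrections to be generalized semi-Clifford. Then Proposition~\ref{app:lem-R_n} puts each $h_v$, up to Clifford conjugation, in the monomial form $\ket x\mapsto\zeta_8^\lambda i^{f(x)}\ket{\pi x}$ with $\pi$ a flag-preserving involution. That is a small explicit list.

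For $n=4$, your Step~3 would range over $8$-tuples drawn from $1{,}020$ Clifford classes whose orbits have up to about $10^{13}$ elements; product-closure pruning does not make this feasible. The paper replaces the search with structure:
\begin{enumerate}
\item Lemma~\ref{app:finite-endpoint} shows that either $U\in\GSC(4)$ or $U$ maps some nonidentity Pauli to a Pauli. It does so via a correction with lexicographically minimal invariant $\rho$, a certified plane meeting every fixed-label set $N(h_w)$, and a covering argument.
\item $U$ then becomes, up to Cliffords, $\operatorname{diag}(A,B)$ with three-qubit blocks, which lie in $\C_4(3)\cap\GSC(3)$ by the $n=3$ case (Lemma~\ref{app:return-block}).
\item A common GSC pair is then assembled from Galois-cocycle, uniqueness and $2$-group arguments.
\end{enumerate}
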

\begin{proof}[Sketch of Proof]

By induction on the level (Lemma~\ref{app:lem-ge-equiv}), we may assume that every correction $h_v$ is also generalized semi-Clifford. For $v\ne0$, $h_v$ is then a traceless Hermitian involution, and up to Clifford conjugation it has the normal form
\[
 \ket x\mapsto\zeta_8^\lambda\,i^{f(x)}\ket{\pi x},\qquad\lambda\in\{0,1\},\quad f:\F_2^n\to\Z/4,
\]
where $\pi$ is an involution preserving a complete affine flag (Proposition~\ref{app:lem-R_n}). Since $\lambda$, $f$ and $\pi$ range over finite sets, this leaves finitely many gates to examine.

For $n=3$, a finite check shows that each such gate lies in $\C_3(3)$, except the Clifford conjugates of one fourth-level gate $H_*$ (Proposition~\ref{gauss-lem:exceptional}), and we then show that the latter case is impossible. Hence every $h_v$ lies in $\C_3(3)$, and $U\in\C_4(3)\subseteq\GSC(3)$ by Theorem~\ref{thm:main-k4} (Proposition~\ref{gauss-prop:n3-gsc}).

For $n=4$, the normal forms give $1{,}020$ gates up to Clifford conjugation. A finite search over them shows that either the $h_v$ already have a common generalized semi-Clifford Lagrangian pair, or $U$ maps some Pauli to a Pauli (Lemma~\ref{app:finite-endpoint}). In the latter case $U$ is Clifford-equivalent to $\operatorname{diag}(A,B)$ with three-qubit blocks, and a case analysis on $A$ and $B$ produces a common generalized semi-Clifford Lagrangian pair. The details are in Appendices~\ref{gauss-app:ge3} and~\ref{app:four-qubit}.
\end{proof}

This result settles the case when all $h_v$ are Gaussian. To handle the case for a gate $U$ with some non-Gaussian correction $h_v$, we need to construct a gate $U'$ with a smaller conductor while still capturing the structure of $U$.

We construct $U'$ from the Galois action on the entries of $U$. Let $M=\operatorname{cond}(U)$, choose $W\doteq U$ with entries in $K_M$, and let $\delta$ generate $\operatorname{Gal}(K_M/K_{M-1})$ with $\delta(\zeta_{2^M})=-\zeta_{2^M}$. Then $\operatorname{cond}(U)<M$ exactly when $\delta(W)\doteq W$, that is, when the \emph{Galois cocycle} $\sigma\mapsto W^{-1}\sigma(W)$ of $U$ is trivial up to scalar at $\delta$. So $W^{-1}\delta(W)$ is the obstruction to a smaller conductor, and one can use it to detect suitable candidates for $U'$.

For our gates the cocycle has a particular shape. Let $n\le4$, let every $h_v$ be generalized semi-Clifford, let $m=\max_v\operatorname{cond}(h_v)>2$, and let $\sigma$ generate $\operatorname{Gal}(K_M/K_{m-1})$, so that $\delta=\sigma$ if $m=M$ and $\delta=\sigma^2$ if $m=M-1$. Then, after multiplying $U$ on the right by a Clifford gate,
\[
 W^{-1}\sigma(W)\doteq\operatorname{diag}_x\,i^{p\cdot x}(-1)^{q(x)}
\]
for some $p\in\F_2^n$ and $q:\F_2^n\to\F_2$, and hence $W^{-1}\delta(W)\doteq Z^p$ if $m=M-1$ (Steps~1--3 of the proof of Lemma~\ref{lem:galois-normal-form}, which apply for $n\le4$).

\begin{theorem}\label{thm:conductor-descent}
    For any $U\in \C(3)$ such that every $h_v$ is generalized semi-Clifford, if $h_v$ is non-Gaussian for some $v\in \F_2^6$, then there exists another gate $U'\in \C(3)$ such that
    \[
    U'\in \GSC(3) \Longleftrightarrow U\in \GSC(3), \quad \mathrm{cond}(U')< \mathrm{cond}(U).
    \]
\end{theorem}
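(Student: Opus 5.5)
The plan is to build $U'$ directly from the Galois normal form of the cocycle stated just before the theorem. Let $M=\operatorname{cond}(U)$ and $m=\max_v\operatorname{cond}(h_v)>2$, which holds because some $h_v$ is non-Gaussian. First replace $U$ by $UC$ for a suitable Clifford $C$; by Lemma~\ref{lem:basic-operations}(1) this changes neither membership in $\GSC(3)$ nor the conductor. After this step,
\[
W^{-1}\sigma(W)\doteq D_\sigma\coloneqq\operatorname{diag}_x\, i^{p\cdot x}(-1)^{q(x)},
\]
where $W\doteq U$ has entries in $K_M$ and $\sigma$ generates $\operatorname{Gal}(K_M/K_{m-1})$. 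The idea is to split off a "square root" of this cocycle. We look for a diagonal gate $E$ whose entries are $2^{M}$-th roots of unity (or lie in a slightly larger field) with
\[
\sigma(E)\doteq E\,D_\sigma^{-1}.
\]
Then $U'\coloneqq UE$ satisfies $\sigma(WE)\doteq WE$. Since $\sigma\in\operatorname{Gal}(K_M/K_{m-1})$, this means $U'$ descends to a smaller field, so $\operatorname{cond}(U')<M$.

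Concretely, $D_\sigma$ is determined by a linear form $p$ and a Boolean function $q$. On three qubits $q$ has degree at most $3$, and we can remove its affine part by adjusting the Clifford $C$. Write $\sigma(\zeta_{2^M})=\zeta_{2^M}^{a}$. We try
\[
E=\operatorname{diag}_x\,\zeta_{2^M}^{\,c\,(p\cdot x)+d\,\tilde q(x)},
\]
where $\tilde q$ is an integer lift of $q$ and $c,d$ are solved from $a$ so that $\zeta^{(a-1)c}=i^{-1}$ and $\zeta^{(a-1)d}=-1$. This is possible precisely because $a\equiv 1$ modulo $2^{m-1}$. In the case $m=M-1$ only $Z^p$ survives for $\delta=\sigma^2$, so a single phase gate of the form $\operatorname{diag}\zeta_{2^{M}}^{c\,p\cdot x}$ should suffice.

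Next we have to check the equivalence $U'\in\GSC(3)\iff U\in\GSC(3)$. Since $E$ is diagonal, $E\A_{L_Z}E^\dagger=\A_{L_Z}$ for the $Z$-Lagrangian $L_Z$. For general $L$ this fails, so we will instead use Lemma~\ref{lem:main-local-global} on the corrections $h'_v=U'P_vU'^\dagger=U(EP_vE^\dagger)U^\dagger$. The key point to prove is that the cocycle condition forces the relevant Lagrangian of $U$ to be compatible with $D_\sigma$. Suppose $\A_M$ is a common normalized algebra for all $h_v$. Applying $\sigma$ and using $\sigma(h_v)\doteq h_{v}$ up to the twist $W D_\sigma W^{-1}$, we want to show that we may take $L$ with $U^\dagger\A_MU=\A_L$ stable under conjugation by $D_\sigma$. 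We then want the same for $E$, which is a power-root of $D_\sigma$ of the same polynomial type. Once $E\A_LE^\dagger=\A_{L}$ holds, $U'$ and $U$ share the Lagrangian pair $(L,M)$, and the converse direction follows by symmetry using $U=U'E^{-1}$.

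We expect the main obstacle to be this compatibility. $E$ may be a genuinely higher-level diagonal gate, for instance when $\deg\tilde q=3$, and then it normalizes only Lagrangians that are nearly $Z$-type. Our first approach is a three-qubit case analysis. Either $q$ is at most quadratic, so $E$ lies in a low level and is controlled, or $q$ is cubic, in which case we show that $L$ must contain enough $Z$-type vectors; we would check this by a finite computation over the $135$ Lagrangians. If this fails, the fallback is to choose $E$ not diagonal but of the form $\Pi D$, adapted to the generalized semi-Clifford Lagrangian of some $h_v$, and argue via the $n=3$ normal forms of Proposition~\ref{app:lem-R_n}.
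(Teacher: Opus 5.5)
There is a genuine gap. First, the theorem requires $U'\in\C(3)$, and your proposal never addresses it. For a diagonal $E$ with phases $\zeta_{2^M}^{c(p\cdot x)+d\tilde q(x)}$ and $P_v\doteq X^aZ^b$, the corrections of $UE$ are $h_v\cdot U\,\diag_x\bigl(E(x+a)/E(x)\bigr)U^\dagger$. Nothing controls how $U$ conjugates these non-Pauli diagonal gates, so in general $UE$ leaves the hierarchy.

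Second, your GSC argument runs in the wrong direction. The case $U\in\GSC(3)$ is trivial: take $U'=I$, whose conductor is $2<\operatorname{cond}(U)$. The real content is therefore that a GSC pair of $U'$ must produce one for $U$. You argue only from a GSC pair of $U$ and dismiss the converse ``by symmetry'', but there is no symmetry. The Galois constraint of Proposition~\ref{lem:tool-galois}(2) on the input Lagrangians of $U'$ comes from its cocycle, which you have just made trivial. Nothing then forces $E^{-1}$ to normalize an input algebra of $U'$, and the proposition cannot be applied to $U$ because $U$ is not known to be GSC.

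The missing idea is Lemma~\ref{lem:m-M-1}. Assume $U\notin\GSC(3)$. This assumption is also what gives $\deg q\le2$ in Step~4 of Lemma~\ref{lem:galois-normal-form}, so your cubic case never arises. One then shows that $m=M-1$, that $p\ne0$, and that $UZ^pU^\dagger=P$ is a Pauli; in particular the case $m=M$, where your $E$ would have to carry all of $(-1)^q$, never needs a construction. The proof is a real case analysis of the group of Clifford involutions $UZ^cU^\dagger$, $c\in\operatorname{im}T$, backed by finite certificates.

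With this lemma you only need to descend one step, via $\delta=\sigma^2$ with $\delta(W)\doteq WZ^p$; the $(-1)^q$ part disappears on squaring. One takes $U'=UR_{Z^p}(\xi)=R_P(\xi)U$ with $\xi=\zeta_{2^M}^{-1}$, which is exactly the ``single phase gate'' you guessed for $m=M-1$. Because $P$ is a Pauli, each correction of $U'$ is $h_v$ or $\xi^{-1}R_P(\xi^2)h_v$. The latter stays in the hierarchy, using a GSC pair of $h_v$ that contains the label of $P$. Because $U$ and $U'$ both send $Z^p$ to $P$, Lemma~\ref{lem:root-witness} lets any GSC pair $(L,L')$ of either gate be chosen with $(0,p)\in L$. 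Then $R_{Z^p}(\xi)\in\A_L$ and the pair is shared.

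Splitting the full $\sigma$-cocycle, including its nonlinear part, produces an $E$ with no such Pauli anchor. That is why neither hierarchy membership nor the GSC equivalence can be established along your route.
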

\begin{proof}[Sketch of Proof]
    Let $U\in\C(3)\setminus\GSC(3)$ with generalized semi-Clifford corrections, but not all Gaussian. On three qubits two finite arguments sharpen the shape of the cocycle. First, $\deg q\le2$ (Lemma~\ref{lem:galois-normal-form}), so the cocycle is determined by $p$ and the quadratic part of $q$. Second, writing $T$ for the symmetric binary matrix formed by $p$ and the quadratic part of $q$, the gates $UZ^cU^\dagger$ with $c\in\operatorname{im}T$ are Clifford involutions, and a case analysis of the groups they generate, at most three commuting involutions among $3{,}843$ explicit Clifford gates, shows that $m=M-1$, $p\ne0$ and $UZ^pU^\dagger$ is a Pauli (Lemma~\ref{lem:m-M-1}). Thus, after multiplying $U$ by Clifford gates on both sides,
\[
 \delta(W)\doteq WZ^p,\qquad UZ^pU^\dagger=P\in\mathcal P(3)
\]
for some $0\ne p\in\F_2^3$. We define
\[
U'=UR,\qquad R=\frac{I+Z^p}{2}+\xi\,\frac{I-Z^p}{2},\qquad \xi=\zeta_{2^M}^{-1}.
\]
The rotation $R$ is chosen to cancel the Galois cocycle. Since $\delta(\xi)=-\xi$, we have $\delta(R)=\frac{I+Z^p}{2}-\xi\frac{I-Z^p}{2}=Z^pR$, so $W'=WR$ has entries in $K_M$ and satisfies
\[
W'^{-1}\delta(W')=R^{-1}\,\bigl(W^{-1}\delta(W)\bigr)\,\delta(R)\doteq R^{-1}Z^p\,Z^pR=I.
\]
Hence $\delta(W')\doteq W'$, and $\mathrm{cond}(U')<M$. Since $P$ is a Pauli, $U'$ stays in the hierarchy and is generalized semi-Clifford exactly when $U$ is (Lemma~\ref{lem:conductor-step}).
\end{proof}

This conductor descent argument paves the way to prove that all gates in the Clifford hierarchy for three qubits are generalized semi-Clifford.
\begin{theorem}\label{thm:main-small-width}
For every $k$, $\C_k(3)\subseteq\GSC(3)$.
\end{theorem}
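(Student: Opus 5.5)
We argue by a nested double induction, first on the level $k$ and, for fixed $k$, on the conductor. The base cases are $k\le 4$, which is Theorem~\ref{thm:main-k4}, together with gates whose corrections are all Gaussian, which is Theorem~\ref{thm:n3-gaussian}. Fix $k\ge5$ and assume $\C_{k-1}(3)\subseteq\GSC(3)$. Let $U\in\C_k(3)$. Then every correction $h_v=UP_vU^\dagger$ lies in $\C_{k-1}(3)$, so by the outer induction hypothesis every $h_v$ is generalized semi-Clifford. This places us exactly in the setting of Theorems~\ref{thm:n3-gaussian} and~\ref{thm:conductor-descent}.

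For the inner induction, we show that the statement
\[
P(r):\quad \text{every } U\in\C_k(3)\text{ with } \operatorname{cond}(U)\le r \text{ is in } \GSC(3)
\]
holds for all $r\ge2$. Since $\operatorname{cond}(U)\le\max\{2,k\}$ (Proposition~\ref{lem:conductor-plane}), $P(k)$ already gives the claim. Take $U$ of conductor $r$.
\begin{enumerate}
\item If every $h_v$ is Gaussian, then Theorem~\ref{thm:n3-gaussian} gives $U\in\GSC(3)$ directly. This covers in particular the minimal conductor $r=2$: entries in $\Q(i)$ force the corrections to be Gaussian, since conjugating a Pauli by a matrix over $\Q(i)$ stays over $\Q(i)$.
\item Otherwise some $h_v$ is non-Gaussian, and Theorem~\ref{thm:conductor-descent} produces $U'\in\C(3)$ with $\operatorname{cond}(U')<r$ and $U'\in\GSC(3)\Leftrightarrow U\in\GSC(3)$.
\end{enumerate}

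The delicate point is that the inner hypothesis must apply to $U'$. For that we need $U'$ to lie again in $\C_k(3)$, or at least at a level where generalized semi-Cliffordness is already known, and its corrections to remain generalized semi-Clifford. I would extract this from the construction $U'=UR$. Here $R$ is a diagonal rotation commuting with $Z^p$, and $UZ^pU^\dagger=P$ is a Pauli. The corrections of $U'$ are $U R P_w R^\dagger U^\dagger$. Now $RP_wR^\dagger$ equals $P_w$ when $P_w$ commutes with $Z^p$, and otherwise equals $P_w$ times a rotation of the form $aI+bZ^p$. Conjugating by $U$ turns this into $h_w$ times $aI+bP$. The factor $aI+bP$ is a phase rotation about the Pauli $P$, and I would check it against Lemma~\ref{lem:conductor-step}, which asserts that $U'$ stays in the hierarchy, to confirm that the level does not increase.

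Failing that, it suffices to reformulate the inner statement. Run it not over a fixed $k$, but over all $U\in\C(3)$ whose corrections are generalized semi-Clifford. The descent preserves this class by Lemma~\ref{lem:conductor-step}. The conductor strictly decreases and is bounded below by $2$, so the induction on conductor terminates at the Gaussian base case.

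I expect the main obstacle to be exactly this closure: ensuring that $U'$ again has generalized semi-Clifford corrections, so that the hypotheses of Theorem~\ref{thm:conductor-descent} persist after each descent step. Each correction of $U'$ is $h_w(aI+bP)$ with $h_w$ generalized semi-Clifford. Such a product need not be generalized semi-Clifford in general. However, it lies in a lower or equal level of the hierarchy on three qubits, where the outer induction hypothesis applies. This is why I would organize the induction so that the outer hypothesis covers the level of the corrections of $U'$, and then conclude by finitely many descent steps.
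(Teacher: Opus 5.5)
\textbf{Gap.} Your ordering of the induction is the problem. You put the level outermost and the conductor inside, at a fixed level $k$. Then $P(r-1)$ only covers gates in $\C_k(3)$. But Theorem~\ref{thm:conductor-descent} only gives $U'\in\C(3)$, and the statement of Lemma~\ref{lem:conductor-step} gives no level bound either. So you cannot apply $P(r-1)$ to $U'$.

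You flag this, but neither fallback is carried out. The reformulated class of gates in $\C(3)$ with generalized semi-Clifford corrections is not shown to be closed under the descent. The statement of Lemma~\ref{lem:conductor-step} says nothing about the corrections of $U'$, so citing it does not establish closure. Your last paragraph concedes that $h_w(aI+bP)$ need not be generalized semi-Clifford. It then rests on the unproved assertion that this product lies at level at most $k-1$. As written, the induction does not close.

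\textbf{How the paper avoids it.} The paper inducts on $(r,k)$ in lexicographic order with the conductor first. Since $\operatorname{cond}(U')<r$, the hypothesis covers $U'$ at whatever level it lies, so nothing about the level or the corrections of $U'$ needs to be controlled. The generalized semi-Clifford property of the corrections $h_v\in\C_{k-1}(3)$ is obtained as follows, using $\operatorname{cond}(h_v)\le r$ from Proposition~\ref{lem:conductor-plane}:
\begin{itemize}
\item if $\operatorname{cond}(h_v)<r$, it follows from the conductor hypothesis;
\item if $\operatorname{cond}(h_v)=r$, it follows from the level hypothesis.
\end{itemize}

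\textbf{Repairing your order.} Your version can also be repaired, in either of two ways.
\begin{itemize}
\item Conjugate by a Clifford sending $P$ to $Z_c$. The anticommuting correction $\xi^{-1}R_P(\xi^2)h_v$ then becomes, up to scalar, $f\Delta_{\xi^2}$ with $f\in\C_{k-1}(3)$ and $fZ_cf^\dagger=-Z_c$. Since $\xi^2$ has order $2^{M-1}$ and $M=\operatorname{cond}(U)\le k$, Lemma~\ref{lem:fib-phase} places this correction in $\C_{k-1}(3)$, so $U'\in\C_k(3)$.
\item Step~2 of the proof of Lemma~\ref{lem:conductor-step} writes $C_1R_P(\xi^2)h_vC_2=\Pi_vD'_v$. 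This shows directly that the corrections of $U'$ are generalized semi-Clifford, so a single induction on the conductor over your reformulated class works.
\end{itemize}
One of these has to be supplied explicitly. Swapping the order of the two inductions is the simplest fix.
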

\begin{proof}

Let $U\in \C_k(3)$. We induct on $(r,k)$ in lexicographic order, where $r=\operatorname{cond}(U)$ (see Fig.~\ref{fig:conductor-plane}).

For $k\le4$, $U\in\GSC(3)$ by Theorem~\ref{thm:main-k4}. If $r=2$, then $U$ is Gaussian, so every $h_v$ is Gaussian, and $U\in\GSC(3)$ by Theorem~\ref{gauss-thm:rigidity}.

Now assume $r>2$, $k>4$, and the statement is true for all $(r',k')$ with $r'<r$, or with $r'=r$ and $k'<k$. Then $h_v \in \C_{k-1}(3)$ with $\mathrm{cond}(h_v)\leq r$ for all $v \in \F^{6}_2$. For any $v$, if $\mathrm{cond}(h_v)<r$, then $h_v$ is generalized semi-Clifford by the induction on $r$. If $\mathrm{cond}(h_v)=r$, then $h_v$ is generalized semi-Clifford by the inner induction on the hierarchy level. Thus every $h_v$ is generalized semi-Clifford. If all $h_v$ are Gaussian, then again $U$ is generalized semi-Clifford by Theorem~\ref{gauss-thm:rigidity}.

Assume otherwise, so $\max_v\operatorname{cond}(h_v)>2$. Suppose $U\not \in \GSC(3)$. Then by Theorem~\ref{thm:conductor-descent} there exists
\[ U'\in\C(3)\setminus\GSC(3)
\]
with $\operatorname{cond}(U')<r$. But by the outer induction, $U'\in \GSC(3)$, hence a contradiction.
\end{proof}

\begin{inlinefigure}
\centering
\begin{tikzpicture}[x=.85cm,y=.6cm,font=\footnotesize]
\color{black}
\foreach \rr in {2,...,7}{\foreach \k in {2,...,7}{
\ifnum\rr>\k\relax
\else\ifnum\rr<4
\fill[black!20](\rr-.5,\k-.5)rectangle(\rr+.5,\k+.5);
\node at(\rr,\k){$+$};
\else\ifnum\rr=4
\ifnum\k<6
\fill[black!20](\rr-.5,\k-.5)rectangle(\rr+.5,\k+.5);
\node at(\rr,\k){$+$};
\else
\fill[black!7](\rr-.5,\k-.5)rectangle(\rr+.5,\k+.5);
\fi
\else
\fill[black!7](\rr-.5,\k-.5)rectangle(\rr+.5,\k+.5);
\fi\fi\fi}}
\foreach \j in {2,...,8}{
\draw[black!20,very thin](\j-.5,1.5)--(\j-.5,7.5);
\draw[black!20,very thin](1.5,\j-.5)--(7.5,\j-.5);}
\draw[line width=.65pt](2.5,1.5)--(2.5,2.5)--(3.5,2.5)--(3.5,3.5)--(4.5,3.5)--(4.5,4.5)--(5.5,4.5)--(5.5,5.5)--(6.5,5.5)--(6.5,6.5)--(7.5,6.5)--(7.5,7.5);
\draw[-{Stealth[length=1.5mm]},line width=.7pt](4.08,5.8)--(4.08,5.2);
\draw[-{Stealth[length=1.5mm]},line width=.7pt](3.85,5.85)--(3.2,5.2);
\fill(4,6)circle(1.3pt);
\draw[-{Stealth[length=1.5mm]}](1.2,1.2)--(7.85,1.2)node[right]{$r$};
\draw[-{Stealth[length=1.5mm]}](1.2,1.2)--(1.2,7.85)node[above]{$k$};
\foreach \j in {2,...,7}{\node[below]at(\j,1.18){$\j$};}
\foreach \j in {2,...,6}{\node[left]at(1.18,\j){$\j$};}
\node[left]at(1.18,7){$\ge7$};
\end{tikzpicture}
\caption{The induction process of Theorem~\ref{thm:main-small-width}, where $r=\operatorname{cond}(U)$ is the conductor and $k$ the level of $U\in\C_k(3)$.}
\label{fig:conductor-plane}
\end{inlinefigure}

For the case $n=4$, we expect a similar conductor descent argument to follow through. The difficulty lies in the finite case analysis to find a Pauli $Q$ such that $P=UQU^\dagger$ is a Pauli. The Galois cocycle still has the shape above, but the argument bounding $\deg q$ does not extend. A priori the cocycle has $2^4\cdot2^{11}=32{,}768$ possible shapes, given by $p$ and by $q$ modulo affine functions. Moreover, the gates $UZ^cU^\dagger$ need not be Clifford, which makes the case analysis even more complicated.

If such a $Q$ exists, then we can take
\[
U'=U\Bigl(\tfrac{I+Q}{2}+\xi\,\tfrac{I-Q}{2}\Bigr)=\Bigl(\tfrac{I+P}{2}+\xi\,\tfrac{I-P}{2}\Bigr)U,\qquad \xi=\zeta_{2^M}^{-1},
\]
which would satisfy our desired properties.
\begin{conjecture}\label{conj:n4-descent}
Let $U\in\C(4)$. Assume every $h_v$ is generalized semi-Clifford and some $h_v$ is not Gaussian. Then there exists another gate $U'\in \C(4)$ such that
    \[
    U'\in \GSC(4) \Longleftrightarrow U\in \GSC(4), \quad \mathrm{cond}(U')< \mathrm{cond}(U).
    \]
\end{conjecture}

Now, by the same induction proof of Theorem~\ref{thm:main-small-width}, Conjecture~\ref{conj:n4-descent} would imply $\C_k(4)\subseteq\GSC(4)$ for every $k$, leading to the following characterization.
\begin{conjecture}\label{conj:gsc-boundary}
For positive integers $k,n$,
\[
\mathcal C_k(n)\subseteq\GSC(n)
\quad\Longleftrightarrow\quad
k\leq4\quad\text{or}\quad n\leq4.
\]
\end{conjecture}

\section{Beyond generalized semi-Clifford}\label{sec:conjectures}

We now present some structural results that are natural extensions of the generalized semi-Clifford structure, with details discussed in Appendix~\ref{sec:beyond-gsc}. 

A gate $U\in\C_k(n)$ is generalized semi-Clifford exactly when it can be written as \[
U=C_1
\left(\sum_{x\in\F_2^{n}}\lambda_x\ket{\pi(x)}\bra{x}\right)C_2,
\]
for some $C_1, C_2\in \C_2(n)$, a permutation $\pi:\F_2^n\to \F_2^n$, and $\lambda_x\in\mathbb T$.

Let $n=n_c+n_t$. We denote by $\mathcal{F}(n_c, n_t)$ the set of gates $U\in\C(n)$ of the form
\[
C_1
\left(\sum_{x\in\F_2^{n_c}}\ket{\pi(x)}\bra{x}
\otimes V_x\right)C_2
\]
where $C_1, C_2\in \C_2(n)$, $\pi:\mathbb F_2^{n_c}\to \F_2^{n_c}$ is a permutation, and $V_x\in \mathrm{U}(2^{n_t})$. Here $n_c$ is the number of control qubits and $n_t$ is the number of target qubits.

Another characterization of a generalized semi-Clifford gate is that it maps the algebra of an isotropic subspace of dimension $n$ onto the algebra of another (Lemma~\ref{lem:lagrangian}). More generally, for $0\le n_c\le n$ we say that a gate $U$ is \emph{$n_c$-generalized semi-Clifford} if $U\mathcal A_SU^\dagger=\mathcal A_T$ for some isotropic subspaces $S,T\subseteq\F_2^{2n}$ of dimension $n_c$. Thus the generalized semi-Clifford gates are the $n$-generalized semi-Clifford gates, and every gate is $0$-generalized semi-Clifford trivially. Notice that a gate can be $n_c$-generalized semi-Clifford for several values of $n_c$. We prove the following lemma in Appendix~\ref{sec:beyond-gsc}.
\begin{lemma}\label{lem:control-form}
Let $0\le n_c\le n$ and $n_t=n-n_c$. A gate $U\in\C(n)$ lies in $\mathcal F(n_c,n_t)$ if and only if it is $n_c$-generalized semi-Clifford, that is, $U\mathcal A_SU^\dagger=\mathcal A_T$ for some isotropic subspaces $S,T\subseteq\F_2^{2n}$ of dimension $n_c$.

\end{lemma}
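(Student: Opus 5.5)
We prove the two directions separately. For the forward direction, let $S_0=\{(0,z):z\in\F_2^{n_c}\times 0\}$ be the isotropic subspace of $Z$-type Paulis on the control qubits, so that $\A_{S_0}=\Span\{\ket{x}\bra{x}\otimes I:x\in\F_2^{n_c}\}$ is the algebra of control projectors. Set $G=\sum_x\ket{\pi(x)}\bra{x}\otimes V_x$. Then
\[
G\bigl(\ket{x}\bra{x}\otimes I\bigr)G^\dagger=\ket{\pi(x)}\bra{\pi(x)}\otimes V_xV_x^\dagger=\ket{\pi(x)}\bra{\pi(x)}\otimes I,
\]
so $G\A_{S_0}G^\dagger=\A_{S_0}$. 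A Clifford $C$ maps $\A_S$ onto $\A_{C(S)}$, where $C(S)$ is the image of $S$ under the symplectic map induced by $C$; this map preserves isotropy and dimension. Hence for $U=C_1GC_2$ we get $U\A_SU^\dagger=\A_T$ with $S=C_2^{-1}(S_0)$ and $T=C_1(S_0)$, both isotropic of dimension $n_c$.

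For the converse, suppose $U\A_SU^\dagger=\A_T$. Every isotropic subspace of dimension $n_c$ is the symplectic image of $S_0$, by Witt's theorem or by extending a basis to a symplectic basis. Since every element of $\Sp(2n,2)$ is realized by a Clifford, there are Cliffords $C_S,C_T$ with $C_S(S_0)=S$ and $C_T(S_0)=T$. Replacing $U$ by $G=C_T^\dagger UC_S$ reduces us to $G\A_{S_0}G^\dagger=\A_{S_0}$.

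The algebra $\A_{S_0}$ is commutative of dimension $2^{n_c}$, and its minimal projections are $E_x=\ket{x}\bra{x}\otimes I$. Conjugation by $G$ is a $*$-automorphism of $\A_{S_0}$, so it permutes these minimal projections: $GE_xG^\dagger=E_{\pi(x)}$ for some permutation $\pi$ of $\F_2^{n_c}$. Therefore $G$ maps the range $\ket{x}\otimes\CC^{2^{n_t}}$ onto $\ket{\pi(x)}\otimes\CC^{2^{n_t}}$. The restriction of $G$ to this range is a unitary $V_x$, which gives
\[
G=\sum_x\ket{\pi(x)}\bra{x}\otimes V_x.
\]
It follows that $U=C_TGC_S^\dagger$ has the required form. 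Since $U\in\C(n)$ by hypothesis, $U\in\mathcal F(n_c,n_t)$.

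The only step with real content is identifying the minimal projections of $\A_{S_0}$ and using that algebra automorphisms permute them. The main point to be careful about is the transitivity of Cliffords on isotropic subspaces of a fixed dimension. This follows from Witt's extension theorem over $\F_2$ together with the surjectivity of $\C_2(n)\to\Sp(2n,2)$, which is the same fact implicit in Lemma~\ref{lem:lagrangian}. The edge cases $n_c=0$ (trivially true) and $n_c=n$ (which recovers Lemma~\ref{lem:lagrangian}) come out of the same argument.
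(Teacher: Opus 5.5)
Your proof is correct and matches the paper's argument step for step. The paper uses the same Clifford transport of the $Z$-labels of the control qubits, and the same decomposition of the normalizing gate as $\sum_x\ket{\pi(x)}\bra{x}\otimes V_x$. The only cosmetic difference is how $GE_xG^\dagger=E_{\pi(x)}$ is obtained: you use that a $*$-automorphism of $\A_{S_0}$ permutes its minimal projections, while the paper notes that every projection in that algebra has rank $2^{n_t}|Y|$.
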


We denote
\[
\mathcal{F}_j(n_c, n_t)\coloneqq\{U\in \mathcal{F}(n_c, n_t): V_x\in \C_j(n_t), \ \forall x\}.
\]

When $j=2$, that is, each block $V_x$ is Clifford, there is a clean relationship between them and generalized semi-Clifford gates, which we prove in Appendix~\ref{subsec:clifford-fibers}.
\begin{proposition}\label{clifford-fiber}
    \[
    \mathcal{F}_2(n_c, n_t) \subseteq \GSC(n_c+n_t) \Longleftrightarrow n_c \leq 1 \text{ or } n_t\leq 2.
    \]
\end{proposition}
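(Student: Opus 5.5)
Both directions are argued separately: the ``if'' direction by building an explicit generalized semi-Clifford Lagrangian pair, and the ``only if'' direction by exhibiting gates outside $\GSC$ at the two minimal sizes and propagating them.

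\textbf{Case $n_c=0$.} Here $\mathcal F_2(0,n_t)$ consists of Clifford gates $C_1VC_2$ with $V\in\C_2(n_t)$. These are Clifford, hence in $\GSC$ trivially.

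\textbf{Case $n_c=1$.} A gate in $\mathcal F_2(1,n_t)$ is, up to Cliffords on both sides (Lemma~\ref{lem:basic-operations}(1)), either $\ket0\bra0\otimes V_0+\ket1\bra1\otimes V_1$ or $X\otimes I$ times such a gate. Since $X\otimes I$ is Clifford, it suffices to treat the controlled form $U=\ket0\bra0\otimes V_0+\ket1\bra1\otimes V_1$. Absorbing the Clifford $I\otimes V_0$, we may assume $U=\ket0\bra0\otimes I+\ket1\bra1\otimes C$ with $C$ Clifford, i.e.\ a controlled-Clifford.
\begin{itemize}
\item $C$ is semi-Clifford in a strong sense: there is a Clifford $E$ with $E^\dagger CE=P\,D'$-type form. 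More simply, $C$ fixes some Lagrangian algebra up to permutation. Indeed, take $L_0$ Lagrangian and note $C\A_{L_0}C^\dagger=\A_{L_1}$ with $L_1=C(L_0)$ a Lagrangian. Choose $E$ Clifford with $E\A_{\mathrm{std}}E^\dagger=\A_{L_0}$.
\item The cleanest route: a Clifford $C$ factors as $C=C'\,\Pi D$ with $\Pi$ a CNOT-type linear permutation and $D$ diagonal Clifford, up to Cliffords. Controlled versions of Paulis and Cliffords can then be analysed directly.
\item I would instead argue via Lemma~\ref{lem:lagrangian}. Take $L=\langle Z_1\rangle\oplus L'$ with $L'$ a Lagrangian on the targets. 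Then $U\A_LU^\dagger$ is generated by $Z_1$ together with $\ket0\bra0\otimes P+\ket1\bra1\otimes CPC^\dagger$ for $P\in\A_{L'}$.
\item This is $\A_M$ provided $CPC^\dagger\in\A_{L'}$, i.e.\ provided $C$ normalizes $\A_{L'}$ for some Lagrangian $L'$, with $M=\langle Z_1\rangle\oplus L'$.
\item Every Clifford has this property. Its symplectic image on $\F_2^{2n_t}$ fixes some Lagrangian, since any element of $\Sp(2m,2)$ fixes a Lagrangian subspace (choose an invariant isotropic subspace and iterate on $S^\perp/S$). I expect this to be the main obstacle in this direction: it requires an invariant-Lagrangian lemma for $\Sp(2m,2)$, and that lemma may fail for elements of large order (e.g.\ irreducible Singer-type elements). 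If it fails, I would instead exploit the freedom that $M$ need not equal $L$.
\end{itemize}

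\textbf{Case $n_t\le 2$.} Each $V_x$ is a Clifford on at most $2$ qubits. I would choose a target Lagrangian $L'_x$ mapped by $V_x$ to $M'_x=V_xL'_xV_x^\dagger$. The difficulty is that the pair $(L'_x,M'_x)$ depends on $x$; I would handle this by using that $\C_2(2)$ is small, together with the global hierarchy constraint $U\in\C(n)$.
\begin{itemize}
\item Alternatively, apply Lemma~\ref{lem:main-local-global} directly with $M=\langle Z_1,\dots,Z_{n_c}\rangle\oplus M'$.
\item The content of this case is showing that the membership $U\in\C(n)$ forces $\pi$ to be affine up to Clifford and forces the $V_x$ to vary compatibly. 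I would attempt this through corrections $h_v$ for $v$ in the control $X$-directions.
\end{itemize}

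\textbf{The ``only if'' direction.} By Lemma~\ref{lem:basic-operations}(2), tensoring with identity preserves membership in $\mathcal F_2$ and in $\GSC$ in the required sense (identity is diagonal). It therefore suffices to exhibit gates in $\mathcal F_2(2,3)\setminus\GSC(5)$. The natural candidate is a doubly-controlled Clifford $\sum_x\ket x\bra x\otimes V_x$ with $V_{00}=I$ and the other three $V_x$ Clifford gates on $3$ qubits that share no common invariant Lagrangian structure. This is plausibly the de Silva--Lautsch example in $\C_5(5)$.

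To prove that the candidate is not in $\GSC(5)$, I would use Lemma~\ref{lem:main-local-global}: compute the corrections and show by a finite exhaustive check over all Lagrangians $M\le\F_2^{10}$ that no $\A_M$ is normalized by all of them. I expect this nonmembership certificate, together with choosing the right $V_x$, to be the main computational obstacle of the whole proof.
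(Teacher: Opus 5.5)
Your proposal has genuine gaps in both directions, and its one concrete lemma is false.

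\textbf{The case $n_c=1$.} Your argument rests on the claim that every element of $\Sp(2m,2)$ fixes a Lagrangian, and that claim is false. The iteration on $S^\perp/S$ cannot even start, because there may be no invariant nonzero isotropic subspace at all. Already for $m=1$, the symplectic action of the Clifford $HS$ has order three and permutes the three Lagrangians of $V_1$ (the lines spanned by the labels of $X$, $Y$, $Z$) cyclically.

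The freedom $M\ne L$ does not rescue you, because $\operatorname{diag}(I,HS)$ really is not generalized semi-Clifford. It normalizes $\A_{\Span\{z_1\}}$, so by Lemma~\ref{lem:root-witness} any GSC pair can be replaced by one of the form $(\Span\{z_1\}\oplus L,\Span\{z_1\}\oplus L')$. The two blocks then force $L'=L$ and $\gamma_{HS}L=L$, which is impossible.

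Your sufficiency computation is correct, but it never uses the hypothesis $U\in\C(n)$, which is built into the definition of $\mathcal F(n_c,n_t)$. Without that hypothesis the statement is false. The missing step is as follows. The gate $\operatorname{diag}(I,C)$ has some finite level $k$, so Proposition~\ref{lem:controlled-clifford} gives $C^{2^{k-2}}\in\mu_4\mathcal P(n_t)$. Hence $\gamma_C$ has $2$-power order, and the cyclic $2$-group $\langle\gamma_C\rangle$ fixes a Lagrangian because the number of Lagrangians is odd (Lemma~\ref{lem:lagrangian-count}).

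\textbf{The case $n_c\ge2$, $n_t\le2$.} Here you have a plan but no argument. The permutation $\pi$ plays no role: the same root-witness reduction gives Theorem~\ref{lem:fib-sharp-small}(1), namely $U\in\GSC(n)$ if and only if $S_xL=L'$ for all $x$. So trying to force $\pi$ to be affine is a detour. What is actually needed is a local-to-global statement about the $S_x\in\Sp(4,2)\cong S_6$.

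The paper obtains it by induction on the level. The correction $U(X^a\otimes I)U^\dagger$ is again of the form~\eqref{eq:fib-form}, with blocks $V_{x+a}V_x^\dagger$ and lower level. So each $\mathcal T_a=\{S_{x+a}S_x^{-1}\}$ generates a $2$-group. The finite facts \textup{(\ref{cert:s6-two-group-criterion})} and \textup{(\ref{cert:s6-sylow-cosets-and-three-element-subsets})} about $S_6$ then upgrade this from affine planes to all of $\F_2^{n_c}$. They place every $S_x$ in one right coset $Pg$ of a Sylow $2$-subgroup, which gives the pair $(g^{-1}L',L')$ with $L'$ a Lagrangian fixed by $P$.

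\textbf{The converse.} The local-to-global step above is specific to $n_t\le2$, and that is exactly why your converse is hard. A candidate must lie in $\C(5)$ to belong to $\mathcal F_2(2,3)$. For your doubly controlled candidate, this already forces every $S_yS_x^{-1}$ to have $2$-power order, by Lemma~\ref{lem:tool-blocks}(2) and Proposition~\ref{lem:controlled-clifford}. So ``choose the right $V_x$'' hides the whole difficulty, and you would also have to verify hierarchy membership of whatever you pick. The paper says the de Silva--Lautsch gate comes from a different construction, so you would first need to show it lies in $\mathcal F_2(2,3)$.

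The paper instead takes $U_{E_6}$, whose blocks $\widehat A^c\widehat B^t$ are built from transvections along the $E_6$ Dynkin diagram. It certifies $U_{E_6}\in\C_5(5)$ by \textup{(\ref{cert:e6-gate-level-five})}. It excludes a common Lagrangian because $A^2B$ has the irreducible characteristic polynomial $z^6+z^3+1$. Once you have a gate in hand, your exhaustive test would also work, and by Part~(1) only the $135$ Lagrangians of $V_3$ need checking.
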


Hence the first case of a non-generalized semi-Clifford gate $U\in \mathcal{F}_2(n_c, n_t)$ should have $n_c\geq 2$ and $n_t\geq 3$, which requires $n=n_c+n_t\geq 5$. In fact, we can show that for all $k,n\geq 5$, there is a non-generalized semi-Clifford gate with $n$ qubits and exact level $k$ which offers the negative results in Fig.~\ref{fig:combined}. We prove it in Appendix~\ref{subsec:orthogonal-fibers}.
\begin{theorem}\label{thm:gsc-counterexamples}
    For every $k\geq 5$ and $n\geq 5$, there is a gate $U\in \mathcal{F}_2(2, n-2)$ such that
    \[
    U\in \C_k(n) \setminus \C_{k-1}(n), \quad U\not \in \GSC(n).
    \]
\end{theorem}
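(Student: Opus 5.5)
We will build an explicit gate of the shape
\[
U=\sum_{x\in\F_2^2}\ket{x}\bra{x}\otimes V_x\ \otimes D,
\]
acting on $2+3$ qubits with Clifford blocks $V_x\in\C_2(3)$, tensored with a diagonal gate $D$ on the remaining $n-5$ qubits. Here $U$ is a controlled-Clifford, so $U\in\mathcal F_2(2,n-3)\subseteq\mathcal F_2(2,n-2)$ after absorbing $D$ into the targets. The first step is the core case $n=5$. We choose a gate $U_0=\sum_x\ket{x}\bra{x}\otimes V_x$ with $V_x\in\C_2(3)$ that is not generalized semi-Clifford. Its existence is the "only if" direction of Proposition~\ref{clifford-fiber} at $(n_c,n_t)=(2,3)$. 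We will use the witness from that proof: the $V_x$ generate a subgroup of $\Sp(6,2)$ fixing no Lagrangian compatible with the controls. A natural choice is four Cliffords whose symplectic images are orthogonal-type elements (hence the appendix name "orthogonal fibers") with no common invariant Lagrangian.

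The second step computes the level of $U_0$. Conjugating $X$ on a control qubit gives $\sum_x\ket{x}\bra{x}\otimes V_xV_{x+e}^\dagger$ times a Pauli, which is a controlled-Clifford on one control. Conjugating a target Pauli gives $\sum_x\ket x\bra x\otimes V_xPV_x^\dagger$, a controlled-Pauli, hence Clifford. Iterating, a controlled-Clifford with $n_c$ controls lies in roughly level $n_c+2$ or so. We will read off the exact level of $U_0$, which should be $\le 5$, and arrange for it to be exactly $5$ by choosing the $V_x$ so that a product such as $V_{00}V_{01}^\dagger V_{11}V_{10}^\dagger$ is non-Pauli.

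The third step raises the level and the qubit count. To reach $k>5$ we tensor with a diagonal gate on an extra qubit, or better multiply the control part by a controlled phase $\operatorname{diag}(1,e^{i\pi/2^{k-1}})$ on a qubit outside the $5$. For $n\ge6$ we set $U=U_0\otimes D$ with $D=\operatorname{diag}(1,e^{2\pi i/2^{k}})\otimes I$, which has exact level $k$ by~\cite{CGK2017}. By Lemma~\ref{lem:basic-operations}(2), $U\in\C_{\max(5,k)}$, with exact level $k$ since tensor levels are the maximum. Non-GSC-ness is preserved because $D$ is diagonal. The case $n=5$, $k>5$ cannot use a spare qubit. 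There we instead replace one block $V_x$ by $V_x$ times a diagonal rotation $\operatorname{diag}(1,e^{2\pi i/2^{k-2}})$ on a target qubit, keeping blocks in the hierarchy. We then rerun the level count and rerun the obstruction argument, which depends only on the symplectic/Lagrangian incompatibility and should survive a diagonal perturbation commuting with the relevant $Z$-Lagrangian. Membership in $\mathcal F_2(2,n-2)$ requires Clifford blocks, so for $n=5$ we instead phase the control register: $\operatorname{diag}_x(\lambda_x)$ on the controls with $\lambda$ of level $k$.

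We expect the main obstacle to be proving that $U_0$ is not GSC. By Lemma~\ref{lem:main-local-global}, this means showing that no Lagrangian $M$ is normalized by all corrections $h_v$. Our approach is to enumerate: the $2295$ Lagrangians of $\F_2^{10}$ against $10$ generators $h_{v_i}$, a finite check we would certify by computer. Alternatively, we would give a conceptual argument showing that any such $M$ must contain the control $Z$'s and then induce a common invariant Lagrangian for all $V_x$, which the choice forbids.
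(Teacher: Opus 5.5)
Your skeleton matches the paper's construction: two controls, three Clifford target blocks, a phase to raise the level, and identity on extra qubits. The heart of the theorem, however, is never produced, namely an explicit $U_0\in\mathcal F_2(2,3)\cap\C_5(5)\setminus\C_4(5)$ that is not GSC.

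\textbf{The core witness is missing, and citing Proposition~\ref{clifford-fiber} is circular.} In the paper, the ``only if'' direction you cite (Case~3 of Theorem~\ref{lem:fib-sharp-small}) is deduced from this very theorem at $k=n=5$. Even as a black box, it only yields some non-GSC gate of unknown level $k_0\ge5$, so $k=5$ is not reached.

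\textbf{You do not address the real difficulty.} There is a tension between (i) membership in the hierarchy and (ii) the absence of Lagrangians $L,L'$ with $S_xL=L'$ for all $x$. Condition (ii) is the criterion of Theorem~\ref{lem:fib-sharp-small}(1), which is your ``conceptual alternative''. By Lemma~\ref{lem:tool-blocks}(2) and Proposition~\ref{lem:controlled-clifford}, (i) forces every $S_xS_y^{-1}$ to have $2$-power order. For $n_t\le2$ the hierarchy condition already forces a common Lagrangian (Theorem~\ref{lem:fib-sharp-small}(2)). So you need a specific configuration in $\Sp(6,2)$.

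\textbf{Your level heuristics are wrong.} A controlled-Clifford need not lie in the hierarchy at all, since $\operatorname{diag}(I,C)\in\C_k$ iff $C^{2^{k-2}}\in\mu_4\mathcal P$. Its level is set by the orders of the block ratios, not by $n_c$; the paper realizes every $k\ge5$ with $n_c=2$. Likewise, $V_{00}V_{01}^\dagger V_{11}V_{10}^\dagger$ being non-Pauli is not a criterion for exact level $5$.

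\textbf{The paper's witness.} It is $U_{E_6}=\sum_{c,t}|ct\rangle\langle ct|\otimes\widehat A^c\widehat B^t$, built from the $E_6$ root configuration with $A=t_{p_1}\cdots t_{p_5}$ and $B=t_{p_0}t_{p_2}$.
\begin{itemize}
\item Membership $U_{E_6}\in\C_5(5)$ is a computer certificate.
\item Non-membership $U_{E_6}\notin\C_4(5)$ follows from the correction at $X_c$, Proposition~\ref{lem:controlled-clifford}, and $A$ having order $8$.
\item Non-GSC follows because $S_{00}=I$ forces $L=L'$ to be invariant under both $A$ and $B$. But $A^2B$ has the irreducible characteristic polynomial $z^6+z^3+1$, so no such Lagrangian exists.
\end{itemize}

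\textbf{Your $n\ge6$ route also fails as written.} In $U_0\otimes D$ with $D=\operatorname{diag}(1,e^{2\pi i/2^k})$ on a new qubit, the blocks $V_x\otimes D$ are not Clifford. So you have not shown $U\in\mathcal F_2(2,n-2)$; $D$ cannot simply be ``absorbed into the targets''.

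The device you propose for $n=5$ is the right one, and it works for all $n$. Phase a control: set $U_\xi=U_0\operatorname{diag}(1,\xi)_c$ with $\xi$ of order $2^k$. The $c=1$ blocks become $\xi\widehat A\widehat B^t$ and stay Clifford.
\begin{itemize}
\item Lemma~\ref{lem:fib-phase} gives $U_\xi\in\C_k(5)$.
\item The correction at $X_c$ gives $U_\xi\notin\C_{k-1}(5)$. This uses $\widehat A^{8}\in\mu_4\mathcal P(3)$, so it again depends on the specific blocks.
\item Then tensor with $I$ on $n-5$ extra target qubits, using Lemma~\ref{lem:basic-operations}(2) and the fact that $V\otimes I\in\C_j$ implies $V\in\C_j$.
\end{itemize}

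Finally, $V_5$ has $\prod_{j=1}^5(2^j+1)=75{,}735$ Lagrangians, not $2{,}295$.
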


We note that a different counterexample for gates in $\mathcal{C}_k(n) \setminus \GSC(n)$ was also obtained in~\cite{deSilvaLautsch2026False} and arises from a different construction. 

We propose the following conjecture as a refinement to the original conjecture of~\cite{sc-ZCC2008}.
\begin{conjecture}\label{conj:fibers}
Every gate in $\C(n)$ is $n_c$-generalized semi-Clifford for some $1\leq n_c\leq n$. More precisely, if $U\in\C_k(n)$ and $k\geq3$, there are Clifford gates $C_1,C_2\in\C_2(n)$ and $1\leq n_c\leq n$, with $n_t=n-n_c$, such that
\[
U=C_1
\left(\sum_{x\in\F_2^{n_c}}\ket{\pi(x)}\bra{x}
\otimes V_x\right)C_2,
\]
where $\pi:\F_2^{n_c}\to\F_2^{n_c}$ is a permutation and $V_x\in\C_{k-3}(n_t)$ for every $x$, with the convention $\C_0(m)=\C_j(0)=\mathbb T$ for scalars. In other words,
\[
\C_k(n)\subseteq \bigcup_{n_c=1}^{n}\mathcal{F}_{k-3}(n_c, n-n_c).
\]
\end{conjecture}

\begin{inlinefigure}
\centering
\begin{tikzpicture}[x=1mm,y=1mm,font=\footnotesize,
 gate/.style={draw,fill=white,minimum height=6mm,inner sep=2pt}]
\draw(0,9)--(78,9);
\draw(0,0)--(78,0);
\draw(3.2,7.6)--(4.8,10.4);
\draw(3.2,-1.4)--(4.8,1.4);
\node[above,font=\scriptsize]at(4,10.4){$n_c$};
\node[above,font=\scriptsize]at(4,1.4){$n_t$};
\node[draw,fill=white,minimum width=9mm,minimum height=17mm]at(13,4.5){$C_2$};
\draw[dashed,black!50,rounded corners=2pt](21,-5.5)rectangle(60,14.5);
\draw(31,9)--(31,0);
\node[gate,minimum width=9mm]at(31,0){$V_x$};
\node[draw,fill=white,minimum width=4.5mm,minimum height=4.5mm,inner sep=0pt]at(31,9){$x$};
\node[gate,minimum width=7mm]at(46,9){$\pi$};
\node[above]at(55,9){$\pi(x)$};
\node[draw,fill=white,minimum width=10mm,minimum height=17mm]at(69,4.5){$C_1$};
\node[below]at(40.5,-5.5){$\sum_x\ket{\pi(x)}\bra{x}\otimes V_x$};
\end{tikzpicture}
\caption{The decomposition of a gate $U\in\C_k(n)$ predicted by Conjecture~\ref{conj:fibers}.}
\label{fig:controlled-block}
\end{inlinefigure}

Figure~\ref{fig:controlled-block} depicts this decomposition as a circuit. Here the integers $n_c$ and $n_t$ are chosen together with the Clifford gates $C_i$ and need not be unique. We remark here that the conjecture for $k=3$ and $k=4$ says exactly that $\C_3(n)$ and $\C_4(n)$ are generalized semi-Clifford, which holds by~\cite{sc-BS2010} and Theorem~\ref{thm:main-k4}. For $k=5$ it predicts that every $U\in \C_5(n)$ has the above form with Clifford blocks $V_x\in \C_2(n_t)$. If true, Proposition~\ref{clifford-fiber} then gives $\C_5(4)\subseteq\GSC(4)$, since $n_c\le1$ or $n_t\le2$ whenever $n_c\ge1$ and $n_c+n_t=4$.

\section{Discussion}

In this article, we have investigated when $\mathcal{C}_k(n)$ is a subset of $\SC(n)$ and $\GSC(n)$ as $k$ and $n$ vary, which leads to several directions for future work. Generalized semi-Clifford gates were proposed in~\cite{sc-ZCC2008} as a structural description of the Clifford hierarchy. Apart from $n=4$ and $k\ge5$, our results determine for which pairs $(k,n)$ it describes all of $\mathcal C_k(n)$, and Conjecture~\ref{conj:fibers} proposes how the description should change beyond them. 

The first natural direction is to explore the conductor descent argument in Section~\ref{sec:gsc} for $n=4$. If successful, it will complete the picture of when $\mathcal{C}_k(n) \subseteq \GSC(n)$.

Second, as exhibited in Section~\ref{sec:gsc}, the conductor of a gate plays a central role in the study of three-qubit and four-qubit gates. It is thus natural to explore whether there is a finer structure determined by the triple $(r,k,n)$ in relation to the characterization of the Clifford hierarchy and fault-tolerant quantum computing.

We also believe that some of our techniques naturally extend to the case of qudits. For example, when the local dimension $d\geq3$ is a prime, there are more algebraic tools that can reduce the finite-case analysis substantially.

Lastly, we would like to point out that we learned of the results of~\cite{deSilvaLautsch2026False,level4} after we had independently obtained our results.

\begin{acknowledgments}
We thank the organizers of the 2026 Gene Golub SIAM Summer School on Fault-Tolerant Algorithms in Quantum Computing, which provided us with the opportunity to learn about this fascinating subject. We are especially grateful to Robert Calderbank and Di Fang for their efforts in organizing the event, as well as for their helpful comments and discussions regarding this work. We also thank the other members of our quantum error correction reading group for many stimulating and insightful discussions.

\end{acknowledgments}

\section*{AI disclosure}

This article originated from the authors' attempt to test the conjecture of~\cite{sc-ZCC2008} using ChatGPT 5.6 Sol after learning about the conjecture at the Gene Golub SIAM Summer School. On August 3, 2026, a counterexample was produced in $\mathcal{C}_5(5)$. The authors then interacted extensively with ChatGPT 5.6 Sol (and, more recently, ChatGPT 6 Astra) and Claude Fable 5 (and, more recently, Fable 5.1 and Opus 5.5) to explore structures underlying the Clifford hierarchy. As the project evolved, the authors gradually formed the conjectural picture summarized in Figs.~\ref{fig:sc-plane} and~\ref{fig:gsc-plane}. These patterns were initially explored computationally, and we subsequently developed a clearer understanding of the arithmetic structure underlying these phenomena and derived the reductions presented in Appendices~\ref{app:common-strategy},~\ref{app:three-qubit}, and~\ref{app:four-qubit-results}. We have also used the LLMs to help with the writing and polishing of the paper. The authors welcome questions regarding the role of AI in this work, as well as comments and suggestions concerning best practices for the disclosure of AI-assisted mathematical research.

\renewcommand{\bibsection}{\section*{\refname}}
\addtocontents{toc}{\protect\AppendixContentsOff}
\bibliographystyle{apsrev4-2}
\bibliography{WZ}

\clearpage
\onecolumngrid
\setlength{\abovedisplayskip}{7pt plus 2pt minus 2pt}
\setlength{\belowdisplayskip}{7pt plus 2pt minus 2pt}
\setlength{\abovedisplayshortskip}{4pt plus 2pt}
\setlength{\belowdisplayshortskip}{5pt plus 2pt minus 1pt}
\bookmarksetup{depth=2}

\begingroup
\makeatletter
\def\GscTocSection{\@dottedtocline{1}{0em}{2em}}
\def\GscTocSubsection{\@dottedtocline{2}{1.5em}{2em}}
\def\AppendixContentsOn{\let\l@section\GscTocSection\let\l@subsection\GscTocSubsection}
\def\AppendixContentsOff{\let\l@section\@gobbletwo\let\l@subsection\@gobbletwo}
\def\l@f@section{\addpenalty{\@secpenalty}\addvspace{5pt}\bfseries}
\AppendixContentsOff
\makeatother
\endgroup

\addtocontents{toc}{\protect\AppendixContentsOn}
\appendix
\makeatletter
\counterwithin*{theorem}{section}%
\renewcommand{\thetheorem}{\thesection.\arabic{theorem}}%
\renewcommand{\theequation}{\thesection.\arabic{equation}}%
\makeatother

\section{Preliminaries}\label{app:prerequisites}

We summarize the notations used in the appendix in Table~\ref{tab:notation}. Throughout, a Pauli is written as $\lambda P_v$ with $\lambda\in\mu_4$ and $v\in V_n=\F_2^{2n}$, and we call $v$ its \emph{label}. We write $x^a$ and $z^b$ for the labels of $X^a$ and $Z^b$, and $x_j$ and $z_j$ for those of $X_j$ and $Z_j$, also when the qubit $j$ is named by a letter, as in $z_y$.

\begin{table}
\begin{tabular}{@{}p{.23\linewidth}p{\dimexpr.77\linewidth-2\tabcolsep\relax}@{}}
\toprule
Notation & Object \\
\midrule
$\T$ &
$\{\lambda\in\CC:|\lambda|=1\}$ \\
$\mathrm{PU}(2^n)$ &
$\mathrm U(2^n)/\T I$ \\
$A\doteq B$ &
$A=\lambda B$ for some $\lambda\in\CC^\times$ \\
$\mu_j$ &
$\{z\in\CC:z^j=1\}$ \\
$\zeta_{2^M}$ &
$e^{2\pi i/2^M}$ for $M\geq2$ \\
$K_M$ &
$\Q(\zeta_{2^M})$ \\
$\operatorname{cond}(U)$ &
the conductor of $U$\\
$V_n$ &
$\F_2^{2n}$, with elements $v=(x,z)$ for $x,z\in\F_2^n$ \\
$\omega$ &
$\omega((x,z),(x',z'))=x\cdot z'+x'\cdot z$ on $V_n$ \\
$\Sp(V_n)$ &
the symplectic group of $(V_n,\omega)$, also written as $\Sp(2n,2)$ \\
$P_v$ &
$i^{x\cdot z}X^xZ^z$ for $v=(x,z)\in V_n$, the canonical Hermitian Pauli; $v$ is the \emph{label} of $\lambda P_v$ for $\lambda\in\mu_4$, and $z_i$ is the label of $Z_i$ \\
$\mathcal H_x$ &
the joint eigenspace of $Z_1,\dots,Z_j$ with eigenvalues $(-1)^{x_1},\dots,(-1)^{x_j}$, for $x\in\F_2^j$ \\
$\mathscr R_n$, $n=3,4$ & traceless finite-level Gaussian generalized semi-Clifford Hermitian involutions (Proposition~\ref{app:lem-R_n}) \\
$\Span S$, $\Span_{\CC}S$ &
the $\F_2$-span of a set $S$ of vectors over $\F_2$, and the complex span of a set $S$ of matrices \\
$\langle S\rangle$ &
the group generated by a set $S$ \\
$\A_S$ &
$\Span_{\CC}\{P_v:v\in S\}$\\
$L_Z$ &
$\Span\{z_1,\dots,z_n\}$ \\
$\A_Z$ &
$\A_{L_Z}$, the algebra of diagonal matrices \\
$\operatorname{supp}_PG$ &
$\{u:c_u\ne0\}$ for $G=\sum_uc_uP_u$ \\
$\mathscr L(U)$ &
$\{(L,M):L,M\le V_n\text{ Lagrangian},\ U\A_LU^\dagger=\A_M\}$, the GSC pairs of $U$ \\
$h_v$ &
$UP_vU^\dagger$ \\
$\mathsf P(U)$, $\mathsf C(U)$ &
$\{v:h_v\in\Pau(n)\}$ and $\{v:h_v\in\C_2(n)\}$ \\
$\gamma_C$ &
the symplectic action of $C\in\C_2(n)$, defined by $CP_vC^\dagger\doteq P_{\gamma_Cv}$ \\
$E_D \subseteq \Sp(V_3)$ &the subgroup generated by the $\gamma_{DP_vD^\dagger}$\\
\bottomrule
\end{tabular}
\caption{Notations used throughout the appendix.}
\label{tab:notation}
\end{table}

Recall that a maximal abelian $*$-subalgebra (MASA) of $\operatorname{Mat}_{2^n}(\mathbb C)$ is a unital abelian $*$-subalgebra that is maximal with respect to inclusion. For a Lagrangian subspace $L$, $\mathcal A_L$ is a MASA, which we call a \emph{Pauli MASA}, since a matrix $\sum_wc_wP_w$ commutes with every $P_v$, $v\in L$, exactly when $c_w=0$ for $w\notin L^\perp=L$. Equivalently, a Pauli MASA is the complex linear span of a maximal abelian subgroup of the Pauli group. We begin with the following proposition, which includes Lemma~\ref{lem:lagrangian}. For $C\in\C_2(n)$ we write $\gamma_C\in\Sp(V_n)$ for its symplectic action, defined by $CP_vC^\dagger\doteq P_{\gamma_Cv}$ for $v\in V_n$. Then $C\A_SC^\dagger=\A_{\gamma_CS}$ for every $S\le V_n$, and let $\mathscr L(C) \coloneqq \{(L,\gamma_CL):L\text{ Lagrangian}\}$. For a gate $U$, let
\[
\mathsf P(U)
\coloneqq
\left\{
v\in V_n :
UP_vU^\dagger \doteq P
\text{ for some } P\in\mathcal P(n)
\right\}, \quad
\mathsf C(U)
\coloneqq
\left\{
v\in V_n :
UP_vU^\dagger \doteq C
\text{ for some } C\in\mathcal C_2(n)
\right\}.
\]
These are vector subspaces of $V_n$, since $UP_{u+v}U^\dagger\doteq(UP_uU^\dagger)(UP_vU^\dagger)$.

\begin{proposition}\label{lem:lagrangian-expanded}
If $\mathcal A\subseteq\operatorname{Mat}_{2^n}(\CC)$ is a MASA, then the following are equivalent:
\begin{enumerate}
    \item $\mathcal A$ is the linear span of a maximal abelian subgroup of $\mathcal{P}(n)$; \label{item:1}
    \item for all $P\in\mathcal{P}(n)$, $P\mathcal A P^\dagger=\mathcal A$; \label{item:2}
    \item there exists a Lagrangian subspace $L\subseteq\F_2^{2n}$ such that \label{item:3}
    \[
    \mathcal A=\operatorname{span}_{\CC}\{P_v:v\in L\}
    \eqqcolon\mathcal A_L.
    \]
\end{enumerate}
In particular, $U \in \mathrm U(2^n)$ is generalized semi-Clifford if and only if there exist Lagrangian subspaces $L$, $M \subseteq \F_2^{2n}$ such that $U \mathcal{A}_L U^\dagger = \mathcal{A}_{M}$. Moreover, let $\mathcal B\subseteq\operatorname{Mat}_{2^n}(\CC)$ be a unital $*$-subalgebra with $P\mathcal BP^\dagger=\mathcal B$ for all $P\in\mathcal P(n)$. Then $\mathcal B=\A_S$ for a subspace $S\le V_n$, which is isotropic if $\mathcal B$ is abelian. If $U\mathcal BU^\dagger=\A_L$ for a gate $U$ and a Lagrangian $L$, then $\mathcal B=\A_M$ for a Lagrangian $M$.
\end{proposition}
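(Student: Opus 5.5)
The plan is to work with the Pauli expansion of matrices: every $A\in\operatorname{Mat}_{2^n}(\CC)$ is uniquely $A=\sum_{w\in V_n}c_wP_w$, and $P_vP_wP_v^\dagger=(-1)^{\omega(v,w)}P_w$. The key tool is that conjugation by the Paulis diagonalizes this expansion. If $\mathcal B$ is a subspace with $P\mathcal BP^\dagger=\mathcal B$ for all Paulis, then for $A\in\mathcal B$ and each $w$, the averaged projection
\[
\frac{1}{4^n}\sum_{v\in V_n}(-1)^{\omega(v,w)}P_vAP_v^\dagger=c_wP_w
\]
lies in $\mathcal B$. This uses character orthogonality on $V_n$, which holds because $\omega$ is nondegenerate. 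Consequently $\mathcal B=\A_S$, where $S=\{w:P_w\in\mathcal B\}=\bigcup_{A\in\mathcal B}\operatorname{supp}_PA$. If $\mathcal B$ is a unital algebra, then $S$ contains $0$ and is closed under addition, since $P_uP_w\doteq P_{u+w}$; hence $S$ is a subspace. If $\mathcal B$ is abelian, then $P_u$ and $P_w$ commute for all $u,w\in S$, so $S$ is isotropic. This settles the ``Moreover'' part about $\mathcal B$.

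For the equivalences, I would argue $(3)\Rightarrow(1)\Rightarrow(2)\Rightarrow(3)$.

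\textbf{$(3)\Rightarrow(1)$.} Given a Lagrangian $L$, the group $\langle iI,P_v:v\in L\rangle$ is abelian. It is maximal abelian, because any Pauli commuting with all of it has label in $L^\perp=L$. Its span is $\A_L$.

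\textbf{$(1)\Rightarrow(2)$.} If $\mathcal A$ is the span of a subgroup $G$ of $\mathcal P(n)$, then each $PgP^\dagger=\pm g$ lies in $G$ up to sign, so $P\mathcal AP^\dagger=\mathcal A$.

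\textbf{$(2)\Rightarrow(3)$.} By the averaging argument, $\mathcal A=\A_S$ with $S$ an isotropic subspace. Choose a Lagrangian $L\supseteq S$. Then $\A_L$ is abelian and contains $\mathcal A$, so maximality of $\mathcal A$ forces $\mathcal A=\A_L$ and $S=L$. That $\A_L$ is itself a MASA follows from the commutant computation already noted in the text: the commutant of $\A_L$ is $\A_{L^\perp}=\A_L$.

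For the ``in particular'', I would use that a Pauli MASA is the span of a maximal abelian subgroup, which gives one direction directly from the definition of $\GSC$. Conversely, if $U$ maps the span of a maximal abelian subgroup $G$ to the span of another subgroup $G'$, then $q(G)$ is isotropic. It is also maximal: otherwise some extra Pauli commutes with $G$ and $G$ was not maximal abelian. So $q(G)$ is Lagrangian, and likewise for $G'$.

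For the last sentence, suppose $U\mathcal BU^\dagger=\A_L$. Then $\mathcal B=U^\dagger\A_LU$ is a MASA, since conjugation by a unitary preserves MASAs. It is also Pauli-invariant by hypothesis, so $(2)\Rightarrow(3)$ gives $\mathcal B=\A_M$ with $M$ Lagrangian.

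The main obstacle is the averaging step in the ``Moreover'' part, specifically making sure the projection is correctly normalized. Writing $\chi_w(v)=(-1)^{\omega(v,w)}$, the characters $\chi_w$ are all distinct because $\omega$ is nondegenerate. So $\sum_v\chi_w(v)\chi_u(v)=4^n\delta_{uw}$, and the averaged sum extracts exactly the coefficient $c_w$. Everything else is formal.
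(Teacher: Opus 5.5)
Your proposal is correct and follows essentially the paper's route. It uses the same Pauli-averaging projection to extract $c_wP_w$, and the same reading-off of a label set that is a subspace (from unitality) and isotropic (from commutativity). It also reduces the final assertion to $(2)\Rightarrow(3)$ in the same way, via $\mathcal B=U^\dagger\A_LU$ being a Pauli-invariant MASA.

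The only real deviation is the last step of $(2)\Rightarrow(3)$. You extend $S$ to a Lagrangian $L$ and invoke maximality of $\mathcal A$ against the abelian algebra $\A_L\supseteq\mathcal A$. The paper instead counts $|T|=\dim\mathcal A=2^n$, using that every MASA is unitarily conjugate to the diagonal matrices; your version avoids that structural fact. Your handling of the ``in particular'' clause coincides with the paper's direct $(1)\Rightarrow(3)$ argument showing $q(G)$ is Lagrangian.
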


\begin{proof}
Suppose first $\mathcal{A} = \operatorname{span}_{\CC} \mathcal{H}$ for a maximal abelian subgroup $\mathcal{H} \leq \mathcal{P}(n)$. Let $L = q(\mathcal{H})$. Since $q$ is a group homomorphism, $q(\mathcal{H})$ is an additive subgroup of $\F^{2n}_2$. Every additive subgroup of an $\F_2$-vector space is an $\F_2$-linear subspace, so $L$ is a vector subspace.

For $v$, $v' \in L$, let $P_v$, $P_{v'} \in \mathcal{H}$ such that $q(P_v)= v$ and $q(P_{v'}) = v'$. Since $\mathcal H$ is abelian, $[P_v, P_{v'}] = 0$, so $\omega(v, v')=0$. Thus, $L$ is isotropic. Suppose $w \in \F^{2n}_2$ satisfies $\omega(v,w) = 0$ for all $v \in L$. Then $[P_w, P] = 0$ for all $P \in \mathcal{H}$ since $\omega(w, q(P)) = 0$. By maximality of $\mathcal{H}$, $P_w \in \mathcal{H}$. Thus, $w \in L$, so $L$ is Lagrangian. It is straightforward to check that $\mathcal{H} = q^{-1}(L)$, i.e., $\mathcal{H} = \{\lambda P_v: v \in L,\, \lambda = \pm 1,\, \pm i\}$. Consequently, $\mathcal{A} = \operatorname{span}_{\CC}(\mathcal H) = \operatorname{span}_{\CC}\{P_v:v \in L\} = \mathcal{A}_L$.

Conversely, let $L \subseteq \F^{2n}_2$ be a Lagrangian subspace. An analogous computation to above shows that $q^{-1}(L)$ is maximal abelian and hence $\operatorname{span}_{\CC} q^{-1}(L) = \operatorname{span}_{\CC}\{P_v:v \in L\} = \mathcal{A}_L$. This concludes the equivalence between~\ref{item:1} and~\ref{item:3}.

We now prove the equivalence between~\ref{item:2} and~\ref{item:3}. Let $L \subseteq \F^{2n}_2$ be a Lagrangian subspace. For every $v,w\in\F_2^{2n}$, $P_wP_vP_w^\dagger=(-1)^{\omega(v,w)}P_v$, so $P_w\mathcal A_LP_w^\dagger=\mathcal A_L$.

Conversely, suppose $P\mathcal A P^\dagger=\mathcal A$ for all $P\in\mathcal{P}(n)$. For $v\in\F_2^{2n}$, consider
\[
\Pi_v(X)
\coloneqq
\frac{1}{2^{2n}}
\sum_{w\in\F_2^{2n}}
(-1)^{\omega(v,w)}P_w XP_w^\dagger.
\]
For $u\in\F_2^{2n}$,
\[\Pi_v(P_u)
=
\frac{1}{2^{2n}}
\sum_{w\in\F_2^{2n}}
(-1)^{\omega(v,w)}
(-1)^{\omega(u,w)}P_u
=
\frac{1}{2^{2n}}
\sum_{w\in\F_2^{2n}}
(-1)^{\omega(u+v,w)}P_u = \delta_{u,v}P_u,\] so $\Pi_v$ is a projection onto $\CC P_v$.

Since $P_v \mathcal AP_v^\dagger =\mathcal A$, $\Pi_v(\mathcal A)\subseteq\mathcal A$ for all $v \in \F_2^{2n}$. For $X\in\mathcal A$, we expand it in the Pauli basis
\[
X=\sum_{u\in\F_2^{2n}}c_uP_u.
\]
Then $\Pi_v(X)=c_vP_v\in\mathcal A$. Therefore, $P_v \in \mathcal{A}$ for all $v$ with $c_v \neq 0$. It follows that
\[
\mathcal A
=
\operatorname{span}_{\CC}\{P_v:v\in T\}, \quad \text{where } T=\{v\in\F_2^{2n}:P_v\in\mathcal A\}.
\]

Since $\mathcal A$ is abelian, $\omega(u,v)=0$ for all $u,v \in T$. Thus $L \coloneqq \operatorname{span}_{\F_2}T$ is an isotropic subspace. On the other hand, since the Pauli matrices are linearly independent, $\lvert T \rvert =\dim_{\CC}\mathcal A=2^n$, as a MASA of $\operatorname{Mat}_{2^n}(\CC)$ is unitarily conjugate to the diagonal matrices. Since $T\subseteq L$,
\[
2^n=|T|\le |L|=2^{\dim L}\le 2^n.
\]
Therefore $\dim L=n$ and $T= L$.

The assertion on generalized semi-Clifford gates follows from the equivalence of (1) and (3) and the definition in Section~\ref{sec:intro}.

For the last assertion, the argument above with $\mathcal B$ in place of $\mathcal A$ gives $\mathcal B=\operatorname{span}_{\CC}\{P_v:v\in S\}$ with $S=\{v:P_v\in\mathcal B\}$. As $\mathcal B$ is a unital algebra and $P_vP_w\doteq P_{v+w}$, $S$ is a subspace, and it is isotropic if $\mathcal B$ is abelian. If $U\mathcal BU^\dagger=\A_L$, then $\mathcal B$ is a MASA, since $\A_L$ is one, so $S$ is Lagrangian by the implication from (2) to (3).
\end{proof}

Recall from Section~\ref{sec:symp} that a pair $(L,M)$ of Lagrangians is a \emph{generalized semi-Clifford Lagrangian pair} of $U$ if $U\A_LU^\dagger=\A_M$. We abbreviate this as \emph{GSC Lagrangian pair}, or \emph{GSC pair} when there is no ambiguity, write $\mathscr L(U)$ for the set of GSC pairs of $U$, and call $L$ and $M$ the \emph{input} and \emph{output Lagrangians} of $(L,M)$.

\begin{proof}[Proof of Lemma~\ref{lem:lagrangian}]
This is the assertion on generalized semi-Clifford gates in Proposition~\ref{lem:lagrangian-expanded}.
\end{proof}

For a gate $U$ and $v\in V_n$, we call $h_v \coloneqq UP_vU^\dagger$ the \emph{correction} of $U$ at $P_v$.

\begin{proof}[Proof of Lemma~\ref{lem:main-local-global}]
Suppose first that $U\in\mathsf{GSC}(n)$. By Lemma~\ref{lem:lagrangian}, there exist Lagrangian subspaces $L_0,M_0\leq \F_2^{2n}$ such that $U\A_{L_0}U^\dagger =\A_{M_0}$. For every $v\in \F_2^{2n}$ and every $w\in L_0$, $P_vP_wP_v^\dagger = (-1)^{\omega(v,w)}P_w$, so $P_v\A_{L_0} P_v^\dagger =\A_{L_0}$. Therefore, for $h_v=UP_vU^\dagger$,
\[
h_v\A_{M_0}h_v^\dagger=
UP_vU^\dagger\A_{M_0}UP_v^\dagger U^\dagger
=
UP_v\A_{L_0}P_v^\dagger U^\dagger
=
U\A_{L_0}U^\dagger
=
\A_{M_0}.
\]

Conversely, let $M\leq \F_2^{2n}$ be a Lagrangian subspace such that $h_v\A_Mh_v^\dagger=\A_M$ for every $v \in \F_2^{2n}$. For every $v\in V_n$,
\[
P_vU^\dagger\A_MUP_v^\dagger=
U^\dagger
(UP_vU^\dagger)\A_M(UP_v^\dagger U^\dagger)
U=
U^\dagger h_v\A_Mh_v^\dagger U=
U^\dagger\A_MU.
\]
Hence $\mathcal B\coloneqq U^\dagger\A_MU$ is invariant under conjugation by every Pauli. Moreover, $\mathcal B$ is unitarily conjugate to the Pauli MASA $\A_M$. By Proposition~\ref{lem:lagrangian-expanded}, there is a Lagrangian subspace $L\leq V_n$ such that
\[
U^\dagger\A_MU=\mathcal B=\A_L.
\]
Lemma~\ref{lem:lagrangian} therefore implies $U\in\mathsf{GSC}(n)$. This also proves the final assertion that, for any such $M$, $U^\dagger\A_MU=\A_L$ for some Lagrangian $L\leq V_n$.
\end{proof}

\begin{lemma}\label{lem:lagrangian-count}
The number of Lagrangians in $V_n$ is $N_n=\prod_{j=1}^n(2^j+1)$, which is odd.
\end{lemma}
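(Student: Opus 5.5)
We count Lagrangians by a standard double-counting argument on pairs (isotropic subspace, extension), or more directly by counting ordered bases. The plan is to derive the recursion $N_n=(2^n+1)N_{n-1}$ with $N_0=1$; the product formula follows immediately, and oddness is automatic because each factor $2^j+1$ is odd.

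\textbf{Setting up the double count.} Fix $n\ge1$ and count pairs $(v,L)$ with $0\ne v\in L$ and $L\le V_n$ Lagrangian. Counting by $L$ first gives $N_n(2^n-1)$ such pairs, since each $L$ has $2^n-1$ nonzero vectors.

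\textbf{Counting by $v$ first.} Every nonzero $v\in V_n$ is isotropic, since $\omega(v,v)=2x\cdot z=0$. So the choices of $v$ number $2^{2n}-1$. For fixed $v$, every Lagrangian $L\ni v$ satisfies $L\subseteq v^\perp$, because $L=L^\perp$. Hence the Lagrangians containing $v$ correspond bijectively to the Lagrangians of the symplectic space $v^\perp/\langle v\rangle\cong V_{n-1}$, via $L\mapsto L/\langle v\rangle$. Verifying this bijection is the only step requiring care:
\begin{itemize}
\item the form descends to $v^\perp/\langle v\rangle$ and is nondegenerate there, since its radical is $(v^\perp)^\perp/\langle v\rangle=0$;
\item $L/\langle v\rangle$ is isotropic of dimension $n-1$;
\item conversely, the preimage of an isotropic subspace of dimension $n-1$ is isotropic of dimension $n$ containing $v$.
\end{itemize}
Thus each $v$ lies in exactly $N_{n-1}$ Lagrangians, giving $(2^{2n}-1)N_{n-1}$ pairs.

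\textbf{Conclusion.} Equating the two counts,
\[
N_n=\frac{2^{2n}-1}{2^n-1}\,N_{n-1}=(2^n+1)N_{n-1}.
\]
With the base case $N_0=1$ (or $N_1=3$, the three lines in $\F_2^2$, all isotropic), induction gives the stated product.
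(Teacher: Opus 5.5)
Your proof is correct and is the same double count the paper uses: counting pairs $(v,L)$ with $0\ne v\in L$ via the Lagrangians of $v^\perp/\Span\{v\}$ gives $(2^n-1)N_n=(2^{2n}-1)N_{n-1}$ with $N_0=1$. You simply spell out the check that $L\mapsto L/\langle v\rangle$ is a bijection onto the Lagrangians of a nondegenerate $(2n-2)$-dimensional symplectic space, which the paper leaves implicit.
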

\begin{proof}
Counting $(v,L)$ with $0\ne v\in L$, using the Lagrangians of $v^\perp/\Span\{v\}$, gives $(2^n-1)N_n=(2^{2n}-1)N_{n-1}$. Now $N_0=1$. See also~\cite[Lemma~9.4.1]{BCN1989}.
\end{proof}

Recall that the conductor of $U$, denoted $\operatorname{cond}(U)$, is the least $r\ge2$ such that $\lambda U$ has entries in $K_r$ for some $\lambda\in\CC^\times$. We call $U$ \emph{Gaussian} if $\operatorname{cond}(U)=2$. As before, $K_M=\Q(\zeta_{2^M})$.

\begin{proposition}\label{lem:conductor-plane}
Let $U$ be an $n$-qubit gate of finite conductor $r=\operatorname{cond}(U)$, and let $h_v=UP_vU^\dagger$ and $m=\max_{v\in V_n}\operatorname{cond}(h_v)$. Then
\[
 \max\{2,r-1\}\leq m\leq r.
\]
Every $U\in\C_k(n)$ with $k\geq2$ has finite conductor, and $2\leq\operatorname{cond}(U)\leq k$. Moreover, for every $n\geq1$ and $k\geq2$,
\[
 \{\operatorname{cond}(U):U\in\C_k(n)\}=\{2,\ldots,k\}.
\]
\end{proposition}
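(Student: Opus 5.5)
The plan is to prove the four assertions in order: the two-sided bound on $m$, finiteness of the conductor together with $\operatorname{cond}(U)\le k$, and finally that every value $2,\dots,k$ is attained. One fact is used throughout. By Schur's lemma applied to the irreducible Pauli representation, a matrix commuting up to scalars with every $P_v$ is a scalar multiple of a Pauli, exactly as in the Example of Section~\ref{sec:gsc}.

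\emph{Upper bound $m\le r$.} Choose $W\doteq U$ with entries in $K_r$. Since $U$ is unitary, $W^{-1}\doteq W^\dagger$ also has entries in $K_r$, because $K_r$ is closed under complex conjugation. The $P_v$ have entries in $\Q(i)\subseteq K_r$. Hence $h_v\doteq WP_vW^{-1}$ is defined over $K_r$, which gives $\operatorname{cond}(h_v)\le r$. The bound $m\ge 2$ holds by definition.

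\emph{Lower bound $m\ge r-1$.} Assume $m\le r-2$ and let $\tau$ be a generator of $\operatorname{Gal}(K_r/K_m)$. Write $WP_vW^{-1}=c_vH_v$ with $H_v$ over $K_m$ and $c_v\in K_r$. Applying $\sigma\in\operatorname{Gal}(K_r/K_m)$ fixes $P_v$ and $H_v$, so
\[
\sigma(W)P_v\sigma(W)^{-1}=\sigma(c_v)H_v .
\]
It follows that $Q_\sigma\coloneqq W^{-1}\sigma(W)$ commutes with every $P_v$ up to scalar, and so $Q_\sigma$ is a scalar times a Pauli. Galois acts trivially on Paulis up to scalars, so the cocycle identity $Q_{\sigma\sigma'}=Q_\sigma\,\sigma(Q_{\sigma'})$ shows that $\sigma\mapsto q(Q_\sigma)\in\F_2^{2n}$ is a group homomorphism. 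The group $\operatorname{Gal}(K_r/K_m)$ is cyclic, so $\tau^2$ lies in the kernel. Therefore $\tau^2(W)=cW$ with $c\in K_r^\times$, and the same holds for every power of $\tau^2$. The resulting scalar $1$-cocycle on $\langle\tau^2\rangle=\operatorname{Gal}(K_r/K_{m+1})$ is a coboundary by Hilbert~90. This gives $\lambda$ with $\lambda W$ defined over $K_{m+1}$, so $r\le m+1\le r-1$, a contradiction.

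\emph{Finiteness and $\operatorname{cond}(U)\le k$.} I would induct on $k$. For the base case $k=2$, the rescaled generators $\sqrt2H$, $S$ and $\CNOT$ have Gaussian entries, so every Clifford is Gaussian up to scalar. For the inductive step, the same Galois argument can be run over $\overline{\Q}$. First, $U$ is projectively an algebraic matrix, since it is the unique solution up to scalar of the linear system $UP_v=h_vU$. Next, for $\sigma\in\operatorname{Gal}(\overline\Q/K_{k-1})$, the inductive hypothesis $\operatorname{cond}(h_v)\le k-1$ makes $W^{-1}\sigma(W)$ a scalar times a Pauli. I would normalize $\det W=1$, which forces these scalars to be $2$-power roots of unity. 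With that normalization, the image of $\sigma\mapsto q(Q_\sigma)$ is an elementary abelian $2$-group, and Hilbert~90 shows that $W$ is defined, up to scalar, over the fixed field $F$ of the kernel.

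The main obstacle is to show that $F\subseteq K_k$. A priori $F$ is only an elementary abelian $2$-extension of $K_{k-1}$, and such extensions need not be cyclotomic. My first attempt would be to pin down the scalars $c_v$ in $h_v=c_vH_v$. The relations $h_v^2=I$ and $h_v^\dagger=h_v$ give $c_v^2\in K_{k-1}$ and $c_v/\bar c_v\in K_{k-1}$. Combined with the determinant normalization, these should show that each $c_v$ is a $2$-power root of unity times an element of $K_{k-1}$. Then $F$ would be generated over $K_{k-1}$ by square roots of roots of unity in $K_{k-1}$, and these square roots lie in $K_k$. The remaining work is to check that the sign ambiguities do not introduce square roots of non-unit elements.

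\emph{Attaining every value $2,\dots,k$.} For $2\le j\le k$, take
\[
U_j=\operatorname{diag}(1,\zeta_{2^j})\otimes I_{2^{n-1}} .
\]
By the diagonal criterion of~\cite{CGK2017} (the case $|A|=1$), $U_j\in\C_j(n)\subseteq\C_k(n)$. Any rescaling of $U_j$ defined over a field $K$ has the ratio of two of its entries equal to $\zeta_{2^j}$, which forces $\zeta_{2^j}\in K$. Hence $\operatorname{cond}(U_j)=j$.
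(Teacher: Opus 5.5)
Your arguments for $m\le r$, for $m\ge r-1$, and for attaining each $j\in\{2,\dots,k\}$ with $\operatorname{diag}(1,\zeta_{2^j})\otimes I$ are sound. Your lower bound is a valid alternative to the paper's: you kill $\tau^2$ through the $\F_2^{2n}$-valued homomorphism $\sigma\mapsto q(W^{-1}\sigma(W))$, whereas the paper uses $h_v^2=I$ to get $c_v^2\in K_m$ and hence $h_v$ over $K_{m+1}$.

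The genuine gap is the third assertion, that every $U\in\C_k(n)$ has finite conductor at most $k$. With it goes the inclusion $\{\operatorname{cond}(U):U\in\C_k(n)\}\subseteq\{2,\dots,k\}$ in the last assertion. Your Galois analysis only shows that $U$ is defined up to scalar over the fixed field $F=K_{k-1}(c_v:v\in V_n)$, where $h_v=c_vH_v$ and $c_v^2\in K_{k-1}$. Note that once the conductor is known to be finite, your own bound $r\le m+1$ already finishes the induction, so finiteness is the whole issue. It is not a routine check of sign ambiguities, and the data you propose to use cannot settle it. The constraints $c_v^2\in K_{k-1}$ and $c_v/\bar c_v\in K_{k-1}$, together with $\det W=1$, are too weak. (The determinant condition only yields $\det h_v=\det P_v$, which is automatic.) All three are satisfied by the one-qubit Hermitian involution $(X+Y+Z)/\sqrt3=cH$, with $H=X+Y+Z$ over $\Q(i)$ and $c=1/\sqrt3$. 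Yet $c$ is not a $2$-power root of unity times an element of any $K_N$, since $\Q(\sqrt3)$ is ramified at $3$. So the inductive hypothesis ``$h_v$ is defined over $K_{k-1}$ up to scalar'' is too weak. You must use membership of $h_v$ in the hierarchy to control its \emph{exact} entries.

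The missing idea, which is the paper's, is a direct trace computation. For a Hermitian involution $g\in\C_\ell(n)$ and a Hermitian Pauli $P$, put $g_0=g$ and $g_{j+1}=g_jPg_j$. Then $g_j\in\C_{\ell-j}(n)$ and $g_jP=(gP)^{2^j}$. Since $g_{\ell-1}$ is $\pm$ a Pauli, $(g_{\ell-1}P)^4=I$, so $(gP)^{2^{\ell+1}}=I$. Hence each $\operatorname{tr}(gP_w)$ is a sum of $2^{\ell+1}$-th roots of unity, and $g=2^{-n}\sum_w\operatorname{tr}(gP_w)P_w$ has entries in $K_{\ell+1}$. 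Applied to $h_v\in\C_{k-1}(n)$, this puts every $h_v$ itself, not merely a rescaling, over $K_k$. The linear system $XP_v=h_vX$ then has coefficients in $K_k$ and a one-dimensional solution space over $\CC$. So it has a nonzero solution over $K_k$, which is a multiple of $U$. This gives finiteness and $\operatorname{cond}(U)\le k$ directly, without the inductive hypothesis or Hilbert~90.
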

\begin{proof}
First, choose $W\doteq U$ with entries in $K_r$. Then $h_v=WP_vW^{-1}$ has entries in $K_r$, so $m\le r$, and $m\ge2$ by definition. Suppose $m\le r-2$, and write $h_v=c_vB_v$ with $B_v$ over $K_m$ and $c_v\in K_r^\times$. Since $h_v^2=I$, we have $c_v^2\in K_m$, so every $\sigma \in\operatorname{Gal}(K_r/K_m)$ satisfies $\sigma(h_v)=\pm h_v$ and $\sigma^2(h_v)=h_v$. As $m\ge2$, this Galois group is cyclic, and its squares form the subgroup of index two, with fixed field $K_{m+1}$. Hence every $h_v$ has entries in $K_{m+1}$. The linear system $XP_v=h_vX$, $v\in V_n$, then has coefficients in $K_{m+1}$, and its solutions over $\CC$ are the multiples of $U$, because a solution $X$ makes $U^\dagger X$ commute with every Pauli. A linear system has the same rank over $K_{m+1}$ as over $\CC$, so some $X\doteq U$ has entries in $K_{m+1}$. Then $r\le m+1\le r-1$, a contradiction. Therefore, $\max\{2,r-1\}\le m\le r$.

Next, we show that every Hermitian involution $g\in\C_\ell(n)$, $\ell\ge1$, has entries in $K_{\ell+1}$. Fix $v\in V_n$, and let $g_0=g$ and $g_{j+1}=g_jP_vg_j$. Each $g_j$ is a Hermitian involution, $g_j\in\C_{\ell-j}(n)$ for $j<\ell$ since $g_{j+1}=g_jP_vg_j^\dagger$, and $g_{j+1}P_v=(g_jP_v)^2$, so $g_jP_v=(gP_v)^{2^j}$. As $g_{\ell-1}$ is a Pauli up to sign, $(g_{\ell-1}P_v)^4=I$, and hence $(gP_v)^{2^{\ell+1}}=I$. Thus $\operatorname{tr}(gP_v)$ is a sum of $2^{\ell+1}$-th roots of unity, and $g=2^{-n}\sum_{v\in V_n}\operatorname{tr}(gP_v)P_v$ has entries in $K_{\ell+1}$.

Now let $U\in\C_k(n)$ with $k\ge2$. Each $h_v\in\C_{k-1}(n)$ is a Hermitian involution, so it has entries in $K_k$. As the $P_v$ span the matrices over $K_k$, the matrix $UAU^\dagger$ has entries in $K_k$ for every matrix $A$ over $K_k$. Choose $U_{ab}\ne0$, and let $E_{jb}$ denote the matrix units. Then $U_{ij}\overline{U_{ab}}=(UE_{jb}U^\dagger)_{ia}$ and $|U_{ab}|^2=(UE_{bb}U^\dagger)_{aa}$ lie in $K_k$ for all $i,j$, so $U/U_{ab}$ has entries in $K_k$, and $\operatorname{cond}(U)\le k$.

Finally, we show that every $s\in\{2,\dots,k\}$ is the conductor of some $U\in\C_k(n)$. Let $D_s=\operatorname{diag}(1,\zeta_{2^s})\otimes I^{\otimes(n-1)}$. Then $D_sX_1D_s^\dagger=\zeta_{2^s}^{-1}D_{s-1}X_1$, $D_sZ_1D_s^\dagger=Z_1$, and $D_s$ commutes with the Paulis on the other qubits, so every Pauli correction of $D_s$ is a scalar times $D_{s-1}^aP$ with $a\in\F_2$ and $P$ a Pauli. Since $D_2=S\otimes I^{\otimes(n-1)}$ is Clifford and right multiplication by a Pauli preserves each level, induction on $s$ gives $D_s\in\C_s(n)\subseteq\C_k(n)$. The ratio of the entries of $D_s$ is $\zeta_{2^s}$, which lies in $K_r$ exactly when $r\ge s$, so $\operatorname{cond}(D_s)=s$.
\end{proof}

\begin{proof}[Proof of Lemma~\ref{lem:basic-operations}]
A Clifford gate $C$ maps Paulis to Paulis up to sign, and $C\A_SC^\dagger=\A_{\gamma_CS}$. By induction on $j$, conjugation by $C$ preserves every $\C_j(n)$.

We use that $U\in\SC(n)$ exactly when some $(L,M)\in\mathscr L(U)$ has $L\subseteq\mathsf P(U)$: such an $L$ makes $U$ map the maximal abelian subgroup $\{\lambda P_v:v\in L,\ \lambda\in\mu_4\}$ onto a group of Paulis, which is again maximal abelian, and conversely (cf.~\cite[Proposition~5]{sc-ZCC2008}). Now $(L,M)\in\mathscr L(U)$ exactly when $(\gamma_{C_2}^{-1}L, \gamma_{C_1}M)\in\mathscr L(\lambda C_1UC_2)$, and $L\subseteq\mathsf P(U)$ exactly when $\gamma_{C_2}^{-1}L\subseteq\mathsf P(\lambda C_1UC_2)$. The correction of $\lambda C_1UC_2$ at a Pauli $P$ is $C_1hC_1^\dagger$, where $h\in\C_{k-1}(n)$ is the correction of $U$ at the Pauli $C_2PC_2^\dagger$, and hence $\lambda C_1UC_2\in\C_k(n)$ (cf.~\cite[Proposition~3]{sc-ZCC2008}). This proves~\eqref{itm:op1}.

Up to a scalar, every Pauli on $n+m$ qubits is $P\otimes Q$, and the correction of $U\otimes V$ at it is $UPU^\dagger\otimes VQV^\dagger$, with $UPU^\dagger\in\C_{\max\{k-1,1\}}(n)$ and $VQV^\dagger\in\C_{\max\{\ell-1,1\}}(m)$. Induction on $\max\{k,\ell\}$, starting from $\Pau(n)\otimes\Pau(m)=\Pau(n+m)$, gives $U\otimes V\in\C_{\max\{k,\ell\}}(n+m)$. If $(L,M)\in\mathscr L(U)$ and $(L',M')\in\mathscr L(V)$, then $(L\oplus L',M\oplus M')\in\mathscr L(U\otimes V)$, since $\A_{L\oplus L'}=\A_L\otimes\A_{L'}$, and $L\oplus L'\subseteq\mathsf P(U\otimes V)$ if $L\subseteq\mathsf P(U)$ and $L'\subseteq\mathsf P(V)$. A diagonal $V$ has the GSC pair $(L_Z,L_Z)$ with $L_Z=\Span\{z_1,\dots,z_m\}\subseteq\mathsf P(V)$, so $U\in\mathsf X(n)$ implies $U\otimes V\in\mathsf X(n+m)$.

Conversely, let $V$ be diagonal and $(L,M)\in\mathscr L(U\otimes V)$. Let $R=\Span\{z_{n+1},\dots,z_{n+m}\}$. For $w\in R^\perp$ we have $P_w=P_{\bar w}\otimes P_{w'}$ with $P_{w'}$ diagonal. For $y\in\F_2^m$ let $E_y=I\otimes\bra y$. Then $E_yP_wE_y^\dagger=\pm P_{\bar w}$ if $w\in R^\perp$, and $E_yP_wE_y^\dagger=0$ otherwise, so $E_y\A_LE_y^\dagger=\A_{\bar L}$ with $\bar L=\{\bar w:w\in L\cap R^\perp\}\cong(L\cap R^\perp)/(L\cap R)$. The space $\bar L$ is isotropic, since the $P_{w'}$ commute, and it has dimension $n$, since $(L\cap R^\perp)^\perp=L+R$. Thus $\bar L$ is Lagrangian, and likewise $\bar M$. As $E_y(U\otimes V)=V_{yy}UE_y$ with $|V_{yy}|=1$,
\[
U\A_{\bar L}U^\dagger=E_y(U\otimes V)\A_L(U\otimes V)^\dagger E_y^\dagger=E_y\A_ME_y^\dagger=\A_{\bar M},
\]
so $(\bar L,\bar M)\in\mathscr L(U)$. If moreover $L\subseteq\mathsf P(U\otimes V)$, then $\bar L\subseteq\mathsf P(U)$, since $(U\otimes V)P_w(U\otimes V)^\dagger=UP_{\bar w}U^\dagger\otimes P_{w'}$ for $w\in R^\perp$. This concludes the proof of~\eqref{itm:op2}. For the semi-Clifford case with $V=I$, see also~\cite[Appendix~A]{sc-HRT2025arxiv}.

Finally, if $U\in\C_k(n)\setminus\mathsf X(n)$, then $U\otimes I\in\C_k(N)\setminus\mathsf X(N)$ by~\eqref{itm:op2} with $V=I$, and $\C_k(N)\subseteq\C_r(N)$ for $r\ge k$.
\end{proof}

\begin{lemma}\label{lem:root-witness}
If $U\A_SU^\dagger=\A_T$ for isotropic $S,T$, and $U\A_LU^\dagger=\A_M$ for Lagrangians $L,M$, then
\[
U\A_{S+(L\cap S^\perp)}U^\dagger
=\A_{T+(M\cap T^\perp)}.
\]
Both spaces $S+(L\cap S^\perp)$ and $T+(M\cap T^\perp)$ are Lagrangian and contain $S$ and $T$, respectively.
\end{lemma}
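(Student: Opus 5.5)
The plan is to realize both sides as intersections and sums of algebras that $U$ visibly carries to each other. Two facts drive everything. First, for a subspace $S\le V_n$, the commutant of $\A_S$ in $\operatorname{Mat}_{2^n}(\CC)$ is $\A_{S^\perp}$: a matrix $\sum_w c_wP_w$ commutes with every $P_v$, $v\in S$, exactly when $c_w=0$ for $w\notin S^\perp$. Second, since the $P_w$ form a basis, $\A_A\cap\A_B=\A_{A\cap B}$ for any subspaces $A,B$. Because conjugation by $U$ is an algebra automorphism, it carries commutants to commutants. Hence $U\A_SU^\dagger=\A_T$ gives
\[
U\A_{S^\perp}U^\dagger=\A_{T^\perp}.
\]
Intersecting with $U\A_LU^\dagger=\A_M$ then gives $U\A_{L\cap S^\perp}U^\dagger=\A_{M\cap T^\perp}$.

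Next I would take the algebra generated by $\A_S$ and $\A_{L\cap S^\perp}$. Since $P_uP_w\doteq P_{u+w}$, the algebra generated by $\A_A$ and $\A_B$ is $\A_{A+B}$. The same holds on the output side for $\A_T$ and $\A_{M\cap T^\perp}$. Conjugation by $U$ preserves generated algebras, so
\[
U\A_{S+(L\cap S^\perp)}U^\dagger=\A_{T+(M\cap T^\perp)}.
\]
This is the claimed identity, and the containments $S\subseteq S+(L\cap S^\perp)$ and $T\subseteq T+(M\cap T^\perp)$ are immediate.

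It remains to check that $N\coloneqq S+(L\cap S^\perp)$ is Lagrangian. It is isotropic because $S$ is isotropic, $L\cap S^\perp\subseteq L$ is isotropic, and $L\cap S^\perp$ is orthogonal to $S$. For the dimension, use $(L\cap S^\perp)^\perp=L+S$. The space $N$ is the standard symplectic-reduction Lagrangian: modulo $S$, $L\cap S^\perp$ maps to a Lagrangian in $S^\perp/S$. Concretely,
\[
\dim(L\cap S^\perp)=\dim L+\dim S^\perp-\dim(L+S^\perp),
\]
and $(L+S^\perp)^\perp=L\cap S$, so $\dim(L\cap S^\perp)=n+(2n-s)-(2n-\dim(L\cap S))=n-s+\dim(L\cap S)$. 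Since $S\cap(L\cap S^\perp)=L\cap S$, this yields $\dim N=s+\dim(L\cap S^\perp)-\dim(L\cap S)=n$. The same computation applies to $T$ and $M$; alternatively, $\A_{T+(M\cap T^\perp)}$ is unitarily conjugate to $\A_N$, which forces $2^{\dim}$ to agree.

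The main point needing care is the commutant step. In particular, it relies on $\A_S$ containing the identity, which holds since $0\in S$ and $P_0=I$. Everything else is routine linear algebra.
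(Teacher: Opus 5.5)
Your proposal is correct and follows essentially the same route as the paper. You identify the commutant of $\A_S$ with $\A_{S^\perp}$, transport it and its intersection with $\A_L$ by conjugation, take the generated algebra to get $\A_{S+(L\cap S^\perp)}$, and finish with a dimension count $\dim\bigl(S+(L\cap S^\perp)\bigr)=n$. Your count goes through $(L+S^\perp)^\perp=L\cap S$, while the paper uses the map $L\to\operatorname{Hom}(S,\F_2)$ with kernel $L\cap S^\perp$; these are the same computation.
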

\begin{proof}

For any $K$, let
\[
\A_K'
\coloneqq \{B:[B,P_k]=0\text{ for all }k\in K\}
=\operatorname{span}_{\mathbb C}\{P_v:v\in K^\perp\}.
\]
Hence
\[
\A_L\cap\A_S'=\A_{L\cap S^\perp},\qquad
\A_{S+(L\cap S^\perp)}
=\operatorname{alg}(\A_S,\A_L\cap\A_S'),
\]
where $\operatorname{alg}$ denotes the generated algebra. Conjugation by $U$ sends $\A_S,\A_L,\A_S'$ to $\A_T,\A_M,\A_T'$, respectively. Thus
\[
U\A_{S+(L\cap S^\perp)}U^\dagger
=\operatorname{alg}(\A_T,\A_M\cap\A_T')
=\A_{T+(M\cap T^\perp)}.
\]

Since $S$ and $L$ are isotropic and every element of $L\cap S^\perp$ is orthogonal to $S$, the space $S+(L\cap S^\perp)$ contains $S$ and is isotropic. Let $s=\dim S$ and $d=\dim(S\cap L)$. Because $L^\perp=L$, the map $L\to\operatorname{Hom}(S,\F_2)$, $\ell\mapsto\omega(\ell,\cdot)|_S$, has kernel $L\cap S^\perp$ and an image of dimension $s-\dim(S\cap L^\perp)=s-d$. Consequently,
\[
\dim(L\cap S^\perp)=n-s+d,\qquad
\dim\bigl(S+(L\cap S^\perp)\bigr)=s+(n-s+d)-d=n.
\]
Thus this space is Lagrangian. The same argument applies to $T$ and $M$.

\end{proof}

\begin{proposition}\label{lem:controlled-clifford}
For $C\in \C_2(m)$ and $k\ge2$, $\operatorname{diag}(I,C)\in\C_k(m+1)$ if and only if $C^{2^{k-2}}\in\mu_4\mathcal P(m)$.
\end{proposition}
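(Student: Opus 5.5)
Write $G=\operatorname{diag}(I,C)=\ket0\bra0\otimes I+\ket1\bra1\otimes C$ on $m+1$ qubits, with the control as the first qubit. I would argue by induction on $k$, computing the corrections of $G$ explicitly.

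\emph{Corrections.} Every Pauli on $m+1$ qubits is, up to a scalar, $Z_1^aP$ or $X_1Z_1^aP$ with $P\in\Pau(m)$ acting on the targets.
\begin{itemize}
\item For $Z_1^aP$ the correction is $Z_1^a\operatorname{diag}(P,CPC^\dagger)$. Since $CPC^\dagger\doteq\pm P'$ is a Pauli, and $\operatorname{diag}(P,\lambda P')$ with $\lambda\in\mu_4$ is a Pauli times a controlled Pauli, this correction is Clifford.
\item For $X_1$ the correction is $GX_1G^\dagger=X_1\operatorname{diag}(C,C^\dagger)$. Indeed $X_1\ket0\bra0=\ket1\bra0\cdot$, giving blocks $C^\dagger$ and $C$ in the appropriate places.
\item Writing $\operatorname{diag}(C,C^\dagger)=\operatorname{diag}(I,C^{-2})\,(I\otimes C)$, we get $GX_1G^\dagger=X_1(I\otimes C)\,\operatorname{diag}(I,C^{-2})$, up to Clifford factors on both sides.
\item The general correction $GX_1Z_1^aPG^\dagger$ differs from this by a Clifford factor coming from the $Z_1^aP$ part.
\end{itemize}

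\emph{Induction.} By Lemma~\ref{lem:basic-operations}(1), left or right multiplication by Cliffords preserves $\C_j$ for $j\ge2$. The claim then reduces to
\[
G\in\C_k(m+1)\iff \operatorname{diag}(I,C^{-2})\in\C_{k-1}(m+1).
\]
\begin{itemize}
\item Since $C^{-2}$ is again Clifford, the inductive hypothesis for $k-1$ applies. It gives the condition $C^{-2\cdot2^{k-3}}=C^{-2^{k-2}}\in\mu_4\Pau(m)$, which is equivalent to $C^{2^{k-2}}\in\mu_4\Pau(m)$.
\item Base case $k=2$: $\operatorname{diag}(I,C)$ is Clifford iff $C\in\mu_4\Pau(m)$. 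The direction "$\Leftarrow$" holds because a controlled Pauli times a $\mu_4$-phase gate is Clifford. For "$\Rightarrow$", the conjugate $GX_1G^\dagger=X_1\operatorname{diag}(C,C^\dagger)$ must be a Pauli. This forces $C$ to be a scalar times a Pauli, and unitarity plus the Pauli phase constraint puts the scalar in $\mu_4$ up to the phase ambiguity. The $\mu_4$ bookkeeping needs care here, since $\operatorname{diag}(1,\lambda)$ is Clifford only for $\lambda\in\mu_4$.
\end{itemize}

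\emph{Main obstacle.} The delicate step is the "only if" direction. It requires $G\in\C_k$ to force $\operatorname{diag}(I,C^{-2})\in\C_{k-1}$, and not merely the full correction $X_1(I\otimes C)\operatorname{diag}(I,C^{-2})$ to lie there. Multiplying by a Clifford on the left or right preserves each level (Lemma~\ref{lem:basic-operations}(1)). So if the correction lies in $\C_{k-1}$, then so does $\operatorname{diag}(I,C^{-2})=(I\otimes C)^{-1}X_1\cdot(\text{correction})$, and the equivalence holds in both directions.

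The remaining corrections are Clifford multiples of this one. For the "if" direction they therefore also lie in $\C_{k-1}$, which completes the induction. Throughout, one must track scalars carefully so that the phases stay in $\mu_4$.
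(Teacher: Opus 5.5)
Your proof is correct and follows the paper's argument. You compute the control correction $GX_1G^\dagger$ and strip Clifford factors to reduce $\operatorname{diag}(I,C)\in\C_k$ to $\operatorname{diag}(I,C^{\pm2})\in\C_{k-1}$; you use $C^{-2}$ where the paper uses $C^2$, which makes no difference. You then settle $k=2$ by reading off the lower-left block of the Pauli $GX_1G^\dagger$, and the $\mu_4$ bookkeeping you worry about is immediate. Writing $GX_1G^\dagger=\lambda P_1\otimes P'\in\Pau(m+1)$, that block is $\lambda\bra{1}P_1\ket{0}P'$, so $C\in\mu_4\Pau(m)$ directly.
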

This is the controlled jump criterion of Xu and Wang~\cite[Theorem~3]{XuWang2026ControlledJump}, and a necessary condition of this form is also given in~\cite[Corollary~2.4.1]{AW2024ControlledGates}. We include a proof for completeness.
\begin{proof}

Let $D(C)=\diag(I,C)$, $X_c=X\otimes I$, and $Z_c=Z\otimes I$. First consider $k=2$:
\[
D(C)X_cD(C)^\dagger
=|0\rangle\langle1|\otimes C^\dagger
 +|1\rangle\langle0|\otimes C.
\]
If $D(C)$ is Clifford, this correction is a Hermitian Pauli anticommuting with $Z_c$. It must equal $\pm X\otimes P_p$ or $\pm Y\otimes P_p$ for some $p\in V_m$. Comparing the lower-left blocks gives $C=\xi P_p$ with $\xi\in\mu_4$. Conversely, for $C=\xi P_p$ and any $q\in V_m$,
\[
\begin{aligned}
D(C)X_cD(C)^\dagger
&=(\bar\xi|0\rangle\langle1|+\xi|1\rangle\langle0|)
  \otimes P_p\in\mathcal P(m+1),\\
D(C)(I\otimes P_q)D(C)^\dagger
&=Z_c^{\omega(p,q)}(I\otimes P_q)\in\mathcal P(m+1).
\end{aligned}
\]
Also $D(C)Z_cD(C)^\dagger=Z_c$. Since these cover all the Pauli generators, $D(C)\in\C_2$ if and only if $C\in\mu_4\mathcal P(m)$.

Now let $k>2$. For every target Pauli $Q$, the Clifford property of $C$ gives
\[
D(C)(I\otimes Q)D(C)^\dagger
=(I\otimes Q)D(Q^\dagger CQC^\dagger)\in\C_2(m+1),
\]
since $Q^\dagger CQC^\dagger$ is Pauli. The control correction is
\[
D(C)X_cD(C)^\dagger=(I\otimes C^\dagger)D(C^2)X_c.
\]
Every other Pauli correction is a product of these. Multiplication by a Clifford preserves $\C_{k-1}$, so $D(C)\in\C_k\Longleftrightarrow D(C^2)\in\C_{k-1}$. Iterating this equivalence and applying the base case gives
\[
D(C)\in\C_k
\Longleftrightarrow D(C^{2^{k-2}})\in\C_2
\Longleftrightarrow C^{2^{k-2}}\in\mu_4\mathcal P(m). \qedhere
\]
\end{proof}

\section{\texorpdfstring{$\mathcal{C}_3(n)\subseteq\mathsf{SC}(n)$ for $n\le6$}{C3(n) in SC(n) for n<=6}}\label{app:sc-boundary}

We now show that every third-level gate on at most six qubits is semi-Clifford. We note that this result is sharp in the sense that there exists a seven-qubit third-level gate that is not semi-Clifford~\cite{sc-BS2010}.

Recall that by~\cite{sc-BS2010} and~\cite{sc-HRT2025arxiv}, a third-level gate on $n\le6$ qubits agrees, up to Clifford gates on both sides, with a gate $M\ket{x,y}= e^{i\theta(x,y)}\ket{x,y+q(x)}$ on $r$ control and $m$ target qubits, where $r+m=n$ and $q$ is quadratic. Without the phase $\theta$, Hadamard gates on the targets make $M$ diagonal, so only $\theta$ can keep $M$ from being semi-Clifford. We begin with three lemmas.

\begin{lemma}\label{lem:sc-sign-gates}
For $g:\F_2^r\to\F_2$ let $D_g=\sum_x(-1)^{g(x)}\ket{x}\bra{x}$. Then the following hold.
\begin{enumerate}
\item $D_g$ is Pauli if and only if $g$ is affine. \label{itm:scsg1}
\item $D_g\in\C_2(r)$ if and only if $\deg g\le2$.
\end{enumerate}
\end{lemma}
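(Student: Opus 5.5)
The plan is to compute the Pauli corrections of $D_g$ directly, using that $D_g$ is diagonal and so commutes with every $Z$-type Pauli. For $a\in\F_2^r$,
\[
D_gX^aD_g^\dagger=\sum_x(-1)^{g(x+a)+g(x)}\ket{x+a}\bra{x}=X^aD_{\partial_ag},\qquad \partial_ag(x)\coloneqq g(x+a)+g(x),
\]
and $D_gZ^bD_g^\dagger=Z^b$. Both parts then reduce to statements about the discrete derivatives $\partial_ag$.

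For part (1), the implication from affine $g$ is immediate: if $g(x)=c\cdot x+\epsilon$, then $D_g=(-1)^\epsilon Z^c$. For the converse, suppose $D_g\doteq P$ for a Pauli $P$. Since $D_g$ is diagonal, $P$ is diagonal in the computational basis, so $P\doteq Z^c$. Comparing the diagonal entries gives $(-1)^{g(x)}=\lambda(-1)^{c\cdot x}$ for all $x$, with $\lambda=\pm1$, and hence $g(x)=c\cdot x+\epsilon$ is affine.

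For part (2), by the correction formula, $D_g\in\C_2(r)$ if and only if $X^aD_{\partial_ag}$ is Pauli for every $a$, which happens exactly when $D_{\partial_ag}$ is Pauli for every $a$. By part (1), this holds if and only if every $\partial_ag$ is affine. It then remains to show the standard fact that a Boolean function $g$ has $\deg g\le2$ (degree of its algebraic normal form) if and only if all its first derivatives are affine. One direction is direct: differentiating a monomial $x_Ax$ of degree $|A|$ lowers its degree by at least one, so $\deg g\le2$ makes every $\partial_ag$ affine. For the converse, suppose $\deg g=d\ge3$ and pick a maximal-degree monomial $x_A$, with $i\in A$. Then $\partial_{e_i}g$ contains $x_{A\setminus\{i\}}$, of degree $d-1\ge2$. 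No other monomial of $g$ can cancel it: a monomial $x_B$ contributes the term $x_{A\setminus\{i\}}$ to $\partial_{e_i}g$ only if $B\supseteq A\setminus\{i\}$ and $i\in B$, so $B\supseteq A$, and maximality of $d$ forces $B=A$. Hence $\partial_{e_i}g$ is not affine.

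The only delicate point is this final cancellation argument in the algebraic normal form. The step most prone to error is handling global phases correctly in "$\doteq$ Pauli", and that is settled by the diagonal comparison in part (1).
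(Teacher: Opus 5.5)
Your proposal is correct and follows the paper's proof essentially step for step: part (1) by observing that a diagonal Pauli must be $\pm Z^c$ and comparing diagonal entries, and part (2) via the correction $D_gX^aD_g^\dagger=X^aD_{\partial_ag}$ combined with part (1). The only difference is that you prove the standard algebraic-normal-form fact that every first derivative $\partial_ag$ is affine exactly when $\deg g\le2$, which the paper uses without proof.
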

\begin{proof}
Suppose first $D_g$ is Pauli, i.e., $D_g = \lambda X^a Z^b$ with $\lambda \in \{\pm 1, \pm i\}$. Since $D_g$ is Hermitian diagonal in the computational basis, $a = 0$ and there exists $c\in\F_2$ such that $\lambda =(-1)^c$. Thus, $D_g \ket x = (-1)^{g(x)}\ket x = (-1)^{b \cdot x +c}\ket x$ for all $x$. Hence $g(x) = b \cdot x +c$ is affine. Conversely, $D_{b\cdot x+c}=(-1)^cZ^b$.

Next, since $D_g$ commutes with every $Z^b$, it is Clifford if and only if $D_gX^aD_g^\dagger$ is Pauli for every $a$. Notice that
\[
D_gX^aD_g^\dagger
=X^a\sum_x(-1)^{g(x+a)+g(x)}\ket{x}\bra{x}.
\]
By~\eqref{itm:scsg1}, this is Pauli for every $a$ if and only if $g(x+a)+g(x)$ is affine for every $a$, i.e., when $\deg g\leq2$.
\end{proof}

\begin{lemma}\label{lem:sc-forms}
Let $R=\F_2^r$. Then the following hold.
\begin{enumerate}
\item Let $\phi:R\times R\to\F_2^{2m}$ be alternating bilinear with $\omega(\phi(b,a),\phi(b,c))=0$ for all $a,b,c\in R$. If $r\le3$ or $m\le2$, then $W_\phi=\Span\{\phi(a,b):a,b\in R\}$ is isotropic.
\item Let $W=\F_2^m$ and let $\beta:R\times R\to W$ be alternating bilinear with values spanning $W$; write $G_b\in\operatorname{Hom}(R,W)$ for $G_ba=\beta(b,a)$. Let $w\mapsto E_w\in\operatorname{Hom}(W,R)$ be linear, with $E_wG_b$ symmetric for all $w\in W$, $b\in R$, and $E_ww'=E_{w'}w$ for all $w,w'\in W$. If $r+m\le6$, then $E_w=0$ for all $w$.
\end{enumerate}
\end{lemma}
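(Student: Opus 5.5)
For part~(1), I would pass to the scalar $4$-linear form $\Psi(a,b,c,d)=\omega(\phi(a,b),\phi(c,d))$ on $R$. Isotropy of $W_\phi$ is exactly $\Psi\equiv0$. The form $\Psi$ has the following symmetries.
\begin{enumerate}
\item It is alternating in $(a,b)$ and in $(c,d)$, because $\phi$ is alternating.
\item It is symmetric under exchanging the pairs $(a,b)\leftrightarrow(c,d)$, because $\omega$ is symmetric over $\F_2$.
\item By hypothesis it vanishes whenever the two pairs share an entry, i.e.\ $\Psi(b,a,b,c)=0$.
\item Polarizing (3) in $b$ gives $\Psi(b,a,b',c)=\Psi(b',a,b,c)$.
\end{enumerate}
Together these say that $\Psi$ is invariant under all permutations of its arguments (signs are trivial mod~$2$) and vanishes when two arguments coincide. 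Hence $\Psi$ is an alternating $4$-form on $R$.

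If $r\le3$, every alternating $4$-form on $R$ vanishes. One can also argue directly: any two $2$-subsets of a basis of size $3$ intersect, so (3) kills $\Psi$ on basis vectors. If $m\le2$ and $\Psi\ne0$, I would pick $e_1,\dots,e_4$ with $\Psi(e_1,e_2,e_3,e_4)=1$ and set $u_{ij}=\phi(e_i,e_j)$. Then $\omega(u_{ij},u_{kl})=1$ exactly when $\{i,j\}\cap\{k,l\}=\varnothing$. The vectors $u_{12},u_{13},u_{14}$ span an isotropic subspace by (3). Their pairing matrix against $u_{34},u_{24},u_{23}$ is the identity, so they are linearly independent. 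This gives an isotropic subspace of $\F_2^{2m}$ of dimension $3>m$, a contradiction.

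For part~(2), I would again reformulate trilinearly. Fix $w$ and set
\[
F_w(b,a,c)=c\cdot E_w\beta(b,a).
\]
This form is alternating in $(b,a)$ because $\beta$ is, and symmetric in $(a,c)$ because $E_wG_b$ is symmetric. Combining the two symmetries, $F_w$ is invariant under all of $S_3$ and vanishes when $b=a$, so $F_w$ is an alternating trilinear form on $R$. Since the values $\beta(b,a)$ span $W$, $F_w=0$ is equivalent to $E_w=0$.

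This settles $r\le2$ immediately. Because $\beta$ spans $W$, we need $m\le\binom r2$, and together with $r+m\le6$ the remaining cases are
\[
(r,m)\in\{(3,1),(3,2),(3,3),(4,1),(4,2),(5,1)\}.
\]
In these cases I would use the second hypothesis $E_ww'=E_{w'}w$, which couples the forms $F_w$ through $\beta$. The map $w\mapsto F_w$ is linear into $\Lambda^3R^*$, which has dimension $\binom r3$, i.e.\ $1$, $4$ or $10$. The relation becomes
\[
c\cdot E_w\beta(b',a')=c\cdot E_{\beta(b',a')}w.
\]
So for each $c$, the bilinear pairing $(w,w')\mapsto c\cdot E_ww'$ on $W$ must be symmetric, while being prescribed by alternating $3$-forms.

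I expect this step to be the main obstacle. My first attempt is the cases $m=1$, where $W=\langle w_0\rangle$, $\beta=w_0\,\beta_0$ for an alternating form $\beta_0$ on $R$, and $E_{w_0}=e\in R$. Then symmetry of $E_{w_0}G_b$ says that $a\mapsto\beta_0(b,a)\,e$ is a symmetric rank-$\le1$ map for every $b$. Such a map has the form $\lambda ee^{\mathsf T}$, which forces $\beta_0(b,\cdot)$ to be proportional to $e^*$ for all $b$. Then $\beta_0$ has rank $\le1$, which is impossible for a nonzero alternating form, unless $e=0$. For $m=2,3$ I would choose a normal form of $\beta$ under $\mathrm{GL}(R)\times\mathrm{GL}(W)$, of which there are only a few in these dimensions. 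I would then solve the resulting small linear system over $\F_2$ for the $E_w$ by hand, or by the finite computer check the paper already uses.
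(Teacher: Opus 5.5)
Your part (1) is correct and is essentially the paper's argument. The paper also shows that $\omega(\phi(a,b),\phi(c,d))$ is an alternating $4$-form. When this form is nonzero, the paper exhibits six values of $\phi$ with nondegenerate Gram matrix to force $m\ge3$. Your three-dimensional isotropic span of $u_{12},u_{13},u_{14}$ gives the same bound, and you justify the alternating property more carefully than the paper does.

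Part (2) has a genuine gap. Your reformulation of $F_w$ as an alternating trilinear form is exactly the paper's. After $r\le2$ and $m=1$, however, you leave $(r,m)\in\{(3,2),(3,3),(4,2)\}$ to a normal-form computation that you do not carry out. The paper proves this lemma by hand, with no computer check to fall back on.

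The missing idea is a rank bound that disposes of these cases at once. Suppose $F_w\ne0$ and pick $x,a,b$ with $F_w(x,a,b)=1$. Take the dot products of $E_w\beta(a,b)$, $E_w\beta(b,x)$, $E_w\beta(x,a)$ with $x,a,b$. Because $F_w$ is alternating, these form the identity matrix, so the three vectors are independent. This gives $3\le\operatorname{rank}E_w\le\min(r,m)$, which forces $r=m=3$ under $r+m\le6$. Your $m=1$ argument is the rank-one case of this bound. The bound also shows, contrary to your plan, that the coupling hypothesis $E_ww'=E_{w'}w$ is not needed when $m\le2$.

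The coupling hypothesis is needed only when $r=m=3$. There $\wedge^3R^*$ is one-dimensional. Since $\beta$ spans $W$, $E_w$ is determined linearly and injectively by $F_w\in\wedge^3R^*$. Hence $E_w=\lambda(w)E_*$ for a linear $\lambda$ on $W$ and a fixed $E_*$, which has rank $3$ by the bound above and so is invertible. Choose $w$ with $\lambda(w)=1$ and $0\ne w'\in\ker\lambda$. Then $E_ww'=E_{w'}w$ gives $E_*w'=0$, a contradiction.
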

\begin{proof}
Since $\phi$ is alternating and $\omega(\phi(b,a),\phi(b,c))=0$, the $4$-form
\[
\tau(a,b,c,d)\coloneqq \omega(\phi(a,b),\phi(c,d)) \quad a,b,c,d \in \F^{r}_2
\]
is alternating. Therefore, if $\tau (a,b,c,d) = 1$, then $a,b,c,d$ are linearly independent over $\F_2$ and hence $r \geq 4$. Consequently, for
\[
(p_1,\ldots,p_6)
=\bigl(\phi(a,b),\phi(c,d),\phi(a,c), \phi(b,d),\phi(a,d),\phi(b,c)\bigr),
\]
we obtain that
\[
\bigl(\omega(p_i,p_j)\bigr)_{i,j=1}^6
=\begin{pmatrix}0&1\\1&0\end{pmatrix}^{\!\oplus3}.
\]
This matrix is nonsingular, so $p_1,\dots,p_6$ are linearly independent, and $2m\geq6$. Hence $\tau\ne0$ forces $r\ge4$ and $m\ge3$, and under the hypothesis $\tau=0$, so $\omega(x,y) = 0$ for all $x,y \in W_\phi$ and $W_\phi$ is isotropic.

Next, for $E\in\operatorname{Hom}(W,R)$, define $\Phi_\beta(E)\in R^*\otimes\wedge^2R^*$ by
\[
\Phi_\beta(E)(x,a,b)=x^{\mathsf T}E\beta(a,b).
\]
Then $\Phi_\beta$ is injective by surjectivity of $\beta$. Consider $\alpha_w(x,a,b) \coloneqq x^{\mathsf T}E_w\beta(a,b) = \Phi_\beta(E_w)(x,a,b)$. By the bilinearity of $\beta$, we have that $\alpha_w$ is trilinear. Moreover, since $\beta(a,a)=0$ and $E_wG_a$, $E_wG_b$ are symmetric,
\[
\alpha_w(x,a,a)=0, \quad
\alpha_w(a,a,b)
=
a^{\mathsf T}E_wG_ab
=
b^{\mathsf T}E_wG_aa
=0, \quad
\alpha_w(b,a,b)
=0 \Longrightarrow \alpha_w\in\wedge^3R^*.
\]

Suppose for contradiction that $E_w\neq0$. Since $\Phi_\beta$ is injective, $\alpha_w=\Phi_\beta(E_w)$ is a nonzero alternating three-form. Consider its contraction map
\[
\kappa_{\alpha_w}:\wedge^2R\longrightarrow R^*,
\qquad
\kappa_{\alpha_w}(a\wedge b)(x)
=
\alpha_w(x,a,b).
\]
Identifying $R\cong R^*$, it is straightforward to check that $\kappa_{\alpha_w}$ is the composition
\[
\wedge^2R
\xrightarrow{\ \beta\ }
W
\xrightarrow{\ E_w\ }
R
\cong R^*.
\]
Since $\beta$ is surjective, $\rank \kappa_{\alpha_w}=\rank E_w$. Let $x,a,b\in R$ such that $\alpha_w(x,a,b)=1$. Then the three functionals $\kappa_{\alpha_w}(a\wedge b)$, $\kappa_{\alpha_w}(b\wedge x)$, and $\kappa_{\alpha_w}(x\wedge a)$ are linearly independent. Therefore $\rank E_w = \rank\kappa_{\alpha_w} \geq3$. Since $E_w:W\to R$, we have $r\geq3$ and $m\geq3$. Since $r+m\leq6$, $r=m=3$.

Since $\dim (\Phi_\beta^{-1}(\wedge^3R^*)) \leq 1$ as $\dim\wedge^3R^*=1$ for $r=3$, linearity gives $E_w=\lambda(w)E_*$ for a nonzero $\lambda \in W^*$ and $E_*$ invertible. Choose $w$ with $\lambda(w)=1$ and nonzero $w'\in\ker\lambda$. Then $E_ww'=E_{w'}w$ gives $E_*w'=0$, a contradiction. Therefore, $E_w = 0$.
\end{proof}

\begin{lemma}\label{lem:sc-pauli-trivial}
If $Q\in\mathrm U(2^m)$ and $QP_vQ^\dagger\doteq P_v$ for every $v\in\F_2^{2m}$, then $Q\doteq P$ for a Pauli $P$.
\end{lemma}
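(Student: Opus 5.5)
The plan is to show that $Q$ normalizes nothing beyond the Paulis by first making the hypothesis exact and then expanding $Q$ in the Pauli basis. Write $QP_vQ^\dagger=\lambda_vP_v$ with $\lambda_v\in\CC^\times$. Since both sides are unitary and Hermitian-up-to-phase, $|\lambda_v|=1$. Squaring, and using $P_v^2=I$ together with $QP_v^2Q^\dagger=I$, gives $\lambda_v^2=1$, so $\lambda_v=\pm1$.

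Next I would show that $v\mapsto\lambda_v$ is a character of $(\F_2^{2m},+)$. We have $P_uP_v=c_{u,v}P_{u+v}$ for a scalar $c_{u,v}$. Conjugating this identity by $Q$ yields
\[
\lambda_u\lambda_v c_{u,v}P_{u+v}=c_{u,v}\lambda_{u+v}P_{u+v},
\]
hence $\lambda_{u+v}=\lambda_u\lambda_v$. A homomorphism $\F_2^{2m}\to\{\pm1\}$ has the form $v\mapsto(-1)^{\ell(v)}$ for some linear functional $\ell$. Because $\omega$ is nondegenerate, there is a unique $w$ with $\ell=\omega(w,\cdot)$. Therefore $P_wP_vP_w^\dagger=(-1)^{\omega(v,w)}P_v=\lambda_vP_v$ for every $v$.

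Now set $Q'=P_w^\dagger Q$. Then $Q'$ commutes exactly with every $P_v$, and the $P_v$ span $\operatorname{Mat}_{2^m}(\CC)$. Hence $Q'$ lies in the center, so $Q'\doteq I$ and $Q\doteq P_w$.

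Alternatively, one can expand $Q=\sum_uc_uP_u$ and observe that $QP_v=\lambda_vP_vQ$ forces $c_u=0$ unless $(-1)^{\omega(u,v)}=\lambda_v$ for all $v$. This singles out the unique $u=w$, giving the same conclusion. The only step needing care is the character property, which handles the phases $c_{u,v}$; there is no real obstacle.
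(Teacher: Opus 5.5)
Your proof is correct and follows essentially the paper's route: find a Pauli whose conjugation action has the same sign pattern as $Q$'s, then observe that the quotient commutes with every Pauli and is therefore scalar. The only difference is that the paper matches the signs directly on the generators $X_j,Z_j$ by choosing $P=X^aZ^b$, instead of first proving that $v\mapsto\lambda_v$ is a character and then invoking nondegeneracy of $\omega$.
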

\begin{proof}
Since $QX_jQ^\dagger$ and $QZ_jQ^\dagger$ are Hermitian with square $I$,
\[
QX_jQ^\dagger=\epsilon_jX_j,
\qquad
QZ_jQ^\dagger=\delta_jZ_j \qquad \text{where } \epsilon_j,\delta_j\in\{\pm1\}.
\]
Choose $a,b\in\F_2^m$ such that $(-1)^{b_j}=\epsilon_j$ and $(-1)^{a_j}=\delta_j$ for every $j$, and let $P=X^aZ^b$. Then $P$ has the same conjugation action as $Q$ on every $X_j$ and $Z_j$. Hence $P^\dagger Q$ commutes with all Pauli operators, and therefore with the full matrix algebra. It follows that $P^\dagger Q$ is scalar.
\end{proof}
The main tool is the following proposition on block-diagonal gates whose blocks are scalar multiples of Pauli gates. We reduce $M$ to this form in the proof of Theorem~\ref{thm:sc-six}.
\begin{proposition}\label{prop:sc-controlled-pauli}
Let $r,m\geq0$, $v:\F_2^r\to\F_2^{2m}$, and $\zeta:\F_2^r\to\T$. Consider
\[
N\coloneqq \sum_{x\in\F_2^r} \zeta(x) \ket{x}\bra{x} \otimes P_{v(x)}.
\]
If $N\in\C_3(r+m)$ and either $r\leq3$ or $m\leq2$, then $N \in \mathsf{SC}(r+m)$.
\end{proposition}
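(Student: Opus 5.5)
The plan is to write down the Pauli corrections of $N$ explicitly, read off from $N\in\C_3(r+m)$ that $v$ is a quadratic map whose associated bilinear map satisfies the hypothesis of Lemma~\ref{lem:sc-forms}(1), and then build a maximal abelian subgroup of $\Pau(r+m)$ that $N$ maps onto Paulis. The last condition is the criterion for $\SC$ used in the proof of Lemma~\ref{lem:basic-operations}: some $(L,M)\in\mathscr L(N)$ with $L\subseteq\mathsf P(N)$.

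\textbf{Step 1: target corrections.} The control $Z$'s are fixed by $N$. For a target Pauli $P_w$,
\[
N(I\otimes P_w)N^\dagger=D_{g_w}\otimes P_w,\qquad g_w(x)=\omega(v(x),w).
\]
This correction must be Clifford, so Lemma~\ref{lem:sc-sign-gates} gives $\deg g_w\le2$ for every $w$. Since $\omega$ is nondegenerate, every coordinate of $v$ has degree at most $2$. Hence
\[
v(x)=c+\textstyle\sum_i x_i\ell_i+\sum_{i<j}x_ix_j\phi_{ij}.
\]
Let $\phi(a,b)=v(a+b)+v(a)+v(b)+v(0)$; this is alternating bilinear. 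Moreover, $g_w$ is affine, so that $N(I\otimes P_w)N^\dagger$ is Pauli, exactly when $w\in W_\phi^\perp$, where $W_\phi=\Span\phi(R,R)$.

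\textbf{Step 2: control corrections.} For $b\in\F_2^r$,
\[
C_b\coloneqq N(X^b\otimes I)N^\dagger=(X^b\otimes I)\sum_x\lambda_b(x)\ket x\bra x\otimes P_{u_b(x)},
\]
where $u_b(x)=v(x+b)+v(x)=\phi(b,x)+\ell_b$ is affine in $x$ and $\lambda_b$ collects $\zeta$ and the Pauli product phases. Now $C_b$ must be Clifford, so $C_b(X^a\otimes I)C_b^\dagger$ must be Pauli. Its diagonal factor has blocks $P_{u_b(x+a)}P_{u_b(x)}\doteq P_{\phi(b,a)}$, which are constant up to phase. The resulting phase is
\[
\lambda_b(x+a)\overline{\lambda_b(x)}\cdot i^{\pm\omega(\phi(b,a),\phi(b,x)+\ell_b)}.
\]
It must have the form $\mu\,(-1)^{\text{affine}(x)}$ with $\mu\in\mu_4$.

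I would now take a second difference in $a$ and $c$ and compare. The $\lambda_b$ part contributes a term symmetric in $(a,c)$ coming from a genuine function. The commutator part contributes $\omega(\phi(b,a),\phi(b,c))$, with the mod-$4$ lift forced by the factor $i$. Consistency of the two should force $\omega(\phi(b,a),\phi(b,c))=0$ for all $a,b,c$. This is the step I expect to be the main obstacle: it requires careful tracking of the $\mu_4$-valued phases, and of the $i^{x\cdot z}$ convention in $P_v$, to show that the bilinear form cannot be absorbed by $\lambda_b$. What I would try first is to rewrite the condition as the requirement that a $\Z/4$-valued function has affine-mod-$2$ first differences. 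Such functions have second differences valued in $2\Z/4$. The $i$-weighted term, by contrast, contributes an odd residue exactly where $\omega(\phi(b,a),\phi(b,c))=1$.

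\textbf{Step 3: isotropy and conclusion.} With this condition in hand and $r\le3$ or $m\le2$, Lemma~\ref{lem:sc-forms}(1) shows that $W_\phi$ is isotropic. Choose a Lagrangian $L_t\le\F_2^{2m}$ with $W_\phi\subseteq L_t\subseteq W_\phi^\perp$, and set $L=\Span\{z_1,\dots,z_r\}\oplus L_t$. This $L$ is Lagrangian in $V_{r+m}$.

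Every $v\in L$ lies in $\mathsf P(N)$: the control $Z$'s are fixed, each $w\in L_t$ maps to a Pauli by Step 1 since $L_t\subseteq W_\phi^\perp$, and $\mathsf P(N)$ is a subspace. So $N$ maps the maximal abelian group $q^{-1}(L)$ to a group of Paulis. That image is abelian, since conjugation preserves commutation, and has the same order, so it is maximal abelian. Hence $N\in\SC(r+m)$.
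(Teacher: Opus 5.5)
Your Steps~1 and~3 are sound. Step~3 actually ends a little more directly than the paper: you exhibit the Lagrangian $\Span\{z_1,\dots,z_r\}\oplus L_t\subseteq\mathsf P(N)$, whereas the paper first strips the affine part of $v$ with controlled Paulis and then conjugates the targets until $N$ is diagonal.

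\textbf{The gap is in Step~2.} This step carries the whole proof, and the mechanism you sketch cannot work. The requirement that $C_b(X^a\otimes I)C_b^\dagger$ be a Pauli, with $\lambda_b$ treated as just some function, does not force $\omega(\phi(b,a),\phi(b,c))=0$. The reason is that the $\mu_4$-valued part of $\lambda_b$ can have odd second differences and absorb exactly the odd residue of the commutator term. So your claim that the relevant $\Z/4$-valued functions have second differences in $2\Z/4$ fails.

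Concretely, take three control qubits and target qubit $4$, and let
$C=X_3\sum_{x}\ket{x}\bra{x}\otimes X^{x_1}Z^{x_2}=X_3\,\CNOT_{1\to4}\,\CZ_{24}$.
This gate is Clifford. It has exactly your shape with $b=e_3$, $u(x)=x_1p+x_2q$ (so $u(x+b)=u(x)$), where $P_p=X$ and $P_q=Z$, and $\lambda(x)=(-i)^{x_1x_2}$. Yet $\omega(p,q)=1$. So no argument about the phases of a Clifford of this form alone can give the vanishing.

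\textbf{The missing ingredient} is that $C_b=N(X^b\otimes I)N^\dagger$ is an involution; equivalently, that $\lambda_b$ comes from $\zeta(x+b)\overline{\zeta(x)}$, which your sketch never uses. Note that the $C$ above squares to $\CZ_{12}$, not to the identity.

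This is exactly what the paper exploits. Decompose labels into control $X$-labels, target labels and control $Z$-labels. The symplectic action $F_b$ of $C_b$ then sends
\begin{itemize}
\item $(a,0,0)\mapsto(a,\phi_ba,\varepsilon_ba)$,
\item $(0,u,0)\mapsto(0,u,\phi_b^*u)$, where $(\phi_b^*u)\cdot a=\omega(\phi(b,a),u)$,
\end{itemize}
and it fixes $(0,0,c)$. Symplecticity alone only gives $\varepsilon_b+\varepsilon_b^{\mathsf T}=\phi_b^*\phi_b$, which is why Clifford-ness is not enough. Over $\F_2$, however, the lower-left block of $F_b^2$ is $\phi_b^*\phi_b$, and $C_b^2=I$ forces it to vanish; that is, $\omega(\phi(b,a),\phi(b,c))=0$.

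In your phase language: conjugating $X^a\otimes I$ twice by $C_b$ gives $X^aZ^{\phi_b^*\phi_b a}\otimes I$ up to a scalar, and this must be $\pm X^a\otimes I$. With this in place of your Step~2, Lemma~\ref{lem:sc-forms}(1) applies and the rest of your argument goes through.
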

\begin{proof}
Let $u\in\F_2^{2m}$. For
\[
D_u\coloneqq \sum_x(-1)^{\omega(v(x),u)} \ket{x}\bra{x},
\]
we have that
\[
N(I\otimes P_u)N^\dagger=D_u\otimes P_u \in \mathcal{C}_2(r+m).
\]
Thus, $D_u\otimes I = (I\otimes P_u)(D_u\otimes P_u) \in \mathcal{C}_2(r+m)$ and hence $D_u\in\C_2(r)$. By Lemma~\ref{lem:sc-sign-gates}(2), $x\mapsto\omega(v(x),u)$ is quadratic.

Next, we show that we may without loss of generality assume $v(x)=\sum_{i<j}x_ix_jv_{ij}$. Since $\omega$ is nondegenerate, we have
\[
v(x)=v_0+\ell(x)+\sum_{i<j}x_ix_jv_{ij},
\]
where $\ell:\F_2^r\to\F_2^{2m}$ is linear. Let $A$ be the product of $I\otimes P_{v_0}$ and the gates $\sum_{x_i\in\F_2}\ket{x_i}\bra{x_i}\otimes P_{\ell(e_i)}^{x_i}$, $1\le i\le r$, each controlled by one qubit and Clifford by Proposition~\ref{lem:controlled-clifford} with $k=2$. Then $A=\sum_x\lambda(x)\ket{x}\bra{x}\otimes P_{v_0+\ell(x)}$ with $\lambda(x)\in\mu_4$, so $A^\dagger N$ is again of the form of $N$, with labels $v(x)+v_0+\ell(x)$ and phases absorbed into $\zeta$, and by Lemma~\ref{lem:basic-operations}(1) it satisfies the hypotheses exactly when $N$ does and is semi-Clifford exactly when $N$ is. We may therefore assume $v_0=0$ and $\ell=0$. Let
\[
\phi(a,b) \coloneqq v(a+b)+v(a)+v(b),\qquad
W_\phi \coloneqq \Span\{\phi(a,b):a,b\in\F_2^r\}.
\]
Since $v_{ij}=\phi(e_i,e_j)$, $v(x)\in W_\phi$ for every $x$.

For $b\in\F_2^r$, let $K_b \coloneqq N(X^b\otimes I)N^\dagger \in \mathcal{C}_2(r+m)$. Explicitly, $K_b=\sum_xc_b(x)\ket{x+b}\bra{x}\otimes P_{v(x+b)}P_{v(x)}$ with $c_b(x)\in\T$, and $P_{v(x+b)}P_{v(x)}\doteq P_{v(b)+\phi(b,x)}$. Consider $\F_2^r \oplus \F_2^{2m} \oplus \F_2^r$, where a triple $(a,u,c)$ represents, up to phase, the Pauli $X^aZ^c\otimes P_u$. Since $K_b \in \mathcal{C}_2(r+m)$, $K_bP_wK_b^\dagger\doteq P_{F_b(w)}$ for all $w \in \F_2^{2(r+m)}$. Hence $F_b$ is a linear map which, with respect to the decomposition $\F_2^r\oplus\F_2^{2m}\oplus\F_2^r$, is represented by the matrix
\[
F_b=
\begin{pmatrix}
I&0&0\\
\phi_b&I&0\\
\varepsilon_b&\phi_b^*&I
\end{pmatrix},
\qquad
\phi_b(a)=\phi(b,a),
\]
where $\phi_b^*$ is the adjoint of $\phi_b$ for $\omega$, that is, $(\phi_b^*u)\cdot a=\omega(\phi(b,a),u)$, and the block $\varepsilon_b$ is not needed. Since $K_b^2=I$, $F_b^2=I$ and therefore $\phi_b^*\phi_b=0$ and $\omega(\phi(b,a),\phi(b,c))=0$. By Lemma~\ref{lem:sc-forms}(1), $W_\phi$ is isotropic.

Every isotropic subspace lies in a Lagrangian, so $W_\phi\subseteq L$ for a Lagrangian $L\leq \F^{2m}_2$. Since $\Sp(2m,2)$ acts transitively on Lagrangian subspaces and every binary symplectic map is induced by a Clifford (the group $\Sp(2m,2)$ is generated by the transvections $t_p(v)=v+\omega(v,p)p$, and $t_p$ is induced by the Clifford $(I+iP_p)/\sqrt2$), choose $C\in\C_2(m)$ whose symplectic action sends $L$ to $L_Z=\{(0,z):z\in\F_2^m\}$. Then $(I\otimes C)N(I\otimes C^\dagger)$ is diagonal in the computational basis and hence semi-Clifford, and $N \in \mathsf{SC}(r+m)$ follows from Lemma~\ref{lem:basic-operations}(1).
\end{proof}

We are now ready to prove the main theorem of this section.

\begin{theorem}\label{thm:sc-six}
For $1\leq n\leq6$, $\C_3(n)\subseteq\SC(n)$.
\end{theorem}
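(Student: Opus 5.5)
We would first use the two reductions recalled at the start of this section: the result of~\cite{sc-BS2010} that third-level gates are Clifford-equivalent to $D\Pi$, and the result of~\cite{sc-HRT2025arxiv} for $n\le6$ that brings $\Pi$ to the form $\ket{x,y}\mapsto\ket{x,y+q(x)}$ with $q$ purely quadratic. By Lemma~\ref{lem:basic-operations}(1) it then suffices to treat
\[
M\ket{x,y}=e^{i\theta(x,y)}\ket{x,y+q(x)},\qquad x\in\F_2^r,\ y\in\F_2^m,\ r+m=n\le6 .
\]
We write $M=\sum_x\ket x\bra x\otimes D_xX^{q(x)}$, where $D_x=\sum_y e^{i\theta(x,y)}\ket y\bra y$.

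The next step is to extract constraints from $M\in\C_3(n)$. For $c\in\F_2^m$, the correction $M(I\otimes Z^c)M^\dagger$ is, up to sign, $\sum_x(-1)^{c\cdot q(x)}\ket x\bra x\otimes Z^c$. This is automatically Clifford by Lemma~\ref{lem:sc-sign-gates}(2), since $q$ is quadratic, so it gives nothing new. The informative corrections are
\[
G_c:=M(I\otimes X^c)M^\dagger=\sum_x\ket x\bra x\otimes D_xX^cD_x^\dagger ,
\]
up to controlled signs, together with the corrections $M(X^b\otimes I)M^\dagger$. Each $G_c$ is a block-diagonal Clifford gate, and $D_xX^cD_x^\dagger X^c$ is the diagonal phase $e^{i(\theta(x,y)-\theta(x,y+c))}$. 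Requiring $G_c\in\C_2$ should force $e^{i\theta(x,\cdot)}$ to be, for each $x$, a diagonal Clifford on the targets, namely a quadratic phase with $\mu_4$ coefficients. It should also make the $y$-quadratic and $y$-linear coefficients depend on $x$ in at most an affine, respectively quadratic, way.

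We would then clean up with Cliffords. Using target-side diagonal Cliffords ($S$, $\CZ$) and controlled Paulis, which are Clifford by Proposition~\ref{lem:controlled-clifford}, we remove the part of the $y$-quadratic form that is independent of $x$. Conjugating by Hadamards where needed, we aim to write each block as $\zeta(x)P_{v(x)}$. The residual obstruction is an $x$-dependent $y$-quadratic form. Imposing that the corrections $M(X^b\otimes I)M^\dagger$ are Clifford, and using the identity coming from $M^2$-type consistency, should package this obstruction into exactly the data of Lemma~\ref{lem:sc-forms}(2). Concretely, $\beta(a,b)=q(a+b)+q(a)+q(b)$ spans $W$ (after discarding unused target qubits), and $E_w$ records how the $x$-dependent quadratic forms couple to $w\in W$. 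The symmetry conditions on $E_wG_b$ and $E_ww'=E_{w'}w$ should come from the commutation of the $G_c$ and from $K_b^2=I$.

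Lemma~\ref{lem:sc-forms}(2) with $r+m\le6$ then gives $E_w=0$, so $M$ is Clifford-equivalent to $N=\sum_x\zeta(x)\ket x\bra x\otimes P_{v(x)}$. Since $r+m\le6$ forces $r\le3$ or $m\le2$, Proposition~\ref{prop:sc-controlled-pauli} yields $N\in\SC$, and hence $U\in\SC(n)$.

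The main obstacle is the middle step: translating the Clifford conditions on the corrections into precisely the linear-algebraic hypotheses of Lemma~\ref{lem:sc-forms}(2). In particular, one must verify that the $x$-dependence of the quadratic phase is linear in a way indexed by $W=\operatorname{span}\beta$, and that it cannot be absorbed by Cliffords except through the $E_w$ data. We would first work this out carefully in the Pauli-label coordinates $\F_2^r\oplus\F_2^{2m}\oplus\F_2^r$, as was done for $F_b$ in the proof of Proposition~\ref{prop:sc-controlled-pauli}.
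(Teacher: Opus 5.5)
Your outline has the same architecture as the paper's proof: the two cited reductions, a claim that $M(I\otimes C)$ is a controlled Pauli for a target Clifford $C$, Lemma~\ref{lem:sc-forms}(2) to remove the obstruction, and Proposition~\ref{prop:sc-controlled-pauli} to finish. But it has a genuine gap. The middle step, which is the entire content of that claim, is not carried out, and the one concrete assertion you make about it is false.

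The conditions $G_c\in\C_2(n)$ do not force each $D_x$ to be a diagonal Clifford. They only say that the $y$-derivatives $\theta(x,y)-\theta(x,y+c)$ are Clifford phase functions of $(x,y)$, and every third-level diagonal phase on the targets passes this test. For example, take $r=2$, $m=1$, $q(x)=x_1x_2$ and $D_x=T=\operatorname{diag}(1,\zeta_8)$ for all $x$. Then $G_1=I\otimes TXT^\dagger\doteq I\otimes SX$ is Clifford, although no $D_x$ is. This $M$ is excluded from $\C_3(3)$ only by the control corrections: $M(X_1\otimes I)M^\dagger$ is $X_1$ times a controlled $\zeta_8^{-1}SX$, which is not Clifford by Proposition~\ref{lem:controlled-clifford}. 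So $T$-, $\mathrm{CS}$- and $\CCZ$-type phases on the targets can only be ruled out by playing the corrections $g_b=M(X^b\otimes I)M^\dagger$ against the $G_c$, and your plan does not say how. Relatedly, the symmetry hypothesis of Lemma~\ref{lem:sc-forms}(2) comes from the commutation of $g_b$ with $G_w$, not from $K_b^2=I$.

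Here is the missing mechanism.
\begin{itemize}
\item Rename your $G_w$ as $h_w$ and write $h_w=\Delta_w(I\otimes X^w)$, with $\Delta_w$ a diagonal Clifford on all $n$ qubits. Encode $\Delta_w$ by the symmetric matrix $B_w=\bigl(\begin{smallmatrix}A_w&E_w\\E_w^{\mathsf T}&F_w\end{smallmatrix}\bigr)$, where $\Delta_wX^u\Delta_w^\dagger\doteq X^uZ^{B_wu}$.
\item Let $G_b\colon a\mapsto\beta(b,a)$ be as in Lemma~\ref{lem:sc-forms}(2). Then $L_b=\bigl(\begin{smallmatrix}I&0\\G_b&I\end{smallmatrix}\bigr)$ is the linear part of the affine permutation of $g_b$. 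Write $\bar w=(0,w)$.
\item Three relations follow: $B_{w+w'}=B_w+B_{w'}$; $B_w\bar w'=B_{w'}\bar w$ from $[h_w,h_{w'}]=0$; and $L_b^{\mathsf T}B_wL_b=B_w$ from $[g_b,h_w]=0$.
\item The last identity gives $F_wG_b=0$, hence $F_w=0$ because the images of the $G_b$ span $W$. This is exactly what kills the target phases above: there $F_1=1$ and $G_{e_1}a=a_2$, so $F_1G_{e_1}\ne0$.
\item Only once $F_w=0$ does the top-left block of that identity say that $E_wG_b$ is symmetric. Together with $E_ww'=E_{w'}w$, Lemma~\ref{lem:sc-forms}(2) then gives $E_w=0$.
\item Now $\Delta_w$ multiplies each $I\otimes X^a$ by a sign, so $\Delta_w=\sum_xd_w(x)\ket x\bra x\otimes Z^{t_w}$. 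The blocks $M_x=D_xX^{q(x)}$ therefore satisfy $M_xX^wM_x^\dagger\doteq X^wZ^{t_w}$ and $M_xZ^wM_x^\dagger=\pm Z^w$ for all $x$.
\item Hence every $M_x$ is Clifford with the same action on Paulis up to phase. By Lemma~\ref{lem:sc-pauli-trivial}, $M_xM_0^\dagger\doteq P_{v(x)}$, and $C=M_0^\dagger$ gives the controlled-Pauli form directly, with no Hadamards or coefficient-by-coefficient cleanup.
\end{itemize}
Also, target coordinates outside the span of $\beta$ cannot simply be discarded, since $\theta$ may depend on them. They must be moved into the control register, which keeps $r+m=n$.
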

\begin{proof}

Let $U\in\C_3(n)$. By~\cite{sc-BS2010} and He--Robitaille--Tan~\cite[Lemma~3.8, Appendix~A, and Corollary~B.6]{sc-HRT2025arxiv}, $U\doteq C_1MC_2$ for some $C_1,C_2\in\C_2(n)$, so $M\in\C_3(n)$ by Lemma~\ref{lem:basic-operations}(1), and it suffices to prove $M \in \mathsf{SC}(n)$ where
\begin{equation}\label{eq:sc-quadratic-permutation}
M\ket{x,y}=e^{i\theta(x,y)}\ket{x,y+q(x)},\qquad
x\in\F_2^r,\quad y\in\F_2^m,\quad r+m=n,
\end{equation}
where $\theta:\F_2^r\times\F_2^m\to\mathbb R$ and $q(x)=\sum_{i<j}x_ix_jq_{ij}$ with $q_{ij}\in\F_2^m$. Note that $\beta(a,b)=q(a+b)+q(a)+q(b)$ is an alternating bilinear map. Let $S=\Span\{\beta(a,b):a,b\in\F_2^r\} = \Span\{q_{ij}: i < j\}$ and $\F_2^m=S\oplus T$. In a basis of $\F_2^m$ adapted to $S\oplus T$, write $y=(y_S,y_T)$; this change of basis of the second register is a CNOT circuit, hence Clifford, so by Lemma~\ref{lem:basic-operations}(1) we may use it. Since $q(x)\in S$,~\eqref{eq:sc-quadratic-permutation} becomes $M\ket{x,y_S,y_T}=e^{i\theta}\ket{x,y_S+q(x),y_T}$, which does not change $y_T$, so $(x,y_T)$ may serve as the first register, with the same $q$. Thus, after redefining $r$ and $m$ without changing $r+m=n$, we may assume that the values of $\beta$ span $\F_2^m$. If $m=0$, then $M$ is diagonal and the result follows.

Assume $m>0$. It remains only to prove the following claim.

\noindent\emph{Claim.}
There exists $C\in\C_2(m)$ such that
\begin{equation}\label{eq:sc-pauli-field}
M(I\otimes C)
=\sum_{x\in\F_2^r}\zeta(x)\ket{x}\bra{x}\otimes P_{v(x)},\qquad
v(x)\in\F_2^{2m},\quad\zeta(x)\in\T.
\end{equation}

Indeed, if the claim holds, then $r+m=n\leq6$ implies $r\leq3$ or $m\leq2$, so $M(I\otimes C)$ is semi-Clifford by Proposition~\ref{prop:sc-controlled-pauli}. Since $U$, $M$, and $M(I\otimes C)$ differ by left and right Clifford multiplication, $U \in \mathsf{SC}(n)$ by Lemma~\ref{lem:basic-operations}.

We now prove the claim. Let $R=\F_2^r$, $W=\F_2^m$, and $M=\sum_x\ket{x}\bra{x}\otimes M_x$. Regard $\beta$ as a surjection $\wedge^2R\to W$. For $w\in W$, since $M \in \mathcal{C}_3(n)$, $h_w= M(I\otimes X^w)M^\dagger \in \mathcal{C}_2(n)$. A direct calculation in the computational basis shows that $h_w = D_w(I\otimes X^w)$, where $D_w \in \mathcal{C}_2(n)$ is a diagonal gate. As $D_w$ is a diagonal Clifford gate, there is $B_w \in \F_2^{(r+m) \times (r+m)}$ such that
\[
D_wX^uD_w^\dagger\doteq X^uZ^{B_wu},\qquad
B_w=\begin{pmatrix}A_w&E_w\\E_w^{\mathsf T}&F_w\end{pmatrix}.
\]
The matrix $B_w$ is symmetric, since $X^u$ and $X^{u'}$ commute and so do their images, which gives $u^{\mathsf T}B_wu'=u'^{\mathsf T}B_wu$. For $b\in R$, let $g_b\coloneqq M(X^b\otimes I)M^\dagger$. A direct computation gives
\[
g_b\ket{x,y}
\doteq
\ket{x+b,\;y+q(x)+q(x+b)}
=
\ket{x+b,\;y+q(b)+\beta(b,x)}.
\]
Its linear part is therefore $(a,w)\mapsto \bigl(a,\;w+\beta(b,a)\bigr)$, which, relative to the decomposition $R\oplus W$, is represented by
\[
L_b=
\begin{pmatrix}
I&0\\
G_b&I
\end{pmatrix}, \quad G_b: R \to W, \quad G_b a \coloneqq \beta(b,a).
\]
Since $h_wh_{w'}=h_{w+w'}$ and $g_b$, $h_w$, $h_{w'}$ pairwise commute,
\[
B_{w+w'}=B_w+B_{w'},\qquad
L_b^{\mathsf T}B_wL_b=B_w,\qquad
B_w\bar w'=B_{w'}\bar w,
\]
where $\bar w=(0,w)\in R\oplus W$. Indeed, $D_{w+w'}=D_w\cdot X^{\bar w}D_{w'}X^{\bar w}$, and a shift does not change the matrix of a diagonal Clifford. The gate $g_b$ is an affine permutation with linear part $L_b$ times a diagonal Clifford, so conjugation by $g_b$ fixes $X^{\bar w}$ up to a diagonal Pauli and replaces $B_w$ by $L_b^{\mathsf T}B_wL_b$. Finally, $h_wh_{w'}=h_{w'}h_w$ equates $Z^{B_{w'}\bar w}$ and $Z^{B_w\bar w'}$ up to a scalar. In particular, $w\mapsto E_w$ is linear.

Since
\[
L_b^{\mathsf T}B_wL_b
=
\begin{pmatrix}
A_w+E_wG_b+G_b^{\mathsf T}E_w^{\mathsf T}
   +G_b^{\mathsf T}F_wG_b
&
E_w+G_b^{\mathsf T}F_w
\\
E_w^{\mathsf T}+F_wG_b
&
F_w
\end{pmatrix} = B_w = \begin{pmatrix}A_w&E_w\\E_w^{\mathsf T}&F_w\end{pmatrix},
\]
we obtain that
\[
F_wG_b=0, \quad \text{and } E_wG_b+G_b^{\mathsf T}E_w^{\mathsf T}=0.
\]
Consequently, $F_w=0$ since the images of the $G_b$ span $W$. The second identity says that $E_wG_b$ is symmetric for every $b\in R$, and $B_w\bar w'=B_{w'}\bar w$ with $F_w=0$ gives $E_ww'=E_{w'}w$. Lemma~\ref{lem:sc-forms}(2) gives $E_w=0$.

Therefore, $B_w(0,a)=0$ for all $a,w\in W$. Fix $w\in W$. Since $D_w$ is a diagonal Clifford and $B_w(0,a)=0$, we have that for $a\in W$, there exists $\chi_w(a)\in\{\pm1\}$ such that
\begin{equation}\label{eq:conj}
D_w(I\otimes X^a)D_w^\dagger
=
\chi_w(a)(I\otimes X^a).
\end{equation}
Since $X^{a+a'}=X^aX^{a'}$, $\chi_w$ is a character, so $\chi_w(a)=(-1)^{t_w\cdot a}$ for some $t_w\in W$. Let
\[
D_w
=
\sum_{x\in R}\sum_{y\in W}
d_w(x,y)\ket{x,y}\bra{x,y},
\qquad
d_w(x,y)\in\mathbb T.
\]
By~\eqref{eq:conj}, $d_w(x,y+a) = (-1)^{t_w\cdot a}d_w(x,y)$ for $x \in R$ and $y, a \in W$. Taking $y=0$ and $a=y$ yields $d_w(x,y) = d_w(x,0)(-1)^{t_w\cdot y}$ and hence
\[
D_w
=
\sum_{x\in R}
d_w(x,0)\ket{x}\bra{x}\otimes Z^{t_w}.
\]

On the other hand,
\[
D_w(I\otimes X^w) = h_w
=
M(I\otimes X^w)M^\dagger
=
\sum_{x\in R}
\ket{x}\bra{x}\otimes
M_xX^wM_x^\dagger.
\]
Comparing the $x$-blocks gives
\[
M_xX^wM_x^\dagger
=
d_w(x,0)Z^{t_w}X^w
\doteq X^wZ^{t_w}
\qquad \text{for } x\in R,
\]
where the right hand side is independent of $x$ up to global phase. Similarly, a direct computation from $M_x\ket{y} = e^{i\theta(x,y)}\ket{y+q(x)}$ gives
\[
M_xZ^wM_x^\dagger
=
(-1)^{w\cdot q(x)}Z^w,
\]
so $M_xZ^wM_x^\dagger$ is also independent of $x$ up to a global phase. Since the operators $X^w$ and $Z^w$, $w\in W$, generate the $m$-qubit Pauli group, it follows that $M_x\in\mathcal C_2(m)$ for every $x \in R$. Moreover, for every $x\in R$ and $w\in W$,
\[
M_xX^wM_x^\dagger
\doteq X^wZ^{t_w}
\doteq M_0X^wM_0^\dagger, \quad M_xZ^wM_x^\dagger
\doteq Z^w
\doteq M_0Z^wM_0^\dagger.
\]
Since the $X^w$ and $Z^w$ generate the Pauli group, it follows that $M_xPM_x^\dagger \doteq M_0PM_0^\dagger$. Therefore, for $Q_x\coloneqq M_xM_0^\dagger$, we have
\[
Q_xP_vQ_x^\dagger\doteq P_v
\qquad
\text{for every }v\in\F_2^{2m}.
\]
By Lemma~\ref{lem:sc-pauli-trivial}, there exist $\zeta(x)\in\mathbb T$ and $v(x)\in\F_2^{2m}$ such that $M_xM_0^\dagger = \zeta(x)P_{v(x)}$.

Finally, letting $C\coloneqq M_0^\dagger\in\mathcal C_2(m)$ gives
\[
M(I\otimes C)=
\sum_{x\in R}
\ket{x}\bra{x}\otimes M_xM_0^\dagger=
\sum_{x\in R}
\zeta(x)\ket{x}\bra{x}\otimes P_{v(x)}. \qedhere
\]
\end{proof}

\section{\texorpdfstring{$\mathcal{C}_4(n)\subseteq\mathsf{GSC}(n)$}{C4(n) in GSC(n)}}
\label{app:finite-group}

In this section we work in $\mathrm{PU}(2^n)$, with the projective Pauli group $\overline{\Pau}(n)=\Pau(n)/\langle iI\rangle$. For a gate $U$, let
\[
\Gamma(U) \coloneqq \langle\overline{\Pau}(n),U\overline{\Pau}(n)U^\dagger\rangle
\subseteq\mathrm{PU}(2^n).
\]

\begin{theorem}\label{thm:two-group-chamber}
Every finite $2$-subgroup of $\mathrm{PU}(2^n)$ containing $\overline{\Pau}(n)$ preserves a Pauli MASA\@. Consequently, if $\Gamma(U)$ is a finite $2$-group, then $U$ is generalized semi-Clifford.
\end{theorem}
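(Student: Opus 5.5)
The plan is a parity (fixed-point) argument over the set of Pauli MASAs, run along a subnormal chain of index-$2$ steps from $\overline{\Pau}(n)$ up to the given $2$-group. Conjugation by an element of $\mathrm{PU}(2^n)$ is well defined on subalgebras of $\operatorname{Mat}_{2^n}(\CC)$, so a subgroup of $\mathrm{PU}(2^n)$ acts on them. For a subgroup $H\le\mathrm{PU}(2^n)$ containing $\overline{\Pau}(n)$, let $\mathcal S(H)$ be the set of Pauli MASAs $\A_L$ with $h\A_Lh^\dagger=\A_L$ for all $h\in H$. I will show by induction that $|\mathcal S(H)|$ is odd, hence nonzero.

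\emph{Step 1: the chain.} Let $G$ be a finite $2$-subgroup of $\mathrm{PU}(2^n)$ with $\overline{\Pau}(n)\le G$. By the standard fact that normalizers grow in finite $p$-groups, every subgroup of a finite $p$-group is subnormal with steps of index $p$. So there is a chain
\[
\overline{\Pau}(n)=G_0\trianglelefteq G_1\trianglelefteq\cdots\trianglelefteq G_m=G,\qquad [G_{j+1}:G_j]=2.
\]
Concretely, $G_{j+1}=\langle G_j,g_j\rangle$ with $g_j$ normalizing $G_j$ and $g_j^2\in G_j$. This step is group theory I take as known; I expect it to need only a citation.

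\emph{Step 2: the base case.} By Proposition~\ref{lem:lagrangian-expanded}, every Pauli MASA is preserved by $\overline{\Pau}(n)$. Hence $\mathcal S(G_0)$ is the full set of Pauli MASAs, which by Lemma~\ref{lem:lagrangian-count} has odd cardinality $N_n$.

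\emph{Step 3: the inductive step.} Let $\A\in\mathcal S(G_j)$ and $g=g_j$. Then $g\A g^\dagger$ is a MASA, and it is preserved by $gG_jg^{-1}=G_j\supseteq\overline{\Pau}(n)$. By the implication (2)$\Rightarrow$(3) of Proposition~\ref{lem:lagrangian-expanded}, $g\A g^\dagger$ is again a Pauli MASA, so it lies in $\mathcal S(G_j)$. Thus $g$ permutes $\mathcal S(G_j)$. Since $g^2\in G_j$, the permutation $g^2$ is trivial on $\mathcal S(G_j)$, so $g$ acts as an involution. The non-fixed points of an involution pair up, so the number of fixed points is congruent to $|\mathcal S(G_j)|$ modulo $2$, which is odd. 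The fixed points are exactly $\mathcal S(G_{j+1})$, since $G_{j+1}$ is generated by $G_j$ and $g$. By induction $\mathcal S(G)\neq\varnothing$, which proves the first assertion.

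\emph{Step 4: the consequence.} If $\Gamma(U)$ is a finite $2$-group, it contains $\overline{\Pau}(n)$ and so preserves some $\A_M$. In particular, every $h_v=UP_vU^\dagger\in\Gamma(U)$ satisfies $h_v\A_Mh_v^\dagger=\A_M$. Lemma~\ref{lem:main-local-global} then gives $U\in\GSC(n)$.

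The only nontrivial point I anticipate is Step 3. The argument needs $g$ to map invariant Pauli MASAs to Pauli MASAs, even though $g$ need not normalize the Pauli group. This is handled by the normality of $G_j$ in $G_{j+1}$ together with the characterization of Pauli MASAs in Proposition~\ref{lem:lagrangian-expanded} as the Pauli-invariant MASAs; that is the reason for passing through a subnormal chain rather than letting $G$ act directly.
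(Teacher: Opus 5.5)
Your proof is correct, but it handles the main difficulty differently from the paper.

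\textbf{The paper's route.} The paper builds no subnormal chain. It lets $G$ act directly on the finite $G$-stable set $\mathcal X=\{g\A g^\dagger: g\in G,\ \A\text{ a Pauli MASA}\}$. By the implication (2)$\Rightarrow$(3) of Proposition~\ref{lem:lagrangian-expanded}, the points of $\mathcal X$ fixed by $\overline{\Pau}(n)$ are exactly the $N_n$ Pauli MASAs. The paper then applies the $2$-group congruence $|\mathcal X^H|\equiv|\mathcal X|\pmod 2$ twice, once with $H=G$ and once with $H=\overline{\Pau}(n)$, and concludes that $|\mathcal X^G|$ is odd. Any $G$-fixed point is in particular $\overline{\Pau}(n)$-fixed, so it is a Pauli MASA.

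\textbf{Comparison.} The obstacle you identify in Step~3, that $G$ need not map Pauli MASAs to Pauli MASAs, is avoided there by enlarging the set acted on, not by exploiting normality. Your argument stays inside the set of Pauli MASAs and makes the mechanism explicit: each index-two extension acts as an involution on the current odd-sized set of invariant Pauli MASAs. The cost is the group-theoretic input of an index-$2$ chain from $\overline{\Pau}(n)$ to $G$ (normalizers grow in $p$-groups), which the paper's one-step count does not need. Both arguments show that the number of $G$-invariant Pauli MASAs is odd. Your Step~4 matches the paper's deduction of the generalized semi-Clifford property, which repeats the content of Lemma~\ref{lem:main-local-global} inline.
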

\begin{proof}
Let $\Omega_n=\{\A_L:L\leq V_n\text{ Lagrangian}\}$ be the set of Pauli MASAs. By Lemma~\ref{lem:lagrangian-count}, $|\Omega_n|=N_n=\prod_{j=1}^n(2^j+1)$ is odd.

For any finite projective $2$-group $G\subseteq \mathrm{PU}(2^n)$ containing $\overline{\Pau}(n)$, set
\[
\mathcal X=\{g\mathcal A g^\dagger:g\in G,\ \mathcal A\in\Omega_n\}.
\]
By Proposition~\ref{lem:lagrangian-expanded} we have $\mathcal X^{\overline{\Pau}(n)}=\Omega_n$. Since every nontrivial $2$-group orbit has even size,
\[
|\mathcal X^G|\equiv|\mathcal X|
\equiv|\mathcal X^{\overline{\Pau}(n)}|=N_n\equiv1\pmod2.
\]
Thus $G$ must fix a Pauli MASA, which we denote by $\mathcal A$.

Now take $G=\Gamma(U)$ and let $\A=\A_M$ be the Pauli MASA fixed by $G$. For every Pauli $P$, the gate $UPU^\dagger\in G$ fixes $\A$, so $PU^\dagger\A UP^\dagger=U^\dagger\A U$. By Proposition~\ref{lem:lagrangian-expanded}, $U^\dagger\A U=\A_L$ with $L$ Lagrangian, which completes the proof.
\end{proof}

The next theorem implies Theorem~\ref{thm:main-k4}.

\begin{theorem}\label{thm:level-four-gsc}
For every $n$ and every $U\in\mathcal C_4(n)$, the group $\Gamma(U)$ is a finite $2$-group. In particular,
\[
\mathcal C_4(n)\subseteq\mathsf{GSC}(n).
\]
Furthermore, if $U^2= \lambda I$ for some $\lambda \in \mathbb{T}$, then there is a Pauli MASA $\mathcal A$ such that $U \mathcal A U^\dagger=\mathcal A$.
\end{theorem}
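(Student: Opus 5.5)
The plan has three parts: show $\Gamma(U)$ is finite, show it is a $2$-group, and then deduce both generalized semi-Clifford statements from Theorem~\ref{thm:two-group-chamber}.

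\textbf{Finiteness.} Write $H=U\overline{\Pau}(n)U^\dagger=\{[h_v]:v\in V_n\}$. Since $h_uh_v\doteq h_{u+v}$ and $h_v^2=I$, $H$ is elementary abelian of order $2^{2n}$. Because $U\in\C_4(n)$, every $h_v$ lies in $\C_3(n)$, so each $h_vP_wh_v^\dagger$ is Clifford. Let
\[
G=\langle h_v\overline{\Pau}(n)h_v^\dagger : v\in V_n\rangle\subseteq\overline{\C_2}(n)=\C_2(n)/\T .
\]
The projective Clifford group is finite, an extension of $\Sp(2n,2)$ by $\overline{\Pau}(n)$, so $G$ is finite. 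Taking $v=0$ shows $\overline{\Pau}(n)\subseteq G$. Since $h_u(h_vP_wh_v^\dagger)h_u^\dagger\doteq h_{u+v}P_wh_{u+v}^\dagger$, the group $H$ normalizes $G$. Hence $GH$ is a finite group containing $\Gamma(U)$, and $\Gamma(U)$ is finite.

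\textbf{The $2$-group property.} It suffices to find a $2$-subgroup $S\le G$ with $\overline{\Pau}(n)\subseteq S$ and $hSh^{-1}=S$ for all $h\in H$. Such an $S$ contains $h_v\overline{\Pau}(n)h_v^\dagger$ for every $v$, so $S=G$ and $G$ is a $2$-group. Then $GH$ is a $2$-group, being an extension of the $2$-group $GH/G\cong H/(H\cap G)$ by $G$, and so is its subgroup $\Gamma(U)$.

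To produce $S$, I would first let $H$ act by conjugation on $\mathrm{Syl}_2(G)$. This set has odd size by Sylow's theorem and $H$ is a $2$-group, so some Sylow subgroup $S_0$ is fixed by $H$. Equivalently, take a Sylow $2$-subgroup $T$ of $GH$ containing $H$ and set $S_0=T\cap G$, which is Sylow in $G$ because $G$ is normal in $GH$.

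\textbf{Main obstacle.} The hard step is arranging $\overline{\Pau}(n)\subseteq S$ simultaneously with $H$-invariance. The set of Sylow subgroups containing $\overline{\Pau}(n)$ has odd size, but it is not evidently $H$-stable, since $h_v$ need not normalize the Pauli group. I would try two routes.
\begin{enumerate}
\item Work inside $N_{GH}(\overline{\Pau}(n))$, or more generally count $H$-fixed points on $\{S\in\mathrm{Syl}_2(G):S\supseteq\overline{\Pau}(n)\}$ modulo $2$.
\item Use the identity $h_vP_wh_v^\dagger\cdot P_w^{-1}\in G$, together with the Clifford structure of the $h_v$, to show that every $G$-conjugate of $S_0$ meeting the hypotheses can be moved to contain $\overline{\Pau}(n)$ while remaining $H$-normalized.
\end{enumerate}
This is where I expect the real work of the proof to lie.

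\textbf{Conclusions.} The inclusion $\C_4(n)\subseteq\GSC(n)$ then follows directly from Theorem~\ref{thm:two-group-chamber}.

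For the final assertion, suppose $U^2=\lambda I$. Then
\[
U\Gamma(U)U^\dagger=\langle H,\;U^2\overline{\Pau}(n)U^{-2}\rangle=\langle H,\overline{\Pau}(n)\rangle=\Gamma(U).
\]
So $U$ normalizes $\Gamma(U)$, and $[U]^2=1$ in $\mathrm{PU}(2^n)$. Therefore $\langle\Gamma(U),[U]\rangle$ is a finite $2$-group of order at most $2|\Gamma(U)|$ containing $\overline{\Pau}(n)$. By the first statement of Theorem~\ref{thm:two-group-chamber}, it preserves some Pauli MASA $\A$, and in particular $U\A U^\dagger=\A$.
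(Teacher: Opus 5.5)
Your outline follows the paper's argument step for step: the groups $H$ and $G$, the fact that $H$ normalizes $G$, an $H$-fixed Sylow $2$-subgroup of $G$, and the same treatment of the case $U^2=\lambda I$. However, you leave the decisive step open, so as written the proof is incomplete.

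\textbf{The missing idea.} $\overline{\Pau}(n)$ is a \emph{normal} subgroup of $G$. You already observed that $G\subseteq\C_2(n)/\T$. Every Clifford gate conjugates the Pauli group onto itself, so $g\overline{\Pau}(n)g^{-1}=\overline{\Pau}(n)$ for every $g\in G$. A normal $2$-subgroup lies in every Sylow $2$-subgroup: if $S\in\operatorname{Syl}_2(G)$, then $\overline{\Pau}(n)S$ is a $2$-subgroup containing $S$, so it equals $S$.

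\textbf{How this closes the gap.} The set of Sylow $2$-subgroups of $G$ containing $\overline{\Pau}(n)$ is therefore all of $\operatorname{Syl}_2(G)$, which is trivially $H$-stable. So the $H$-fixed Sylow subgroup $S_0$ you already produced contains $\overline{\Pau}(n)$. Then $h_v\overline{\Pau}(n)h_v^{-1}\subseteq h_vS_0h_v^{-1}=S_0$ for every $v$, so $S_0=G$. The rest of your argument then goes through unchanged and coincides with the paper's proof.

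\textbf{Your diagnosis of the obstacle.} It is irrelevant that $h_v$ need not normalize $\overline{\Pau}(n)$. What matters is that $h_v$ normalizes $G$ and that every element of $G$ normalizes $\overline{\Pau}(n)$. Neither of your two proposed routes is needed. Route 1 becomes trivial once you note that the set being counted is all of $\operatorname{Syl}_2(G)$.
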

\begin{proof}
For $U\in\mathcal C_4(n)$, let
\[
H=U\overline{\Pau}(n)U^\dagger,\qquad
G=\langle h\overline{\Pau}(n)h^\dagger:h\in H\rangle.
\]
We have that $H\subseteq \mathcal{C}_3(n)$ is a $2$-group of order $4^n$. Therefore, we have $h\overline{\Pau}(n) h^\dagger\subseteq \mathcal C_2(n)$. Hence $G \subseteq \mathcal{C}_2(n)/\mathbb{T}$. Moreover, $I\in H$, so $\overline{\Pau}(n)\leq G$. Since $g\overline{\Pau}(n) g^\dagger=\overline{\Pau}(n)$ for every $g \in \mathcal{C}_2(n)$, $\overline{\Pau}(n) \trianglelefteq G$.

For $k,h\in H$, $k\bigl(h\overline{\Pau}(n)h^\dagger\bigr)k^\dagger = (kh)\overline{\Pau}(n)(kh)^\dagger$. Since $kh\in H$, $kGk^\dagger=G$, so $H$ normalizes $G$. Hence $\Gamma(U)=\langle\overline{\Pau}(n),H\rangle\leq GH$. Conversely, each subgroup $h\overline{\Pau}(n)h^\dagger$ generating $G$ is contained in $\langle\overline{\Pau}(n),H\rangle=\Gamma(U)$, so $G\leq\Gamma(U)$. Therefore $GH=\Gamma(U)$.

Since $H$ normalizes $G$, conjugation by $H$ acts on the set $\operatorname{Syl}_2(G)$ of Sylow $2$-subgroups of $G$. Since $H$ is a $2$-group, every $H$-orbit on $\operatorname{Syl}_2(G)$ has cardinality a power of two. By Sylow's theorem, $|\operatorname{Syl}_2(G)|$ is odd, so $H$ fixes a Sylow $2$-subgroup $R$ of $G$. Since every Sylow $2$-subgroup contains the normal $2$-subgroup $\overline{\Pau}(n)$,
\[
h\overline{\Pau}(n)h^\dagger\subseteq hRh^\dagger = R \quad
\text{for all $h\in H$} \quad \Longrightarrow \quad
G=R.
\]
Since $G\trianglelefteq GH$, $\Gamma(U)=GH$ is a finite $2$-group. By Theorem~\ref{thm:two-group-chamber}, $U \in \mathsf{GSC}(n)$.

Finally, suppose that $U^2=\lambda I$ for $\lambda\in\mathbb T$. Then $[U]^2=I \in \mathrm{PU}(2^n)$. Moreover,
\[
U\overline{\Pau}(n)U^\dagger=H, \quad UHU^\dagger
=
U^2\overline{\Pau}(n)(U^\dagger)^2
=
\overline{\Pau}(n).
\]
Hence $U\Gamma(U)U^\dagger=\Gamma(U)$. Therefore, $\langle\Gamma(U),U\rangle = \Gamma(U)\langle U\rangle$ is a finite $2$-group with $[\langle\Gamma(U),U\rangle:\Gamma(U)]\leq2$. By Theorem~\ref{thm:two-group-chamber}, $\langle\Gamma(U),U\rangle$ fixes a Pauli MASA $\mathcal A$. Since $U\in\langle\Gamma(U),U\rangle$, $U\mathcal AU^\dagger=\mathcal A$ as desired.
\end{proof}

\section{A common strategy}\label{app:common-strategy}

This section collects the tools used in Appendices~\ref{app:three-qubit} and~\ref{app:four-qubit-results}.

The case analyses there keep returning to one question: given gates $U_0, \dots, U_s$, find a common GSC pair $(L,M)\in\bigcap_{i=0}^s\mathscr L(U_i)$. We use four techniques, combined as in Fig.~\ref{fig:map}.
\begin{itemize}
    \item \phantomsection\label{tool:U}\emph{(U) Uniqueness.} If for some $U_0$ there exists a unique GSC pair $\mathscr L(U_0)=\{(L,M)\}$ and $\mathscr L(U_0)\cap\mathscr L(U_i)\ne\varnothing$ for every $i$, then $(L,M)\in\bigcap_i\mathscr L(U_i)$.
    \item \phantomsection\label{tool:P}\emph{(P) Parity.} Suppose $U_0,\dots,U_s\in\C_2(n)$ and write $\gamma_i=\gamma_{U_i}$. They share a GSC pair if and only if the maps $\gamma_0^{-1}\gamma_i$ fix a common Lagrangian $L$. In this case, $(L,\gamma_0L)\in\bigcap_i\mathscr L(U_i)$.
\item \phantomsection\label{tool:G}\emph{(G) Galois.} Let $\sigma\in \mathrm{Gal}(K_r/\Q(i))$ and $U$ be a scalar multiple of a matrix over $K_r$. We call $U^{-1}\sigma(U)$, which is defined up to scalar, the \emph{Galois cocycle} of $U$ at $\sigma$ (Section~\ref{sec:gsc}). Then for every $(L,M)\in\mathscr L(U)$,
\[
U^{-1}\sigma(U)\in\A_L,\qquad \sigma(U)U^{-1}\in\A_M,
\]
which we show in Proposition~\ref{lem:tool-galois}(2). This Galois condition does not produce a common GSC pair by itself, but it constrains each $U_i$ and $\mathscr{L}(U_i)$ separately, for use with the other tools. We illustrate this strategy in Example~\ref{ex:strategy-galois}.
\item \phantomsection\label{tool:F}\emph{(F) Finite certificate.} There are two steps to make this question a finite computational check. First, as we show in Lemma~\ref{lem:tool-transport}(1), for $C_1,C_2\in\C_2(n)$,
\[
(L,M)\in\mathscr L(U)\iff(\gamma_{C_2}^{-1}L,\gamma_{C_1}M)\in\mathscr L(C_1UC_2).
\]
The size of $\mathscr L(U)$, and whether $U_0,\dots,U_s$ share a GSC pair, remains invariant when every $U_i$ is replaced by $C_1U_iC_2$ with the same $C_1,C_2$. Second, up to such multiplication, the gates that occur have finitely many normal forms, as shown in Proposition~\ref{lem:tool-monomial} and Proposition~\ref{app:lem-R_n}. A property that is invariant under multiplication by Clifford gates can therefore be decided on finitely many representatives by exhaustive computation. The finite facts \textup{(\ref{cert:stationary-cs-ccz-gates-and-complete-affine-flags})--(\ref{cert:block-pair-domains})} obtained this way are stated in boxes where they are used, and each names the folder of the companion repository \certrepo{} that recomputes it in exact arithmetic. Each either settles a case directly or supplies a hypothesis of another tool.
\end{itemize}

\begin{figure}[htb]
    \centering
    \begin{tikzpicture}
\node[pfnode] (U) at (-5,1.3) {(G) Galois};
\node[pfnode] (f) at (-5,-1.3) {(F) finite certificate};
\node[pfnode] (s) at (0,0) {$U_0,\dots,U_s$};
\node[pfnode] (r) at (5,1.3) {(U) uniqueness};
\node[pfnode] (c) at (5,-1.3) {(P) parity};
\node[pfgoal] (goal) at (9.6,0) {$\bigcap_i\mathscr L(U_i)\ne\varnothing$};
\draw[pfedge] (U) -- node[pflab,above right,pos=.2] {constrains $U_i$ and $\mathscr L(U_i)$} (s);
\draw[pfedge] (f) -- node[pflab,below right,pos=.2] {normal forms, hypotheses} (s);
\draw[pfedge,dashed] (U) -- node[pflab,left] {fixes\\ phases} (f);
\draw[pfedge] (s) -- node[pflab,above left,pos=.8] {$\exists i:\ |\mathscr L(U_i)|=1$} (r);
\draw[pfedge] (s) -- node[pflab,below left,pos=.8] {all $U_i\in\C_2(n)$,\\ $\langle\gamma_0^{-1}\gamma_i\rangle$ a $2$-group} (c);
\draw[pfedge] (r) -- (goal);
\draw[pfedge] (c) -- (goal);
\draw[pfedge] (f) |- (0,-2.6) -| (goal);
\node[pflab,below] at (2.5,-2.6) {or settles a case directly};
\end{tikzpicture}
\caption{A map of the relationship among the reduction techniques.}
\label{fig:map}
\end{figure}

We first record how GSC pairs transform under Clifford gates.
\begin{lemma}\label{lem:tool-transport}
Let $U$ be a gate and $C_1,C_2,K,K'\in\C_2(n)$.
\renewcommand{\labelenumi}{(\theenumi)}
\begin{enumerate}
\item $(L,M)\in\mathscr L(U)$ if and only if $(\gamma_{C_2}^{-1}L,\gamma_{C_1}M)\in\mathscr L(C_1UC_2)$.
\item If $UK\doteq K'U$ and $(L,M)\in\mathscr L(U)$, then $(\gamma_KL,\gamma_{K'}M)\in\mathscr L(U)$.
\end{enumerate}
In particular, if $UK\doteq K'U$, then
\[
\gamma_K\{L:\exists\,M,\ (L,M)\in\mathscr L(U)\}
=
\{L:\exists\,M,\ (L,M)\in\mathscr L(U)\}.
\]
If moreover $\mathscr L(U)=\{(L,M)\}$, then $\gamma_KL=L$ and $\gamma_{K'}M=M$.
\end{lemma}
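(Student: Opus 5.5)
Both parts reduce to the conjugation rule $C\A_SC^\dagger=\A_{\gamma_CS}$ for Clifford $C$, which was recorded just before Proposition~\ref{lem:lagrangian-expanded} and follows because $CP_vC^\dagger\doteq P_{\gamma_Cv}$ and scalars do not affect linear spans.

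For (1), let $U'=C_1UC_2$ and suppose $(L,M)\in\mathscr L(U)$. With $L'=\gamma_{C_2}^{-1}L$, we have $C_2\A_{L'}C_2^\dagger=\A_{\gamma_{C_2}L'}=\A_L$. Then
\[
U'\A_{L'}U'^\dagger=C_1U\bigl(C_2\A_{L'}C_2^\dagger\bigr)U^\dagger C_1^\dagger=C_1\A_MC_1^\dagger=\A_{\gamma_{C_1}M}.
\]
The spaces $\gamma_{C_2}^{-1}L$ and $\gamma_{C_1}M$ are Lagrangian because $\gamma_{C_2}^{-1}$ and $\gamma_{C_1}$ are symplectic. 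So $(\gamma_{C_2}^{-1}L,\gamma_{C_1}M)\in\mathscr L(U')$. For the converse, apply the same argument to $U=C_1^\dagger U'C_2^\dagger$, using $\gamma_{C^\dagger}=\gamma_C^{-1}$.

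For (2), suppose $UK\doteq K'U$. Then $U\doteq K'UK^\dagger$, and scalars do not change conjugation. Apply (1) with $C_1=K'$ and $C_2=K^\dagger$. Since $\gamma_{K^\dagger}^{-1}=\gamma_K$, the pair $(\gamma_KL,\gamma_{K'}M)$ lies in $\mathscr L(K'UK^\dagger)=\mathscr L(U)$.

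For the set of input Lagrangians, (2) gives the inclusion "$\subseteq$" after applying $\gamma_K$. The reverse inclusion comes from applying the same inclusion to $K^\dagger,K'^\dagger$, which satisfy $UK^\dagger\doteq K'^\dagger U$. This relation holds because $UK\doteq K'U$ implies $K'^\dagger U\doteq UK^\dagger$. Finally, if $\mathscr L(U)=\{(L,M)\}$, then by (2) the pair $(\gamma_KL,\gamma_{K'}M)$ is also in $\mathscr L(U)$, so it must equal $(L,M)$. There is no real obstacle here; the only point needing care is tracking inverses in $\gamma_{C_2}^{-1}$ versus $\gamma_{C_2^\dagger}$.
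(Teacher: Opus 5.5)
Your proof is correct and is essentially the paper's argument: both rest on the rule $C\A_SC^\dagger=\A_{\gamma_CS}$. The only differences are cosmetic. You derive (2) from (1) by writing $U\doteq K'UK^\dagger$, whereas the paper computes $U\A_{\gamma_KL}U^\dagger=K'U\A_LU^\dagger K'^\dagger$ directly. You also spell out the reverse inclusion for the set of input Lagrangians, using $K^\dagger,K'^\dagger$, which the paper leaves implicit.
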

\begin{proof}
Conjugating $\A_{\gamma_{C_2}^{-1}L}$ by $C_1UC_2$ gives $C_1(U\A_LU^\dagger)C_1^\dagger$, which equals $\A_{\gamma_{C_1}M}$ if and only if $U\A_LU^\dagger=\A_M$. The rest follows from the observation that $U\A_{\gamma_KL}U^\dagger =UK\A_LK^\dagger U^\dagger=K'U\A_LU^\dagger K'^\dagger =\A_{\gamma_{K'}M}$.
\end{proof}

\begin{lemma}\label{lem:tool-blocks}
Let $1\le j<n$, and let $W$ commute with $Z_1,\dots,Z_j$, so that $W=\bigoplus_{x\in\F_2^j}B_x$, where $B_x$ acts on qubits $j+1,\dots,n$ in the joint eigenspace $\mathcal H_x$ of $Z_1,\dots,Z_j$ with eigenvalues $(-1)^{x_1},\dots,(-1)^{x_j}$. Let $S=\Span\{z_1,\dots,z_j\}$, and write $V_{n-j}$ for the labels of qubits $j+1,\dots,n$.
\renewcommand{\labelenumi}{(\theenumi)}
\begin{enumerate}
\item For Lagrangians $L,M\le V_{n-j}$, $(S\oplus L,S\oplus M)\in\mathscr L(W)$ if and only if $(L,M)\in\mathscr L(B_x)$ for every $x$. Moreover, $\mathscr L(W)\ne\varnothing$ if and only if $\bigcap_x\mathscr L(B_x)\ne\varnothing$. \label{itm:block1}
\item If $W\in\C_k(n)$, then $B_x\in\C_k(n-j)$ for every $x$, and $\diag(B_x,B_y)\in\C_k(n-j+1)$ for $x\ne y$.
\item Let $e\in\F_2^j$, $P$ be a Pauli on qubits $j+1,\dots,n$, and $h=W(X^e\otimes P)W^\dagger$. Then $(X^e\otimes I)h=\bigoplus_xB_{x+e}PB_x^\dagger$. If $\langle\Pau(n),h\rangle$ is a finite $2$-group modulo scalars, then so is $\langle\Pau(n-j),B_{x+e}PB_x^\dagger\rangle$ for every $x$. \label{item:block3}
\end{enumerate}
\end{lemma}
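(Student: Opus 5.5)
Each part reduces to a direct computation with the block form $W=\sum_x\ket{x}\bra{x}\otimes B_x$ in the basis adapted to qubits $1,\dots,j$ and $j+1,\dots,n$.

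\emph{Part (1).} I would use $\A_{S\oplus L}=\A_S\otimes\A_L$, where $\A_S$ is the algebra of functions of $Z_1,\dots,Z_j$. Equivalently,
\[
\A_{S\oplus L}=\Bigl\{\textstyle\sum_x\ket{x}\bra{x}\otimes A_x : A_x\in\A_L\Bigr\},
\]
and similarly for $M$. Conjugating gives $W\bigl(\sum_x\ket{x}\bra{x}\otimes A_x\bigr)W^\dagger=\sum_x\ket{x}\bra{x}\otimes B_xA_xB_x^\dagger$. So $W\A_{S\oplus L}W^\dagger=\A_{S\oplus M}$ holds iff $B_x\A_LB_x^\dagger=\A_M$ for every $x$. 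For the ``moreover'' claim, the direction ``$\Leftarrow$'' is immediate. For ``$\Rightarrow$'', take $(L',M')\in\mathscr L(W)$. Since $W\A_SW^\dagger=\A_S$, Lemma~\ref{lem:root-witness} upgrades this to a pair $(S+(L'\cap S^\perp),S+(M'\cap S^\perp))\in\mathscr L(W)$. A Lagrangian containing $S$ has the form $S\oplus L$ with $L$ Lagrangian in $V_{n-j}$, after discarding the $z_1,\dots,z_j$ components of vectors in $S^\perp$. The first statement then applies.

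\emph{Part (2).} I would use compression $E_x=\bra{x}\otimes I$, as in the proof of Lemma~\ref{lem:basic-operations}, and induct on $k$. For a target Pauli $P$, $W(I\otimes P)W^\dagger=\bigoplus_x B_xPB_x^\dagger$ commutes with $Z_1,\dots,Z_j$ and lies in $\C_{k-1}(n)$. Its blocks are the corrections of $B_x$, so by induction they lie in $\C_{k-1}(n-j)$, which gives $B_x\in\C_k(n-j)$. For $\diag(B_x,B_y)$, I would first apply a CNOT circuit $C$ on the control register sending $x+y$ to $e_1$ and $x$ to $0$. Then $W'=(C\otimes I)W(C^\dagger\otimes I)$ lies in $\C_k(n)$ by Lemma~\ref{lem:basic-operations}(1), and it is block diagonal with $B_x$ at $0$ and $B_y$ at $e_1$. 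Grouping by qubits $2,\dots,j$ exhibits $W'$ as block diagonal with respect to $Z_2,\dots,Z_j$. By the first claim, applied with $j-1$ control qubits, the block at $0$, namely $\diag(B_x,B_y)$, lies in $\C_k(n-j+1)$. The base case $k=1$ needs care: if $W$ is a Pauli commuting with $Z_1,\dots,Z_j$, then it is $Z^c\otimes Q$, so each block is $\pm Q$ and $\diag(\pm Q,\pm Q)$ is a Pauli.

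\emph{Part (3).} A direct computation gives
\[
h=\sum_x\ket{x+e}\bra{x}\otimes B_{x+e}PB_x^\dagger,
\]
so $(X^e\otimes I)h$ is block diagonal with the stated blocks. For the group claim, let $G=\langle\Pau(n),h\rangle$, which is finite and a $2$-group modulo scalars. Let $G_0\le G$ be the subgroup of elements commuting with $Z_1,\dots,Z_j$ up to sign times diagonal, more precisely those sending each $\mathcal H_y$ to $\mathcal H_{y+c}$. I would compress instead to the subgroup $G_S$ of elements that map every $\mathcal H_x$ to itself. This subgroup contains $X^e\otimes I\cdot h$, since $h$ shifts $x$ by $e$. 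It also contains $I\otimes\Pau(n-j)$ and the $Z_i$. Restriction to $\mathcal H_x$ is a homomorphism $G_S\to\mathrm{PU}(2^{n-j})$, so its image is a finite $2$-group. That image contains $\Pau(n-j)$ and $B_{x+e}PB_x^\dagger$, which gives the result.

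The main obstacle is making the projective homomorphism in part (3) precise. Scalars on different $\mathcal H_x$ could differ, but restriction is still a genuine homomorphism of unitary groups followed by the quotient, so the image is a quotient of a $2$-group and is finite.
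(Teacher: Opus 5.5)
Your proposal is correct. Parts (1) and (3) follow the paper's proof essentially verbatim. In (1) you use the block description of $\A_{S\oplus L}$ and upgrade an arbitrary GSC pair, via Lemma~\ref{lem:root-witness}, to one whose Lagrangians contain $S$. In (3) you use the restriction homomorphism on the subgroup of $\langle\Pau(n),h\rangle$ preserving every $\mathcal H_y$. Your $G_S$ is the paper's $G_0$, and the projective well-definedness is automatic, since replacing $u$ by $\lambda u$ multiplies every restriction by $\lambda$.

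The only real difference is the claim $\diag(B_x,B_y)\in\C_k(n-j+1)$ in Part (2). The paper proves one general claim: if $U\in\C_k(n)$ commutes with a group $G$ of commuting Paulis, and a joint eigenspace $\mathcal Q$ is identified with $m$ qubits so that Paulis commuting with $G$ restrict to $\Pau(m)$, then $U|_{\mathcal Q}\in\C_k(m)$. It applies this once with $G=\langle Z_1,\dots,Z_j\rangle$ and once with $G=\langle Z^c:c\cdot(x+y)=0\rangle$. The joint eigenspace of the latter is $\mathcal H_x\oplus\mathcal H_y$, with logical operators $X^{x+y}$ and $Z_i$.

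You prove only the coordinate case and reduce to it by a Clifford change of coordinates on the control register. This is more elementary, but it needs three small repairs:
\begin{enumerate}
\item A CNOT circuit is linear and fixes $0$, so it cannot send $x\mapsto0$ and $y\mapsto e_1$ when $x\ne0$. Use instead the affine permutation $z\mapsto\alpha(z+x)$, with $\alpha$ linear and $\alpha(x+y)=e_1$, that is, $X^x$ followed by CNOTs. This is still Clifford, so Lemma~\ref{lem:basic-operations}(1) applies.
\item The coordinate claim is then invoked with control qubits $2,\dots,j$ rather than $1,\dots,j-1$, which is harmless after relabeling.
\item For $j=1$ there are no remaining controls, but then $\diag(B_x,B_y)$ is $W$ or $X_1WX_1$ itself.
\end{enumerate}
The paper's route handles arbitrary joint eigenspaces without choosing coordinates. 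Yours avoids identifying $\mathcal H_x\oplus\mathcal H_y$ with a system of qubits.
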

\begin{proof}
The algebra $\A_{S\oplus L}$ consists of the block-diagonal operators $\bigoplus_xa_x$ with all $a_x\in\A_L$, so $W\A_{S\oplus L}W^\dagger=\A_{S\oplus M}$ exactly when $B_x\A_LB_x^\dagger=\A_M$ for every $x$. Next, let $(L',M')\in\mathscr L(W)$. Since $W$ commutes with $Z_1,\dots,Z_j$, we have $W\A_SW^\dagger=\A_S$, so Lemma~\ref{lem:root-witness} with $T=S$ gives
\[
W\A_{S+(L'\cap S^\perp)}W^\dagger=\A_{S+(M'\cap S^\perp)},
\]
where $S+(L'\cap S^\perp)$ and $S+(M'\cap S^\perp)$ are Lagrangians containing $S$. These lie in $S^\perp=S\oplus V_{n-j}$, so they have the form $S\oplus L$ and $S\oplus M$. Indeed, a Lagrangian $N$ with $S\le N\le S\oplus V_{n-j}$ equals $S\oplus(N\cap V_{n-j})$, and $N\cap V_{n-j}$ is Lagrangian in $V_{n-j}$. The first statement then gives $(L,M)\in\bigcap_x\mathscr L(B_x)$. Conversely, a common GSC pair $(L,M)$ of the $B_x$ gives $(S\oplus L,S\oplus M)\in\mathscr L(W)$. This concludes the proof of~\eqref{itm:block1}.

Recall that $W$ acts on $\mathcal H_x$ by $B_x$. Let $G$ be a group of commuting Paulis, $\mathcal Q$ a joint eigenspace of $G$, and identify $\mathcal Q$ with $m$ qubits so that the restrictions to $\mathcal Q$ of the Paulis commuting with $G$ are exactly the elements of $\Pau(m)$, up to scalar. We claim that
\[
U\in\C_k(n),\ U\text{ commutes with }G\quad\Longrightarrow\quad U\mathcal Q=\mathcal Q\ \text{ and }\ U|_{\mathcal Q}\in\C_k(m).
\]
For $k=1$ the claim is the identification. For $k>1$, write an element of $\Pau(m)$ as $P|_{\mathcal Q}$ with $P$ a Pauli commuting with $G$. Then $UPU^\dagger\in\C_{k-1}(n)$ commutes with $G$, and by induction
\[
U|_{\mathcal Q}\,P|_{\mathcal Q}\,U|_{\mathcal Q}^\dagger=(UPU^\dagger)|_{\mathcal Q}\in\C_{k-1}(m).
\]
We apply the claim to $U=W$ twice:
\begin{itemize}
\item $G=\langle Z_1,\dots,Z_j\rangle$, $\mathcal Q=\mathcal H_x$, $m=n-j$: then $W|_{\mathcal Q}=B_x$.
\item $G=\langle Z^c:c\cdot(x+y)=0\rangle$, $m=n-j+1$: the joint eigenspace containing $\mathcal H_x$ is $\mathcal Q=\mathcal H_x\oplus\mathcal H_y$, with logical $X=X^{x+y}$ and logical $Z=Z_i$ for an $i$ with $x_i\ne y_i$. Then $W|_{\mathcal Q}=\diag(B_x,B_y)$, up to relabeling the logical qubit.
\end{itemize}

We now proceed to prove~\eqref{item:block3}. The first identity can be verified directly. For the second conclusion, let $G=\langle\Pau(n),h\rangle$ modulo scalars, and let $G_0\le G$ be the subgroup of elements preserving every $\mathcal H_y$, that is, of the form $u=\bigoplus_yu_y$. For each $x$, restriction to $\mathcal H_x$ is a homomorphism
\[
\rho_x:G_0\to\mathrm{PU}(2^{n-j}),\qquad \textstyle\bigoplus_yu_y\mapsto u_x.
\]
Since $G_0$ contains $I\otimes Q$ for every $Q\in\Pau(n-j)$, and contains $(X^e\otimes I)h=\bigoplus_yB_{y+e}PB_y^\dagger$,
\[
\langle\Pau(n-j),\,B_{x+e}PB_x^\dagger\rangle\le\rho_x(G_0).
\]
The right hand side is the image of a subgroup of the finite $2$-group $G$, so it is a finite $2$-group, and so is the left hand side.
\end{proof}

Recall that a \textit{complete affine flag} is the sequence of coset partitions associated to $0=W_0<\cdots<W_n=\F_2^n$, $\dim W_i=i$. For any permutation $\pi$, if $\pi(x+W_i)=\pi(x)+W_i$ for all $x,i$, then the flag is invariant under $\pi$.

\begin{lemma}\label{lem:tool-flags}
Let $\pi\in\operatorname{Sym}(\F_2^n)$, and $\mathcal T_n\coloneqq\{x\mapsto x+a:a\in\F_2^n\}$ be the translation subgroup.
\renewcommand{\labelenumi}{(\theenumi)}
\begin{enumerate}
\item If $\pi$ preserves a complete affine flag $\mathcal F$, then $\pi\in\operatorname{Stab}(\mathcal F)$ where $\operatorname{Stab}(\mathcal F)$ is a $2$-group of order $2^{2^n-1}$ containing $\mathcal T_n$. \label{itm:perm1}
\item If $\pi\in G$ for a $2$-group $\mathcal T_n\le G\le\operatorname{Sym}(\F_2^n)$, then there exists a complete affine flag $\mathcal F$ preserved by every element of $G$, and hence by $\pi$. \label{itm:perm2}
\item $\pi$ preserves a complete affine flag if and only if there exists a $2$-group $G\le\operatorname{Sym}(\F_2^n)$ such that $\mathcal T_n\le G$ and $\pi\in G$. \label{itm:perm3}
\end{enumerate}
\end{lemma}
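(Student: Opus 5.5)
The plan is to identify the stabilizer of a complete affine flag with the automorphism group of a rooted binary tree for part~(1). For part~(2), I will build the invariant flag from a chain of subgroups of index $2$ through the point stabilizer, using the standard correspondence between overgroups of a point stabilizer and block systems. Part~(3) then follows by combining the two.

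\emph{Part (1).} Let $\mathcal F$ be given by $0=W_0<\cdots<W_n=\F_2^n$. The cosets of all the $W_i$ form a rooted binary tree: the root is $\F_2^n$, the leaves are the points of $\F_2^n$, and each coset of $W_i$ is the disjoint union of exactly two cosets of $W_{i-1}$. A permutation preserves every coset partition exactly when it induces an automorphism of this tree, and every tree automorphism arises from its action on the leaves. So $\operatorname{Stab}(\mathcal F)$ is the iterated wreath product $C_2\wr\cdots\wr C_2$ with $n$ factors. Its order is $\prod_{i=0}^{n-1}2^{2^i}=2^{2^n-1}$, as one sees by choosing independently a swap or no swap at each of the $2^i$ vertices of depth $i$. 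It contains $\mathcal T_n$, because $(x+W_i)+a=(x+a)+W_i$.

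\emph{Part (2).} The group $G$ is transitive on $\F_2^n$ because it contains $\mathcal T_n$. Let $G_0$ be the stabilizer of $0$, so $[G:G_0]=2^n$. In a finite $2$-group every subgroup lies in a chain of subgroups, each of index $2$ in the next. Choose such a chain $G_0=H_0<H_1<\cdots<H_n=G$ and set $B_i=H_i\cdot 0$. Then $|B_i|=[H_i:G_0]=2^i$.

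Since $H_i\supseteq G_0$, the set $B_i$ is a block for $G$. That is, for every $g\in G$ either $gB_i=B_i$ or $gB_i\cap B_i=\varnothing$: if $g h\cdot0=h'\cdot0$ with $h,h'\in H_i$, then $h'^{-1}gh\in G_0\subseteq H_i$, so $g\in H_i$.

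The key step is to show that each $B_i$ is a linear subspace $W_i$. For $a\in B_i$, the translation $t_a\in G$ maps $B_i$ to a block containing $t_a(0)=a\in B_i$, so $t_a B_i=B_i$. Hence $a+b\in B_i$ for all $a,b\in B_i$, and $B_i$ is an additive subgroup, i.e.\ an $\F_2$-subspace. Therefore $0=W_0<W_1<\cdots<W_n=\F_2^n$ is a complete flag with $\dim W_i=i$.

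Moreover, the $G$-translates of $B_i$ are exactly the cosets $x+W_i=t_xB_i$, since the translates partition $\F_2^n$ and include every $t_xB_i$. Any $g\in G$ maps the block $x+W_i$ to the block containing $g(x)$, namely $g(x)+W_i$. So every element of $G$ preserves the flag, and in particular $\pi$ does.

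\emph{Part (3).} If $\pi$ preserves a flag $\mathcal F$, take $G=\operatorname{Stab}(\mathcal F)$, which is a $2$-group containing $\mathcal T_n$ and $\pi$ by part~(1). The converse is part~(2).

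I expect the main obstacle to be part~(2), specifically showing that the orbit blocks are subspaces and that their translates are exactly the affine cosets. The argument above handles this by using the translations contained in $G$ together with the block property.
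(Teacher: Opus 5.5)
Your proof is correct. Part~(1) is essentially the paper's count: the paper recurses on the quotient by $W_1$, where the kernel of $\pi\mapsto\bar\pi$ independently fixes or swaps each of the $2^{n-1}$ cosets of $W_1$. You instead count swaps at the internal vertices of the coset tree, which amounts to the same thing. Part~(3) is also the same.

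Part~(2) takes a genuinely different route. The paper picks a nontrivial central element $q$ of $G$. Since $q$ commutes with every translation, it is itself a translation $x\mapsto x+v$, so every $g\in G$ satisfies $g(x+v)=g(x)+v$. The paper then passes to the induced action of $G$ on $\F_2^n/\Span\{v\}$ and inducts on $n$, building the flag from the bottom up with $W_1=\Span\{v\}$.

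You avoid both quotients and induction. You take a chain of index-$2$ subgroups from the point stabilizer $G_0$ up to $G$, and use the correspondence between overgroups of $G_0$ and blocks containing $0$. The translations in $G$ then show that each such block is closed under addition, hence a subspace whose $G$-translates are exactly its cosets.

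The paper's argument is shorter. It needs only $Z(G)\neq1$ and the fact that the centralizer of $\mathcal T_n$ in $\operatorname{Sym}(\F_2^n)$ consists of translations. Yours is more structural and shows more: every $G$-block through $0$ is a linear subspace. Consequently the $G$-invariant complete affine flags are exactly those arising from chains of index-$2$ subgroups between $G_0$ and $G$, not just one of them.
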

\begin{proof}
Part~\eqref{itm:perm3} follows from~\eqref{itm:perm1} and~\eqref{itm:perm2}. For~\eqref{itm:perm1}, translations preserve every coset partition, hence every flag. Let $Q_n = \operatorname{Stab}(\mathcal{F})$ for the complete affine flag $0=W_0<\cdots<W_n \eqqcolon \mathcal{F}$. Each $\pi\in Q_n$ permutes the cosets of $W_1$ and induces a permutation $\bar\pi$ of $\F_2^n/W_1$ preserving the complete affine flag $(W_i/W_1)_{i\ge1}$. Every permutation of $\F_2^n/W_1$ preserving this flag is some $\bar\pi$, since any $\pi$ mapping each coset of $W_1$ onto its image preserves the flag. The kernel of $\pi\mapsto\bar\pi$ fixes or swaps each of the $2^{n-1}$ cosets of $W_1$ independently. Hence $|Q_n|=2^{2^{n-1}}|Q_{n-1}|$, and $|Q_0|=1$ gives $|Q_n|=2^{2^n-1}$, so $Q_n$ is a Sylow $2$-subgroup of $\operatorname{Sym}(\F_2^n)$ (cf.~\cite{DixonMortimer1996}).

We now prove~\eqref{itm:perm2}. Take $q\ne1$ in the center of $G$. Since $q$ commutes with all translations, $q(x)=x+v$ with $v=q(0)\ne0$. Every $g\in G$ commutes with $q$, so
\[
g(x+v)=g(x)+v\qquad\text{for all }x\in\F_2^n.
\]
Hence $g$ maps cosets of $\Span\{v\}$ to cosets of $\Span\{v\}$ and induces a permutation $\bar g$ of $\bar V=\F_2^n/\Span\{v\}\cong\F_2^{n-1}$. The map $g\mapsto\bar g$ is a homomorphism, so $\bar G=\{\bar g:g\in G\}$ is a $2$-group, and it contains all translations of $\bar V$, the images of the translations of $\F_2^n$. By induction on $n$, $\bar G$ preserves a complete affine flag $0=\bar W_0<\cdots<\bar W_{n-1}=\bar V$. Let $W_0=0$ and let $W_i$ be the preimage of $\bar W_{i-1}$ for $1\le i\le n$, so that $W_1=\Span\{v\}$ and $\dim W_i=i$. For $g\in G$, the above formula gives $g(x+W_1)=g(x)+W_1$. For $i\ge2$, $x+W_i$ is the preimage of $\bar x+\bar W_{i-1}$, so $g(x+W_i)$ is the preimage of $\bar g(\bar x)+\bar W_{i-1}$, which is $g(x)+W_i$.
\end{proof}

\begin{proposition}\label{lem:tool-monomial}
Let $n\le4$, $U\in\C(n)$ and $(L,M)\in\mathscr L(U)$. There are $C_1,C_2\in\C_2(n)$ with $C_2\A_LC_2^\dagger=\A_Z$ and $C_1\A_ZC_1^\dagger=\A_M$, a permutation $\pi$ of $\F_2^n$ and a diagonal gate $D$ such that $U\doteq C_1\Pi_\pi DC_2$, where $\A_Z$ is the algebra of diagonal matrices and $\Pi_\pi$ is the permutation gate $\ket x\mapsto\ket{\pi x}$, with the following properties:
\renewcommand{\labelenumi}{(\theenumi)}
\begin{enumerate}
\item The permutation $\pi$ preserves a complete affine flag. If $n\le2$, then $\pi=1$. If $n=3$, then $\pi=1$ or $\pi(x,y,z)=(x,y,z+xy)$. \label{itm:monomial1}
\item The ratios of the diagonal entries of $D$ lie in $\mu_{2^s}$ for some $s$, in $\mu_8$ if $U$ is defined over $K_3$ up to scalar, and in $\mu_4$ if $U$ is Gaussian.
\end{enumerate}
\end{proposition}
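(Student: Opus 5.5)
The plan is to reduce to a monomial gate, then normalize its permutation by Clifford gates that fix $\A_Z$, and finally read off the diagonal phases from the field of definition.

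\emph{Reduction to a monomial gate.} Since $\Sp(V_n)$ acts transitively on Lagrangians and every symplectic map is induced by a Clifford, I choose $C_2$ with $\gamma_{C_2}L=L_Z$ and $C_1$ with $\gamma_{C_1}L_Z=M$. I will take both as words in $H$, $S$, $\CNOT$ and Paulis; each of these is a scalar multiple of a matrix over $\Q(i)$, because $H=\tfrac1{\sqrt2}\bigl(\begin{smallmatrix}1&1\\1&-1\end{smallmatrix}\bigr)$. By Lemma~\ref{lem:tool-transport}(1), $V\coloneqq C_1^\dagger UC_2^\dagger$ satisfies $V\A_ZV^\dagger=\A_Z$. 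A unitary normalizing the diagonal algebra permutes its minimal projections $\ket x\bra x$, so $V=\Pi_\pi D$ with $\pi\in\operatorname{Sym}(\F_2^n)$ and $D$ diagonal. Moreover $V\in\C(n)$ by Lemma~\ref{lem:basic-operations}(1).

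\emph{Normalizing $\pi$.} There remains freedom to replace $V$ by $K'VK$ with $K,K'$ in the Clifford stabilizer of $\A_Z$. This stabilizer consists of affine permutations $x\mapsto Ax+b$ times diagonal Cliffords. The effect is $\pi\mapsto \alpha'\pi\alpha$ with $\alpha,\alpha'\in\mathrm{AGL}(n,2)$, and these factors are again Gaussian up to scalar.
\begin{itemize}
\item \emph{Flag.} The corrections $VX^aV^\dagger$ are monomial with permutation part $\pi t_a\pi^{-1}$, and the $VZ^bV^\dagger$ are diagonal. So the permutation image of $\Gamma(V)$, which is $\langle\mathcal T_n,\pi\mathcal T_n\pi^{-1}\rangle$, is the relevant group. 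I want a $2$-group $G\ge\mathcal T_n$ containing $\pi$ after an affine adjustment, and then Lemma~\ref{lem:tool-flags}(3) gives the flag.
\item \emph{Route.} I would argue by induction on the level, using that each correction is itself a monomial gate one level lower. For $n\le4$, if this does not close, I would fall back on an exhaustive check in the spirit of \tool{F}. That check would enumerate double cosets $\mathrm{AGL}(n,2)\,\pi\,\mathrm{AGL}(n,2)$ of permutations $\pi$ for which $\Pi_\pi D$ can lie in $\C(n)$ for some diagonal $D$. For those $\pi$ it would verify that some representative lies in a flag stabilizer $Q_n$.
\item \emph{Small cases.} For $n\le2$, $|Q_2|=8$ and $Q_2\le\mathrm{AGL}(2,2)=\operatorname{Sym}(4)$, so $\pi$ is affine and is absorbed into the Cliffords, giving $\pi=1$. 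For $n=3$, $Q_3$ has order $2^7$. I will enumerate its double cosets modulo the affine flag-preserving subgroup and check that, among them, only the identity and the Toffoli permutation $(x,y,z)\mapsto(x,y,z+xy)$ can occur.
\end{itemize}
I expect the flag statement to be the main obstacle. The reason is that a common flag for $\pi$ together with the translations is not automatic from each correction being individually normal. So that step is where I expect to need either the induction or the computer certificate.

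\emph{Phases.} Every Clifford used above is a scalar times a Gaussian matrix, so $D=\Pi_\pi^{-1}V$ is defined over the same $K_r$ as $U$, up to a scalar. Hence each ratio $D_{xx}/D_{yy}$ lies in $K_r$ and has absolute value $1$ under every embedding, since $K_r$ is CM and complex conjugation commutes with its Galois group. By Kronecker's theorem such a ratio is a root of unity in $K_r$, namely an element of $\mu_{2^r}$. By Proposition~\ref{lem:conductor-plane}, $r$ is finite, which gives $\mu_{2^s}$ in general. It gives $\mu_8$ when $U$ is defined over $K_3$, and $\mu_4$ when $U$ is Gaussian.
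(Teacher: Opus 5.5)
Your reduction to $V=C_1^\dagger UC_2^\dagger=\Pi_\pi D$ agrees with the paper. So does your use of the affine--diagonal Clifford stabilizer of $\A_Z$ to replace $\pi$ by $\alpha'\pi\alpha$, and the case $n\le2$ is fine, since every permutation of $\F_2^n$ with $n\le2$ is affine.

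\textbf{The flag property.} The genuine gap is the flag property in~(1) for $n=3,4$, which you leave open.
\begin{itemize}
\item The induction on the level cannot close it. The corrections of $V$ at $X^a$ have permutations $\pi t_a\pi^{-1}$. Knowing that each of these is affinely equivalent to a flag-preserving permutation, with affine maps depending on $a$, gives no common flag for $\pi$ and $\mathcal T_n$, as you observe yourself.
\item The fallback check is not yet a finite procedure. The condition ``$\Pi_\pi D\in\C(n)$ for some diagonal $D$'' quantifies over infinitely many $D$ and over all levels.
\end{itemize}
The paper supplies the two missing ingredients by citation. The first is \cite[Corollary~A.11.1]{Anderson_2024}: if a monomial gate $\Pi_\pi D$ has finite level, then $\Pi_\pi$ and $D$ separately have finite level. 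The second is the affine classification of hierarchy permutations of $\F_2^3$ and $\F_2^4$ \cite[Appendix~C]{Anderson_2024}, \cite[Section~3]{sc-AC2025}, which gives affine $\alpha,\beta$ with $\alpha\pi\beta$ flag-preserving. Without these or an equivalent classification, (1) is unproved for $n=3,4$.

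Once the flag is available, you do not need a double-coset enumeration for $n=3$. A flag-preserving permutation has the form $(x+c,\,y+g_1(x),\,z+h(x,y))$. Here $g_1$ is automatically affine, and $h$ is affine plus possibly $xy$. So $\pi$ is affine, or an affine map composed with $(x,y,z)\mapsto(x,y,z+xy)$.

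\textbf{The phase argument.} Kronecker's theorem requires algebraic integers. A unimodular element of a CM field need not be a root of unity: $(3+4i)/5\in\Q(i)$ has modulus $1$ under both embeddings. You must first show that the ratios are integral. One way to do this:
\begin{itemize}
\item The ratios $D_{x+a}/D_x$ are exactly the nonzero entries of $VX^aV^\dagger=\pm C_1^\dagger h_wC_1$ for a correction $h_w$ of $U$.
\item As in the proof of Proposition~\ref{lem:conductor-plane}, $h_w=2^{-n}\sum_v\tr(h_wP_v)P_v$, where each trace is a sum of $2$-power roots of unity. Taking $C_1$ to be an exact word in $H,S,\CNOT$, the ratios therefore lie in $\Z[\zeta_{2^m},1/2]$.
\item The prime $2$ is totally ramified in $K_m$, and complex conjugation fixes the prime above it. So $\alpha\bar\alpha=1$ forces $\alpha$ to be an algebraic integer. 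Only then does Kronecker apply. This is the argument in Step~2 of the proof of Proposition~\ref{prop:n4-finite-cutoffs}.
\end{itemize}
The paper instead cites the diagonal classification \cite[Appendix~F]{Anderson_2024}. That citation applies because Corollary~A.11.1 puts $D$ itself in the hierarchy. After that, intersecting with $K_3$ and $\Q(i)$ gives $\mu_8$ and $\mu_4$, exactly as you say.
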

\begin{proof}
Since the Clifford group acts transitively on Lagrangians, the desired $C_1$ and $C_2$ exist. Then $C_1^\dagger UC_2^\dagger\,\A_Z\, C_2U^\dagger C_1 = \A_Z$. The rank-one projections in $\A_Z$ are the $\ket{x}\bra{x}$, and conjugation by $C_1^\dagger UC_2^\dagger$ maps $\A_Z$ onto itself and preserves rank, so it permutes them. Hence $C_1^\dagger UC_2^\dagger=\Pi_\pi D$ for some permutation $\pi$ of $\F_2^n$ and diagonal unitary $D$. This gate has finite level, so by~\cite[Corollary~A.11.1]{Anderson_2024} both $\Pi_\pi$ and $D$ have finite level.

Every permutation of $\F_2^n$ with $n\le2$ is affine. For $n=3,4$ the affine classifications~\cite[Appendix~C]{Anderson_2024} and~\cite[Section~3]{sc-AC2025} give affine $\alpha,\beta$ with $\alpha\pi\beta$ preserving a complete affine flag. Affine permutation gates are Clifford and normalize $\A_Z$, so replacing $C_1,C_2$ by $C_1\Pi_\alpha^{-1}$, $\Pi_\beta^{-1}C_2$ and $D$ by $\Pi_\beta^{-1}D\Pi_\beta$ we may assume that $\pi$ preserves the flag. On three qubits such a permutation is $(x+c,\,y+g_1(x),\,z+h(x,y))$, which is affine or the composition of an affine map with $(x,y,z)\mapsto(x,y,z+xy)$. Absorbing the affine part once more gives~\eqref{itm:monomial1}.

The diagonal classification~\cite[Appendix~F]{Anderson_2024} puts the ratios of the entries of $D$ in some $\mu_{2^s}$. Clifford gates are defined over $K_2=\Q(i)$ up to scalar, so $\Pi_\pi D$ is defined over the same field as $U$. Under the extra hypotheses the ratios therefore lie in $\mu_{2^s}\cap K_3=\mu_8$ and respectively $\mu_{2^s}\cap\Q(i)=\mu_4$.
\end{proof}

The next lemma gives the \emph{(U) uniqueness} technique.
\begin{lemma}\label{lem:tool-unique}
Let $U$ be an $n$-qubit gate.
\renewcommand{\labelenumi}{(\theenumi)}
\begin{enumerate}
\item If $L\le V_n$ is Lagrangian and $L\subseteq\mathsf P(U)$, then there exists a Lagrangian $M\le V_n$ such that $(L,M)\in\mathscr L(U)$. \label{itm:unique1}
\item If $n\le2$, then for every Lagrangian $L\le V_n$, there exists a Lagrangian $M \le V_n$ such that $(L,M)\in\mathscr L(U)$ if and only if $L\subseteq\mathsf P(U)$. \label{itm:unique2} In particular, if $\dim\mathsf P(U)\le n$, then $|\mathscr L(U)|\le1$.
\item If $n=3$ and $U\in\C(3)$, then $L\subseteq\mathsf C(U)$ for every $(L,M)\in\mathscr L(U)$. \label{itm:unique3}
\item A one-qubit gate that is not Clifford has at most one GSC pair.
\end{enumerate}
\end{lemma}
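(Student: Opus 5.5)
Parts (1), (2) and (4) are elementary consequences of the definitions together with the monomial normal form. Part (3) is the only place where the three-qubit classification in Proposition~\ref{lem:tool-monomial} enters.

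For (1), I will show that the corrections $h_v=UP_vU^\dagger$, $v\in L$, are the Paulis spanning a Pauli MASA. Since $L\subseteq\mathsf P(U)$, each $h_v\doteq\lambda_vP_{\phi(v)}$ for a label $\phi(v)$. The map $\phi$ is linear, because $h_{u+v}\doteq h_uh_v$. It is injective, because the $P_v$ are linearly independent and conjugation is an algebra automorphism. The $h_v$ commute pairwise, so $M\coloneqq\phi(L)$ is isotropic of dimension $n$, hence Lagrangian. Then $U\A_LU^\dagger=\operatorname{span}_{\CC}\{h_v\}=\A_M$.

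For (2), the implication ``$\Leftarrow$'' is (1). For ``$\Rightarrow$'', let $(L,M)\in\mathscr L(U)$ and choose Cliffords $C_1,C_2$ with $C_2\A_LC_2^\dagger=\A_Z$ and $C_1\A_ZC_1^\dagger=\A_M$. As in the first step of the proof of Proposition~\ref{lem:tool-monomial}, which uses no hierarchy hypothesis, $C_1^\dagger UC_2^\dagger=\Pi_\pi D$ with $D$ diagonal and $\pi$ a permutation of $\F_2^n$.

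For $n\le2$ every permutation of $\F_2^n$ is affine; for $n=2$, $|\mathrm{AGL}(2,2)|=24=|\operatorname{Sym}(\F_2^2)|$. So $\Pi_\pi$ is Clifford and can be absorbed into $C_1$. A diagonal $D$ fixes every $Z^b$, so $L_Z\subseteq\mathsf P(D)$. Transporting by Lemma~\ref{lem:tool-transport}(1), using that Clifford multiplication carries $\mathsf P$ by $\gamma_{C_2}^{-1}$, gives $L=\gamma_{C_2}^{-1}L_Z\subseteq\mathsf P(U)$.

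For the ``in particular'' clause, suppose $\dim\mathsf P(U)\le n$. Every GSC pair then has its Lagrangian $L$ contained in $\mathsf P(U)$, so $L=\mathsf P(U)$ by dimension. The output $M$ is determined by $L$, since $\A_M=U\A_LU^\dagger$. Hence $|\mathscr L(U)|\le1$.

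For (3), I will apply Proposition~\ref{lem:tool-monomial} to $U\in\C(3)$ and a pair $(L,M)$, which gives $U\doteq C_1\Pi_\pi DC_2$ with $C_2\A_LC_2^\dagger=\A_Z$. Here $\pi=1$ or $\pi(x,y,z)=(x,y,z+xy)$. By the same transport, it suffices to show $L_Z\subseteq\mathsf C(\Pi_\pi D)$. Since $D$ fixes each $Z^b$, it remains to check that $\Pi_\pi$ conjugates $Z$-type Paulis to Cliffords. This is clear for $\pi=1$. For the Toffoli permutation, $Z_1$ and $Z_2$ are fixed, while $\Pi_\pi Z_3\Pi_\pi^\dagger$ acts by the phase $(-1)^{z+xy}$, i.e.\ it equals $\CZ_{12}Z_3$, which is Clifford. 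Products of these are again Clifford, so $L_Z\subseteq\mathsf C(\Pi_\pi D)$.

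For (4), take $n=1$. By (2), every GSC pair $(L,M)$ has $L\subseteq\mathsf P(U)$. If there were two pairs with distinct input lines $L\neq L'$, then $\mathsf P(U)\supseteq L+L'=V_1$, so $U$ would be Clifford. Hence all pairs share the same $L$, and as above $M$ is determined by $L$, so there is at most one pair.

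The only step that needs care is (3): it relies on the three-qubit flag classification built into Proposition~\ref{lem:tool-monomial}, together with the small computation that Toffoli conjugates $Z_3$ to $\CZ_{12}Z_3$.
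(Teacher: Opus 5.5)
Your proof is correct and follows the paper's argument essentially step for step. Part (1) is the observation that the corrections on $L$ span $\A_M$. Parts (2) and (3) use the monomial form $U\doteq C_1\Pi_\pi DC_2$ from the first step of Proposition~\ref{lem:tool-monomial}: absorbing the affine $\Pi_\pi$ into $C_1$ is the same as the paper's remark that $\Pi_\pi Z^{u'}\Pi_\pi^\dagger$ is a diagonal Pauli, and for the Toffoli permutation a diagonal Clifford. Part (4) comes from $\mathsf P(U)\subsetneq V_1$. One cosmetic point: the transport $\mathsf P(C_1UC_2)=\gamma_{C_2}^{-1}\mathsf P(U)$, and likewise for $\mathsf C$, is a direct one-line check rather than a consequence of Lemma~\ref{lem:tool-transport}(1), which concerns GSC pairs.
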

\begin{proof}
If $L\subseteq\mathsf P(U)$, then $U\A_LU^\dagger$ is spanned by the $2^n$ commuting Paulis $UP_vU^\dagger$, $v\in L$, where the $v$'s form a Lagrangian $M$, so $(L,M)\in\mathscr L(U)$.

We next prove~\eqref{itm:unique2} and~\eqref{itm:unique3} together. Let $(L,M)\in\mathscr L(U)$. As in the proof of Proposition~\ref{lem:tool-monomial}, write $U\doteq C_1\Pi_\pi DC_2$ with $C_2\A_LC_2^\dagger=\A_Z$, and in case~\eqref{itm:unique3} choose this form as in Proposition~\ref{lem:tool-monomial}. For $u\in L$ we have $C_2P_uC_2^\dagger=\pm Z^{u'}$, and $D$ commutes with $Z^{u'}$, so
\[
UP_uU^\dagger\doteq C_1\Pi_\pi Z^{u'}\Pi_\pi^\dagger C_1^\dagger,
\qquad
\Pi_\pi Z^{u'}\Pi_\pi^\dagger=\diag_x(-1)^{u'\cdot\pi^{-1}(x)}.
\]
If $n\le2$, then $\pi$ is affine, since every permutation of $\F_2^n$ is, so $u'\cdot\pi^{-1}(x)$ is affine in $x$ and the diagonal factor is a Pauli. Hence $L\subseteq\mathsf P(U)$, and~\eqref{itm:unique1} gives the converse. If $\dim\mathsf P(U)\le n$, a Lagrangian contained in $\mathsf P(U)$ equals $\mathsf P(U)$, so there is at most one input Lagrangian, and the input determines the output through $\A_M=U\A_LU^\dagger$. In case~\eqref{itm:unique3}, Proposition~\ref{lem:tool-monomial}(1) gives $\pi=1$ or $\pi(x,y,z)=(x,y,z+xy)$, and $\pi^{-1}=\pi$ has degree at most two. Then the diagonal factor is a diagonal Clifford, so $L\subseteq\mathsf C(U)$.

Lastly, recall that $\mathsf P(U)$ is a subspace of $V_1$. But $\mathsf P(U) \subsetneq V_1$, since otherwise $U$ would be Clifford. Hence $\dim\mathsf P(U)\le1$, and~\eqref{itm:unique2} gives $|\mathscr L(U)|\le1$.
\end{proof}

The next lemma gives the \emph{(P) parity} technique.
\begin{lemma}\label{lem:tool-two-group}
\renewcommand{\labelenumi}{(\theenumi)}
The following statements hold.
\begin{enumerate}
\item Every $2$-subgroup of $\Sp(V_n)$ fixes a Lagrangian. Hence, if $U_0,\dots,U_s\in\C_2(n)$ and the maps $\gamma_{U_0}^{-1}\gamma_{U_a}$ generate a $2$-group, then $\bigcap_a\mathscr L(U_a)\ne\varnothing$.
\item Let $U$ be a gate such that $\langle\Pau(n),U\rangle$ is a finite $2$-group modulo scalars. Then $U$ normalizes a Pauli MASA\@. If $U$ is Clifford, then $\gamma_U$ is unipotent, so $\gamma_U$ fixes a nonzero vector in every nonzero $\gamma_U$-invariant subspace. If $U$ normalizes $\A_Z$, its permutation preserves a complete affine flag.
\end{enumerate}
\end{lemma}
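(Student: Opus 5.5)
For part (1), I will run the parity argument from the proof of Theorem~\ref{thm:two-group-chamber} on the set of Lagrangians instead of Pauli MASAs. Let $H\le\Sp(V_n)$ be a $2$-group acting on the set $\Omega$ of Lagrangians of $V_n$. By Lemma~\ref{lem:lagrangian-count}, $|\Omega|=N_n$ is odd. Every nontrivial $H$-orbit has size a positive power of two, so $|\Omega^H|\equiv N_n\equiv1\pmod 2$, and $H$ fixes some Lagrangian $L$.

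For the consequence, take $H=\langle\gamma_{U_0}^{-1}\gamma_{U_a}\rangle$ and a Lagrangian $L$ fixed by $H$. Then $\gamma_{U_a}L=\gamma_{U_0}L$ for every $a$. Since $U_a\A_LU_a^\dagger=\A_{\gamma_{U_a}L}$ for Clifford $U_a$, the pair $(L,\gamma_{U_0}L)$ lies in every $\mathscr L(U_a)$. This also proves the ``if'' direction of tool~(P).

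For part (2), set $G=\langle\Pau(n),U\rangle$ modulo scalars. By hypothesis this is a finite $2$-subgroup of $\mathrm{PU}(2^n)$ containing $\overline{\Pau}(n)$. Theorem~\ref{thm:two-group-chamber} then says $G$ fixes a Pauli MASA, so in particular $U$ normalizes it.

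If $U$ is Clifford, $\gamma_U$ has $2$-power order, because it is the image of $[U]$ in the finite $2$-group $G/\overline{\Pau}(n)$. An element of $2$-power order over $\F_2$ satisfies $(\gamma_U-1)^{2^a}=\gamma_U^{2^a}-1=0$, so it is unipotent. On a nonzero invariant subspace $W$, the restriction $\gamma_U-1$ is nilpotent, hence has nonzero kernel, which gives a nonzero fixed vector in $W$.

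Finally, suppose $U$ normalizes $\A_Z$. Every element of $G$ normalizes $\A_Z$, since Paulis do, so each element of $G$ is monomial, $\Pi_\pi D$ up to scalar. Taking the underlying permutation is a homomorphism from $G$ to $\operatorname{Sym}(\F_2^n)$. Its image is a $2$-group, it contains the translations $\mathcal T_n$ (the images of the $X^a$), and it contains the permutation $\pi$ of $U$. Lemma~\ref{lem:tool-flags}(2) then gives a complete affine flag preserved by $\pi$.

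The main point needing care is making this homomorphism well defined. I will check that the permutation of a monomial unitary is independent of scalars, and that permutations compose correctly, $\Pi_\pi D\,\Pi_\rho D'=\Pi_{\pi\rho}D''$ for some diagonal $D''$.
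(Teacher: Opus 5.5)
Your proposal is correct and follows the paper's proof step for step. Both use the odd number of Lagrangians against $2$-power orbit sizes for (1), Theorem~\ref{thm:two-group-chamber} for the normalized Pauli MASA, $(\gamma_U-I)^{2^k}=\gamma_U^{2^k}-I=0$ over $\F_2$ for unipotence, and the permutation image of the group together with Lemma~\ref{lem:tool-flags}(2) for the flag. The only cosmetic difference is how you get the $2$-power order of $\gamma_U$: you pass through $G/\overline{\Pau}(n)$, which is well defined because a Clifford $U$ normalizes the Paulis, whereas the paper just notes that $U^{2^k}$ is scalar.
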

\begin{proof}
The orbits of a $2$-group have $2$-power size, and the number of Lagrangians is odd by Lemma~\ref{lem:lagrangian-count}. Then a Lagrangian $L$ fixed by every $\gamma_{U_0}^{-1}\gamma_{U_a}$ gives the common GSC pair $(L,\gamma_{U_0}L)$.

By Theorem~\ref{thm:two-group-chamber}, the group $\langle\Pau(n),U\rangle$ fixes a Pauli MASA\@. If $U$ is Clifford, $U^{2^k}$ is scalar for some $k$, so $(\gamma_U-I)^{2^k}=\gamma_U^{2^k}-I=0$ over $\F_2$. If $U$ normalizes $\A_Z$, so does every element of the group, and its permutation image is a $2$-group containing the translations $x\mapsto x+c$ of the gates $X^c$. Applying Lemma~\ref{lem:tool-flags}(2) completes the proof.
\end{proof}

The following lemma treats gates on one qubit.
\begin{lemma}\label{lem:tool-one-qubit}
\renewcommand{\labelenumi}{(\theenumi)}
The following statements hold.
\begin{enumerate}
\item Let $1 = \pi_0,\pi_1,\dots,\pi_s$ be permutations of a three-element set, any two of which agree at some point. Then they have a common fixed point.
\item If $U_0,\dots,U_s\in\mathrm U(2)$ satisfy $\mathscr L(U_a)\cap\mathscr L(U_b)\ne\varnothing$ for all $a,b$, then $\bigcap_a\mathscr L(U_a)\ne\varnothing$. \label{itm:one2}
\end{enumerate}
\end{lemma}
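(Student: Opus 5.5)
The plan is to prove part (1) by a short case analysis on $\operatorname{Sym}$ of a three-element set. Part (2) then reduces to part (1) via Lemma~\ref{lem:tool-unique} and the symplectic action $\gamma_C$ of Cliffords.

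\emph{Part (1).} Since $\pi_0=1$ and each $\pi_i$ agrees with $\pi_0$ at some point, every $\pi_i$ has a fixed point. So each $\pi_i$ is either the identity or a transposition, because the $3$-cycles are fixed-point-free. Next we check that two distinct transpositions never agree at any point. Take $(a\,b)$ and $(a\,c)$: at $a$ they give $b\ne c$, at $b$ they give $a\ne b$, and at $c$ they give $c\ne a$. By the pairwise hypothesis, all non-identity $\pi_i$ are therefore one and the same transposition $(a\,b)$. Its fixed point $c$ is fixed by every $\pi_i$. If all $\pi_i$ are the identity, any point works.

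\emph{Part (2).} The Lagrangians of $V_1$ are exactly the three lines $\Span\{x\}$, $\Span\{z\}$ and $\Span\{x+z\}$. We split into two cases.

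\emph{Some $U_a$ is not Clifford.} By Lemma~\ref{lem:tool-unique}(4), $|\mathscr L(U_a)|\le1$. Taking $b=a$ in the hypothesis gives $\mathscr L(U_a)\ne\varnothing$, so $\mathscr L(U_a)=\{(L,M)\}$. Each $\mathscr L(U_b)$ meets $\mathscr L(U_a)$, hence contains $(L,M)$. This is the uniqueness technique \tool{U}.

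\emph{All $U_a$ are Clifford.} Write $\gamma_a=\gamma_{U_a}\in\Sp(V_1)$. First we check that $\mathscr L(U_a)=\{(L,\gamma_aL)\}$. Conjugation by $U_a$ sends $\A_L$ to $\A_{\gamma_aL}$, and $\A_{\gamma_aL}=\A_M$ forces $M=\gamma_aL$, because $\A_M$ determines $M$ as the set of labels of the Paulis it contains (Proposition~\ref{lem:lagrangian-expanded}). Now let $\gamma_a$ act on the three-element set of Lagrangians and put $\pi_a=\gamma_0^{-1}\gamma_a$, so that $\pi_0=1$. The condition $\mathscr L(U_a)\cap\mathscr L(U_b)\neq\varnothing$ says $\gamma_aL=\gamma_bL$ for some $L$, that is, $\pi_a$ and $\pi_b$ agree at $L$. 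Part (1) then gives a Lagrangian $L$ fixed by every $\pi_a$, and so $(L,\gamma_0L)\in\bigcap_a\mathscr L(U_a)$.

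The only point needing care is the identification $\mathscr L(U_a)=\{(L,\gamma_aL)\}$ for Clifford $U_a$, which is the distinctness of Pauli MASAs for distinct Lagrangians.
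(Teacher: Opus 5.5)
Your proof is correct and follows the paper's argument essentially step for step. In part (1) you check directly that two distinct transpositions disagree at every point, where the paper instead notes that their product is a fixed-point-free $3$-cycle. In part (2) you make the same split as the paper: the non-Clifford case is settled by Lemma~\ref{lem:tool-unique}(4) and uniqueness, and the all-Clifford case reduces to part (1) through $\pi_a=\gamma_0^{-1}\gamma_a$ acting on the three Lagrangians of $V_1$.
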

\begin{proof}
First, notice that two permutations $\pi$ and $\pi'$ agree at a point exactly when $\pi'^{-1}\pi$ fixes it. On three points the permutations without a fixed point are the two $3$-cycles, and the product of two distinct transpositions is a $3$-cycle. Taking $\pi'=\pi_0=1$, every $\pi_a$ is the identity or a transposition, and two transpositions among the $\pi_a$ agree somewhere, so they are equal. Hence all $\pi_a$ lie in $\{1,\tau\}$ for one transposition $\tau$, and the point fixed by $\tau$ is fixed by every $\pi_a$.

Now we prove~\eqref{itm:one2}. Let $\Lambda$ be the set of the three Lagrangians of $V_1$, the lines spanned by the labels of $X$, $Y$ and $Z$. The hypothesis with $b=a$ gives $\mathscr L(U_a)\ne\varnothing$ for every $a$. Consider the following cases.
\begin{itemize}
\item \emph{Some $U_a$ is not Clifford:} then $\mathscr L(U_a)$ has at most one element by Lemma~\ref{lem:tool-unique}(4), so $\mathscr L(U_a)=\{(L,M)\}$. For every $b$ the nonempty set $\mathscr L(U_a)\cap\mathscr L(U_b)$ is $\{(L,M)\}$, so $(L,M)\in\bigcap_b\mathscr L(U_b)$.
\item \emph{All $U_a$ are Clifford:} let $\gamma_a=\gamma_{U_a}$, so that $\mathscr L(U_a)=\{(L,\gamma_aL):L\in\Lambda\}$, and let $\pi_a=\gamma_0^{-1}\gamma_a$, a permutation of $\Lambda$ with $\pi_0=1$. A common GSC pair of $U_a$ and $U_b$ is a pair $(L,\gamma_aL)=(L,\gamma_bL)$, that is, a point $L\in\Lambda$ with $\pi_aL=\pi_bL$. By (1) some $L\in\Lambda$ is fixed by every $\pi_a$. Then $\gamma_aL=\gamma_0L$ for all $a$, and $(L,\gamma_0L)\in\bigcap_a\mathscr L(U_a)$. \qedhere
\end{itemize}
\end{proof}

\begin{lemma}\label{lem:tool-two-paulis}
Let $U\in\C(3)$, and consider commuting Pauli gates $P_1=P_{v_1}$, $P_2 = P_{v_2}$ where $v_1$ and $v_2$ are linearly independent and $UP_jU^\dagger\in\mathcal P(3)$ for $j=1,2$. Then $U\in\GSC(3)$.
\end{lemma}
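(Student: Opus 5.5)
The plan is to use Clifford multiplication to make $U$ block diagonal with one-qubit blocks. The two-qubit case then supplies pairwise common GSC pairs, and Lemma~\ref{lem:tool-one-qubit}(2) glues them together.

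\emph{Reduction.} Let $Q_j=UP_jU^\dagger$ for $j=1,2$. These are Hermitian Paulis that commute, since $P_1,P_2$ commute. Their labels $w_1,w_2$ are linearly independent, because $U$ is injective modulo scalars on the span of $P_1,P_2$. Since $\Sp(V_3)$ acts transitively on pairs of independent orthogonal vectors and every symplectic map is induced by a Clifford, choose $C_2,C_1\in\C_2(3)$ with $C_2^\dagger P_jC_2\doteq Z_j$ and $C_1Q_jC_1^\dagger\doteq Z_j$. Replacing $U$ by $C_1UC_2$ changes neither the level nor membership in $\GSC(3)$, by Lemma~\ref{lem:basic-operations}(1). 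We may therefore assume $UZ_jU^\dagger=\pm Z_j$ for $j=1,2$. Multiplying on the left by $X_1$ and/or $X_2$ fixes the signs, so $U$ commutes with $Z_1$ and $Z_2$.

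\emph{Block decomposition.} With $j=2$ in Lemma~\ref{lem:tool-blocks}, we get $U=\bigoplus_{x\in\F_2^2}B_x$ with $B_x\in\mathrm U(2)$. By Lemma~\ref{lem:tool-blocks}(1), $U\in\GSC(3)$ as soon as $\bigcap_x\mathscr L(B_x)\neq\varnothing$. By Lemma~\ref{lem:tool-one-qubit}(2), it suffices to show $\mathscr L(B_x)\cap\mathscr L(B_y)\neq\varnothing$ for every pair $x\neq y$; the case $x=y$ follows from the case $x\neq y$.

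\emph{Pairwise step.} Fix $x\neq y$ and suppose $U\in\C_k(3)$. By Lemma~\ref{lem:tool-blocks}(2), the two-qubit gate $W=\diag(B_x,B_y)$ lies in $\C_k(2)$. Hence $W\in\GSC(2)$ by the two-qubit result of~\cite{sc-ZCC2008}, which is quoted in Section~\ref{sec:gsc}. Now $W$ commutes with $Z_1$ on its first qubit, so Lemma~\ref{lem:tool-blocks}(1) with $n=2$, $j=1$ turns $\mathscr L(W)\neq\varnothing$ into $\mathscr L(B_x)\cap\mathscr L(B_y)\neq\varnothing$. This completes the argument.

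\emph{Expected obstacle.} The only delicate points are the bookkeeping in the reduction and the identification of the pair $(\mathcal H_x\oplus\mathcal H_y)$ with a two-qubit system. For the reduction, we need the images to be honest Paulis (not only up to scalar), and the sign corrections must be Paulis, hence Clifford. For the identification, the logical $X$ is $X^{x+y}$ and the logical $Z$ is some $Z_i$; this is exactly what Lemma~\ref{lem:tool-blocks}(2) provides, so no further input is needed.
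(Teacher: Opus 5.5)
Your proposal is correct and follows the paper's proof essentially step for step. You Clifford-reduce so that $U$ commutes with $Z_1,Z_2$, split into four one-qubit blocks, and get a pairwise common GSC pair from the two-qubit theorem via Lemma~\ref{lem:tool-blocks}(1)--(2). You then glue with Lemma~\ref{lem:tool-one-qubit}(2) and conclude with Lemma~\ref{lem:tool-blocks}(1). Your sign fix by left multiplication with $X_1$, $X_2$ and your remark that the diagonal case $x=y$ of the gluing hypothesis is automatic are harmless extra bookkeeping that the paper leaves implicit.
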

\begin{proof}
Choose Clifford gates $C_1,C_2$ with $C_2Z_jC_2^\dagger=P_j$ and $C_1UP_jU^\dagger C_1^\dagger=Z_j$. Then $V=C_1UC_2$ commutes with $Z_1,Z_2$, so $V=\bigoplus_{u\in\F_2^2}V_u$ with one-qubit blocks $V_u$ on qubit $3$. For $u\ne u'$, Lemma~\ref{lem:tool-blocks}(2) gives $\diag(V_u,V_{u'})\in\C(2)$, which has a GSC pair by the two-qubit theorem~\cite{sc-ZCC2008}, and Lemma~\ref{lem:tool-blocks}(1) turns it into a common GSC pair of $V_u$ and $V_{u'}$. By Lemma~\ref{lem:tool-one-qubit}(2) the four blocks share a GSC pair, so $V\in\GSC(3)$ by Lemma~\ref{lem:tool-blocks}(1), and $U\in\GSC(3)$.
\end{proof}

The next proposition gives the \emph{(G) Galois} technique.
\begin{proposition}\label{lem:tool-galois}
Let $r\ge3$ and let $\sigma\in\operatorname{Gal}(K_r/\Q)$ fix $i$. For a gate $U$ that is a scalar multiple of a matrix over $K_r$, let $\sigma(U)$ be the entrywise image of that matrix, which is defined up to scalar.
\renewcommand{\labelenumi}{(\theenumi)}
\begin{enumerate}
\item $\sigma(C)\doteq C$ for every $C\in\C_2(n)$. \label{itm:galois1}
\item If $(L,M)\in\mathscr L(U)$, then $U^{-1}\sigma(U)\in\A_L$ and $\sigma(U)U^{-1}\in\A_M$. In particular, if $U^{-1}\sigma(U)\doteq P_a$, then $a\in L$.
\item If $\sigma(UP_vU^\dagger)\doteq UP_vU^\dagger$ for every $v\in V_n$, then $\sigma(U)\doteq UP_a$ for some $a\in V_n$.
\end{enumerate}
\end{proposition}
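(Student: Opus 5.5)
The three parts will be handled separately; each uses only that $\sigma$ fixes $i$ and acts entrywise as a ring automorphism on matrices over $K_r$.

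\emph{Part (1).} Write $C\doteq C_0$ with $C_0$ over $\Q(i)$. This is possible because Clifford gates are generated by $H$, $S$ and $\CNOT$, all scalar multiples of matrices over $\Q(i)$ (for $H$ drop the factor $1/\sqrt2$). Then $\sigma(C_0)=C_0$, since $\sigma$ fixes $\Q(i)$, so $\sigma(C)\doteq C$. The one point to check is that "defined up to scalar" is consistent: if $C\doteq C_0\doteq C_1$ with both over $K_r$, then $C_1=cC_0$ with $c\in K_r^\times$, so $\sigma(C_1)=\sigma(c)\sigma(C_0)\doteq\sigma(C_0)$.

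\emph{Part (2).} Choose $W\doteq U$ over $K_r$. Each $P_v$ has entries in $\Q(i)$, so $\sigma(P_v)=P_v$. Hence $\sigma$ maps $\A_L\cap\operatorname{Mat}(K_r)$ into $\A_L$, because this algebra is spanned over $K_r$ by the $P_v$, $v\in L$. The key step is that $W\A_LW^{-1}=\A_M$ means that for each $v\in L$ we have $WP_vW^{-1}=\sum_{u\in M}c_uP_u$ with $c_u\in K_r$; the coefficients lie in $K_r$ because $WP_vW^{-1}$ is over $K_r$ and the trace formula gives $c_u=2^{-n}\tr(\cdot\,P_u)$. Applying $\sigma$ gives
\[
\sigma(W)P_v\sigma(W)^{-1}\in\A_M .
\]
So $\sigma(W)$ also carries $\A_L$ into $\A_M$, and by dimension onto it. Then $W^{-1}\sigma(W)$ normalizes $\A_L$; more to the point, $\sigma(W)W^{-1}$ normalizes $\A_M$.

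Normalizing alone is not enough, so I would use commutation. Take $a\in\A_L$ over $K_r$, so that $\sigma(a)\in\A_L$. For such $a$,
\[
\sigma(W)\sigma(a)\sigma(W)^{-1}=\sigma(WaW^{-1})\in\A_M
\quad\text{and}\quad W\sigma(a)W^{-1}\in\A_M .
\]
This alone does not force equality. Instead I would use the structure from Proposition~\ref{lem:tool-monomial}: $W\doteq C_1\Pi_\pi DC_2$ with $C_2\A_LC_2^\dagger=\A_Z$. By part (1), and since $\Pi_\pi$ is rational, we get $\sigma(W)\doteq C_1\Pi_\pi\sigma(D)C_2$ once the Clifford factors are normalized over $\Q(i)$. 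Then
\[
W^{-1}\sigma(W)\doteq C_2^{-1}D^{-1}\sigma(D)C_2\in C_2^{-1}\A_ZC_2=\A_L .
\]
Similarly, $\sigma(W)W^{-1}\doteq C_1\Pi_\pi\sigma(D)D^{-1}\Pi_\pi^{-1}C_1^{-1}\in\A_M$. This decomposition into $C_1\Pi_\pi D C_2$ only needs the first paragraph of the proof of Proposition~\ref{lem:tool-monomial}, which works for any $n$.

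The main obstacle is justifying that $C_1,C_2,\Pi_\pi,D$ can be chosen with $D$ over $K_r$ up to scalar. For this, take $C_i$ over $\Q(i)$ up to scalar; then $D\doteq\Pi_\pi^{-1}C_1^{-1}WC_2^{-1}$ is over $K_r$. For the last claim, if $U^{-1}\sigma(U)\doteq P_a$ lies in $\A_L$, then linear independence of the Paulis forces $a\in L$.

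\emph{Part (3).} Set $h_v=WP_vW^{-1}$, which is over $K_r$. The hypothesis $\sigma(h_v)\doteq h_v$ reads $\sigma(W)P_v\sigma(W)^{-1}\doteq WP_vW^{-1}$. Hence $Q=W^{-1}\sigma(W)$ satisfies $QP_vQ^{-1}\doteq P_v$ for all $v$. This $Q$ is unitary up to scalar: since $\sigma$ commutes with complex conjugation on the abelian field $K_r$, $\sigma(W)$ is a scalar multiple of a unitary. Lemma~\ref{lem:sc-pauli-trivial} then gives $Q\doteq P_a$, that is, $\sigma(U)\doteq UP_a$.
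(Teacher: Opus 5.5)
Your proposal is correct and follows essentially the paper's route: (1) via generators that are scalar multiples of matrices over $\Q(i)$, and (2) via the monomial decomposition $C_1^\dagger UC_2^\dagger=\Pi_\pi D$ with the Clifford factors normalized over $\Q(i)$ (your opening detour showing that $\sigma(W)$ carries $\A_L$ onto $\A_M$ is not needed and can be dropped). In (3) you cite Lemma~\ref{lem:sc-pauli-trivial}, which is why you must first show that $W^{-1}\sigma(W)$ is unitary up to scalar, and you do this correctly; the paper instead runs the character argument $c_v=(-1)^{\omega(a,v)}$ directly, which works for any invertible $Y=U^{-1}\sigma(U)$ and needs no unitarity.
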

\begin{proof}
Every Clifford gate is a scalar multiple of a matrix over $K_2=\Q(i)$ since each Hadamard can be scaled by $\sqrt2$, while phase and $\CNOT$ gates already have entries in $K_2$. As $\sigma$ fixes $K_2$, $\sigma(C)\doteq C$.

Next, choose $C_1,C_2\in\C_2(n)$ with $C_2\A_LC_2^\dagger=\A_Z$ and $C_1\A_ZC_1^\dagger=\A_M$. Then $C_1^\dagger UC_2^\dagger=\Pi D$ is monomial, with $\Pi$ a permutation matrix and $D$ diagonal, and $\sigma(\Pi D)=\Pi\sigma(D)$. By~\eqref{itm:galois1},
\[
U^{-1}\sigma(U)\doteq C_2^\dagger\bigl(D^{-1}\sigma(D)\bigr)C_2
\in C_2^\dagger\A_ZC_2=\A_L,
\]
and $\sigma(U)U^{-1}=U\bigl(U^{-1}\sigma(U)\bigr)U^{-1}\in U\A_LU^\dagger=\A_M$. Finally, $P_a\in\A_L$ exactly when $a\in L$.

Lastly, Pauli matrices have entries in $\Z[i]$, so $\sigma(P_v)=P_v$ and $\sigma(U)P_v\sigma(U)^{-1}\doteq UP_vU^{-1}$. Thus $Y=U^{-1}\sigma(U)$ satisfies $YP_vY^{-1}=c_vP_v$ with $c_v^2=1$, and $v\mapsto c_v$ is a character, $c_v=(-1)^{\omega(a,v)}$. Then $P_aY$ commutes with every Pauli, so it is scalar, and $\sigma(U)\doteq UP_a$.
\end{proof}

The following example applies the Galois technique.
\begin{example}\label{ex:strategy-galois}
Let $\mathrm{CT}=\operatorname{diag}(1,1,1,\zeta_8)$, and let $\sigma(\zeta_8)=-\zeta_8$, which fixes $i=\zeta_8^2$. Then $\sigma(\mathrm{CT})=\mathrm{CT}\cdot\CZ$, so by Proposition~\ref{lem:tool-galois}(2), every $(L,M)\in\mathscr L(\mathrm{CT})$ has
\[
\CZ=\tfrac12(I+Z_1+Z_2-Z_1Z_2)\in\A_L\cap\A_M.
\]
Hence $z_1,z_2\in L\cap M$, and $\mathscr L(\mathrm{CT})=\{(\Span\{z_1,z_2\},\Span\{z_1,z_2\})\}$. The Galois constraint alone shows that the GSC pair of $\mathrm{CT}$ is unique. Later, in Lemma~\ref{app:return-block}(3) and in Cases~1b and~1c of the proof of Lemma~\ref{app:return-two}, the Galois cocycle is a Pauli $P_a$, which puts $a$ in every input Lagrangian, and in Lemma~\ref{lem:galois-normal-form} it is the diagonal gate $\diag_x i^{p\cdot x}(-1)^{q(x)}$, a sign gate exactly when $p=0$.
\end{example}

\begin{proposition}\label{app:lem-R_n}
    For $n=3,4$, let $\mathscr R_n$ be all the gates with a Clifford conjugate of the form
\begin{equation}\label{core:eq:row}
M|x\rangle=\zeta_8^\lambda i^{f(x)}|\pi x\rangle,\quad
\lambda\in\{0,1\},\quad f:\mathbb F_2^n\to\mathbb Z/4,
\end{equation}
where $\pi$ preserves a complete affine flag, $\pi^2=1$, $f+f\circ\pi=-\lambda\pmod4$, and $\operatorname{tr}M=0$. Then $\mathscr R_n$ is the class of traceless finite-level Gaussian GSC Hermitian involutions. For every $h\in\mathscr R_n$, the group $\langle\Pau(n),h\rangle$ is a finite $2$-group modulo scalars.
\end{proposition}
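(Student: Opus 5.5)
The plan is to prove the two inclusions separately and obtain the finite $2$-group statement along the way. The main tool is monomial matrices: a Pauli MASA conjugated to $\A_Z$ turns everything into permutation-times-diagonal form.

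\emph{Gates of the form~\eqref{core:eq:row} have the listed properties.} Let $M$ be as in~\eqref{core:eq:row}.
\begin{itemize}
\item \emph{Involution and Hermitian.} We have $M^2\ket x=\zeta_8^{2\lambda}i^{f(x)+f(\pi x)}\ket{x}=i^{\lambda-\lambda}\ket x$, using $\pi^2=1$ and $f+f\circ\pi=-\lambda$. So $M^2=I$, and a unitary involution is Hermitian.
\item \emph{Tracelessness} is assumed.
\item \emph{Gaussian:} $\zeta_8^{-\lambda}M$ has entries in $\mu_4$.
\item \emph{GSC:} $M$ normalizes $\A_Z$, so $(L_Z,L_Z)\in\mathscr L(M)$.
\end{itemize}
All of these properties are preserved under Clifford conjugation, by Lemma~\ref{lem:basic-operations}(1) and Lemma~\ref{lem:tool-galois}(1).

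\emph{The finite $2$-group claim.} Fix the complete affine flag $\mathcal F$ preserved by $\pi$. Let $\mathcal M$ be the group of monomial matrices $\Pi_\rho E$ with $\rho\in\operatorname{Stab}(\mathcal F)$ and $E$ diagonal with entries in $\mu_8$, taken modulo scalars. By Lemma~\ref{lem:tool-flags}(1) the permutation part is a $2$-group containing the translations $\mathcal T_n$, so $\mathcal M$ is a finite $2$-group. It contains $\Pau(n)$, since $X^a$ is a translation and $Z^b$ has $\pm1$ entries, and it contains $M$. Hence $\langle\Pau(n),M\rangle\le\mathcal M$ is a finite $2$-group. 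For a Clifford conjugate $h=CMC^\dagger$, we have $\langle\Pau(n),h\rangle=C\langle\Pau(n),M\rangle C^\dagger$, because $C$ normalizes $\Pau(n)$.

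\emph{Finite level.} This is the step I expect to be the main obstacle, since $\mathcal M$ itself is not inside any single level. I would induct on a complexity measure for $M$: the sum of the algebraic degrees of the triangular components $\pi_j(x)=x_j+g_j(x_1,\dots,x_{j-1})$ of $\pi$, together with the degree of $f$. The key computation is that a correction $MP_vM^\dagger$ is again monomial, with permutation part $\pi t_a\pi^{-1}$ and a $\mu_4$-valued diagonal part. Because $\pi$ is triangular, $\pi t_a\pi$ is triangular with the $g_j$ replaced by their discrete derivatives, so the degrees strictly drop. After finitely many steps the corrections become Clifford. Alternatively, for $n\le4$ one could quote the finite-level classification of such monomial gates in~\cite{Anderson_2024} and~\cite{sc-AC2025}.

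\emph{Every traceless finite-level Gaussian GSC Hermitian involution $h$ lies in $\mathscr R_n$.} Take $(L,M)\in\mathscr L(h)$ and write $h\doteq C_1\Pi_\pi DC_2$ as in Proposition~\ref{lem:tool-monomial}, with $\pi$ preserving a flag and the ratios of the entries of $D$ in $\mu_4$. Then
\[
hP_vh^\dagger=C_1\Pi_\pi D\,(C_2P_vC_2^\dagger)\,D^\dagger\Pi_\pi^\dagger C_1^\dagger,
\]
and $C_2P_vC_2^\dagger$ is a Pauli. So $C_1^\dagger\Gamma(h)C_1$ lies in a group of the type $\mathcal M$ above, with $\mu_4$ phases. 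This uses that $C_1^\dagger\Pau(n)C_1=\Pau(n)$ and that Paulis are monomial with translation permutations. Hence $\Gamma(h)$ is a finite $2$-group.

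Since $h^2=I$, the argument in the last part of Theorem~\ref{thm:level-four-gsc} applies: $h$ normalizes $\Gamma(h)$, so $\langle\Gamma(h),h\rangle$ is a finite $2$-group containing $\Pau(n)$. By Theorem~\ref{thm:two-group-chamber} it fixes a Pauli MASA $\A_N$.

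Choose a Clifford $C$ with $C\A_NC^\dagger=\A_Z$. Then $ChC^\dagger$ normalizes $\A_Z$ and $\langle\Pau(n),ChC^\dagger\rangle$ is a finite $2$-group. So $ChC^\dagger=\Pi_{\pi'}D'$, where $\pi'$ preserves a complete affine flag by Lemma~\ref{lem:tool-two-group}(2), and the entries of $D'$ have ratios in $\mu_4$ because $h$ is Gaussian.

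It remains to fix the phases. Write $D'=c\,\diag_x i^{f(x)}$. Then $(ChC^\dagger)^2=I$ gives $\pi'^2=1$ and $c^2i^{f(x)+f(\pi'x)}=1$ for all $x$. Hence $c^2\in\mu_4$, so $c\in\mu_8$. After absorbing a power of $i$ into $f$ we get $c=\zeta_8^\lambda$ with $\lambda\in\{0,1\}$, and then $f+f\circ\pi'\equiv-\lambda\pmod 4$. Tracelessness is preserved under conjugation. This is exactly~\eqref{core:eq:row}.
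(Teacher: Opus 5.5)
\textbf{Verdict.} Your treatment of the converse inclusion and of the $2$-group claim follows the paper closely. That includes the monomial $2$-group $\mathcal M_F(t)$, the use of $h^2=I$ to make $\Gamma(h)\langle h\rangle$ a finite $2$-group, Theorem~\ref{thm:two-group-chamber}, and the phase normalization $c\in\mu_8$. The forward inclusion, however, has a genuine gap: you never prove that a gate of the form~\eqref{core:eq:row} has finite level. Your induction on degrees does not work as stated, for three reasons.
\begin{itemize}
\item The components of $\pi t_a\pi^{-1}$ are discrete derivatives of the $g_j$ \emph{evaluated at the coordinates of $\pi^{-1}(y)$}. These coordinates are nonlinear in $y$, so the derivative observation alone does not lower the degree in $y$.
\item The diagonal part of a correction picks up the factor $(-1)^{b\cdot\pi^{-1}(y)}$, whose degree is governed by $\pi$ rather than by $f$. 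For the Toffoli permutation with constant phases, the correction at $Z_3$ is $\diag\bigl((-1)^{z+xy}\bigr)$, so your total measure stays at $2$.
\item Plain degree is the wrong invariant for $\Z/4$-valued phases. For $f=x_1$, the correction of $\diag(i^{f})$ at $X_1$ is exactly $X_1\diag(i^{1-2x_1})$, whose phase still has degree one. What actually decreases is a $2$-adically weighted degree, as in~\cite{CGK2017}.
\end{itemize}
Without a well-founded measure that provably decreases, the claim that the corrections eventually become Clifford is unproved. The fallback of quoting~\cite{Anderson_2024,sc-AC2025} does not name a specific statement either.

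\textbf{The missing idea.} It is the one you explicitly rejected: contrary to your remark, $\mathcal M$ \emph{is} contained in a single level. Every finite $2$-group $G\le\mathrm{PU}(2^n)$ containing $\overline{\Pau}(n)$ lies in $[\C_c(n)]$, where $c$ is its nilpotency class. The argument runs up the upper central series:
\begin{itemize}
\item A central element of $G$ commutes projectively with every Pauli, so it is a Pauli by Lemma~\ref{lem:sc-pauli-trivial}. Hence $Z_1(G)\subseteq[\C_1(n)]$.
\item Suppose $Z_j(G)\subseteq[\C_j(n)]$ and $g\in Z_{j+1}(G)$. Then $gPg^{-1}=[g,P]\,P$ with $[g,P]\in Z_j(G)$. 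Right multiplication by a Pauli preserves $\C_j(n)$, so $g\in[\C_{j+1}(n)]$.
\end{itemize}
You have already shown that $\langle\Pau(n),h\rangle$ is a finite $2$-group, so this gives $h\in\C(n)$ at once. This is exactly how the paper argues.

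\textbf{A smaller point in the converse direction.} Gaussianity only places the entry ratios of $D'$ on the unit circle of $\Q(i)$, and that circle contains non-roots of unity such as $(3+4i)/5$. You should first use finite level and the diagonal classification, as in Proposition~\ref{lem:tool-monomial}(2), to put the ratios in some $\mu_{2^t}$. Then $\mu_{2^t}\cap\Q(i)=\mu_4$.
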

\begin{proof}

Write $[\,\cdot\,]$ for the image in $\mathrm{PU}(2^n)$ and $\overline{\mathcal P}(n)=[\mathcal P(n)]$. For a complete affine flag $F$ and $t\ge2$, let $\mathcal M_F(t)$ be the group of all $[\Pi_\rho D]$ with $\rho$ preserving $F$ and $D$ diagonal with entries in $\mu_{2^t}$. By Lemma~\ref{lem:tool-flags}(1), $\mathcal M_F(t)$ is a finite $2$-group. It contains $\overline{\mathcal P}(n)$, since $X^aZ^b$ has permutation $x\mapsto x+a$ and entries in $\mu_4$.

We first show that every finite $2$-group $G\le\mathrm{PU}(2^n)$ containing $\overline{\mathcal P}(n)$ lies in $[\C_c(n)]$ for some $c$. Let $Z_1(G)\le Z_2(G)\le\cdots$ be the upper central series. We show $Z_j(G)\subseteq[\C_j(n)]$ by induction. We have $Z_1(G)\subseteq\overline{\mathcal P}(n)=[\C_1(n)]$. If $Z_j(G)\subseteq[\C_j(n)]$, $g\in Z_{j+1}(G)$ and $P\in\overline{\mathcal P}(n)$, then
\[
gPg^{-1}=[g,P]\,P,\qquad [g,P]=gPg^{-1}P^{-1}\in Z_j(G)\subseteq[\C_j(n)],
\]
and right multiplication by a Pauli preserves $\C_j(n)$. Hence $g\in[\C_{j+1}(n)]$. Since $G$ is nilpotent, $Z_c(G)=G$ for some $c$.

\begin{itemize}
\item \step{Step 1:}{every $h\in\mathscr R_n$ is a traceless finite-level Gaussian GSC Hermitian involution, and $\langle\Pau(n),h\rangle$ is a finite $2$-group modulo scalars.} Write $h=C^\dagger MC$ with $C\in\C_2(n)$ and $M$ as in~\eqref{core:eq:row}. Since $\zeta_8^{2\lambda}=i^\lambda$ and $f(x)+f(\pi x)=-\lambda$,
\[
M^2|x\rangle
=\zeta_8^{2\lambda}i^{f(x)+f(\pi x)}|\pi^2x\rangle
=|x\rangle.
\]
Thus $M$ is a unitary involution, hence Hermitian. It is traceless by assumption, $\zeta_8^{-\lambda}M$ has entries in $\Q(i)$, and $M$ is monomial, so it normalizes $\A_Z$. If $\pi$ preserves the flag $F$, then $[M]\in\mathcal M_F(3)$, so $\langle\overline{\mathcal P}(n),[M]\rangle\le\mathcal M_F(3)$.

Conjugation by $C$ preserves these properties and turns $\A_Z$ into the Pauli MASA $C^\dagger\A_ZC$, which is normalized by $h$, so $h$ is GSC\@. Moreover $[\langle\Pau(n),h\rangle]=[C]^{-1}\langle\overline{\mathcal P}(n),[M]\rangle[C]$ is a finite $2$-group, and the second paragraph of this proof gives $h\in\C(n)$.

\item \step{Step 2:}{every traceless finite-level Gaussian GSC Hermitian involution $h$ lies in $\mathscr R_n$.} We first show that $h$ normalizes a Pauli MASA\@. By Proposition~\ref{lem:tool-monomial} applied to a GSC pair of $h$, we have $h\doteq C_1NC_2$ with $C_1,C_2\in\C_2(n)$ and $N=\Pi_\pi D$, where $\pi$ preserves a complete affine flag $F$ and, $h$ being Gaussian, $[N]\in\mathcal M_F(2)$. Since $C_1^\dagger h=NC_2$ up to scalar and $C_2$ normalizes the Pauli group,
\[
[C_1]^{-1}\,\Gamma(h)\,[C_1]
=\langle\overline{\mathcal P}(n),\,[N]\overline{\mathcal P}(n)[N]^{-1}\rangle
\le\mathcal M_F(2),
\]
so $\Gamma(h)$ is a finite $2$-group. As $h^2=I$, conjugation by $[h]$ swaps $\overline{\mathcal P}(n)$ and $[h]\overline{\mathcal P}(n)[h]^{-1}$, so $[h]$ normalizes $\Gamma(h)$ and $G=\Gamma(h)\langle[h]\rangle$ is a finite $2$-group containing $\overline{\mathcal P}(n)$. By Theorem~\ref{thm:two-group-chamber}, $G$ fixes a Pauli MASA, and in particular $h$ normalizes it.

Let $G$ be as above and choose $C\in\C_2(n)$ mapping the fixed Pauli MASA to $\A_Z$. Every element of $[C]G[C]^{-1}$ normalizes $\A_Z$, so it is a projective monomial, in particular $M=ChC^\dagger=\Pi_\pi D'$.

Consider the homomorphism sending $[\Pi_\rho D]$ to $\rho$. The image of $[C]G[C]^{-1}$ is a $2$-group containing all translations, since $[C]\overline{\mathcal P}(n)[C]^{-1}=\overline{\mathcal P}(n)$. By Lemma~\ref{lem:tool-flags}(2) it preserves a complete affine flag, and so does $\pi$.

$M$ has finite level and is Gaussian, since Clifford conjugation preserves both. Therefore $D'$ has finite level~\cite[Corollary~A.11.1]{Anderson_2024}, its entry ratios lie in some $\mu_{2^t}$~\cite[Appendix~F]{Anderson_2024}, and Gaussianity puts them in $\mu_{2^t}\cap\Q(i)=\mu_4$. Thus $M|x\rangle=c\,i^{f(x)}|\pi x\rangle$ for some $c$ and $f:\F_2^n\to\Z/4$.

$M^2=I$ gives
\[
\pi^2=1,\qquad c^2i^{f(x)+f(\pi x)}=1\quad\text{for every }x.
\]
Hence $c^2\in\mu_4$ and $c\in\mu_8$. Write $c=\zeta_8^\lambda i^m$ with $\lambda\in\{0,1\}$ and replace $f$ by $f+m$; then $f+f\circ\pi=-\lambda\pmod 4$. Finally $\operatorname{tr}M=\operatorname{tr}h=0$, so $h\in\mathscr R_n$. \qedhere
\end{itemize}
\end{proof}

We say a gate $D$ is \emph{stationary} if every GSC pair of $D$ has the form $(L,L)$, that is, $D$ maps each of its input Pauli MASAs to itself. The three-qubit gates $\mathrm{CS}_{yz}^a\CCZ^b$ with $a,b\in\F_2$, that is, $I$, $\mathrm{CS}_{yz}$, $\CCZ$ and $\mathrm{CS}_{yz}\CCZ$, are stationary by \textup{(\ref{cert:stationary-cs-ccz-gates-and-complete-affine-flags})}. For a third-level gate $D$, let $E_D\le\Sp(6,2)$ be generated by the symplectic actions of the Clifford gates $DP_vD^\dagger$, $v\in V_3$.

\begin{lemma}\label{lem:tool-bridge}
Let $D$ be a Clifford conjugate of some $\mathrm{CS}_{yz}^a\CCZ^b$ and $F\in\C_2(3)$. If $\langle\Pau(3),FD\rangle$ is a finite $2$-group modulo scalars, then so is $\langle\Pau(3),F,D\rangle$. In particular $\gamma_Fe$ is unipotent for every $e\in E_D$.
\end{lemma}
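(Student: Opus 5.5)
The plan is to extract $D$ and $F$ separately from the finite $2$-group $G\coloneqq\langle\overline{\Pau}(3),[FD]\rangle$ and show they are already (modulo scalars) inside a slightly larger finite $2$-group. Write $D=K D_0K^\dagger$ with $K\in\C_2(3)$ and $D_0=\mathrm{CS}_{yz}^a\CCZ^b$. The base gate $D_0$ is diagonal with entries in $\mu_4$, hence $D_0\in\C_3(3)$ and $[D_0]\in\mathcal M_F(2)$ for the standard flag. The key observation is that for every Pauli $P$ the correction $DPD^\dagger$ is Clifford, and so is $(FD)P(FD)^\dagger=F(DPD^\dagger)F^\dagger$. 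Therefore $G$ normalizes nothing new beyond Clifford elements at one level. Concretely, I would show the commutators $[FD,P]=(FD)P(FD)^{-1}P^{-1}=F(DPD^\dagger)F^\dagger P^{-1}$ lie in $G$, so $G$ contains the Clifford gates $F h_P F^\dagger$ with $h_P=DPD^\dagger$.

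First, I prove that $[D]$ normalizes $G$, or at least lies in a finite $2$-group together with $G$. The candidate is $H=\langle G,[D]\rangle$. Since $FD\in G$, we get $H=\langle\overline{\Pau}(3),[F],[D]\rangle$, which is exactly the group in the statement. The main issue is finiteness and $2$-power order of $H$.

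My approach is to consider the subgroup $N=\langle\overline{\Pau}(3),\,[D]\overline{\Pau}(3)[D]^{-1}\rangle=\Gamma(D)$. This is a finite $2$-group by Theorem~\ref{thm:level-four-gsc}, since $D\in\C_3\subseteq\C_4$. Moreover $[D]^2$ normalizes it: $D_0^2$ is a diagonal Clifford ($\mathrm{CZ}^a$), so $D^2$ is Clifford and in fact lies in $N$ up to Pauli factors. Hence $\langle N,[D]\rangle$ is a finite $2$-group, as in the last part of the proof of Theorem~\ref{thm:level-four-gsc}.

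Next, I would pass to the finite Clifford quotient. The group $H$ contains $[F]$ and $[FD]$, and all the relevant elements normalize $\overline{\Pau}(3)$ modulo the finite group $\langle N,[D]\rangle$. I would show $H\subseteq[\C_3(3)]$-generated finite group by mapping to $\Sp(6,2)$ extended by the finitely many Clifford conjugates of $D_0$; finiteness then follows because every element is (Clifford)$\times$(one of finitely many stationary gates). For the $2$-power order, I use that $G$ is a $2$-group and that $E_D$ consists of $\gamma$'s of elements $h_P\in N$. The plan is to show $\gamma_F E_D$ sits inside the image of $G$ in $\Sp(6,2)$, via $\gamma_{Fh_PF^\dagger}$ and $\gamma_{FD P D^\dagger}$ relations. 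Then, by a Sylow argument as in Theorem~\ref{thm:level-four-gsc}, $\langle\gamma_F,E_D\rangle$ is a $2$-group. Here the stationarity of $D_0$ (certificate (\ref{cert:stationary-cs-ccz-gates-and-complete-affine-flags})) enters: it pins $[D]$ to a unique normalized MASA, so the Sylow $2$-subgroup fixed by the $2$-group $G$ also contains $[D]$.

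The final claim is then immediate. Each $e\in E_D$ and $\gamma_F$ lie in the image of the finite $2$-group $H$ in $\Sp(6,2)$, so $\gamma_Fe$ is an element of a $2$-group. It is therefore unipotent, by Lemma~\ref{lem:tool-two-group}(2) or directly: $(\gamma_Fe)^{2^k}=I$ gives $(\gamma_Fe-I)^{2^k}=0$.

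The main obstacle is the Sylow step: showing that $[D]$ (not just $[FD]$) lies in a $2$-group with $G$. I expect this may need the finite certificate enumerating the Clifford conjugates of $D_0$ and checking the condition directly.
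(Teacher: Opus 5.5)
\textbf{There is a genuine gap.} You correctly reduce the problem to showing that $H=\langle\overline{\Pau}(3),[F],[D]\rangle$ is a finite $2$-group. Your last step is also fine once that is known: unipotence follows from the Clifford element $Fc\in H$, with $c$ a product of the $DP_vD^\dagger$. But the proposal never proves that $H$ is a finite $2$-group (you flag this yourself), and the mechanisms you sketch would not do it:
\begin{itemize}
\item Finiteness cannot come from elements being ``Clifford times one of finitely many stationary gates.'' Such products are not closed under multiplication, and the group generated by the Clifford group and $\CCZ$ (or $\mathrm{CS}$) is infinite.
\item $G=\langle\overline{\Pau}(3),[FD]\rangle$ contains the non-Clifford $[FD]$, so $G$ has no image in $\Sp(6,2)$. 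The Clifford elements you extract from commutators, $(FD)P(FD)^{-1}P^{-1}$, have symplectic actions generating only $\gamma_FE_D\gamma_F^{-1}$. This says nothing about $\gamma_F$ or $\gamma_Fe$.
\item No finite group acted on by a $2$-group is set up for a Sylow argument.
\item Stationarity does not single out one Pauli MASA for $D$: $\CCZ$ has $15$ GSC pairs and $I$ has $135$.
\item $\Gamma(D)\langle[D]\rangle$ is indeed a finite $2$-group, but it does not involve $F$, which is the whole difficulty.
\end{itemize}

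\textbf{Two ideas are missing.} First, stationarity separates $F$ from $D$ at the level of MASAs. By Lemma~\ref{lem:tool-two-group}(2), $FD$ normalizes some Pauli MASA $\A_N$. Then $D\A_ND^\dagger=F^\dagger\A_NF=\A_{\gamma_F^{-1}N}$, so $(N,\gamma_F^{-1}N)\in\mathscr L(D)$. Stationarity of $\mathrm{CS}_{yz}^a\CCZ^b$, transported by Lemma~\ref{lem:tool-transport}(1), then forces $\gamma_F^{-1}N=N$, so $F$ and $D$ both normalize $\A_N$.

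Second, a common normalized MASA is not enough. In coordinates where $\A_N$ is diagonal, $F$ and $D$ are monomial with permutations $\alpha$ and $\pi$, where $\alpha$ is affine because $F$ is Clifford. Monomial gates with $\mu_4$ entries and affine permutations do not form a $2$-group, since $|\mathrm{AGL}(3,2)|=2^6\cdot 21$. You need one complete affine flag preserved simultaneously by $\alpha$, $\pi$ and all translations. The hypothesis gives such a flag only for $\alpha\pi$, again via Lemma~\ref{lem:tool-two-group}(2) applied to $FD$. Passing from $\alpha\pi$ to $\alpha$ and $\pi$ separately is not a general fact. It is the finite check specific to these $\pi$ that forms the second half of \textup{(\ref{cert:stationary-cs-ccz-gates-and-complete-affine-flags})}.

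With that flag, $H$ lies, after Clifford conjugation, inside the finite $2$-group of gates $\Pi_\rho\Delta$. Here $\rho$ ranges over the flag stabilizer of Lemma~\ref{lem:tool-flags}(1) and $\Delta$ over diagonal gates with $\mu_4$ entries, using that $F$ and $D$ are Gaussian of finite level.
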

\begin{proof}
Write $D=CGC^\dagger$ with $C\in\C_2(3)$ and $G=\mathrm{CS}_{yz}^a\CCZ^b$.
\begin{itemize}
\item \step{Step 1:}{$F$ and $D$ normalize a common Pauli MASA $\A_N$.} By Lemma~\ref{lem:tool-two-group}(2), $FD$ normalizes a Pauli MASA $\A_N$. Then $D\A_ND^\dagger=F^\dagger\A_NF=\A_{\gamma_F^{-1}N}$, so $(N,\gamma_F^{-1}N)\in\mathscr L(D)$, and Lemma~\ref{lem:tool-transport}(1) gives $(\gamma_C^{-1}N,\gamma_C^{-1}\gamma_F^{-1}N)\in\mathscr L(G)$. As $G$ is stationary by \textup{(\ref{cert:stationary-cs-ccz-gates-and-complete-affine-flags})}, $\gamma_F^{-1}N=N$. Hence $D\A_ND^\dagger=\A_N=F\A_NF^\dagger$, and $G$ normalizes $\A_{N'}$ with $N'=\gamma_C^{-1}N$.
\item \step{Step 2:}{in Clifford coordinates in which $\A_N$ is diagonal, the permutations of $F$ and $D$ preserve a common complete affine flag.} Choose $E\in\C_2(3)$ with $E\A_NE^\dagger=\A_Z$. By Step~1, $EFE^\dagger$ and $EDE^\dagger$ normalize $\A_Z$, so they are monomial, $\Pi_\alpha\Delta_F$ and $\Pi_\pi\Delta_D$ with $\Delta_F,\Delta_D$ diagonal, and $E(FD)E^\dagger$ has permutation $\alpha\pi$.

For $c\in\F_2^3$ the conjugate of $X^c$ by $EFE^\dagger$ is a Pauli with permutation $\alpha\tau_c\alpha^{-1}$, where $\tau_c(x)=x+c$. This is a translation, so $\alpha(x+c)+\alpha(x)$ does not depend on $x$, i.e., $\alpha$ is affine. Since $\langle\Pau(3),E(FD)E^\dagger\rangle=E\langle\Pau(3),FD\rangle E^\dagger$ is a finite $2$-group modulo scalars, and Lemma~\ref{lem:tool-two-group}(2) applies, $\alpha\pi$ preserves a complete affine flag. Since $EDE^\dagger=(EC)G(EC)^\dagger$, and $EC$ maps $\A_{N'}$ to $\A_Z$, $\pi$ is a permutation of \textup{(\ref{cert:stationary-cs-ccz-gates-and-complete-affine-flags})}. Translations preserve every complete affine flag by Lemma~\ref{lem:tool-flags}(1). For a fixed input Lagrangian $N'$ of $G$, the gate $EC$ is unique up to a left factor $\mu\in\C_2(3)$ normalizing $\A_Z$, and $\mu$ conjugates $\alpha$ and $\pi$ by an affine permutation, which preserves the claim. The data are therefore $G$ (four gates), $N'$ (among the $135$ Lagrangians of $V_3$), $\alpha$ (among the $1{,}344$ affine permutations) and the $21$ complete affine flags of $\F_2^3$. The stationarity used in Step~1 is decided on the same $135$ Lagrangians, and a search over these data gives the following.

\begin{finitefact}{cert:stationary-cs-ccz-gates-and-complete-affine-flags}{Stationary gates $\mathrm{CS}_{yz}^a\CCZ^b$ and complete affine flags}{Lemma~\ref{lem:tool-bridge}}

The gates $G=\mathrm{CS}_{yz}^a\CCZ^b$ with $a,b\in\F_2$, that is, $I$, $\mathrm{CS}_{yz}$, $\CCZ$ and $\mathrm{CS}_{yz}\CCZ$, are stationary, with $135$, $3$, $15$ and $3$ GSC pairs. Let $\pi$ be the permutation of $CGC^\dagger$, where $C\in\C_2(3)$ maps the Pauli MASA of an input Lagrangian of $G$ to the diagonal matrices. Then for every affine permutation $\alpha$ of $\F_2^3$, if $\alpha\pi$ and the translations preserve a common complete affine flag, so do $\alpha$, $\pi$ and the translations.

\tcblower

\textbf{Code.} \certlink{stationary-cs-ccz-gates-and-complete-affine-flags}
\end{finitefact}

By \textup{(\ref{cert:stationary-cs-ccz-gates-and-complete-affine-flags})}, $\alpha$, $\pi$ and the translations then preserve a common complete affine flag.
\item \step{Step 3:}{$\langle\Pau(3),F,D\rangle$ is a finite $2$-group modulo scalars.} Let $Q$ be the $2$-group of the permutations preserving the flag of Step~2 (Lemma~\ref{lem:tool-flags}(1)). Modulo scalars, the gates $\Pi_\rho\Delta$ with $\rho\in Q$ and $\Delta$ diagonal with entries in $\mu_4$ form a group of order $|Q|\cdot4^7$. It contains the Paulis $X^cZ^u$, whose permutations are translations. It contains $EFE^\dagger$ and $EDE^\dagger$ up to scalar, since both are Gaussian and of finite level, so their entries lie in $\mu_4$ up to scalar~\cite[Corollary~A.11.1, Appendix~F]{Anderson_2024}. Hence it contains $E\langle\Pau(3),F,D\rangle E^\dagger$ modulo scalars.
\item \step{Step 4:}{$\gamma_Fe$ is unipotent for every $e\in E_D$.} As $C\mapsto\gamma_C$ is a homomorphism on $\C_2(3)$, $e=\gamma_c$ for a product $c$ of the Clifford gates $DP_vD^\dagger$. The Clifford gate $Fc$ lies in $\langle\Pau(3),F,D\rangle$, so $\langle\Pau(3),Fc\rangle$ is a finite $2$-group modulo scalars by Step~3, and Lemma~\ref{lem:tool-two-group}(2) makes $\gamma_{Fc}=\gamma_Fe$ unipotent. \qedhere
\end{itemize}
\end{proof}

The following lemma is used in the proof of Lemma~\ref{app:return-balanced}.

\begin{lemma}\label{lem:quarter-turn}
Let $J=\sum_uc_uP_u$ be a finite-level Hermitian involution on $n$ qubits.
\renewcommand{\labelenumi}{(\theenumi)}
\begin{enumerate}
\item For every Hermitian Pauli $P$, the order of $JP$ is a power of two. \label{itm:turn1}
\item If $(I+iJ)/\sqrt2$ has finite level, then $\operatorname{supp}_PJ=\{u:c_u\ne0\}$ is isotropic.
\end{enumerate}
\end{lemma}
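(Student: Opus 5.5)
For part (1), I will reuse the iterated-conjugation trick from the proof of Proposition~\ref{lem:conductor-plane}. Suppose $J\in\C_\ell(n)$ and $P=P_v$. Set $g_0=J$ and $g_{j+1}=g_jPg_j$. Each $g_j$ is a Hermitian involution, and since $g_{j+1}=g_jPg_j^\dagger$ we get $g_j\in\C_{\ell-j}(n)$ for $j<\ell$. The identity $g_{j+1}P=(g_jP)^2$ holds because $P^2=I$. By induction this gives $g_jP=(JP)^{2^j}$. Now $g_{\ell-1}$ lies in $\C_1(n)$ and is a Hermitian involution, so it equals $\pm P_w$. Then $g_{\ell-1}P$ is a Pauli, whose fourth power is $I$. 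Hence $(JP)^{2^{\ell+1}}=I$, so the order of $JP$ divides a power of two. This is essentially routine.

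For part (2), set $R=(I+iJ)/\sqrt2$. Since $J^2=I$, we have $R^2=iJ$, so $R$ is the square root of $J$ up to scalar. For any Pauli $P_v$, split $J=J_++J_-$, where $J_\pm=\sum_{\omega(u,v)=0\text{ or }1}c_uP_u$ are the parts commuting and anticommuting with $P_v$. A direct computation will show that
\[
RP_vR^\dagger=\tfrac12(I+iJ)P_v(I-iJ)=P_v\bigl(\tfrac12(I-iJ')(I-iJ)\bigr),
\]
where $J'=P_vJP_v=J_+-J_-$. Expanding, the correction is $P_v$ times $\tfrac12(I+J'J)-\tfrac i2(J'+J)$, that is, $P_v\bigl(\tfrac12(I+J'J)-iJ_+\bigr)$. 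Here $J'J$ is a unitary, and I will write $JP_vJP_v=(JP_v)^2$ up to sign.

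The key step is to exploit that $RP_vR^\dagger$ is a finite-level Hermitian involution. By part (1), the operator $(RP_vR^\dagger)P_v$, and hence $R P_v R^\dagger P_v$, has $2$-power order. Unwinding the expression above, this operator is $R\,(P_vR^\dagger P_v)=R\,R'^{\dagger}$ with $R'=(I+iJ')/\sqrt2$. Its eigenvalues must therefore be $2$-power roots of unity. I would choose $u,u'\in\operatorname{supp}_PJ$ with $\omega(u,u')=1$, aiming for a contradiction. The cleanest route I see is to first look at $v$ with $\omega(u,v)=1$, so that $J_-\neq0$. The goal is to show that the $2$-power order condition for every such $v$ forces $J_+$ and $J_-$ into a rigid shape. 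I expect a spectral argument on the eigenvalues of $JJ'$, which is $(JP_v)^2$ with $2$-power order, to be combined with the requirement that $\tfrac12(I+JJ')-iJ_+$ be itself a unitary involution times $P_v$.

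I expect the main obstacle to be turning this into an isotropy statement. My first attempt will be to iterate: $RP_vR^\dagger$ is itself a Hermitian involution of finite level, so part (1) applies to it with every Pauli. Alternatively, I would use the Galois/field constraint from Proposition~\ref{lem:conductor-plane}, namely that finite-level involutions have Pauli coefficients that are sums of $2$-power roots of unity scaled by $2^{-n}$. This would be tested on $\operatorname{tr}(RP_vR^\dagger P_w)$. If that fails, the fallback is to reduce to two anticommuting support elements. That reduction would use the decomposition of the algebra generated by $P_u,P_{u'}$ and argue that a nontrivial rotation mixing them, of the form $\cos\theta P_u+\sin\theta P_{u'}$ inside $J$, produces irrational angle contributions incompatible with $2$-power order.
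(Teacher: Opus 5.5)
Your part (1) is correct and is the same iterated-conjugation argument the paper uses. Part (2), however, is not a proof. After expanding $RP_vR^\dagger$ you list several possible routes: a spectral argument on $JJ'$, iterating part (1), a Galois constraint on $\operatorname{tr}(RP_vR^\dagger P_w)$, and a rotation inside $J$. None of them is carried out, and the step that actually produces isotropy is missing. Your expansion also has a sign error. The correct identity is $RP_vR^\dagger=P_v\bigl(\tfrac12(I+J'J)-iJ_-\bigr)$, not $-iJ_+$; as a check, it must reduce to $P_v$ when $J$ commutes with $P_v$.

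You do identify the two facts that need to be combined: $JP$ has $2$-power order, and so does $(RPR^\dagger)P$, by part (1) applied to the finite-level Hermitian involution $RPR^\dagger$. What is missing is a way to link them.

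The paper links them with Jordan's lemma. The involutions $J$ and $P$ preserve an orthogonal decomposition into blocks of dimension at most two. On a two-dimensional block one may take $J=Z$ and $P=\cos\theta\,Z+\sin\theta\,X$. Then:
\begin{itemize}
\item $JP$ has eigenvalues $e^{\pm i\theta}$;
\item $R$ acts as $\diag(\zeta_8,\zeta_8^{-1})$, so $RPR^\dagger=\cos\theta\,Z-\sin\theta\,Y$;
\item $(RPR^\dagger)P$ has determinant $1$ and trace $2\cos^2\theta=1+\cos2\theta$.
\end{itemize}
Let $\eta=e^{2i\theta}$ and let $\xi^{\pm1}$ be the eigenvalues of $(RPR^\dagger)P$; all are $2$-power roots of unity. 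The trace identity becomes $(\xi+\xi^{-1})-\tfrac12(\eta+\eta^{-1})=1$. Apply the normalized trace of a $2$-power cyclotomic field, which sends $\epsilon+\epsilon^{-1}$ to $2$, $-2$ or $0$ according as $\epsilon=1$, $\epsilon=-1$ or otherwise. This forces $\eta=\pm1$, so $\theta\in\frac{\pi}{2}\Z$ on every block. Hence $(JP)^4=I$, that is, $J$ commutes with $PJP$ for every Hermitian Pauli $P$.

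You also need the step that turns this commutation into isotropy. Consider the twirl
\[
4^{-n}\sum_w(-1)^{\omega(v,w)}P_wJP_w=c_vP_v .
\]
It is a linear combination of the $P_wJP_w$, so $J$ commutes with $c_vP_v$. Thus $J$ commutes with $P_v$ for every $v$ in the support, which forces $c_u=0$ whenever $\omega(u,v)=1$.

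Your fallback of writing a rotation $\cos\theta P_u+\sin\theta P_{u'}$ inside $J$ looks at the wrong angle. The relevant angle is between $J$ and the test Pauli $P$, not between two support terms of $J$, which in general has many terms.
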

\begin{proof}
First, let $J\in\C_k(n)$, $J_0=J$ and $J_{j+1}=J_jPJ_j$. Each $J_j$ is a Hermitian involution, and if $J_j\in\C_\ell(n)$ then $J_{j+1}\in\C_{\ell-1}(n)$, so $J_{k-1}$ is a Pauli gate up to sign. Since $J_{j+1}P=(J_jP)^2$, $J_jP=(JP)^{2^j}$. As $J_{k-1}P$ is a product of two Hermitian Paulis, it has order dividing $4$. Hence the order of $JP$ divides $2^{k+1}$.

Let $G=(I+iJ)/\sqrt2$. We first show that $(JP)^4=I$, i.e., $J$ commutes with $PJP$, for every Hermitian Pauli $P$. By Jordan's lemma~\cite[p.~6]{childs2013Jordan}, $J$ and $P$ preserve an orthogonal decomposition into blocks of dimension at most two. On a block of dimension one, $JP=\pm1$. On a block of dimension two choose a basis with
\[
J=Z,\qquad P=\cos\theta\,Z+\sin\theta\,X .
\]
There $JP$ has eigenvalues $e^{\pm i\theta}$, and $G=\operatorname{diag}(\zeta_8,\zeta_8^{-1})$ gives $GPG^\dagger=\cos\theta\,Z-\sin\theta\,Y$, so $(GPG^\dagger)P$ has determinant one and trace $2\cos^2\theta$. By~\eqref{itm:turn1}, applied to $J$ and to the finite-level Hermitian involution $GPG^\dagger$, both $JP$ and $(GPG^\dagger)P$ have $2$-power order. So $\eta=e^{2i\theta}$ and the eigenvalues $\xi^{\pm1}$ of $(GPG^\dagger)P$ are roots of unity of $2$-power order, and $2\cos^2\theta=1+\cos2\theta$ gives
\[
(\xi+\xi^{-1})-\tfrac12(\eta+\eta^{-1})=1.
\]
The normalized trace of a cyclotomic field $\Q(\zeta_{2^m})$ containing $\xi$ and $\eta$ sends $\epsilon+\epsilon^{-1}$, for $\epsilon$ a root of unity of $2$-power order, to $2$, $-2$ or $0$ according as $\epsilon=1$, $\epsilon=-1$ or otherwise. Applied to the identity, it leaves only $\eta=\xi=1$, and $\eta=-1$ with $\xi\ne\pm1$. Thus $\theta\in\frac\pi2\Z$, and $(JP)^4=I$ on every block.

Now write $J=\sum_uc_uP_u$. For $v\in V_n$,
\[
4^{-n}\sum_{w\in V_n}(-1)^{\omega(v,w)}P_wJP_w=c_vP_v,
\]
and $J$ commutes with each $P_wJP_w$, hence with $c_vP_v$. If $c_v\ne0$, then $J$ commutes with $P_v$, which forces $c_u=0$ whenever $\omega(u,v)=1$. Therefore $\operatorname{supp}_PJ$ is isotropic.
\end{proof}

Lastly, the next proposition turns a GSC pair into a bound on the level $k$ in the Clifford hierarchy. We will use it for $n=4$ and $m=2$ at the end of Appendix~\ref{app:four-qubit}. For $d\ge1$, the \emph{$d$-fold corrections} of a gate $U$ are the corrections of its $(d-1)$-fold corrections, where the $0$-fold correction of $U$ is $U$ itself. By the definition of the hierarchy, a gate $U$ lies in $\C_j(n)$ as soon as every $d$-fold correction of $U$ lies in $\C_{j-d}(n)$. A monomial gate $N\ket x=\alpha_x\ket{\pi x}$ has permutation $\pi$, and products and inverses of monomial gates are monomial, with the product and inverse permutations.

\begin{proposition}\label{lem:level-bound}
Let $n\in\{3,4\}$, $U\in\C(n)$ be generalized semi-Clifford, and $m\ge2$ be such that every correction $h_v=UP_vU^\dagger$ is defined over $K_m$ up to scalar. Then $U\in\C_{m+2n-4}(n)$, that is, $U\in\C_{m+2}(3)$ or $U\in\C_{m+4}(4)$.
\end{proposition}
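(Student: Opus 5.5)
Since $U$ is generalized semi-Clifford, Proposition~\ref{lem:tool-monomial} lets us write $U\doteq C_1\Pi_\pi DC_2$, where $C_1,C_2\in\C_2(n)$, $\pi$ preserves a complete affine flag $F$, and $D$ is diagonal. By Lemma~\ref{lem:basic-operations}(1) it suffices to bound the level of $N=\Pi_\pi D$. The plan is to control two quantities. The first is the nilpotency class of a finite $2$-group $G$ containing $\overline{\Pau}(n)$ and all iterated corrections of $N$. The second is the position of $[N]$ in the upper central series of $G$. Then we apply the argument from the proof of Proposition~\ref{app:lem-R_n}: if $Z_j(G)\subseteq[\C_j(n)]$ and $[N]$ normalizes $G$ with $[N,P]\in Z_j(G)$ for all Paulis $P$, then $N\in\C_{j+1}(n)$.

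\textbf{Step 1: the corrections lie in a finite monomial $2$-group.} Every correction $N P_v N^\dagger$ is monomial. Its permutation is $\pi\tau_a\pi^{-1}$, which preserves $F$, and its diagonal part is $\pi$-transported from $D^{-1}\tau_aD\cdot\tau_a$-shifts times Paulis. By hypothesis each such correction is defined over $K_m$ up to scalar. Their entry ratios are roots of unity of $2$-power order (\cite[Appendix~F]{Anderson_2024}), so they lie in $\mu_{2^m}$. It follows that the group $H=\langle \overline{\Pau}(n), N\overline{\Pau}(n)N^\dagger\rangle$ sits inside $\mathcal M_F(m)$, the group of projective monomials with flag-preserving permutation and diagonal entries in $\mu_{2^m}$. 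Let $G=\mathcal M_F(m)$.

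\textbf{Step 2: bound the length of the upper central series of $\mathcal M_F(m)$ so that $Z_j$ lands in $[\C_j(n)]$.} Here I would prove directly that a monomial $\Pi_\rho\Delta$ lies in $[\C_j(n)]$ under two conditions. First, $\rho$ lies in the $i$-th term of the natural filtration of $\operatorname{Stab}(F)$. Second, $\Delta$ has "weighted degree" bounded in the sense of \cite{CGK2017}: a coefficient of a degree-$|A|$ monomial with denominator $2^{t}$ contributes level $t+|A|-1$. Commutation with $X^a$ lowers the degree of phase polynomials by one. It also lowers the flag filtration on $\rho$ by one step, so each commutator with a Pauli drops the "level" by one.

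The bookkeeping for $\Delta$ has two inputs. The phase of $\Delta$ has entries in $\mu_{2^m}$, which gives at most $m$ on the linear terms. Its degree is at most $n$. Together these give level at most $m+n-1$ from the diagonal part.

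For the permutation, a flag-preserving $\pi$ on $n$ bits has "degree" at most $n-1$ in its coordinates, with $z_{i}\mapsto z_i+g_i(x_{<i})$. Each unit of permutation degree costs one extra level. Combining the two contributions carefully should give $m+2n-4$. This matches $m+2$ for $n=3$, where the degree of $\pi$ is at most $2$, as in Proposition~\ref{lem:tool-monomial}(1).

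\textbf{Step 3: conclude.} Iterating the correction map $d$ times, each $d$-fold correction of $N$ is a monomial whose invariant has dropped by $d$. Once that invariant reaches zero, the gate is a Pauli. Hence $N\in\C_{m+2n-4}(n)$, and therefore $U\in\C_{m+2n-4}(n)$ by Lemma~\ref{lem:basic-operations}(1).

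\textbf{Main obstacle.} I expect the hard part to be the sharp accounting in Step 2. The Step 2 tally is only schematic: with the naive accounting the two contributions overshoot $m+2n-4$ unless they are combined carefully. The interaction between a nonaffine $\pi$ and the phases (e.g.\ $\Pi_\pi$ itself for $\pi=(x,y,z+xy)$ is Toffoli-like, of level $3$) must be charged only once. Since the hypothesis is on the corrections rather than on $U$, the count should also be done starting from the corrections, with base level $m+2n-5$ for them.

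If a uniform argument fails, I would fall back on the finite classification for $n=3,4$. The flag-preserving $\pi$ come up to affine equivalence from \cite{Anderson_2024,sc-AC2025}. For each class I would compute the level contributed by $\pi$ explicitly and add it to the diagonal bound. A further fallback is a finite certificate in the style of (\ref{cert:stationary-cs-ccz-gates-and-complete-affine-flags}).
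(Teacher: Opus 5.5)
Your proposal has a genuine gap. Step~2 carries all of the content, and it is left at ``combining the two contributions carefully should give $m+2n-4$.''

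Grant your heuristic that the flag-preserving permutation costs $n-2$ levels above Clifford. You budget the diagonal part at level $m+n-1$, so the total still overshoots by one, and nothing in the proposal recovers that level. The framing of Step~1 cannot recover it either. $\mathcal M_F(m)$ contains, for example, $\operatorname{diag}(\zeta_{2^m}^{x_1x_2x_3})\in\C_{m+2}(3)\setminus\C_{m+1}(3)$ by~\cite{CGK2017}, whereas the statement forces the corrections of $U$ into $\C_{m+1}(3)$. So knowing only that the corrections lie in $\mathcal M_F(m)$ discards exactly the information that is needed. In addition, the hypothesis constrains only the corrections, and $\operatorname{cond}(U)$ may be $m+1$. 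So the diagonal part of $N$ itself need not have entries in $\mu_{2^m}$.

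The paper's proof supplies two ingredients.

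\emph{The permutation.} It proves $\Pi_\pi\in\C_n(n)$. For $n=3$ this is the Toffoli gate. For $n=4$ it normalizes $\pi$ to $(x,y,z+\epsilon xy,w+f(x,y,z))$. It then checks that every $2$-fold correction of $\Pi_\pi$ is an affine permutation gate with a sign function of degree at most two, hence Clifford, even when $f$ is cubic. The permutation of a $d$-fold correction of $M=\Pi_\pi D$ is that of the corresponding $d$-fold correction of $\Pi_\pi$. Hence every non-scalar $(n-2)$-fold correction has the form $H=\eta\,\Pi_\rho\operatorname{diag}(\zeta_{2^m}^{\varphi})$ with $\rho$ affine. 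Its entry ratios lie in $\mu_{2^m}$ because $H$ is defined over $K_m$ up to scalar.

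\emph{The missing level.} $H$ is a traceless Hermitian involution on $2^n\ge8$ dimensions, so $\det H=1$. Affine permutations of $\F_2^n$ with $n\ge3$ are even, and $H^2=I$ gives $\varphi+\varphi\circ\rho\equiv c$ with $\eta^2=\zeta_{2^m}^{-c}$. It follows that $\sum_x\varphi(x)$ is even, i.e.\ the top multilinear coefficient $c_{\{1,\dots,n\}}$ of $\varphi$ is even. The diagonal criterion of~\cite{CGK2017} then places $H$ in $\C_{m+n-2}(n)$ instead of $\C_{m+n-1}(n)$. Adding the $n-2$ correction steps gives $U\in\C_{m+2n-4}(n)$. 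Without this parity step the same bookkeeping only yields $\C_{m+3}(3)$ and $\C_{m+5}(4)$.
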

\begin{proof}
By Proposition~\ref{lem:tool-monomial}, $U\doteq C_1MC_2$ with $C_1,C_2\in\C_2(n)$ and $M=\Pi_\pi D$, where $D$ is diagonal and $\pi$ preserves a complete affine flag. Every $d$-fold correction of $U$ is the conjugate by $C_1$ of a $d$-fold correction of $M$, and Clifford conjugation preserves each level and, up to scalar, the field of definition, since Clifford gates are defined over $\Q(i)$ up to scalar. We may therefore assume $U=M$.
\begin{itemize}
\item \step{Step 1:}{$\Pi_\pi\in\C_n(n)$.} For $n=3$, $\pi$ is $1$ or $(x,y,z)\mapsto(x,y,z+xy)$ by Proposition~\ref{lem:tool-monomial}(1), and the Toffoli gate is third-level. For $n=4$, after an affine change of coordinates, which is a Clifford conjugation, the flag is $\Span\{e_4\}<\Span\{e_3,e_4\}<\Span\{e_2,e_3,e_4\}$ and $\pi(x,y,z,w)=(x+c,\,y+f_2(x),\,z+f_3(x,y),\,w+f_4(x,y,z))$. Composing with an affine permutation on the left, which is Clifford, we may assume
\[
 \pi(x,y,z,w)=(x,\,y,\,z+\epsilon xy,\,w+f(x,y,z)),\qquad\epsilon\in\F_2 .
\]
For $a\in\F_2^4$ let $t_a$ be the translation by $a$ and $\rho_a=\pi t_a\pi^{-1}$. A direct computation gives
\[
 \rho_a(x,y,z,w)=\bigl(x+a_1,\;y+a_2,\;z+a_3+\epsilon(a_1y+a_2x+a_1a_2),\;w+a_4+q_a(x,y,z)\bigr),
\]
with $q_a(x,y,z)=(\Delta_{(a_1,a_2,a_3)}f)(x,y,z+\epsilon xy)$ of degree at most two: a derivative of $f$ has degree at most two, and substituting $z+\epsilon xy$ for $z$ keeps the degree. The correction of $\Pi_\pi$ at $X^aZ^b$ is the monomial gate $\ket p\mapsto(-1)^{g(p)}\ket{\rho_ap}$ with $g=b\cdot\pi^{-1}$, and $g(x,y,z,w)=b_4w+g_0(x,y,z)$ with $\deg g_0\le3$. Its correction at $X^{a'}Z^{b'}$ has permutation $\rho_at_{a'}\rho_a^{-1}$, which is affine because the first three coordinates of $\rho_a$ are affine and $\deg q_a\le2$, and signs $(-1)^{g(u)+g(u+a')+b'\cdot u}$ with $u=\rho_a^{-1}(p)$, a function of degree at most two in $p$. An affine permutation gate with a sign function of degree at most two is Clifford, as in the proof of Proposition~\ref{prop:sc-controlled-pauli}. So every correction of $\Pi_\pi$ is third-level, and $\Pi_\pi\in\C_4(4)$.

\item \step{Step 2:}{every non-scalar $(n-2)$-fold correction $H$ of $M$ has the form $H\ket x=\eta\,\zeta^{\varphi(x)}\ket{\rho x}$, where $\zeta=\zeta_{2^m}$, $\varphi:\F_2^n\to\Z/2^m$, $\rho$ is affine, $\eta$ is a scalar, and $H$ is a traceless Hermitian involution.} As $n-2\ge1$, $H=gPg^\dagger$ with $g$ an $(n-3)$-fold correction of $M$ and $P\ne\pm I$ a Hermitian Pauli, so $H$ is a traceless Hermitian involution. It is monomial, and its permutation is that of the $(n-2)$-fold correction of $\Pi_\pi$ at the same Paulis, which lies in $\C_2(n)$ by Step~1. A monomial gate in $\C_2(n)$ maps each $X^a$ to a Pauli, so its permutation is affine. The corrections of $M$ are defined over $K_m$ up to scalar, and if $g\doteq B$ with $B$ over $K_m$, then $gPg^\dagger=BPB^{-1}$ is over $K_m$. Hence the ratios of the entries of $H$ lie in $K_m$. They are roots of unity of $2$-power order~\cite[Appendix~F]{Anderson_2024}, hence lie in $\mu_{2^m}$.

\item \step{Step 3:}{every such $H$ lies in $\C_{m+n-2}(n)$.} Since $H$ is a traceless Hermitian involution on $2^n\ge8$ dimensions, $\det H=1$. From $H^2=I$, $\varphi(x)+\varphi(\rho x)=c$ is constant and $\eta^2=\zeta^{-c}$. Affine permutations of $\F_2^n$, $n\ge3$, are even. Hence $1=\det H=\eta^{2^n}\zeta^{\sum_x\varphi(x)}=\zeta^{\sum_x\varphi(x)-2^{n-1}c}$, and $\sum_x\varphi(x)$ is even. Expand $\varphi=\sum_{A\subseteq\{1,\dots,n\}}c_Ax_A$ with $c_A\in\Z/2^m$ and $x_A=\prod_{i\in A}x_i$. The top coefficient $c_{\{1,\dots,n\}}=\sum_x(-1)^{n-|x|}\varphi(x)$ is even. We use the following bound on diagonal gates: $\operatorname{diag}(\zeta^G)\in\C_j(n)$ whenever every coefficient $c_A$, $A\ne\varnothing$, of $G$ is divisible by $2^{m+|A|-1-j}$. Indeed, such a gate is a product of gates $\operatorname{diag}(\zeta_{2^{m'}}^{x_A})^{c'}$ with $m'+|A|-1\le j$, each of which lies in $\C_{m'+|A|-1}(n)\subseteq\C_j(n)$, and the diagonal gates in $\C_j(n)$ form a group~\cite[Lemma~3 and Theorem~1]{CGK2017}.

For $\varphi$ and $j=m+n-2$, the exponent $m+|A|-1-j$ is at most $0$ when $|A|\le n-1$ and equals $1$ when $|A|=n$, where $c_A$ is even. So $\operatorname{diag}(\zeta^\varphi)\in\C_{m+n-2}(n)$, and $H=\eta\,\Pi_\rho\operatorname{diag}(\zeta^\varphi)$ with $\Pi_\rho$ Clifford lies in $\C_{m+n-2}(n)$.
\end{itemize}
Scalar $(n-2)$-fold corrections are Paulis, so every $(n-2)$-fold correction of $M$ lies in $\C_{m+n-2}(n)$, and $M\in\C_{m+2n-4}(n)$.
\end{proof}

\Needspace{14\baselineskip}
\section{\texorpdfstring{$\mathcal{C}(3)\subseteq\mathsf{GSC}(3)$}{C(3) in GSC(3)}}\label{app:three-qubit}

\subsection{Proof of Theorem~\ref{thm:n3-gaussian} when \texorpdfstring{$n=3$}{n=3}}\label{gauss-app:ge3}

We first reduce Theorem~\ref{gauss-thm:rigidity} to gates whose corrections are also generalized semi-Clifford.

\begin{lemma}\label{app:lem-ge-equiv}

Let $1\le n\le4$, and suppose that $U\in\GSC(n)$ for every $U\in\C(n)$ whose corrections $h_v$, $v\in V_n$, are all Gaussian and generalized semi-Clifford. Then every Gaussian gate in $\C(n)$ is generalized semi-Clifford. Consequently, if every correction of a gate $U\in\C(n)$ is Gaussian, then every correction of $U$ is generalized semi-Clifford and $U\in\GSC(n)$, that is, the first assertion of Theorem~\ref{gauss-thm:rigidity} holds for this $n$.

\end{lemma}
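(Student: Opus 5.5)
The plan is induction on the hierarchy level $k$. We prove that every Gaussian gate $U\in\C_k(n)$ is generalized semi-Clifford, for all $k$. For $k\le4$ this is Theorem~\ref{thm:main-k4}. In fact any base that covers the Paulis, e.g.\ $k\le2$, would do, since Clifford gates are trivially GSC. Now assume the claim for all Gaussian gates in $\C_{k-1}(n)$, and let $U\in\C_k(n)$ be Gaussian.

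The key observation is that Gaussianity passes to corrections. Choose $W\doteq U$ with entries in $K_2=\Q(i)$. Then $h_v=UP_vU^\dagger=WP_vW^{-1}$, and $P_v$ has entries in $\Z[i]$. So $h_v$ has entries in $\Q(i)$, i.e.\ $h_v$ is Gaussian. This is the inequality $m\le r$ of Proposition~\ref{lem:conductor-plane} with $r=2$. Moreover $h_v\in\C_{k-1}(n)$, so by the inductive hypothesis every $h_v$ is GSC. Thus every correction of $U$ is Gaussian and GSC, and the standing hypothesis of the lemma gives $U\in\GSC(n)$. This closes the induction. Since $\C(n)=\bigcup_k\C_k(n)$, every Gaussian gate in $\C(n)$ is GSC.

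For the consequence, let $U\in\C(n)$ have all corrections $h_v$ Gaussian. Each $h_v$ lies in $\C(n)$ (one level lower), so by the first part each $h_v$ is GSC. Then $U$ satisfies the hypothesis of the lemma, with corrections Gaussian and GSC, hence $U\in\GSC(n)$. This is the first assertion of Theorem~\ref{gauss-thm:rigidity}.

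I expect no serious obstacle. The only point to check is that "Gaussian" is a property of $[U]$ stable under conjugating by Paulis, which the field-of-definition argument above handles. The restriction $1\le n\le4$ plays no role in the argument itself; it merely records where the hypothesis will later be verified.
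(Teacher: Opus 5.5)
Your proof is correct and follows the paper's argument essentially verbatim. You induct on the level $k$, get Gaussianity of $h_v=WP_vW^{-1}$ from a representative $W$ over $\Q(i)$, and for the consequence apply the first part to the corrections. The only cosmetic difference is that the paper takes the base case $k\le2$ (Clifford gates) rather than invoking Theorem~\ref{thm:main-k4}, which, as you yourself note, is unnecessary.
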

\begin{proof}

We prove by induction on $k$ that every Gaussian gate in $\C_k(n)$ is generalized semi-Clifford. For $k\le2$ the gate is Clifford, hence generalized semi-Clifford. Let $k\ge3$ and let $V\in\C_k(n)$ be Gaussian, say $V\doteq V_0$ with $V_0$ over $\Q(i)$. Then each $VP_vV^\dagger=V_0P_vV_0^{-1}\in\C_{k-1}(n)$ is Gaussian, hence generalized semi-Clifford by induction, and the hypothesis gives $V\in\GSC(n)$.

Now let every correction $h_v$ of $U\in\C(n)$ be Gaussian. Each $h_v$ is a Gaussian gate in $\C(n)$, hence generalized semi-Clifford, and the hypothesis gives $U\in\GSC(n)$.

\end{proof}

Now by Lemma~\ref{app:lem-ge-equiv}, for $n=3$ it suffices to show that $U\in\GSC(3)$ whenever $h_v$ is Gaussian and generalized semi-Clifford for all $v\in V_3$. From now on in this subsection, let $U\in\C(3)$ and assume $h_v$ is Gaussian and generalized semi-Clifford for all $v\in V_3$. In particular $h_v\in\mathscr R_3$ for $v\ne0$. We want to show that $h_v\in\C_3(3)$ for all $v$, which gives $U\in\C_4(3)\subseteq\GSC(3)$ by Theorem~\ref{thm:main-k4}.

\begin{proposition}\label{gauss-lem:exceptional}
Let $h\in\mathscr R_3$. Then $h\in\C_4(3)$. Suppose moreover that $h\notin\C_3(3)$.
\renewcommand{\labelenumi}{(\theenumi)}
\begin{enumerate}
\item There is a nonzero label $r$ with $hP_rh^\dagger=P_r$, and $r$ is the only nonzero label $v$ with $hP_vh^\dagger=\pm P_v$. Moreover $\mathsf C(h)=r^\perp$, and every non-Pauli correction of $h$ at a label in $r^\perp$ has the form $\frac12(\pm P_q\pm P_{q+r}\pm P_{q'}\pm P_{q'+r})$ for some labels $q,q'$.
\item $\mathscr L(h)=\{(L,L)\}$ for a Lagrangian $L\le V_3$, and $r\in L$.
\item Exactly two $E\in\A_L$ satisfy $E^2=E=E^\dagger$, $\tr E=1$ and $hEh^\dagger=E$.
\end{enumerate}
\end{proposition}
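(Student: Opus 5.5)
The plan is to make everything a finite computation on the normal forms of Proposition~\ref{app:lem-R_n}, and to use structural arguments only where they are cheap. Every $h\in\mathscr R_3$ is Clifford-conjugate to a monomial $M\ket x=\zeta_8^\lambda i^{f(x)}\ket{\pi x}$. Here $\pi^2=1$ preserves a complete affine flag, $f\colon\F_2^3\to\Z/4$ satisfies $f+f\circ\pi\equiv-\lambda$, and $\operatorname{tr}M=0$. By Proposition~\ref{lem:tool-monomial}(1) we may take $\pi\in\{1,\ (x,y,z)\mapsto(x,y,z+xy)\}$ after an affine (Clifford) change of coordinates. This leaves at most $2\cdot2\cdot4^8$ candidates. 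Each of the properties in (1)--(3) is invariant under Clifford conjugation: labels transform by $\gamma_C$, $\mathsf C(\cdot)$ and $\mathscr L(\cdot)$ transform by Lemma~\ref{lem:tool-transport}, and the projections in (3) transform correspondingly. It therefore suffices to check them on these representatives.

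For the first assertion, $h\in\C_4(3)$, I would argue structurally rather than by search. Proposition~\ref{app:lem-R_n} shows that $\langle\Pau(3),h\rangle$ is a finite $2$-group, sitting inside $\mathcal M_F(3)$. Each correction $hP_vh^\dagger$ is again a monomial in $\mathcal M_F(3)$ with $\mu_8$ phases, so it has finite level. Its entries lie in $\mu_4$ up to the factor $\zeta_8^{2\lambda}=i^\lambda$, so it is Gaussian. I would then bound the level of $h$ by applying Proposition~\ref{lem:level-bound} with $n=3$ and $m=2$, which gives $h\in\C_{4}(3)$. This requires every correction of $h$ to be defined over $K_2$ up to scalar. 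That holds because $h=\zeta_8^{\lambda}\cdot(\text{matrix over }\Q(i))$ conjugated by Cliffords, and the scalar $\zeta_8^\lambda$ cancels in $hP_vh^\dagger$.

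For (1)--(3) I would run an exact-arithmetic enumeration in the style of the finite certificates \textup{(F)}. First discard the candidates with $h\in\C_3(3)$, that is, those all of whose corrections are Clifford. This amounts to checking $hP_vh^\dagger P_w hP_v h^\dagger$ on a basis. For the remaining gates, the conjugation classes should be few, ideally a single $H_*$ up to Clifford conjugation. For each class I would compute the following.
\begin{itemize}
\item The set of $v$ with $hP_vh^\dagger=\pm P_v$, checking that it is $\{0,r\}$ with sign $+$ at $r$.
\item The subspace $\mathsf C(h)$, checking that it equals $r^\perp$.
\item The Pauli expansions of the non-Pauli corrections on $r^\perp$, checking that each has the stated four-term form.
\item The set $\mathscr L(h)$, by testing all $135$ Lagrangians, checking that there is a single pair $(L,L)$ with $r\in L$.
\end{itemize}
The condition $r\in L$ can also be cross-checked through Lemma~\ref{lem:tool-unique}(3), which gives $L\subseteq\mathsf C(h)=r^\perp$, so that $r\in L^\perp=L$. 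For (3), $\A_L$ is a conjugate of the diagonal algebra, so its rank-one projections are the $8$ conjugated basis projectors. Part (3) then says that exactly two of them are fixed by $h$, and this is again a finite check.

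The main obstacle is making the enumeration small and trustworthy, namely reducing the $4^8$ choices of $f$ to Clifford-conjugacy classes. I would quotient first by the diagonal Cliffords and Paulis commuting with the chosen $\pi$. These shift $f$ by quadratic forms with $\Z/4$ linear parts and by affine $\Z/4$ terms. After that step only cubic data, essentially a $\CCZ$ or $\mathrm{CS}$-type coefficient, should remain. That remainder would be exhibited as the unique fourth-level class $H_*$, on which all the listed properties are verified by hand.
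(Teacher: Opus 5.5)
The step that fails is your normalization of the permutation. Proposition~\ref{lem:tool-monomial}(1) brings $\pi$ to $1$ or $\tau_0(x,y,z)=(x,y,z+xy)$ only up to $\pi\mapsto\alpha\pi\beta$ with \emph{independent} affine $\alpha,\beta$. In other words, it describes $h$ up to two-sided multiplication $C_1hC_2$, not up to conjugation. Nothing you then check is transported by such a move:
\begin{itemize}
\item $C_1hC_2$ is in general not an involution, so the form~\eqref{core:eq:row} with $\pi^2=1$ is not even available;
\item a stationary pair $(L,L)$ becomes $(\gamma_{C_2}^{-1}L,\gamma_{C_1}L)$;
\item neither the label $r$ with $hP_rh^\dagger=P_r$ nor the fixed projections of (3) survive.
\end{itemize}
Conjugation by a monomial Clifford only replaces $\pi$ by $\alpha\pi\alpha^{-1}$, which preserves the number of fixed points. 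The exceptional class lies exactly outside your list. $H_*=\diag(A,B)$ permutes the basis by $(x,y,z)\mapsto(x,y+x+1,z+xy)$, which has two fixed points, against eight for $1$ and six for $\tau_0$. Since the only GSC pair of $H_*$ is $(L_Z,L_Z)$, every monomial Clifford conjugate of $H_*$ arises from a monomial Clifford. So none of your $2\cdot2\cdot4^8$ candidates is Clifford conjugate to $H_*$.

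In fact, at a fixed point of $\pi$ the relation $f+f\circ\pi\equiv-\lambda$ forces $\lambda=0$ and a sign entry. For $\pi=1$ this leaves weight-four sign diagonals, which are Clifford. By the paper's classification \textup{(\ref{cert:three-qubit-flag-preserving-involution-rays})}, all candidates with $\pi=\tau_0$ are third-level as well. Your search would therefore return nothing outside $\C_3(3)$. It would lead you to the false conclusion $\mathscr R_3\subseteq\C_3(3)$, with (1)--(3) holding only vacuously.

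The repair is the paper's. Use affine conjugation only to fix the flag $0<\Span\{e_z\}<\Span\{e_y,e_z\}<\F_2^3$, and let $\pi$ run over all involutions in its stabilizer $Q_3$ of order $2^7$. This yields the dichotomy ``third-level, or conjugate to $H_*$ by an affine--diagonal Clifford''. For the same reason, your expectation that only a cubic $\CCZ$- or $\mathrm{CS}$-type phase coefficient survives is misdirected: $H_*$ is singled out by its permutation, not by residual phase data.

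The parts that do not depend on the candidate list are sound, and two of them differ from the paper.
\begin{itemize}
\item Applying Proposition~\ref{lem:level-bound} with $n=3$ and $m=2$ is legitimate. Here $h$ is generalized semi-Clifford and its corrections are defined over $\Q(i)$ up to scalar, so $h\in\C_4(3)$ follows without any classification. The paper instead reads this off from the classification, via $H_*=\diag(I,BA)(I\otimes A)$ and Proposition~\ref{lem:controlled-clifford}.
\item Deducing $r\in L$ from Lemma~\ref{lem:tool-unique}(3) and $\mathsf C(h)=r^\perp$ is correct: $L\subseteq r^\perp$ gives $r\in L^\perp=L$. This is shorter than the paper's rank argument with $J=\alpha+\beta$.
\end{itemize}
Once the candidate set is corrected, machine verification of (1)--(3) on the single surviving class is a valid substitute for the paper's hand computation with $\alpha$, $\beta$, $J$ and $N=\alpha\beta+I$. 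The hand computation does, however, explain why the GSC pair is forced: $L_Z$ corresponds to the unique $N$-invariant plane $\ker N^2$.
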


\begin{proof}

By Proposition~\ref{app:lem-R_n}, $h$ is Clifford-conjugate to a gate $M$ of the form~\eqref{core:eq:row}. Conjugating by an affine permutation, we may take the flag in that proposition to be $0<\Span\{e_z\}<\Span\{e_y,e_z\}<\F_2^3$.

We use \textup{(\ref{cert:three-qubit-flag-preserving-involution-rays})} to show that $M\in \C_3(3)$ or $M$ is Clifford-conjugate up to scalar to $H_*$, which is defined as follows: label the three qubits $x,y,z$ and let
\[
H_*=\diag(A,B),\qquad \text{where }
A=\frac{I+Z_z}{2}X_y+\frac{I-Z_z}{2}Y_y, \quad
B=\frac{I+Z_y}{2}Z_z+\frac{I-Z_y}{2}X_z.
\]
$A$ and $B$ are Clifford by Proposition~\ref{lem:controlled-clifford} with $k=2$, and they are Hermitian and unitary, hence involutions. The gate $M$ is determined by $(\lambda,f,\pi)\in\F_2\times(\Z/4)^{\F_2^3}\times Q_3$, where $Q_3$ is the group of the $2^7$ permutations preserving the flag (Lemma~\ref{lem:tool-flags}(1)). Whether $M\in \C_3(3)$ or not is decided on a basis of $V_3$, and the affine--diagonal Clifford gates are finitely many modulo scalars. Enumerating these gates gives the following.

\begin{finitefact}{cert:three-qubit-flag-preserving-involution-rays}{Three-qubit involutions with a flag-preserving permutation}{Proposition~\ref{gauss-lem:exceptional}}
Let $M$ be a gate of the form~\eqref{core:eq:row} on three qubits whose permutation preserves the flag $0<\Span\{e_z\}<\Span\{e_y,e_z\}<\F_2^3$, and let $H_*$ be as in the proof of Proposition~\ref{gauss-lem:exceptional}. Then $[M]\in\C_3(3)$, or $[M]$ is conjugate to $[H_*]$ by an affine--diagonal Clifford gate.
\tcblower
\textbf{Code.} \certlink{three-qubit-flag-preserving-involution-rays}
\end{finitefact}

 Moreover $H_*=\diag(I,BA)\,(I\otimes A)$. Let $r=z_x$ and $L_0=\Span\{z_x,z_y,z_z\}$. Identify $r^\perp/\Span\{r\}$ with the label space $V_2$ of qubits $y,z$, write $p\mapsto\bar p$ for the quotient map, and let $\alpha,\beta\in\Sp(V_2)$ be the symplectic actions of $A,B$. Set $J=\alpha+\beta$ and $N=\alpha\beta+I$. In coordinates $(s,t,u,v)$ with respect to $x_y,x_z,z_y,z_z$,
\[
J(s,t,u,v)=(0,s,s+v,t),\qquad
N(s,t,u,v)=(0,s,v,s+t),\qquad
N^3(s,t,u,v)=(0,0,s,0),\qquad N^4=0.
\]
Since $(BA)^4=-I$ and $(BA)^2\notin\mu_4\mathcal P(2)$, Proposition~\ref{lem:controlled-clifford} gives $\diag(I,BA)\in\C_4(3)\setminus\C_3(3)$, and hence, as right multiplication by a Clifford preserves each level, $H_*\in\C_4(3)\setminus\C_3(3)$. Thus $h\in\C_4(3)$.

Now suppose $h\notin\C_3(3)$. Then $h=\pm CH_*C^\dagger$ for a Clifford $C$ (the scalar is $\pm1$ because both gates are Hermitian involutions). Assertions (1)--(3) are invariant under this conjugation, with $r$, $L$ and $E$ replaced by their images under $C$, so it suffices to prove them for $H_*$, with $r=z_x$ and $L=L_0$.

\begin{itemize}
\item \textbf{Proof of (1).}
\begin{itemize}
\item \emph{The corrections $K_p=H_*P_pH_*$ at $p\in r^\perp$:} Write $P_p=Z_x^c\otimes P_{\bar p}$. Then
\[
K_p=Z_x^c\,\diag\bigl(\pm P_{\alpha\bar p},\pm P_{\beta\bar p}\bigr)
=\pm Z_x^c\,\diag\bigl(I,\pm P_{\beta\bar p}P_{\alpha\bar p}\bigr)\bigl(I\otimes P_{\alpha\bar p}\bigr),
\]
which is a Clifford by Proposition~\ref{lem:controlled-clifford} with $k=2$. As $\alpha\bar p+\beta\bar p=J\bar p$, it maps $P_w$ to $\pm P_{w+\omega(J\bar p,\bar w)r}$ for $w\in r^\perp$. It is a Pauli exactly when $\alpha\bar p=\beta\bar p$, that is, $\bar p\in\ker J=\Span\{z_y\}$. Otherwise $K_p=\frac12(\pm P_q\pm P_{q+r}\pm P_{q'}\pm P_{q'+r})$ for some labels $q,q'$.
\item \emph{$x_x\notin\mathsf C(H_*)$:} We have $H_*X_xH_*=X_x\diag(BA,AB)=X_x\diag\bigl(I,(AB)^2\bigr)(I\otimes BA)$, and $(AB)^2=((BA)^2)^{-1}=-(BA)^2\notin\mu_4\mathcal P(2)$. Hence $H_*X_xH_*$ is not a Clifford gate by Proposition~\ref{lem:controlled-clifford}.
\end{itemize}
The subspace $\mathsf C(H_*)$ contains the hyperplane $r^\perp$ but not $x_x$, so $\mathsf C(H_*)=r^\perp$, and then $\mathsf P(H_*)=\Span\{r,z_y\}$. A label $p$ with $H_*P_pH_*=\pm P_p$ lies in $\mathsf P(H_*)$. Since $H_*Z_xH_*=Z_x$ and $H_*Z_yH_*=-Z_xZ_y$, the label $r$ is the only nonzero one.

\item \textbf{Proof of (2).} Let $(L_1,L_2)\in\mathscr L(H_*)$. As $H_*$ is an involution, also $(L_2,L_1)\in\mathscr L(H_*)$, so for $p\in r^\perp$ the Clifford $K=H_*P_pH_*$ satisfies
\[
K\A_{L_i}K^\dagger=H_*P_p\A_{L_{3-i}}P_pH_*=\A_{L_i}\qquad(i=1,2).
\]
\begin{itemize}
\item \emph{$r\in L_i$:} Otherwise $L_i\cap r^\perp$ is two-dimensional and injects into $V_2$. For $w$ in it, $w+\omega(J\bar p,\bar w)r\in L_i$ forces $\omega(J\bar p,\bar w)=0$ for every $p$, so its image lies in $(\operatorname{im}J)^\perp$, which is one-dimensional because $\operatorname{rank}J=3$, a contradiction.
\item \emph{$L_i=L_0$:} Now $L_i=\Span\{r\}\oplus\bar L_i$ with Lagrangians $\bar L_i\le V_2$, and $\A_{L_i}=\diag(\A_{\bar L_i},\A_{\bar L_i})$, so $H_*\A_{L_1}H_*=\A_{L_2}$ reads
\[
\alpha\bar L_1=\beta\bar L_1=\bar L_2,\qquad\text{hence}\qquad
N\bar L_1\subseteq\bar L_1,
\]
since $\alpha^2=\beta^2=I$. As $N^3\ne0=N^4$, $N$ is a single nilpotent Jordan block, and its only invariant two-dimensional subspace is $\ker N^2=\Span\{z_y,z_z\}$, which $\beta$ fixes. Hence $L_1=L_2=L_0$.
\end{itemize}
Conversely $(L_0,L_0)\in\mathscr L(H_*)$, because $H_*$ is monomial. Finally $r=z_x\in L_0$.

\item \textbf{Proof of (3).} The rank-one projections in $\A_{L_0}$ are $|x,y,z\rangle\langle x,y,z|$, and $H_*$ permutes the computational basis by $(x,y,z)\mapsto(x,y+x+1,z+xy)$, whose fixed points are exactly $(1,0,z)$, so the two projections are $|1,0,z\rangle\langle1,0,z|$, $z\in\F_2$.
\end{itemize}
\end{proof}

We call a gate $t\in\mathscr R_3\setminus\C_3(3)$ \emph{exceptional}. We write $L_t$ for its Lagrangian $L$ and call its label $r_t=r$ from Proposition~\ref{gauss-lem:exceptional}(1) the \emph{root} of $t$. By the proof of that proposition, $t=\pm CH_*C^\dagger$ for a Clifford gate $C$, and then $\gamma_Cz_x=r_t$ and $\gamma_CL_0=L_t$.

\begin{lemma}\label{gauss-lem:three-correction-compatibility}
Let $s,v\in V_3\setminus\{0\}$ with $s\ne v$ and $h_s\notin\C_3(3)$, and let $\mathscr L(h_s)=\{(S,S)\}$ \textup{(}Proposition~\ref{gauss-lem:exceptional}\textup{)}. Then
\[
h_v\mathcal A_Sh_v^\dagger=\mathcal A_S.
\]
\end{lemma}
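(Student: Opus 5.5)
We use the uniqueness technique \tool{U} applied to pairs of corrections. Let $s\ne v$ be nonzero, with $h_s$ exceptional and $\mathscr L(h_s)=\{(S,S)\}$. Put $w=s+v\ne0$, so $h_w\doteq h_sh_v$. The goal is a GSC pair common to $h_s$ and $h_v$. Once we have one, its input and output Lagrangians must both be $S$, since $\mathscr L(h_s)$ has only the element $(S,S)$, and this gives $h_v\A_Sh_v^\dagger=\A_S$ directly.

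To produce such a pair, consider the two-dimensional subspace $\Span\{s,v\}\le V_3$ and the three-qubit gate $U$ restricted to it. Write $h_v=h_s\cdot(h_sh_v)$ up to scalar. If $h_s$ and $h_v$ normalized a common Pauli MASA $\A_N$, we would have $h_s\A_Nh_s^\dagger=\A_N$, so $(N,N)\in\mathscr L(h_s)$ and hence $N=S$. We aim for such an $N$ by parity, as in Theorem~\ref{thm:two-group-chamber}. Consider the group
\[
G=\langle\overline{\Pau}(3),[h_s],[h_v]\rangle\subseteq\mathrm{PU}(8).
\]
If $G$ is a finite $2$-group, Theorem~\ref{thm:two-group-chamber} shows that $G$ fixes a Pauli MASA $\A_N$, and the argument above finishes the proof.

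To show $G$ is a finite $2$-group, we use that $h_s$, $h_v$ and $h_w$ all lie in $\mathscr R_3$. Proposition~\ref{app:lem-R_n} makes each $\langle\Pau(3),h_u\rangle$ a finite $2$-group. We also use $h_sh_v\doteq h_w$ and $h_vh_s\doteq h_w^\dagger$ up to sign, since $P_s$ and $P_v$ commute or anticommute. Then $\langle h_s,h_v\rangle$ is a dihedral-type group of order at most $8$ modulo scalars, because $h_w^2\doteq I$. The difficulty is that the Paulis and $h_s,h_v$ jointly may not generate a $2$-group merely because each pair does.

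Here we exploit the structure of $h_s$. By Proposition~\ref{gauss-lem:exceptional}, $h_s\in\C_4(3)$ and $\mathsf C(h_s)=r^\perp$. Work in coordinates with $h_s=\pm H_*$, so that $S=L_0$, the diagonal algebra. Proposition~\ref{gauss-lem:exceptional}(3) gives the two $h_s$-fixed rank-one projections $E$ in $\A_S$.

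The plan is then a direct test. Let $t=h_v$, write $t$ in the normal form \eqref{core:eq:row} after Clifford conjugation, and check whether $t\A_{L_0}t^\dagger=\A_{L_0}$. The constraints available are:
\begin{itemize}
\item $t$ and $H_*t$ are both Hermitian involutions in $\mathscr R_3$, so $t$ either commutes or anticommutes with $H_*$ projectively;
\item $\langle\Pau(3),t\rangle$ and $\langle\Pau(3),H_*t\rangle$ are finite $2$-groups.
\end{itemize}

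This reduces the claim to a finite certificate. Enumerate all $t\in\mathscr R_3$, up to conjugation by the Clifford stabilizer of $[H_*]$, subject to both constraints, and verify that each such $t$ preserves $\A_{L_0}$. An alternative route to the same finite check is to apply Lemma~\ref{lem:tool-bridge}-style reasoning: if $t$ is Clifford, use parity \tool{P} with the unipotence of $\gamma_t e$.

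The main obstacle is showing that $t$ normalizes $\A_{L_0}$ rather than merely some other MASA. We expect this to require the exhaustive check, combined with the uniqueness in Proposition~\ref{gauss-lem:exceptional}(2) applied to whichever of $h_v$ and $h_w$ is exceptional.
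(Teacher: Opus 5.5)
\textbf{There is a genuine gap.} Neither of your two routes proves the lemma; each stops exactly where the content lies.

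\emph{The first route.} It needs $G=\langle\overline{\Pau}(3),[h_s],[h_v]\rangle$ to be a finite $2$-group. You rightly do not claim this, and it does not follow from the three groups $\langle\Pau(3),h_u\rangle$, $u\in\{s,v,s+v\}$, being $2$-groups. (Also, modulo scalars $\langle h_s,h_v\rangle$ is elementary abelian of order at most $4$, not dihedral of order $8$.)

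\emph{The fallback.} It is an exhaustive search over $t\in\mathscr R_3$ that you neither carry out nor bound, and its success is precisely the statement to be proved. The reduction is sound. $\mathscr R_3$ is finite modulo signs, and a Clifford $C$ with $CH_*C^\dagger\doteq H_*$ satisfies $\gamma_CL_0=L_0$ by Lemma~\ref{lem:tool-transport}(2), so the property is invariant under the stabilizer. But a reduction to an unexecuted computation is not a proof.

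\emph{The suggested alternative} via \tool{P} or Lemma~\ref{lem:tool-bridge} does not apply. Parity needs every gate to be Clifford, and Lemma~\ref{lem:tool-bridge} needs a Clifford conjugate of $\mathrm{CS}_{yz}^a\CCZ^b$, whereas $H_*\notin\C_3(3)$.

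\textbf{The missing idea} resolves the obstacle you name at the end: apply uniqueness to the \emph{corrections} of $h=h_v$, not to $h$ itself. Set $H=h_s=\pm H_*$ and $g=h_{s+v}\doteq Hh$. Since $\mathscr L(H)=\{(L_0,L_0)\}$, Lemma~\ref{lem:tool-transport}(2) gives: whenever $HK\doteq K'H$ with $K,K'$ Clifford, $K\A_{L_0}K^\dagger=\A_{L_0}$.

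\emph{Case $h,g\in\C_3(3)$.} Take $K=hP_ph$ and $K'=gP_pg^\dagger=HKH$. Then every correction of $h$ preserves $\A_{L_0}$. Lemma~\ref{lem:main-local-global} gives $h\A_{L_0}h=\A_M$ for some Lagrangian $M$. Because $Hh=\pm hH$, the algebra $\A_M$ is $H$-invariant, so $(M,M)\in\mathscr L(H)$ and uniqueness forces $M=L_0$.

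\emph{Case $h$ exceptional} (by the $h\leftrightarrow g$ symmetry, this covers the remaining possibilities). Let $r'$ be its root and compare it with $r=z_x$.
\begin{itemize}
\item If $r'=r$, the paper uses Proposition~\ref{lem:controlled-clifford} with $k=4$ and an order computation in $\Sp(V_2)\cong S_6$.
\item If $r'\ne r$ and $\omega(r,r')=0$, it uses \textup{(\ref{cert:cliffords-commuting-with-hstar})} together with the four-term supports from Proposition~\ref{gauss-lem:exceptional}(1) to get $L_h=L_0$.
\item If $\omega(r,r')=1$, then $g$ is exceptional too, and the $J\bar p$ argument forces $r,r'\in L_h$, contradicting isotropy.
\end{itemize}
This structural case analysis is what your proposal lacks.
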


\begin{proof}

Conjugating every $h_w$ by one Clifford gate, that is, replacing $U$ by $CU$, changes neither the hypotheses nor the conclusion. By the proof of Proposition~\ref{gauss-lem:exceptional} we may therefore assume $h_s=\pm H_*$ and $S=L_0$. Write $H=h_s$, $h=h_v$, $g=h_{s+v}$, and $r=z_x=r_H$. Then $Hh=\pm hH$, $g\doteq Hh\doteq hH$, and $h,g\in\C_4(3)$ by Proposition~\ref{gauss-lem:exceptional}.

It suffices to show $h\A_{L_0}h=\A_{L_0}$. Since $H$ preserves $\A_{L_0}$ and $g\doteq Hh$, this is equivalent to $g\A_{L_0}g^\dagger=\A_{L_0}$, so the roles of $h$ and $g$ are symmetric. For every exceptional $t$, $\mathscr L(t)=\{(L_t,L_t)\}$, so Lemma~\ref{lem:tool-transport}(2) gives
\begin{equation}\tag{$\ast$}\label{eq:unique-witness-normalizes}
K,K'\in\C_2(3),\quad tK\doteq K't
\quad\Longrightarrow\quad K\A_{L_t}K^\dagger=\A_{L_t}.
\end{equation}
We split on whether $h,g$ are third-level; otherwise, by symmetry, $h$ is exceptional with root $r'=r_h$, and we compare $r'$ with $r$.

\begin{inlinefigure}
\begin{tikzpicture}
\node[pfnode] (s) at (0,0) {$h,g$};
\node[pfnode] (c1) at (3.8,1.1) {Case 1};
\node[pfnode] (c2) at (3.8,-1.2) {Case 2};
\node[pfnode] (c2a) at (8.6,0.2) {Case 2a};
\node[pfnode] (c2b) at (8.6,-1.2) {Case 2b};
\node[pfnode] (c2c) at (8.6,-2.6) {Case 2c};
\node[pfgoal] (goal) at (12.4,0.2) {$h\A_{L_0}h=\A_{L_0}$};
\node[pfgoal] (bot) at (12.4,-2.6) {Contradiction};
\draw[pfedge] (s) -- node[pflab,sloped,above] {$h,g\in\C_3(3)$} (c1);
\draw[pfedge] (s) -- node[pflab,sloped,below] {$h$ exceptional} (c2);
\draw[pfedge] (c2) -- node[pflab,above left,pos=.8] {$r'=r$} (c2a);
\draw[pfedge] (c2) -- node[pflab,above,pos=.55] {$r'\ne r$, $\omega(r,r')=0$} (c2b);
\draw[pfedge] (c2) -- node[pflab,below left,pos=.8] {$\omega(r,r')=1$} (c2c);
\draw[pfedge] (c1) -| (goal);
\draw[pfedge] (c2a) -- (goal);
\draw[pfedge] (c2b) -- (goal);
\draw[pfedge] (c2c) -- (bot);
\end{tikzpicture}
\caption{Case analysis in the proof of Lemma~\ref{gauss-lem:three-correction-compatibility}.}
\end{inlinefigure}

\begin{itemize}
\item \step{Case 1:}{$h,g\in\C_3(3)$.} For $p\in V_3$, apply~\eqref{eq:unique-witness-normalizes} with $t=H$, $K=hP_ph$, and $K'=gP_pg^\dagger=HKH$, which are Clifford. Thus every correction of $h$ preserves $\A_{L_0}$, and Lemma~\ref{lem:main-local-global} gives a Lagrangian $M$ with $h\A_{L_0}h=\A_M$. Since $Hh=\pm hH$ and $H$ preserves $\A_{L_0}$, we get $H\A_MH=hH\A_{L_0}Hh=\A_M$, and Proposition~\ref{gauss-lem:exceptional}(2) gives $M=L_0$.

\item \step{Case 2a:}{$h$ is exceptional and $r'=r$.} Write $h=\pm CH_*C^\dagger$ with $C$ Clifford. As $\gamma_Cz_x=r'=z_x$, $C$ is block diagonal or block antidiagonal with Clifford blocks, so $h=\diag(G_0,G_1)$, where the symplectic actions $\phi_0,\phi_1\in\Sp(V_2)$ of $G_0,G_1$ are conjugate to $\alpha,\beta$ in some order. Since $\A_{L_0}=\diag(\A_{\bar L_0},\A_{\bar L_0})$, it suffices to show that $\phi_0$ and $\phi_1$ preserve $\bar L_0=\Span\{z_y,z_z\}$. From $Hh=\pm hH$ we get $AG_0=\pm G_0A$ and $BG_1=\pm G_1B$, so $\phi_0\alpha=\alpha\phi_0$ and $\phi_1\beta=\beta\phi_1$. Moreover
\[
h\doteq\diag(I,G_1G_0)(I\otimes G_0),\qquad
g\doteq\diag(I,BG_1G_0A)(I\otimes AG_0)
\]
lie in $\C_4(3)$, so Proposition~\ref{lem:controlled-clifford} with $k=4$ gives $(\phi_1\phi_0)^4=(\beta\phi_1\phi_0\alpha)^4=1$. Taking inverses and using $\phi_1\beta=\beta\phi_1$, both $\phi_0\phi_1$ and $\alpha\phi_0\beta\phi_1$ have order dividing $4$.

Let $\mathcal E$ be the space of even subsets of $\Omega=\{1,\dots,6\}$ under symmetric difference. Then $\omega(a,b)=|a\cap b|\bmod2$ is a symplectic form on $\mathcal E/\Span\{\Omega\}$, and each nonzero class contains exactly one pair $\{i,j\}$, written $ij$. Identify $V_2$ with $\mathcal E/\Span\{\Omega\}$ by $x_y,x_z,z_y,z_z\mapsto25,34,56,13$. Permuting $\Omega$ then gives $\Sp(V_2)\cong S_6$, under which
\[
\alpha=(24)(56),\qquad\beta=(14)(23),\qquad\bar L_0=\{0,13,24,56\}.
\]
Under this isomorphism $\phi_0,\phi_1$ are double transpositions. The five commuting with $\alpha$, listed below, all preserve $\bar L_0$. Of the five commuting with $\beta$, only $(14)(56)$ and $(23)(56)$ do not; conjugation by $(13)(24)$ exchanges these two and fixes $\alpha$, $\beta$ and $\bar L_0$. So if $\phi_1$ does not preserve $\bar L_0$, we may assume $\phi_1=(14)(56)$, and then every $\phi_0$ violates an order condition:
\[
\begin{array}{c|ccccc}
\phi_0&(13)(24)&(13)(56)&(24)(56)&(25)(46)&(26)(45)\\ \hline
\operatorname{ord}(\phi_0\phi_1)&4&3&3&5&5\\
\operatorname{ord}(\alpha\phi_0\beta\phi_1)&3&4&2&5&5
\end{array}
\]
Hence $\phi_0$ and $\phi_1$ preserve $\bar L_0$.

\item \step{Case 2b:}{$h$ is exceptional, $r'\ne r$ and $\omega(r,r')=0$.} Since $h=\pm CH_*C^\dagger$ for a Clifford gate $C$ with $\gamma_CL_0=L_h$, the map $K\mapsto CKC^\dagger$ is a bijection from the Hermitian Clifford gates commuting with $H_*$ onto those commuting with $h$, and $\operatorname{supp}_P(CKC^\dagger)=\gamma_C\operatorname{supp}_PK$. A Hermitian $K\in\C_2(3)$ commuting with $H_*$ commutes with $H=\pm H_*$, so~\eqref{eq:unique-witness-normalizes} with $t=H$ and $K'=K$ gives $K\A_ZK^\dagger=\A_Z$, as $L_0=L_Z$. Hence $K$ is monomial: modulo scalars, one of the $1{,}344$ affine permutation gates times one of the $512$ diagonal Clifford gates, and testing all of them gives the following.

\begin{finitefact}{cert:cliffords-commuting-with-hstar}{Cliffords commuting with $H_*$}{Lemma~\ref{gauss-lem:three-correction-compatibility}}
With $H_*$ as above, a Hermitian gate $K\in\C_2(3)$ commutes with $H_*$ if and only if $K=\pm(-1)^{ax+bxy+c(1+x)z}$ with $a,b,c\in\F_2$. The traceless ones are $\pm Z_x$ and $\pm$ two non-Pauli gates whose supports span $L_Z$.

\tcblower

\textbf{Code.} \certlink{cliffords-commuting-with-hstar}
\end{finitefact}
By \textup{(\ref{cert:cliffords-commuting-with-hstar})} and this reduction, for every non-Pauli traceless Hermitian Clifford gate $K$, $\operatorname{supp}_PK$ spans $L_0$ if $K$ commutes with $H$, and $L_h$ if $K$ commutes with $h$. As $\omega(r,r')=0$, Proposition~\ref{gauss-lem:exceptional}(1) gives $r\in\mathsf C(h)$ and $r'\in\mathsf C(H)$, so $T=hP_rh$ and $T'=HP_{r'}H$ are traceless Hermitian Clifford gates. Since $Hh=\pm hH$, $HP_rH=P_r$ and $hP_{r'}h=P_{r'}$, $T$ commutes with $H$ and $T'$ commutes with $h$. If $T$ were a Pauli, then $T=\pm P_r$, as $r$ is the only nonzero label $v$ with $HP_vH=\pm P_v$; so $hP_rh=\pm P_r$, and $r=r'$ by the same uniqueness for $h$, a contradiction. Symmetrically, $T'$ is not a Pauli. Hence $\operatorname{supp}_PT$ spans $L_0$ and $\operatorname{supp}_PT'$ spans $L_h$. By Proposition~\ref{gauss-lem:exceptional}(1) for $H$, $\operatorname{supp}_PT'=\{q,q+r,q',q'+r\}$ for some labels $q,q'$, so $r=q+(q+r)\in L_h$. Then $P_r\in\A_{L_h}$, so $T\in h\A_{L_h}h=\A_{L_h}$ as $(L_h,L_h)\in\mathscr L(h)$, and $L_0\subseteq L_h$. Thus $L_h=L_0$ and $h\A_{L_0}h=\A_{L_0}$.

\item \step{Case 2c:}{$h$ is exceptional and $\omega(r,r')=1$.} Since $g\doteq hH\doteq Hh$, $HP_rH=P_r$ and $hP_{r'}h=P_{r'}$,
\[
gP_rg^\dagger=hP_rh,\qquad gP_{r'}g^\dagger=HP_{r'}H.
\]
Neither is Clifford, by Proposition~\ref{gauss-lem:exceptional}(1), as $r\notin r'^\perp=\mathsf C(h)$ and $r'\notin r^\perp=\mathsf C(H)$. So $g\notin\C_3(3)$, and as $g=h_{s+v}\in\mathscr R_3$, $g$ is exceptional. Let $r''=r_g$. As $r,r'\notin\mathsf C(g)=r''^\perp$, we have $\omega(r,r'')=\omega(r',r'')=1$, so $\Lambda=r^\perp\cap r''^\perp$ has dimension four. For $p\in\Lambda$, both $K_p=HP_pH$ and $hK_ph=gP_pg^\dagger$ are Clifford, as $p\in\mathsf C(H)\cap\mathsf C(g)$, so~\eqref{eq:unique-witness-normalizes} with $t=h$, $K=K_p$ and $K'=hK_ph$ gives $K_p\A_{L_h}K_p^\dagger=\A_{L_h}$. The map $p\mapsto J\bar p$ on $r^\perp$ has kernel $\mathsf P(H)=\Span\{r,z_y\}$, which meets $\Lambda$ in a line since $\omega(r,r'')=1$; hence it maps $\Lambda$ onto $\im J$. The argument for $r\in L_i$ in the proof of Proposition~\ref{gauss-lem:exceptional}(2) uses only that $K_p$ preserves $\A_{L_i}$ for $p$ with $J\bar p$ ranging over $\im J$, so it gives $r\in L_h$. Also $r'\in L_h$ by Proposition~\ref{gauss-lem:exceptional}(2), contradicting $\omega(r,r')=1$.
\end{itemize}

\end{proof}

\begin{proposition}\label{gauss-prop:n3-gsc}
Let $U\in\C(3)$, and suppose that every $h_v$, $v\in V_3$, is Gaussian and generalized semi-Clifford. Then $h_v\in\C_3(3)$ for every $v$, and $U\in\C_4(3)\cap\GSC(3)$.
\end{proposition}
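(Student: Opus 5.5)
The plan is to argue by contradiction. Suppose some $h_s$, $s\ne0$, is not in $\C_3(3)$. Since $h_s$ is a traceless Gaussian generalized semi-Clifford Hermitian involution, $h_s\in\mathscr R_3$. Proposition~\ref{gauss-lem:exceptional} then says $h_s$ is exceptional, with a unique GSC pair $\mathscr L(h_s)=\{(S,S)\}$ and a root $r\in S$. Once the contradiction is reached, every $h_v\in\C_3(3)$, so $U\in\C_4(3)$, and Theorem~\ref{thm:main-k4} gives $U\in\GSC(3)$.

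\textbf{Step 1: a common GSC pair.} By Lemma~\ref{gauss-lem:three-correction-compatibility}, every $h_v$ with $v\ne0,s$ normalizes $\A_S$. The remaining cases $v=0$ and $v=s$ are trivial, since $h_0\doteq I$ and $h_s\A_Sh_s^\dagger=\A_S$. So all corrections normalize the single Pauli MASA $\A_S$. Lemma~\ref{lem:main-local-global} then gives a Lagrangian $L$ with $U\A_LU^\dagger=\A_S$, so $U\in\GSC(3)$. Membership in $\GSC(3)$ is not yet the goal, however: we still have to rule out the exceptional correction.

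\textbf{Step 2: using the monomial form of $U$.} Take the pair $(L,S)$ and apply Proposition~\ref{lem:tool-monomial}, writing $U\doteq C_1\Pi_\pi DC_2$. Here $\pi=1$ or $\pi$ is the Toffoli permutation, and $D$ is diagonal. The corrections at labels in $L$ lie in $\A_S$ and are Hermitian involutions, so they are signed diagonal operators in the $C_1$-frame. The corrections $h_v$ are then conjugates by $C_1$ of corrections of $\Pi_\pi D$. Up to $C_1$-conjugation, the correction of $\Pi_\pi D$ at $X^a$ is a monomial gate. Its permutation is $\pi t_a\pi^{-1}$, which is affine, and its diagonal part is the ratio $D(x+a)/D(x)$ transported by $\pi$.

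\textbf{Step 3: the contradiction.} I would compare this with the exceptional gate. In $S$-coordinates, $h_s$ is conjugate to $H_*$, whose permutation $(x,y,z)\mapsto(x,y+x+1,z+xy)$ is \emph{not} affine. But $h_s=UP_sU^\dagger$ is $C_1$-conjugate to a monomial gate whose permutation is conjugate to $\pi t_a\pi^{-1}$, an affine permutation. The Clifford $C_1$ normalizes $\A_Z$ up to the choice of frame, so the frame ambiguity only conjugates the permutation by an affine map. That is the contradiction. The delicate part is that the frame putting $h_s$ into the $H_*$ form and the frame $C_1$ coming from $U$ both send $\A_S$ to $\A_Z$. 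The uniqueness of $\mathscr L(h_s)$ should force these two frames to agree up to an $\A_Z$-normalizing Clifford, that is, up to an affine permutation times a diagonal Clifford. Affineness of the permutation is preserved by such conjugation, so the contradiction stands.

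\textbf{Main obstacle.} I expect this frame-matching step to be the hardest. If it fails, my fallback is Proposition~\ref{gauss-lem:exceptional}(1). It says $\mathsf P(h_s)=\Span\{r,z_y\}$-type data and that $\mathsf C(h_s)=r^\perp$. I would try to contradict this using the fact that the corrections of a monomial gate with affine permutation and $\mu_4$-valued diagonal are always Clifford at every $Z$-label. A final finite check over the $\mu_4$ diagonals $D$ may be needed to close this case.
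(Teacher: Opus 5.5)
Your proposal is correct. Step~1 is exactly the paper's first step; the difference lies in how you extract the final contradiction. The frame-matching you flag as the main obstacle is immediate, and your fallback is unnecessary.

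\textbf{Why the frame-matching holds.} The proof of Proposition~\ref{gauss-lem:exceptional} gives $h_s=\pm CH_*C^\dagger$ with $\gamma_CL_0=S$ and $L_0=L_Z$. Proposition~\ref{lem:tool-monomial} gives $C_1\A_ZC_1^\dagger=\A_S$. Hence $K=C_1^\dagger C$ normalizes $\A_Z$, so it is a monomial Clifford $\Pi_\alpha\Delta$ with $\alpha$ affine. A monomial unitary has a unique permutation, so the identity $C_1^\dagger h_sC_1=\pm KH_*K^\dagger$ forces
\[
\alpha\pi_*\alpha^{-1}=\pi t_a\pi^{-1}.
\]
Here $\pi_*(x,y,z)=(x,y+x+1,z+xy)$, and $a$ is the $X$-part of the label of $C_2P_sC_2^\dagger$. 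The right side is affine for $\pi\in\{1,\text{Toffoli}\}$, and the left side is not.

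You can even drop the classification $\pi\in\{1,\text{Toffoli}\}$. The permutation $\pi t_a\pi^{-1}$ is conjugate to a translation, so it has $0$ or $8$ fixed points, while $\pi_*$ has exactly two.

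\textbf{Comparison with the paper.} Both contradictions use the fact that $h_s$ fixes exactly two of the eight minimal projections of $\A_S$ (Proposition~\ref{gauss-lem:exceptional}(3)).
\begin{itemize}
\item You compare this with the permutation of $h_s$ computed through the normal form of $U$.
\item The paper shows that every $h_v$ permutes those two projections $E,F$, because $h_v$ normalizes $\A_S$ and $h_sh_v=\pm h_vh_s$. So $R=E+F$ is invariant under all corrections. Then $U^\dagger RU$ commutes with every Pauli, which is impossible for a rank-two projection in dimension eight.
\end{itemize}
The paper's route is shorter. It needs neither Lemma~\ref{lem:main-local-global}, nor the monomial normal form of Proposition~\ref{lem:tool-monomial} with its reliance on the external permutation classification, nor any comparison of frames. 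Your route yields a more structural by-product: in the output frame of a three-qubit generalized semi-Clifford gate, every correction is monomial with a permutation conjugate to a translation. This excludes an exceptional correction at once.
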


\begin{proof}
Suppose that $h_s\notin\C_3(3)$ for some $s\in V_3$. Then $s\ne0$, and $h_w\in\mathscr R_3$ for every $w\ne0$. Let $\mathscr L(h_s)=\{(S,S)\}$ (Proposition~\ref{gauss-lem:exceptional}(2)). Then $h_v\A_Sh_v^\dagger=\A_S$ for every $v\in V_3$: for $v\notin\{0,s\}$ by Lemma~\ref{gauss-lem:three-correction-compatibility}, for $v=s$ since $(S,S)\in\mathscr L(h_s)$, and trivially for $v=0$.

Let $E,F$ be the two rank-one orthogonal projections in $\A_S$ fixed by $h_s$ (Proposition~\ref{gauss-lem:exceptional}(3)). They are distinct and commute, so $EF=0$, and $R=E+F$ is a projection with $\tr R=2$. For every $v$, since $h_sh_v=\pm h_vh_s$,
\[
h_s(h_vEh_v^\dagger)h_s^\dagger
=h_v(h_sEh_s^\dagger)h_v^\dagger
=h_vEh_v^\dagger,
\]
and likewise for $F$. Conjugation by $h_v$ therefore maps $E$ and $F$ to distinct rank-one orthogonal projections in $\A_S$ fixed by $h_s$. Hence it permutes $\{E,F\}$, and $h_vRh_v^\dagger=R$. Thus $U^\dagger RU$ commutes with every $P_v$, so it is scalar, and $\tr R=2$ gives $R=I/4$, contradicting $R^2=R$. Hence $h_v\in\C_3(3)$ for every $v$, so $U\in\C_4(3)$, and $U\in\GSC(3)$ by Theorem~\ref{thm:main-k4}.
\end{proof}

Proposition~\ref{gauss-prop:n3-gsc} proves Theorem~\ref{thm:n3-gaussian} for $n=3$: if every $h_v$ is Gaussian, Lemma~\ref{app:lem-ge-equiv} makes every $h_v$ generalized semi-Clifford, and Proposition~\ref{gauss-prop:n3-gsc} gives $h_v\in\C_3(3)$ and $U\in\C_4(3)\subseteq\GSC(3)$.

\subsection{Proof of Theorem~\ref{thm:conductor-descent}}

In this subsection, let $U\in\C(3)$, let $M=\operatorname{cond}(U)$ and $m=\max_v\operatorname{cond}(h_v)$, so that $m\in\{M-1,M\}$ by Proposition~\ref{lem:conductor-plane}, and suppose $m>2$, that is, some $h_v$ is non-Gaussian. As in Theorem~\ref{thm:conductor-descent}, we also assume $h_v\in\GSC(3)$ for all $v$. These assumptions hold for all the lemmas in this subsection. We want to find $U'\in\C(3)$ with
\[
\operatorname{cond}(U')<M,\qquad
U'\in\GSC(3)\Longleftrightarrow U\in\GSC(3).
\]
If $U\in\GSC(3)$, take $U'=I$. Then $\operatorname{cond}(U')=2<M$, since $m>2$ and $m\leq M$. Henceforth we assume $U\notin\GSC(3)$.

Choose $W\doteq U$ over $K_M$, and let $\sigma$ be a generator of the cyclic group $\mathrm{Gal}(K_M/K_{m-1})$: explicitly, $\sigma(\zeta_{2^M})=-\zeta_{2^M}$ if $m=M$, and $\sigma(\zeta_{2^M})=i\zeta_{2^M}$ if $m=M-1$. Since $m>2$, $\sigma$ fixes $K_2=\Q(i)$, and its restriction to $K_m$ generates $\mathrm{Gal}(K_m/K_{m-1})$.

We first show that the Galois cocycle $W^{-1}\sigma(W)$ of $U$ is, in suitable coordinates, $\diag_xi^{p\cdot x}(-1)^{q(x)}$ with $\deg q\le2$ (Lemma~\ref{lem:galois-normal-form}), and then that $U$ maps $Z^p$ to a Pauli (Lemma~\ref{lem:m-M-1}). Multiplying $U$ by a rotation about $Z^p$ then lowers the conductor (Lemma~\ref{lem:conductor-step}).

\begin{lemma}\label{lem:galois-normal-form}
After replacing $U,W$ by $UC,WC$ for some $C\in\C_2(3)$, there are $p\in\mathbb F_2^3$ and a Boolean function $q$ of degree at most two such that
\[
W^{-1}\sigma(W)\doteq\diag_{x\in\mathbb F_2^3}
 i^{p\cdot x}(-1)^{q(x)},
\]
where $p\cdot x\in\{0,1\}$ is the dot product over $\F_2$. If $m=M$, then $p=0$; if $m=M-1$, then $p\ne0$. Moreover, write the polar form of $q$ as $q(x+a)+q(x)+q(a)+q(0)=a^{\mathsf T}B_qx$, and let $T=B_q+pp^{\mathsf T}$ and $g_c=UZ^cU^\dagger$. Then $T\ne0$ is symmetric with $\operatorname{diag}T=p\in\operatorname{im}T$. For $0\ne c=Ta$ and every $b\in\F_2^3$, the gate $g_c$ is a Clifford involution, and $g_c\in\A_L\cap\A_{L'}$ for every $(L,L')\in\mathscr L(h_{(a,b)})$. If $g_c$ is not Pauli, its support spans a Lagrangian $L_c$, and $\mathscr L(h_{(a,b)})=\{(L_c,L_c)\}$.
\end{lemma}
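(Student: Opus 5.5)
The plan is to study the Galois cocycle $Y\coloneqq W^{-1}\sigma(W)$ through its action on Paulis. Because $\sigma$ fixes $\Q(i)$, it fixes every $P_v$, so
\[
YP_vY^{-1}\doteq W^{-1}\sigma(h_v)W=P_v\cdot U^\dagger\bigl(h_v^{-1}\sigma(h_v)\bigr)U .
\]
Each $h_v$ is generalized semi-Clifford by hypothesis. Proposition~\ref{lem:tool-galois}(2) therefore puts $d_v\coloneqq h_v^{-1}\sigma(h_v)$ in $\A_{L}\cap\A_{L'}$ for every $(L,L')\in\mathscr L(h_v)$.

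\textbf{Step 1: diagonalize $Y$ and fix its entries.}
\begin{itemize}
\item Since $\sigma$ restricted to $K_m$ has order two, $\sigma(d_v)\doteq d_v^{-1}$.
\item Together with $h_v^2=I$ this should make every $d_v$ a Hermitian involution up to scalar, with entries in $\Q(i)$ up to scalar.
\item I then want to show that $Y$ normalizes a Pauli MASA. To do so I would apply Theorem~\ref{thm:two-group-chamber} to the finite $2$-group generated by $\overline{\Pau}(3)$ and the gates $P_vU^\dagger d_vU$, which are exactly the conjugates $YP_vY^{-1}$. This step is the one I am least sure of.
\item Right multiplication of $U,W$ by a Clifford $C$ replaces $Y$ by $C^{-1}YC$, since $\sigma(C)\doteq C$ by Proposition~\ref{lem:tool-galois}(1). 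So I may take $Y$ to be diagonal. The cocycle relation $\sigma(Y)\doteq Y$ for a diagonal Gaussian $Y$, combined with the finite-level classification of diagonal gates, should force $Y\doteq\diag_x i^{f(x)}$ with $f:\F_2^3\to\Z/4$.
\end{itemize}

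\textbf{Step 2: put $f$ in the stated shape and decide $p$.}
\begin{itemize}
\item Write $f=p\cdot x+2q$.
\item The conjugates $YX^aY^{-1}$ must be Clifford, because they are $X^a$ times $U^\dagger d_{(a,0)}U$ and I expect the latter to be Clifford. This gives $\deg q\le 2$ exactly as in Lemma~\ref{lem:sc-sign-gates}.
\item The dichotomy on $p$ comes from iterating the cocycle.
\begin{itemize}
\item If $m=M$, then $\sigma$ has order two on $K_M$, so $\sigma(Y)Y\doteq I$. As $\sigma(Y)=Y$, this gives $Y^2\doteq I$, hence $p=0$.
\item If $m=M-1$, then $\delta=\sigma^2$ and $W^{-1}\delta(W)=Y\sigma(Y)=Y^2\doteq Z^p$. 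This cannot be scalar because $\operatorname{cond}(U)=M$, so $p\neq0$.
\end{itemize}
\end{itemize}

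\textbf{Step 3: the facts about $T$.}
\begin{itemize}
\item A direct computation using $i^{p\cdot(x+a)}=i^{p\cdot x+p\cdot a}(-1)^{(p\cdot x)(p\cdot a)}$ gives
\[
YX^aZ^bY^{-1}\doteq X^aZ^{b+Ta},\qquad T=B_q+pp^{\mathsf T}.
\]
\item Since $U=WY\sigma(W)^{-1}$ up to scalar, applying this to $h_{(a,b)}$ yields
\[
\sigma(h_{(a,b)})\doteq h_{(a,b)}\,g_{Ta},\qquad\text{that is,}\qquad g_{Ta}\doteq d_{(a,b)}.
\]
Hence $g_c\in\A_L\cap\A_{L'}$ for every GSC pair of $h_{(a,b)}$, by Proposition~\ref{lem:tool-galois}(2).
\item $g_c$ is an involution because $(Z^c)^2=I$. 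It is Clifford by the Clifford property of $YX^aY^{-1}$ obtained in Step~2.
\item $T$ is symmetric. The polar form $B_q$ is alternating, so $\operatorname{diag}T=p$. The diagonal of a symmetric matrix over $\F_2$ lies in its column space, so $p\in\operatorname{im}T$. Finally $T=0$ would make $Y$ a Pauli, which contradicts the Step~2 conclusions ($Y$ non-scalar when $m=M-1$) after one more check.
\item If $g_c$ is not a Pauli, I want its support to span a Lagrangian $L_c$. The support is isotropic, since $g_c$ lies in a Pauli MASA (alternatively via Lemma~\ref{lem:quarter-turn}). For a non-Pauli Clifford involution on three qubits I would settle the spanning claim by a short check. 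Then $\A_{L_c}\subseteq\A_L\cap\A_{L'}$ forces $L=L'=L_c$, which gives uniqueness of the GSC pair.
\end{itemize}

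\textbf{Main obstacle.} I expect Step~1 to be the hardest part, namely showing that $Y$ is Clifford-diagonalizable with $\mu_4$ entries. The difficulty is that the GSC pairs of different $h_v$ need not agree, so the diagonalizing MASA has to be produced by the $2$-group parity argument rather than read off from any single correction.
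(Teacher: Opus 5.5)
There are two genuine gaps, one in Step~1 and one in Step~2.

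\textbf{Step 1 does not make $Y$ diagonal.} Suppose you justify that $\Gamma(Y)=\langle\overline{\Pau}(3),Y\overline{\Pau}(3)Y^{-1}\rangle$ is a finite $2$-group, which you have not done. Theorem~\ref{thm:two-group-chamber} then only gives $Y\A_LY^{-1}=\A_M$ for some Lagrangians, so $Y$ is generalized semi-Clifford. Even if $L=M$, a Clifford conjugation makes $Y$ monomial, not diagonal.

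To make $C^{-1}YC$ diagonal you need $Y$ to \emph{lie} in a Pauli MASA, that is, $\operatorname{supp}_PY$ must be isotropic. The paper gets this by a commutation argument:
\begin{enumerate}
\item Using the monomial normal form of $h_v$ (Proposition~\ref{lem:tool-monomial}), $\sigma$ multiplies the $\mu_{2^m}$-ratios by signs. So $d_v=\beta_vy_v$ with $\beta_v\in\mu_4$, $y_v^2=I$, and $\sigma(d_v)=d_v$ exactly. Your relations $\sigma(d_v)\doteq d_v^{-1}$ and $h_v^2=I$ do not give this on their own.
\item The signs $\epsilon_v=\beta_v^2$ form a character, and $d_{v+w}=h_w^{-1}d_vh_w\,d_w$. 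Together these force the $d_v$ to commute.
\item Hence the $\tilde d_v=W^{-1}d_vW$ span a Pauli-normalized commutative algebra, have entries in $\Q(i)$, and satisfy $\tilde d_v=\sigma(\tilde d_v)=Y^{-1}\tilde d_vY$.
\item So $Y$ commutes with every $\tilde d_v=P_vYP_vY^{-1}$. All Pauli conjugates of $Y$ then commute, and $\operatorname{supp}_PY$ is isotropic.
\end{enumerate}
The $\mu_4$ entries then come from $\tilde d_{(a,b)}^2=\epsilon I$. They do not come from $\sigma(Y)\doteq Y$, which you had not established at that point.

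\textbf{Your proof of $\deg q\le2$ is vacuous.} For every Boolean $q$ on $\F_2^3$,
\[
YX^aY^{-1}=X^a\diag_x i^{p\cdot a}(-1)^{q(x+a)+q(x)+(p\cdot a)(p\cdot x)}
\]
has a sign exponent of degree at most two, so it is always Clifford. Lemma~\ref{lem:sc-sign-gates} would need it to be a Pauli, and that is equivalent to $\deg q\le2$ itself.

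In fact the claim fails without the standing hypothesis $U\notin\GSC(3)$ of this subsection, which you never use. Take $U=\diag(1,\dots,1,\zeta_{2^M})$ with $M\ge3$. Its corrections are monomial, hence generalized semi-Clifford, $m=M$, and $Y=\CCZ$. No Clifford conjugate of $\CCZ$ has the form $\diag_x i^{p\cdot x}(-1)^{q(x)}$ with $\deg q\le2$, since such gates are Clifford.

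The missing idea is the paper's Step~4, which uses non-GSC-ness:
\begin{enumerate}
\item If $\deg q=3$, the seven exponents of the $F_a$ span all functions of degree at most two.
\item Hence $I-2(E_x+E_y)\in\C_2(3)$ for all $x\ne y$, where $E_x=U\ket{x}\bra{x}U^\dagger$.
\item The Pauli-coefficient structure of Clifford gates then forces the same seven Paulis to satisfy $|\tr(PE_x)|=1$ for every $x$.
\item So $U\A_ZU^\dagger$ is a Pauli MASA, a contradiction.
\end{enumerate}

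\textbf{Two smaller repairs in Step~3.}
\begin{itemize}
\item $g_c$ is not Clifford because of Step~2. It is Clifford because $g_c=\pm y_{(a,b)}$ is a traceless involution in a Pauli MASA, so its sign function has weight four and hence degree at most two.
\item For $m=M$, $T=0$ makes $Y$ a Pauli, which contradicts nothing you proved. It is excluded because every $d_v$ would then be scalar, so $\sigma(h_v)\doteq h_v$ and every $h_v$ would have conductor below $m$.
\end{itemize}
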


\begin{proof}

Each $h_v=WP_vW^{-1}$ has entries in $K_M$. Denote $d_v=h_v^{-1}\sigma(h_v)$.

We first recall a simple fact (i): a commutative algebra $\mathfrak B\subseteq\operatorname{Mat}_8(\CC)$ normalized by every Pauli is spanned by the Paulis it contains, and their labels form an isotropic subspace of $V_3$. Indeed, for $B=\sum_uc_uP_u\in\mathfrak B$ the average $4^{-3}\sum_w(-1)^{\omega(v,w)}P_wBP_w^\dagger=c_vP_v$ lies in $\mathfrak B$. Hence these Paulis lie in $\mathfrak B$, and they form a commutative group.

\begin{itemize}
\item \step{Step 1:}{$d_v=\beta_vy_v$ with $\beta_v\in\mu_4$, $y_v^2=I$, and $\sigma(d_v)=d_v$. Moreover $y_v\in\A_L\cap\A_{L'}$ for every $(L,L')\in\mathscr L(h_v)$.} Let $(L,L')\in\mathscr L(h_v)$, and write $h_v\doteq C_1\Pi DC_2$ as in Proposition~\ref{lem:tool-monomial}, with Clifford gates $C_1,C_2$ such that $C_2\A_LC_2^\dagger=\A_Z$, a permutation matrix $\Pi$, and $D$ diagonal. By Proposition~\ref{lem:tool-galois}(1), $\sigma(C_j)\doteq C_j$. By Proposition~\ref{lem:tool-monomial}(2), the ratios of the entries of $D$ are roots of unity. They lie in $K_m$, hence in $\mu_{2^m}$, and $\sigma$ multiplies each of them by a sign. Therefore
\[
 d_v=\beta_vy_v,\qquad \beta_v\in\CC^\times,\qquad y_v=C_2^{-1}S_vC_2,
\]
with $S_v$ a diagonal sign matrix. Thus $y_v$ is an involution in $C_2^{-1}\A_ZC_2=\A_L$, and $\sigma(y_v)=y_v$. Since $h_v$ and $\sigma(h_v)$ are involutions, $h_vd_vh_v=\sigma(h_v)h_v=d_v^{-1}$, that is, $h_vy_vh_v=\beta_v^{-2}y_v$. Squaring gives $\beta_v\in\mu_4$, so $y_v$ also lies in $h_v\A_Lh_v=\A_{L'}$, and $\sigma(d_v)=d_v$ and $d_v^2=\epsilon_vI$ with $\epsilon_v=\beta_v^2\in\{\pm1\}$. Up to sign, $y_v=\beta_v^{-1}d_v$ does not depend on the GSC pair.

\item \step{Step 2:}{the $d_v$ commute, and after replacing $U,W$ by $UC,WC$ for a Clifford $C$, the matrix $A=W^{-1}\sigma(W)$ is diagonal.} We first show that the $d_v$ commute. Since $\sigma^2(h_v)=\sigma(h_vd_v)=h_vd_v^2=\epsilon_vh_v$ and $h_{v+w}\in\mu_4h_vh_w$, the signs $\epsilon_v$ form a character of $V_3$. Now
\[
 d_{v+w}=h_w^{-1}d_vh_w\,d_w.
\]
The factors $d_v'=h_w^{-1}d_vh_w$ and $d_w$ have squares $\epsilon_vI$ and $\epsilon_wI$, and their product has square $\epsilon_{v+w}I=\epsilon_v\epsilon_wI$. Hence $d_v'd_wd_v'd_w=d_v'^2d_w^2$, so $d_v'$ commutes with $d_w$, and therefore so does $d_{v+w}=d_v'd_w$. As $v$ is arbitrary, the $d_v$ commute pairwise. Moreover $h_w^{-1}d_vh_w=d_{v+w}d_w^{-1}$, so every $h_w$ normalizes the commutative algebra generated by the $d_v$.

Now let $\tilde d_v=W^{-1}d_vW$. The algebra generated by the $\tilde d_v$ is commutative and normalized by every $P_w=W^{-1}h_wW$, so by (i) it is spanned by commuting Paulis. The $\tilde d_v$ have eigenvalues in $\mu_4$, so their Pauli coefficients, and hence their entries, lie in $\Q(i)$. From $h_v=WP_vW^{-1}$ and $\sigma(h_v)=\sigma(W)P_v\sigma(W)^{-1}$,
\[
 \tilde d_v=P_vAP_vA^{-1},\qquad \sigma(\tilde d_v)=A^{-1}\tilde d_vA=\tilde d_v.
\]
Thus $A$ commutes with every $\tilde d_v$, and $P_vAP_v=\tilde d_vA$. Since $\sigma$ commutes with complex conjugation, $A$ is unitary up to scalar. The Pauli conjugates of $A$ therefore generate a commutative, Pauli-normalized $*$-algebra, and by (i) $\operatorname{supp}_PA$ spans an isotropic subspace. Choose a Clifford $C$ carrying $\operatorname{supp}_PA$ into the $Z$-space. By Proposition~\ref{lem:tool-galois}(1), $\sigma(C)\doteq C$, so replacing $U,W$ by $UC,WC$ conjugates $A$ by $C$ up to scalar, and it preserves $M$ and $m$. Hence $A$ is diagonal in these coordinates.

\item \step{Step 3:}{$A\doteq\operatorname{diag}_xi^{p\cdot x}(-1)^{q(x)}$, and $p=0$ if and only if $m=M$.} Normalize $A(0)=1$, and write $P_v\doteq X^aZ^b$ for $v=(a,b)$. Then $\tilde d_{(a,b)}=\operatorname{diag}_xA(x+a)/A(x)$, and $\tilde d_{(a,b)}^2=\epsilon_{(a,b)}I$ shows that the squared ratios $A(x+a)^2/A(x)^2$ are independent of $x$ and $b$ and define a character $a\mapsto(-1)^{p\cdot a}$. Thus, for a Boolean function $q$,
\[
 A\doteq\operatorname{diag}_x i^{p\cdot x}(-1)^{q(x)},\qquad
 F_a|x\rangle=(-1)^{q(x+a)+q(x)+(p\cdot a)(p\cdot x)}|x\rangle,
\]
with $d_{(a,b)}=i^{p\cdot a}UF_aU^\dagger$. The entries of $A$ lie in $\mu_4$, so $\sigma(A)\doteq A$ and $\sigma^2(W)=\sigma(WA)\doteq WA^2=WZ^p$. If $m=M$, then $\sigma^2=1$ and $p=0$. If $m=M-1$ and $p=0$, scaling $W$ by a nonzero entry makes it $\sigma^2$-fixed and hence defined over $K_{M-1}$, contrary to the minimality of $M$.

\item \step{Step 4:}{$\deg q\leq2$.} Suppose $\deg q=3$, and let $e_a(x)=q(x+a)+q(x)+(p\cdot a)(p\cdot x)$ be the exponent of $F_a$. We first show that the seven $e_a$, $a\ne0$, span the Boolean functions of degree at most two. The quadratic part of $e_a$ is the derivative of the cubic monomial of $q$ in direction $a$, which is linear and injective in $a$. Hence a relation $\sum_{a\in S}e_a=0$ with $\emptyset\ne S\subseteq\F_2^3\setminus\{0\}$ requires $\sum_{a\in S}a=0$, so $S$ is a line $\{a,b,a+b\}$, its complement, or all seven labels. These give, respectively, $e_a+e_b+e_{a+b}=\Delta_a\Delta_bq$, where $\Delta_aq(x)=q(x+a)+q(x)$, which has nonzero linear part, its sum with the constant $1$, and the constant $\sum_{y}q(y)=1$, none of which vanishes. Hence the seven exponents are linearly independent. Each has even weight, and the Boolean functions of even weight on $\F_2^3$ are exactly those of degree at most two, a seven-dimensional space.

Next we show that $I-2(E_x+E_y)\in\C_2(3)$ for $x\ne y$, where $E_x=U|x\rangle\langle x|U^\dagger$. Each $UF_aU^\dagger=\pm y_{(a,b)}$ is a sign involution in a Pauli MASA (Step~1), and its $-1$-eigenspace has even dimension $\operatorname{wt}(e_a)$. On three qubits a sign involution in a Pauli MASA is Clifford exactly when its sign function has degree at most two, that is, even weight, so $UF_aU^\dagger\in\C_2(3)$. By the previous paragraph, products of the $UF_aU^\dagger$ give $U\bigl(I-2|x\rangle\langle x|-2|y\rangle\langle y|\bigr)U^\dagger=I-2(E_x+E_y)$.

Finally we derive a contradiction. For a Clifford $C=\sum_uc_uP_u$, $\operatorname{supp}_PC$ is a coset of $\operatorname{im}(\gamma_C-I)$, with equal absolute coefficients. Indeed, comparing coefficients in $CP_w=\pm P_{\gamma_Cw}C$ gives $|c_{u+(\gamma_C+I)w}|=|c_u|$, and for $w\in\ker(\gamma_C-I)=\operatorname{im}(\gamma_C-I)^\perp$ the relation $CP_w=\pm P_wC$ fixes $\omega(u,w)$ for all $u\in\operatorname{supp}_PC$. Here the trace is $4$, so the identity coefficient is $1/2$, the support is a subspace, and there are four coefficients, each $\pm1/2$. Let $\rho_x(P)=\tr(PE_x)$. For $P\ne I$, the coefficient of $P$ is $-(\rho_x(P)+\rho_y(P))/4\in\{0,\pm1/2\}$, so
\[
 \rho_x(P)+\rho_y(P)\in2\mathbb Z\qquad(x\ne y,\ P\ne I).
\]
Using three distinct indices, $\rho_x(P)\in\mathbb Z$ with parity independent of $x$. Now $|\rho_x(P)|\leq1$ and $\sum_P \rho_x(P)^2=8$, so the same seven nonidentity Paulis have $|\rho_x(P)|=1$ for every $x$. They stabilize every $E_x$ up to signs and form a Lagrangian $L$, so $U\A_ZU^\dagger=\A_L$, contradicting $U\notin\GSC(3)$.

\item \step{Step 5:}{the assertions about $T$ and $g_c$.} Since $\deg q\le2$, the exponent of $F_a$ is $a^{\mathsf T}B_qx+(p\cdot a)(p\cdot x)$ plus a constant, that is, $(Ta)\cdot x$ plus a constant. So $F_a=\pm Z^{Ta}$ and $d_{(a,b)}=\pm i^{p\cdot a}g_{Ta}$. If $T=0$, every $d_v$ is scalar, so $\sigma(h_v)\doteq h_v$, and dividing $h_v$ by a nonzero entry gives a matrix fixed by $\sigma$, hence with entries in $K_{m-1}$, for every $v$, contrary to the definition of $m$. As $T$ is symmetric, $\operatorname{diag}T=p$, and $\operatorname{im}T$ is the orthogonal complement of $\ker T$ for the dot product. Since $u\in\ker T$ gives $p\cdot u=u^{\mathsf T}Tu=0$, we get $p\in\operatorname{im}T$. For $0\ne c=Ta$ and every $b$, $g_c=\pm UF_aU^\dagger=\pm y_{(a,b)}$ lies in $\A_L\cap\A_{L'}$ for every $(L,L')\in\mathscr L(h_{(a,b)})$ by Step~1. It is a traceless Hermitian involution, so in such a Pauli MASA its sign function has weight four, hence degree at most two, and $g_c$ is Clifford. Then $\operatorname{supp}_Pg_c$ is a single label or four labels with coefficients $\pm1/2$ spanning a Lagrangian $L_c$. In the latter case $\A_{L_c}$ is the only Pauli MASA containing $g_c$, so every $(L,L')\in\mathscr L(h_{(a,b)})$ has $L=L'=L_c$, and $\mathscr L(h_{(a,b)})\ne\varnothing$ since $h_{(a,b)}$ is generalized semi-Clifford.
\end{itemize}
\end{proof}

\begin{lemma}\label{lem:m-M-1}
For the $p,q$ of Lemma~\ref{lem:galois-normal-form}, we have $m=M-1$, $p\ne0$, and $UZ^pU^\dagger\in\mathcal P(3)$.
\end{lemma}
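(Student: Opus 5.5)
The plan is to study the map $c\mapsto g_c=UZ^cU^\dagger$ on $\operatorname{im}T$. I will show that every configuration other than ``$p\ne0$ and $g_p$ is Pauli'' forces a common GSC Lagrangian for all corrections $h_v$. By Lemma~\ref{lem:main-local-global} this contradicts the standing assumption $U\notin\GSC(3)$.

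Since the $Z^c$ commute and $UZ^{c+c'}U^\dagger\doteq g_cg_{c'}$, the set $\{g_c:c\in\operatorname{im}T\}$ is, modulo scalars, an elementary abelian $2$-group of rank $\operatorname{rank}T\le3$. By Lemma~\ref{lem:galois-normal-form} it consists of Clifford involutions. Two reductions are immediate. If two linearly independent $c,c'\in\operatorname{im}T$ have $g_c,g_{c'}$ Pauli, then Lemma~\ref{lem:tool-two-paulis}, applied with $P_1=Z^c$ and $P_2=Z^{c'}$, gives $U\in\GSC(3)$, which is excluded. So at most one nonzero $c\in\operatorname{im}T$ has $g_c$ Pauli. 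If some $g_c$ is not Pauli, write $c=Ta$. Then Lemma~\ref{lem:galois-normal-form} pins down $\mathscr L(h_{(a,b)})=\{(L_c,L_c)\}$ for all eight $b$. By the uniqueness tool \tool{U}, I will try to propagate this pair to all $h_v$.

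The propagation uses the relations among corrections. For $a'\in\ker T$ the cocycle $d_{(a',b)}$ is scalar, so $h_{(a',b)}$ is $\sigma$-invariant up to scalar. Products such as $h_{(a,b)}h_{(a',b')}\doteq h_{(a+a',b+b')}$, together with the facts $g_{c'}\in\A_{L}\cap\A_{L'}$ for pairs of other $h_{(a'',b)}$, should force $L_{c'}=L_c$ for all non-Pauli $g_{c'}$. The aim is that every $h_v$ normalizes $\A_{L_c}$.

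Where this does not close formally, I will rely on a finite certificate \tool{F}. Up to simultaneous Clifford conjugation (Lemma~\ref{lem:tool-transport}), the tuple $(g_c)_{c\in\operatorname{im}T}$ is a family of at most three commuting Clifford involutions. These are drawn from the finitely many traceless Hermitian Clifford involutions, with supports that are single labels or four $\pm\tfrac12$ labels spanning a Lagrangian; this should be the list of $3{,}843$ gates. For each configuration, and each compatible symmetric $T$ with $\operatorname{diag}T=p\in\operatorname{im}T$, the certificate checks one of two things. Either all $g_c$ with $c\ne0$ are non-Pauli and their Lagrangians $L_c$ coincide, or exactly one $g_c$ is Pauli and that $c$ equals $p$. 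In the first case, the coincidence of the $L_c$ combined with \tool{U} and Lemma~\ref{lem:main-local-global} yields $U\in\GSC(3)$, a contradiction.

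The case $m=M$, i.e.\ $p=0$, must also be excluded. Then $T=B_q$ is alternating, so it has rank $2$ and zero diagonal, and $\operatorname{im}T$ is a plane. Both nonzero-coset pairs would then be forced; the certificate should show that this yields either two Paulis or a common $L_c$, which is a contradiction either way. Hence $p\ne0$, so $m=M-1$ by Lemma~\ref{lem:galois-normal-form}, and $g_p=UZ^pU^\dagger$ is the unique Pauli.

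The main obstacle is the middle step. I need to show that the uniqueness of $\mathscr L(h_{(a,b)})$ really transfers to the corrections $h_{(a',b)}$ with $a'\in\ker T$, which carry no cocycle information. The first thing I would try is the commutation $h_{(a,b)}h_{(a',b')}=\pm h_{(a',b')}h_{(a,b)}$, combined with Lemma~\ref{lem:tool-transport}(2) as in~\eqref{eq:unique-witness-normalizes}. The case split over the $3{,}843$ Cliffords is then only needed to resolve the residual configurations.
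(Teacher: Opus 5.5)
\textbf{There is a genuine gap: the finite certificate you rely on cannot do the job.}

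Your first reduction is sound: two independent $c,c'\in\operatorname{im}T$ with $g_c,g_{c'}$ Pauli give $U\in\GSC(3)$ by Lemma~\ref{lem:tool-two-paulis}. The step you call the main obstacle is not actually an obstacle. Suppose some $g_c$ is non-Pauli.
\begin{itemize}
\item The $a$ with $g_{Ta}$ Pauli form a proper subgroup of $\F_2^3$ containing $\ker T$.
\item Hence the $a$ with $g_{Ta}$ non-Pauli span $\F_2^3$, and the labels $(a,b)$ with such $a$ and arbitrary $b$ span $V_3$.
\item Since $h_{u+v}\doteq h_uh_v$, if all non-Pauli $g_c$ share one Lagrangian $N$, then every $h_v$ normalizes $\A_N$, including those with $a\in\ker T$.
\end{itemize}
No cocycle information on $\ker T$ is needed. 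The whole difficulty is whether the $L_c$ coincide.

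\emph{The dichotomy is false.} You want the certificate to show that either all non-Pauli $L_c$ coincide, or exactly one $g_c$ is Pauli and it equals $g_p$. This fails at the level of the data $((g_c),T)$. Let $\hat X_j=\CCZ X_j\CCZ$. The group $\langle\hat X_1,\hat X_2\rangle$ consists of $X_1\CZ_{23}$, $X_2\CZ_{13}$ and $X_1X_2Z_3\CZ_{13}\CZ_{23}$.
\begin{itemize}
\item These are commuting traceless Hermitian Clifford involutions with four $\pm\tfrac12$ terms each.
\item Their supports span three distinct Lagrangians: $\Span\{x_1,z_2,z_3\}$, $\Span\{x_2,z_1,z_3\}$ and $\Span\{x_1+x_2,z_1+z_2,z_3\}$.
\item None of them is Pauli, and the group lies in no Pauli MASA.
\end{itemize}
This configuration meets every constraint in your certificate for a rank-two $T$, including $p=0$, $T=B_q$ alternating. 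So your claim that $p=0$ yields ``two Paulis or a common $L_c$'' also fails. The groups $\langle\hat X_1,\hat X_2,Z_3\rangle$ and $\langle\hat X_1,\hat X_2,\hat X_3\rangle$ are further such configurations. In the first of these, whether the single Pauli element is $g_p$ depends on $U$, not on the configuration.

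\emph{The missing idea} is to use $U$ itself, not just the $g_c$, to exclude these non-MASA configurations. The paper does this in four steps.
\begin{enumerate}
\item It lists the configurations lying in no Pauli MASA; these are exactly the three groups above.
\item It shows that $\mathcal G_T$ determines $U$ up to a diagonal factor: $U=\CCZ H_1H_2\,G$ or $U=\CCZ H_1H_2H_3\,G$ with $G$ diagonal. At rank two this needs an extra argument excluding the block permutation $z\mapsto z+u_1u_2$ via Proposition~\ref{lem:controlled-clifford}.
\item It turns ``$h_{(a,b)}$ normalizes the Pauli MASA of $g_{Ta}$'' into second-difference equations $\Delta_k\Delta_a f=0$ on the phases $G(x)^2=e^{2\pi i f(x)}$. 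These force the relevant phase ratios of $G$ into $\mu_8$, and into $\mu_4$ when $T$ is alternating of rank two.
\item It compares these bounds with the Galois relations $G^{-1}\sigma(G)\doteq\operatorname{diag}_x i^{p\cdot x}(-1)^{q(x)}$ and $G^{-1}\sigma^2(G)\doteq Z^p$.
\end{enumerate}
The comparison in step 4 has three outcomes:
\begin{itemize}
\item It rules out the $\hat X_3$ case, since it gives $M\le3$ against $m=M-1>2$.
\item It rules out $p=0$ and the rank-two case.
\item At rank three it forces $p=e_3$ with $g_p=Z_3$ Pauli.
\end{itemize}
Without an argument of this kind, the configurations outside every Pauli MASA remain open, and the proof does not close.
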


\begin{proof}

Let $T$ and $g_c$ be as in Lemma~\ref{lem:galois-normal-form}, and let $\mathcal G_T=\{g_c:c\in\operatorname{im}T\}$. Below we replace $U,W$ by $UC,WC$ for Clifford gates $C$ that keep $A=W^{-1}\sigma(W)$ diagonal. The proof of Lemma~\ref{lem:galois-normal-form} then gives the normal form in the new coordinates, and each $d_v$ becomes $d_{\gamma_Cv}$, so $\mathcal G_T$ is unchanged. We read $p$, $q$ and $T$ in the current coordinates. Replacing $U$ by $KU$ with $K\in\C_2(3)$, and $W$ by $K'W$ with $K'$ the scalar multiple of $K$ over $\Q(i)$, keeps $A$, $p$, $q$ and $T$, and conjugates every $g_c$ by $K$.

It suffices to show that $U\in\GSC(3)$, or $p\ne0$ and $g_p\in\mathcal P(3)$: since $U\notin\GSC(3)$, the second alternative then holds, and as $p\ne0$, Lemma~\ref{lem:galois-normal-form} gives $m=M-1$. We split on whether $\mathcal G_T$ lies in a Pauli MASA, that is, $\mathcal G_T\subseteq\A_L$ for some Lagrangian $L$. Since $g_cg_{c'}=g_{c+c'}$, the map $c\mapsto g_c$ is an isomorphism from $\operatorname{im}T\cong\F_2^{\operatorname{rank}T}$ onto $\mathcal G_T$, and the $g_c$ commute. By Lemma~\ref{lem:galois-normal-form}, every $g_c\ne I$ is a traceless Hermitian Clifford involution that is a Pauli or has four-term support spanning a Lagrangian. If every $g_c$ is Pauli, $\mathcal G_T$ lies in a Pauli MASA, since the labels of commuting Paulis span an isotropic subspace, which lies in a Lagrangian.

The group $\mathcal G_T\cong\F_2^{\operatorname{rank}T}$ has at most three generators. Modulo sign, each $g_c\ne I$ is one of the $63$ Paulis, or a sign function of degree at most two on the joint eigenspaces of some $\A_L$ (Step~5 of the proof of Lemma~\ref{lem:galois-normal-form}), that is, one of $135\cdot28$ gates. As the Clifford group is finite modulo scalars, these groups can be listed up to Clifford conjugation.

\begin{finitefact}{cert:clifford-involution-groups-in-no-pauli-masa}{Clifford involution groups in no Pauli MASA}{Lemma~\ref{lem:m-M-1}}
Let $\mathcal G\cong\F_2^r$ be a group, modulo signs, of commuting traceless Hermitian three-qubit Clifford involutions and $I$, in which every $g\ne I$ is a Pauli or has four-term support spanning a Lagrangian, and which lies in no Pauli MASA\@. Then, after a left Clifford conjugation and a change of generators, $\mathcal G$ is generated by
\[
\hat X_1,\hat X_2,\qquad\text{or}\qquad \hat X_1,\hat X_2,Z_3,\qquad\text{or}\qquad \hat X_1,\hat X_2,\hat X_3,
\]
where $\hat X_j=QX_jQ$ and $Q=\CCZ$.
\tcblower
\textbf{Code.} \certlink{clifford-involution-groups-in-no-pauli-masa}

\end{finitefact}
So in Case~2 below, $\mathcal G_T$ is one of these three groups after a left Clifford conjugation and a change of generators.

Let $Q_P=QH_1H_2$, $Q_N=QH_1H_2H_3$, and $\Theta=\left(\begin{smallmatrix}0&1&0\\1&0&0\\0&0&0\end{smallmatrix}\right)$. In Case~2 we show that $U=Q_NG$ or $U=Q_PG$ with $G$ diagonal, and we write $G(x)^2=e^{2\pi if(x)}$ with $f:\F_2^3\to\mathbb R/\mathbb Z$ and $\Delta_jf(x)=f(x+e_j)-f(x)$.

\begin{inlinefigure}
\begin{tikzpicture}
\node[pfnode] (s) at (0,0) {$\mathcal G_T$};
\node[pfnode] (c1) at (3.4,1.5) {Case 1};
\node[pfnode] (c2) at (3.4,-1.5) {Case 2};
\node[pfnode] (c1a) at (7.6,2.4) {Case 1a};
\node[pfnode] (c1b) at (7.6,0.9) {Case 1b};
\node[pfnode] (c2b) at (7.6,-0.9) {Case 2b};
\node[pfnode] (c2a) at (7.6,-2.4) {Case 2a};
\node[pfgoal] (gsc) at (11.8,1.9) {$U\in\GSC(3)$};
\node[pfgoal] (goal) at (11.8,0.1) {$p\ne0$, $g_p\in\mathcal P(3)$};
\node[pfgoal] (bot) at (11.8,-1.7) {Contradiction};
\draw[pfedge] (s) -- node[pflab,sloped,above] {in a Pauli MASA} (c1);
\draw[pfedge] (s) -- node[pflab,sloped,below] {in no Pauli MASA} (c2);
\draw[pfedge] (c1) -- node[pflab,sloped,above] {some $g_c\notin\mathcal P(3)$} (c1a);
\draw[pfedge] (c1) -- node[pflab,sloped,below] {$\mathcal G_T\subseteq\mathcal P(3)$} (c1b);
\draw[pfedge] (c2) -- node[pflab,sloped,above] {$U=Q_PG$} (c2b);
\draw[pfedge] (c2) -- node[pflab,sloped,below] {$U=Q_NG$} (c2a);
\draw[pfedge] (c1a) -- (gsc);
\draw[pfedge] (c1b) -- node[pflab,sloped,above,pos=.45] {$\operatorname{rank}T\ge2$} (gsc);
\draw[pfedge] (c1b) -- node[pflab,sloped,below,pos=.45] {$\operatorname{rank}T=1$} (goal);
\draw[pfedge] (c2b) -- node[pflab,sloped,above,pos=.45] {$\operatorname{rank}T=3$} (goal);
\draw[pfedge] (c2b) -- node[pflab,sloped,below,pos=.45] {$\operatorname{rank}T=2$} (bot);
\draw[pfedge] (c2a) -- (bot);
\end{tikzpicture}
\caption{Case analysis in the proof of Lemma~\ref{lem:m-M-1}.}
\end{inlinefigure}

\begin{itemize}
\item \step{Case 1a:}{$\mathcal G_T\subseteq\A_N$ for a Lagrangian $N$, and some $g_c$ is not Pauli.} For each non-Pauli $g_c$, the Lagrangian $L_c$ spanned by $\operatorname{supp}_Pg_c$ lies in $N$, so $L_c=N$. The $c$ with $g_c$ Pauli form a proper subgroup of $\operatorname{im}T$, so the $a$ with $g_{Ta}$ non-Pauli lie outside a proper subgroup of $\F_2^3$ and span it, and the labels $(a,b)$ with $g_{Ta}$ non-Pauli span $V_3$. By Lemma~\ref{lem:galois-normal-form}, each such $h_{(a,b)}$ normalizes $\A_N$. Hence so does every $h_v$, and $U\in\GSC(3)$ by Lemma~\ref{lem:main-local-global}.

\item \step{Case 1b:}{every $g_c$ is Pauli.} At rank three, $U\A_ZU^\dagger$ is spanned by the $g_c$, so it is a Pauli MASA\@. At rank two, $U$ maps two independent commuting Paulis $Z^c$, $c\in\operatorname{im}T$, to Paulis, and Lemma~\ref{lem:tool-two-paulis} applies. At rank one, $T=pp^{\mathsf T}$, since a rank-one symmetric binary matrix is $uu^{\mathsf T}$ with diagonal $u$. Thus $p\ne0$ and $g_p$ is Pauli.

\item \step{Case 2:}{$\mathcal G_T$ lies in no Pauli MASA.} We first show that $U=Q_NG$ if $\mathcal G_T=\langle\hat X_1,\hat X_2,\hat X_3\rangle$ and $U=Q_PG$ otherwise, with $G$ diagonal. We then derive the constraints on the phases of $G$ recorded in the table, and Cases~2a and~2b compare them with the Galois cocycle.

We have $\hat X_j=Q_NZ_jQ_N^\dagger$ for $j\le3$, $\hat X_j=Q_PZ_jQ_P^\dagger$ for $j\le2$, and $Z_3=Q_PZ_3Q_P^\dagger$. In coordinates where the listed generators are $g_{e_1},g_{e_2}$ (and $g_{e_3}$), the gate $G=Q_N^{-1}U$, respectively $G=Q_P^{-1}U$, commutes with $Z_j$ for $e_j\in\operatorname{im}T$, so at rank three $G$ is diagonal. For $0\ne c\in\Span\{e_1,e_2\}$, the Pauli MASA of $g_c$ is $Q_P\A_{\Span\{z^c,x^{\Theta c},z_3\}}Q_P^\dagger$.

At rank two, $\mathcal G_T=\langle\hat X_1,\hat X_2\rangle$, $\ker T=\Span\{e_3\}$, and $G=\bigoplus_{u\in\F_2^2}G_u$ with one-qubit blocks on qubit~$3$. Let $v=(a,b)$ with $a'=(a_1,a_2)\ne0$, and write $P_v\doteq X^{a'}Z^{b'}\otimes P$ with $P$ a Pauli on qubit~$3$. By Lemma~\ref{lem:galois-normal-form}, $h_v$ normalizes the Pauli MASA of $g_{Ta}$, so $GP_vG^\dagger=Q_P^\dagger h_vQ_P$ normalizes $\A_{\Span\{z^c,x^{\Theta c},z_3\}}$, $c=Ta$, which contains $Z_3$. The elements of this algebra commuting with $Z_1,Z_2$ restrict to each block as elements of $\Span_{\CC}\{I,Z_3\}$, so the blocks $\pm G_{u+a'}PG_u^\dagger$ of $GP_vG^\dagger$ send $Z_3$ to $\pm Z_3$. With $P=I$ this makes $G_u^\dagger Z_3G_u$ independent of $u$ up to sign, and with $P=X,Y,Z$ this common involution is $\pm$ a Pauli. After a right Clifford on qubit~$3$, $G=\Pi_tD$ with $\Pi_t(u,z)=(u,z+t(u))$ and $D$ diagonal, and a right Clifford removes the affine part of $t$. If $t(u)=u_1u_2$, then
\[
UZ_3U^\dagger=\diag_{z_3}\bigl(C_0,\,-\CZ_{12}C_0\CZ_{12}\bigr),\qquad C_0=H_1H_2\CZ_{12}H_1H_2,
\]
and the symplectic action of $\CZ_{12}C_0\CZ_{12}C_0^{-1}$ has order three, so none of its $2^j$-th powers lies in $\mu_4\mathcal P(2)$, contrary to Proposition~\ref{lem:controlled-clifford}. Hence $G$ is diagonal. These right Clifford gates are fixed by $\sigma$ up to scalar and keep $A\doteq G^{-1}\sigma(G)$ diagonal.

Now we constrain the phases. Let $Q=Q_N$ or $Q=Q_P$, so that $U=QG$, and let $v=(a,b)$ with $g_{Ta}$ not a Pauli. By Lemma~\ref{lem:galois-normal-form}, $h_v$ normalizes the Pauli MASA $Q\A_SQ^\dagger$ of $g_{Ta}$, where
\[
S=\Span\{z^{Ta},x^k:k\cdot Ta=0\}\ \text{ for }Q=Q_N,\qquad
S=\Span\{z^c,x^{\Theta c},z_3\}\ \text{ for }Q=Q_P,
\]
with $c$ the $\Span\{e_1,e_2\}$-part of $Ta$ (at rank three, $g_{Ta}=\pm g_cZ_3$ lies in the Pauli MASA of $g_c$). Hence $GP_vG^\dagger$, which is $P_v$ times $R_a=\diag_xG(x+a)/G(x)$, normalizes $\A_S$, and so does $R_a$. For each $X^k$ in $\A_S$, $R_aX^kR_a^\dagger$ lies in $\A_S$ and commutes with $X^k$. As $R_a^2=\diag_xe^{2\pi i\Delta_af(x)}$, this says $\Delta_k\Delta_af=0$. The unknowns are the eight values $f(x)\in\mathbb R/\mathbb Z$, and the equations form an integer system $Rf=0$. By Pontryagin duality for $\mathbb R/\mathbb Z$, a relation $wf=0$ with $w\in\Z^8$ holds for every solution exactly when $w$ lies in the integer row span of $R$. Each consequence is therefore decided by integer row reduction, for each of the $64$ symmetric binary $3\times3$ matrices $T$, with the following outcome.

\begin{finitefact}{cert:second-difference-equations-for-phases}{Second-difference equations for phases}{Lemma~\ref{lem:m-M-1}}

Let $f:\F_2^3\to\mathbb R/\mathbb Z$, let $T$ be a symmetric binary $3\times3$ matrix, and let $\Theta$ be as in the proof of Lemma~\ref{lem:m-M-1}. Then:
\[
\begin{array}{c|c|c}
T&\text{equations}&\text{consequence}\\\hline
\operatorname{rank}T=3&\Delta_k\Delta_af=0\ (k\cdot Ta=0)&4\Delta_jf=0\ (j=1,2,3)\\
\operatorname{rank}T=3\text{ or }\operatorname{im}T=\Span\{e_1,e_2\},\ T\ne\Theta&\Delta_{\Theta Ta}\Delta_af=0\ (\text{all }a)&4\Delta_jf=0\ (j=1,2)\\
T=\Theta&\Delta_{\Theta Ta}\Delta_af=0\ (\text{all }a)&2\Delta_jf=0\ (j=1,2)
\end{array}
\]

\tcblower

\textbf{Code.} \certlink{second-difference-equations-for-phases}

\end{finitefact}
By \textup{(\ref{cert:second-difference-equations-for-phases})}, $4\Delta_jf=0$ for the listed $j$, and $2\Delta_jf=0$ if $T=\Theta$. Thus the ratios of $G$ along the listed $e_j$ lie in $\mu_8$, and in $\mu_4$ if $T=\Theta$. As $\sigma(Q_P)\doteq Q_P$ and $\sigma(Q_N)\doteq Q_N$, we have $A\doteq G^{-1}\sigma(G)$, and $\sigma^2(W)\doteq WZ^p$ reads $G^{-1}\sigma^2(G)\doteq Z^p$.

\item \step{Case 2a:}{$U=Q_NG$.} Then $G^8$ is scalar, so $M\le3$. But $T$ is nonsingular of odd size, so $p\ne0$ and $m=M-1>2$, a contradiction.

\item \step{Case 2b:}{$U=Q_PG$.} If $p=0$, then $T=B_q=\Theta$, the ratios along $\Span\{e_1,e_2\}$ are fixed by $\sigma$, so $(-1)^q$ is invariant under $\Span\{e_1,e_2\}$ up to sign and $B_qe_1=B_qe_2=0$, a contradiction. If $p\ne0$, then $M\ge4$, the ratios along $e_1,e_2$ are fixed by $\sigma^2$, and $G^{-1}\sigma^2(G)\doteq Z^p$ gives $p\in\Span\{e_3\}$. At rank two this contradicts $p\in\operatorname{im}T=\Span\{e_1,e_2\}$. At rank three, $p=e_3$, and $U=Q_PG$ commutes with $Z_3$, so $g_p=Z_3$.
\end{itemize}

\end{proof}

We use the notation
\[
R_A(t)=\frac{I+A}{2}+t\frac{I-A}{2},
\]
and let $\xi=\zeta_{2^M}^{-1}$.

\begin{lemma}\label{lem:conductor-step}
Let $U'=UR_{Z^p}(\xi)$. Then $U'\in\C(3)$, $\operatorname{cond}(U')<M$, and
\[
U'\in\GSC(3)\quad\Longleftrightarrow\quad U\in\GSC(3).
\]
\end{lemma}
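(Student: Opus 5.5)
We work throughout in the setting fixed by Lemma~\ref{lem:m-M-1}: $m=M-1$, $p\ne0$, $P\coloneqq UZ^pU^\dagger\in\Pau(3)$, $W\doteq U$ is defined over $K_M$, and $W^{-1}\sigma(W)\doteq A=\diag_x i^{p\cdot x}(-1)^{q(x)}$. Here $\delta=\sigma^2$ and $W^{-1}\delta(W)\doteq A^2=Z^p$. The proof of the lemma splits into three claims, which we prove in the following order.

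\emph{Claim 1: $U'\in\C(3)$.} Set $R=R_{Z^p}(\xi)$. Then $UR=R_P(\xi)U$, because $UZ^pU^\dagger=P$ and $R$ is a polynomial in $Z^p$. Since $P$ is a Pauli, we may choose $C\in\C_2(3)$ with $CPC^\dagger=Z_1$, so that $R_P(\xi)=C^\dagger\diag(1,\xi)_1C$. This single-qubit phase gate $\operatorname{diag}(1,\zeta_{2^M}^{-1})$ lies in the hierarchy by the argument for $D_s$ in Proposition~\ref{lem:conductor-plane}. Hence $R_P(\xi)\in\C(3)$. The corrections of $U'=R_P(\xi)U$ are
\[
R_P(\xi)\,h_v\,R_P(\xi)^\dagger .
\]
This form is not immediately of finite level, so we argue on the right instead. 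We have $U'P_vU'^\dagger=U(RP_vR^\dagger)U^\dagger$. If $\omega(z^p,v)=0$, then $RP_vR^\dagger=P_v$ and this equals $h_v$. If $\omega(z^p,v)=1$, then $RP_vR^\dagger=R^2P_v=R_{Z^p}(\xi^2)P_v$, and the correction becomes $R_P(\xi^2)h_v$, a phase rotation about a Pauli times a finite-level gate. We then argue by induction on the level of $U$, in the style of Lemma~\ref{lem:basic-operations}, that gates of the form $R_Q(\eta)\,g$ with $Q$ Pauli, $\eta$ a dyadic root of unity and $g\in\C(3)$ arising this way stay in $\C(3)$. This is where we expect the main subtlety: one has to track the pair (level, $\eta$) and check that conjugating $R_Q(\eta)g$ by a Pauli yields $R_{Q}(\eta^{\pm2})$ times a lower-level gate. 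The induction measure we would try first is the lexicographic pair $(k,\,\log_2\operatorname{ord}\eta)$.

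\emph{Claim 2: the conductor drops.} Let $W'=WR$, which has entries in $K_M$. Since $\delta(\xi)=-\xi$, we get $\delta(R)=Z^pR$. Using that $R$ commutes with $Z^p$,
\[
W'^{-1}\delta(W')=R^{-1}W^{-1}\delta(W)\,\delta(R)\doteq R^{-1}Z^pZ^pR=I .
\]
So $\delta(W')=cW'$, and $c$ is a root of unity with $c\,\delta(c)=1$ because $\delta^2=1$ on $K_M$. By Hilbert~90, $c=\delta(d)/d$ for some $d$, and then $d^{-1}W'$ is $\delta$-fixed, hence defined over $K_{M-1}$. Therefore $\operatorname{cond}(U')\le M-1$.

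\emph{Claim 3: the GSC equivalence.} If $(L,L')\in\mathscr L(U)$, then Proposition~\ref{lem:tool-galois}(2) applied to $\sigma^2$ gives $Z^p\in\A_L$, so $R\in\A_L$ and $R$ normalizes $\A_L$. Hence $(L,L')\in\mathscr L(U')$. Conversely, if $(L,L')\in\mathscr L(U')$, write $U'=R_P(\xi)U$. The Galois cocycle of $U'$ is trivial, so we cannot reuse the same step. Instead we use $U'Z^pU'^\dagger=UZ^pU^\dagger=P$, which shows that $U'$ also maps $Z^p$ to a Pauli. If $z^p\in L$, then $R^{-1}\in\A_L$ and $U=U'R^{-1}$ has the pair $(L,L')$. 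If $z^p\notin L$, we apply Lemma~\ref{lem:root-witness} with $S=\Span\{z^p\}$ and $T=\Span\{\text{label of }P\}$ to replace $(L,L')$ by a pair whose input Lagrangian contains $z^p$. This requires $U'\A_SU'^\dagger=\A_T$, which holds because $U'$ maps $Z^p$ to $P$. With the new pair we conclude as in the case $z^p\in L$.
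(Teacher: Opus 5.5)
Claims~2 and~3 are sound. Claim~2 is the paper's Step~1. Hilbert~90 is more than you need: dividing $W'$ by a nonzero entry already gives a $\delta$-fixed matrix. Your assertion that $c$ is a root of unity is unjustified, though it is never used. Claim~3 is the paper's Step~3; doing the forward direction with Proposition~\ref{lem:tool-galois}(2) for $\delta$ instead of Lemma~\ref{lem:root-witness} is also fine.

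\textbf{The gap is in Claim~1.} The statement your induction is meant to prove is false as formulated. That statement is: $R_Q(\eta)g$ stays in the hierarchy for a Pauli $Q$, a dyadic root of unity $\eta$ and $g$ in the hierarchy. Take $T=R_Z(\zeta_8)=\operatorname{diag}(1,\zeta_8)$. The gate $R_Z(\zeta_8)\,(HTH)=THTH$ sends none of $X,Y,Z$ to a Pauli. So it is not semi-Clifford, and therefore not in $\C(1)$ by Theorem~\ref{thm:sc-boundary}(1).

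Correspondingly, your check that ``conjugating $R_Q(\eta)g$ by a Pauli yields $R_Q(\eta^{\pm2})$ times a lower-level gate'' fails unless every $g'=gP_wg^\dagger$ commutes or anticommutes with $Q$. What is missing is the invariant $gQg^\dagger=\pm Q$.
\begin{itemize}
\item It holds at the start: $\omega(v,(0,p))=1$ gives $h_vPh_v^\dagger=UP_vZ^pP_vU^\dagger=-P$.
\item It is inherited: $gQg^\dagger=\pm Q$ implies $g'Qg'^\dagger=\pm gP_wQP_wg^\dagger=\pm Q$.
\item Hence each correction $R_Q(\eta)g'R_Q(\eta)^\dagger$ is either $g'$ or $\bar\eta R_Q(\eta^2)g'$, which is again of the same type.
\end{itemize}
With this invariant, induction on $\max(a,\ell)$ closes, where $g\in\C_a$ and $\eta$ has order $2^\ell$. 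It is exactly Lemma~\ref{lem:fib-phase}, applied after writing $R_P(\xi^2)h_v=\xi^2h_vR_P(\xi^{-2})$ and conjugating $P$ to $Z_1$ by a Clifford.

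Once repaired this way, your Claim~1 is a genuinely different route from the paper's Step~2. The paper uses three ingredients there:
\begin{itemize}
\item the standing hypothesis $h_v\in\GSC(3)$;
\item Lemma~\ref{lem:root-witness}, to put the label of $P$ into both Lagrangians of a GSC pair of $h_v$;
\item the monomial and diagonal classifications of~\cite{Anderson_2024}.
\end{itemize}
Your route needs neither the GSC hypothesis on $h_v$ nor those external results. It also gives $U'\in\C_{\max(k,M)}(3)=\C_k(3)$ for $U\in\C_k(3)$, since $M\le k$ by Proposition~\ref{lem:conductor-plane}.
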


\begin{proof}

By Lemma~\ref{lem:m-M-1}, $m=M-1$, $p\ne0$, and $P\coloneqq UZ^pU^\dagger$ is a Pauli. Hence
\[
U'=UR_{Z^p}(\xi)=R_P(\xi)U.
\]

\begin{itemize}
\item \step{Step 1:}{$\operatorname{cond}(U')<M$.} Let $W\doteq U$ be the matrix over $K_M$ chosen above. Since $m=M-1$, the automorphism $\delta=\sigma^2$ generates $\operatorname{Gal}(K_M/K_{M-1})$, sends $\xi$ to $-\xi$, and satisfies $\delta(W)\doteq WZ^p$ by Step~3 of the proof of Lemma~\ref{lem:galois-normal-form}. Therefore
\[
\delta(WR_{Z^p}(\xi))\doteq WZ^p R_{Z^p}(-\xi)=WR_{Z^p}(\xi),
\]
because $Z^pR_{Z^p}(-\xi)=R_{Z^p}(\xi)$. Dividing $WR_{Z^p}(\xi)$ by one of its nonzero entries gives a matrix fixed entrywise by $\delta$. Its entries lie in $K_{M-1}$, so $\operatorname{cond}(U')\leq M-1$.

\item \step{Step 2:}{$U'\in\C(3)$.} Fix $v\in V_3$. Since $P_vZ^p=(-1)^{\omega(v,(0,p))}Z^pP_v$, expanding $R_{Z^p}(\xi)P_vR_{Z^p}(\xi)^\dagger$ gives
\[
U'P_vU'^\dagger=
\begin{cases}
h_v,&\omega(v,(0,p))=0,\\
\xi^{-1}R_P(\xi^2)h_v,&\omega(v,(0,p))=1.
\end{cases}
\]
In the first case, $h_v\in\C(3)$ since $U\in\C(3)$. Suppose now $\omega(v,(0,p))=1$. Since $h_vPh_v^\dagger=-P$ and $h_v$ is generalized semi-Clifford, Lemma~\ref{lem:root-witness} gives $(L_v,L_v')\in\mathscr L(h_v)$ with the label of $P$ in $L_v$ and in $L_v'$. Choose Clifford gates $C_1,C_2$ with $C_1PC_1^\dagger=Z_1$ and $C_1h_vC_2=\Pi_vD_v$ monomial. Set
\[
D'_v=\Pi_v^\dagger R_{Z_1}(\xi^2)\Pi_vD_v,
\qquad C_1R_P(\xi^2)h_vC_2=\Pi_vD'_v.
\]
By~\cite[Corollary~A.11.1]{Anderson_2024}, $\Pi_v,D_v\in\C(3)$. The gate $D'_v$ is diagonal with dyadic relative phases, so it also lies in $\C(3)$ by the diagonal classification~\cite[Appendix~F]{Anderson_2024}. The same corollary then gives $R_P(\xi^2)h_v\in\C(3)$. There are finitely many $v$, so their corrections have a common finite level. Hence $U'\in\C(3)$.

\item \step{Step 3:}{$U'\in\GSC(3)$ if and only if $U\in\GSC(3)$.} Suppose that one of the gates $U,U'$, call it $V$, has a GSC pair $(L,L')\in\mathscr L(V)$. Both gates send $Z^p$ to $P$, since $U'Z^pU'^\dagger=R_P(\xi)PR_P(\xi)^\dagger=P$. Hence Lemma~\ref{lem:root-witness} lets us assume that the labels of $Z^p$ and $P$ lie in $L$ and $L'$, respectively. Then $R_{Z^p}(\xi)\in\A_L$, so
\[
U'\A_LU'^\dagger=UR_{Z^p}(\xi)\A_LR_{Z^p}(\xi)^\dagger U^\dagger=U\A_LU^\dagger.
\]
Thus $(L,L')$ is a GSC pair of both gates, proving the equivalence.
\end{itemize}

\end{proof}

This proves Theorem~\ref{thm:conductor-descent}. Indeed, the gate $U$ of Lemma~\ref{lem:conductor-step} is $KU_0C$ for the given gate $U_0$ and Clifford gates $K,C$ (Lemmas~\ref{lem:galois-normal-form} and~\ref{lem:m-M-1}). Multiplication by Clifford gates preserves the conductor (since Clifford gates are defined over $K_2$ up to scalar) and whether a gate is in $\GSC(3)$ or not (Lemma~\ref{lem:basic-operations}(1)). So $K^{-1}U'C^{-1}$ has the required properties for $U_0$. The case $U\in\GSC(3)$ was treated at the start of this subsection.

\section{Results on \texorpdfstring{$\mathcal{C}(4)$}{C(4)}}\label{app:four-qubit-results}

\subsection{Proof of Theorem~\ref{thm:n4-gaussian} when \texorpdfstring{$n=4$}{n=4}}\label{app:four-qubit}
\phantomsection\label{gauss-app:diagram}

In this subsection, let $U\in\C(4)$, and suppose that every $h_v=UP_vU^\dagger$, $v\in V_4$, is Gaussian. By Lemma~\ref{app:lem-ge-equiv}, we may also assume that the $h_v$ are generalized semi-Clifford, so that $h_v\in\mathscr R_4$ for $v\ne0$ by Proposition~\ref{app:lem-R_n}. These assumptions hold for all the lemmas in this subsection except Lemma~\ref{app:return-two}, which states its own hypotheses because the proof of Lemma~\ref{app:return-balanced} applies it to a gate other than $U$. We want to find a pair $(L, M)\in \mathscr L(U)$. A \textit{Pauli pair} of a gate $W$ is a pair $(s,t)$ of nonzero labels with $WP_tW^\dagger=\pm P_s$. For a gate $K$, let $X(K)=\{L:(L,M)\in\mathscr L(K)\text{ for some }M\}$. Figure~\ref{fig:F} shows how the lemmas below combine, and the proofs of Lemmas~\ref{app:return-single}--\ref{app:return-two} begin with a tree of their cases.

\begin{figure}
    \centering
    \begin{tikzpicture}
\node[pfnode] (endpoint) at (0,0)
  {\hyperref[app:finite-endpoint]{Lemma~\ref*{app:finite-endpoint}}};
\node[pfnode] (block) at (0,-1.5)
  {\hyperref[app:return-block]{Lemma~\ref*{app:return-block}}:
   $W=\operatorname{diag}(A,B)$};
\node[pfnode] (single) at (-3.4,-3.1)
  {\hyperref[app:return-single]{Lemma~\ref*{app:return-single}}};
\node[pfnode] (multiple) at (3.4,-3.1)
  {\hyperref[app:return-multiple]{Lemma~\ref*{app:return-multiple}}};
\node[pfnode] (balanced) at (-3.4,-4.8)
  {\hyperref[app:return-balanced]{Lemma~\ref*{app:return-balanced}}};
\node[pfnode] (two) at (1.6,-4.8)
  {\hyperref[app:return-two]{Lemma~\ref*{app:return-two}}};
\node[pfgoal] (goal) at (0,-6.3) {$\mathscr L(U)\ne\varnothing$};
\draw[pfedge] (endpoint) -- node[pflab,right] {Pauli pair} (block);
\draw[pfedge] (block) -- node[pflab,above left] {$\exists q:\ |X(K_q)|=1$} (single);
\draw[pfedge] (block) -- node[pflab,above right] {$\forall v:\ |X(K_v)|\ge2$} (multiple);
\draw[pfedge] (single) -- node[pflab,left]
  {$W^\dagger(I\otimes P_r)W$\\$=(P_a\pm P_b)/\sqrt2$} (balanced);
\draw[pfedge] (single) -- node[pflab,sloped,above=2pt] {$[P_t,Z_1]=0$, $P_t\ne Z_1$} (two);
\draw[pfedge] (multiple) -- node[pflab,right=2pt,pos=.55] {$[P_t,Z_1]=0$, $P_t\ne Z_1$} (two);
\draw[pfedge,dashed] (balanced) -- node[pflab,below=2pt] {uses} (two);
\draw[pfedge] (balanced) -- (goal);
\draw[pfedge] (two) -- (goal);
\draw[pfedge] (endpoint.east) -- (7,0) |- (goal.east);
\draw[pfline] (multiple.east) -- (7,-3.1);
\draw[pfedge] (single.west) -- (-7,-3.1) |- (goal.west);
\node[pflab,above left] at (7,0) {$\mathscr L(U)\ne\varnothing$};
\node[pflab,above left] at (7,-3.1) {$\mathscr L(W)\ne\varnothing$};
\node[pflab,above right] at (-7,-3.1) {$\mathscr L(W)\ne\varnothing$};
\end{tikzpicture}
    \caption{Proof of Theorem~\ref{thm:n4-gaussian} when $n=4$.}
    \label{fig:F}
\end{figure}

\begin{lemma}\label{app:finite-endpoint}
For $U \in \mathcal{C}(4)$, either $\mathscr L(U)\ne\varnothing$ or there exist $s,t\in V_4\setminus\{0\}$ such that $UP_tU^\dagger=\pm P_s$.
\end{lemma}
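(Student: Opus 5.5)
This lemma is the finite endpoint of the four-qubit Gaussian case, announced in the sketch of Theorem~\ref{thm:n4-gaussian}. We work under the standing hypotheses of this subsection: every $h_v$ is Gaussian and generalized semi-Clifford, so $h_v\in\mathscr R_4$ for $v\ne0$. The plan is to reduce to a finite computation over the normal forms of Proposition~\ref{app:lem-R_n} and to combine the (U) uniqueness, (P) parity and (F) finite certificate tools. By Lemma~\ref{lem:main-local-global}, $\mathscr L(U)\ne\varnothing$ exactly when the corrections $h_{v_1},\dots,h_{v_8}$ on a basis of $V_4$ normalize a common Pauli MASA $\A_N$. This holds because $h_{u+v}\doteq h_uh_v$ and only a basis is needed. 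So we must show: if no common $\A_N$ exists, then some $h_t$ is $\pm P_s$.

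\textbf{Step 1: enumerate normal forms.} Each $h_v\in\mathscr R_4$ is Clifford conjugate to a gate $M$ of the form~\eqref{core:eq:row}. We enumerate the data $(\lambda,f,\pi)$, with $\pi$ preserving the standard complete affine flag, $\pi^2=1$, $f+f\circ\pi=-\lambda$ and $\operatorname{tr}M=0$. We reduce modulo conjugation by affine--diagonal Clifford gates; the sketch of Theorem~\ref{thm:n4-gaussian} says $1{,}020$ classes remain. For each representative we record the finite set $\mathscr L(M)$, the subspaces $\mathsf P(M)$ and $\mathsf C(M)$, and which pairs of labels $(s,t)$ satisfy $MP_tM=\pm P_s$.

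\textbf{Step 2: use pairwise constraints from commutation.} The corrections pairwise commute up to sign, since $h_uh_v=\pm h_vh_u$, and every product $h_uh_v\doteq h_{u+v}$ lies again in $\mathscr R_4$. So the search is over pairs, and then over small generating configurations, of class representatives $(h_u,h_v)$ with $h_{u+v}$ also in the list. The conjugating Clifford of $h_v$ is quotiented by the centralizer data already recorded. The pattern from the three-qubit case (Proposition~\ref{gauss-lem:exceptional}, Lemma~\ref{gauss-lem:three-correction-compatibility}) suggests the following. Whenever some $h_v$ has a unique GSC pair $(S,S)$, compatibility with the other corrections forces all of them to normalize $\A_S$, and (U) finishes. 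When all relevant corrections are Clifford, we instead check the parity condition of Lemma~\ref{lem:tool-two-group}(1).

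\textbf{Step 3: read off the Pauli pair when no common MASA exists.} In the remaining configurations we show that some basis correction $h_t$ equals $\pm P_s$. The certificate should output, for every surviving configuration, an explicit $t\ne0$ with $h_t$ a Pauli. Since $h_t\in\mathscr R_4$ is traceless, $s\ne0$ automatically. The output of this step is a boxed finite fact in the style of (\ref{cert:three-qubit-flag-preserving-involution-rays}), in exact arithmetic over $\Q(\zeta_8)$.

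\textbf{Main obstacle.} Step~2 has to be made genuinely finite and small. Naively, choosing eight correction classes together with conjugating Cliffords is astronomically large. I would first try to bound the search by fixing one correction $h_{v_1}$ to a representative using the left Clifford freedom, $U\mapsto CU$. Then the choices for each further $h_v$ are restricted to Clifford conjugates that commute with $h_{v_1}$ up to sign and whose product with it lies in $\mathscr R_4$. The hope is that this orbit-by-orbit pruning collapses quickly: either a correction with $|\mathscr L|=1$ appears, which settles the case by (U), or all corrections are third-level and (P) applies, or a Pauli correction is found.
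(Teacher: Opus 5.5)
Your setup is fine: the standing hypotheses, and Lemma~\ref{lem:main-local-global} reducing $\mathscr L(U)\ne\varnothing$ to a common $\A_N$ for the basis corrections. But Steps~2 and~3 are not yet an argument. They describe a search whose outcome you predict (``the certificate should output\dots'', ``the hope is that this pruning collapses''), with no mechanism that turns information about individual corrections into the global dichotomy.

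Fixing $h_{v_1}=M$ and restricting each other $h_v$ to conjugates that commute with $M$ up to sign, with product in $\mathscr R_4$, is the right kind of pruning. However, it constrains each $h_v$ only relative to $M$. The joint configuration of the other seven corrections is still astronomically large, and you name no property, checkable one correction at a time, that forces either a common $\A_N$ or a Pauli pair.

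In particular, your claim that a correction with a unique GSC pair $(S,S)$ forces all the others to normalize $\A_S$ is exactly the hard global step of the three-qubit case. That step needed its own case analysis (Lemma~\ref{gauss-lem:three-correction-compatibility}), and you have no four-qubit analogue. Two smaller points: if all corrections are Clifford you need no parity argument, since $U\in\C_3(4)\subseteq\SC(4)$ by Theorem~\ref{thm:sc-six}. And the Pauli pair need not occur at a basis label, so Step~3 as stated is too narrow.

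The missing idea is to track the fixed Pauli labels $N(h)=\{s:hP_sh^\dagger=\pm P_s\}$ of the corrections instead of their GSC pairs. The paper proceeds as follows.
\begin{enumerate}
\item It picks $v\ne0$ minimizing $\rho(h_v)=(|N(h_v)|-1,|F(h_v)|-1,t(h_v))$ lexicographically, and conjugates so that $h_v$ is one of the $1{,}020$ representatives $M$.
\item Minimality puts every other $h_w$ into the finite set $\mathcal K(M)$. This set consists of those $h\in\mathscr R_4$ with $hM=\pm Mh$ and $i^\epsilon Mh\in\mathscr R_4$, and with $\rho\ge\rho(M)$ for both $h$ and $i^\epsilon Mh$.
\item If $|N(M)|-1\ge63$, all corrections are Clifford and we are done.
\item Otherwise \textup{(\ref{cert:plane-meeting-fixed-pauli-labels})} gives a plane $\Pi_M$ with $N(h)\cap\Pi_M\ne\{0\}$ for all $h\in\{M\}\cup\mathcal K(M)$. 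This is a per-correction check.
\item The global step is then pure linear algebra. For $0\ne s\in\Pi_M$ put $B_s=U^\dagger P_sU$ and $V_s=\{w:P_wB_sP_w^\dagger=\pm B_s\}$, which is a subspace of $V_4$. Since $s\in N(h_w)$ if and only if $w\in V_s$, the three $V_s$ cover $V_4$.
\item Three proper subspaces of an $\F_2$-space can cover it only as three distinct hyperplanes. That would give each $B_s$ exactly two Pauli terms with distinct difference labels, contradicting $B_sB_{s'}\doteq B_{s+s'}$.
\item Hence some $V_s=V_4$, so $B_s=\pm P_t$, which is the Pauli pair.
\end{enumerate}
In this branch the paper never searches for a GSC pair at all.
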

\begin{proof}
For $h\in\mathscr R_4$, let
\[
N(h)=\{s\in V_4:hP_sh^\dagger=\pm P_s\},\qquad
F(h)=(\operatorname{span}\operatorname{supp}_P h)^{\perp}, \qquad \rho(h)=(|N(h)|-1,|F(h)|-1,t(h)),
\]
where
\[
t(h)=
\begin{cases}
1, & \text{if $h \in \mathcal{C}_1(4)$},\\
2, & \text{if $h \in \mathcal{C}_2(4) \setminus \mathcal{C}_1(4)$},\\
3, & \text{if $h \in \mathcal{C}_3(4) \setminus \mathcal{C}_2(4)$},\\
4, & \text{if $h \in \mathcal{C}_k(4) \setminus \mathcal{C}_3(4)$ for $k \geq 4$}.
\end{cases}
\]
We order $\rho(h)$ lexicographically. Each entry of $\rho$ is invariant under Clifford conjugation. For $M\in\mathscr R_4$, let $\mathcal K(M)$ be the set of $h\in\mathscr R_4$ for which some $\epsilon\in\F_2$ satisfies
\[
hM=(-1)^\epsilon Mh,\qquad g=i^\epsilon Mh\in\mathscr R_4,\qquad
\rho(h)\ge\rho(M),\qquad \rho(g)\ge\rho(M).
\]

By Proposition~\ref{app:lem-R_n}, every $h\in\mathscr R_4$ is Clifford conjugate to some $M$ of the form~\eqref{core:eq:row}, determined by $(\lambda,f,\pi)\in\F_2\times(\Z/4)^{\F_2^4}\times Q_4$. After an affine permutation, which is Clifford, $Q_4$ is the group of order $2^{15}$ preserving a fixed complete affine flag (Lemma~\ref{lem:tool-flags}(1)). The set $N(M)$ is found by conjugating the $256$ Paulis, and an enumeration gives the following.

\begin{finitefact}{cert:r4-representatives-up-to-clifford-conjugacy}{Representatives of $\mathscr R_4$ up to Clifford conjugacy}{Lemma~\ref{app:finite-endpoint}}
A list of $1{,}020$ gates of the form~\eqref{core:eq:row} on four qubits contains a Clifford conjugate of every $h\in\mathscr R_4$. Exactly ten listed $M$ have $|N(M)|-1\ge63$, and they lie in $\C_2(4)$.

\tcblower

\textbf{Code.} \certlink{r4-representatives-up-to-clifford-conjugacy}

\end{finitefact}
The set $\mathcal K(M)$ lies in $\mathscr R_4$, which is finite: its elements are the $CM'C^\dagger$ with $M'$ of the form~\eqref{core:eq:row} and $C\in\C_2(4)$, which is finite modulo scalars. The sets $N(h)$ and the entries of $\rho(h)$ are found by conjugating Paulis, and $V_4$ has finitely many planes. A search over the listed $M$ gives the following.

\begin{finitefact}{cert:plane-meeting-fixed-pauli-labels}{A plane meeting the fixed Pauli labels}{Lemma~\ref{app:finite-endpoint}}

For each of the other $1{,}010$ listed $M$ there is a plane $\Pi_M\le V_4$ with
\[
N(h)\cap\Pi_M\ne\{0\}\qquad\text{for all }h\in\{M\}\cup\mathcal K(M),
\]
where $N(h)=\{s:hP_sh^\dagger=\pm P_s\}$.

\tcblower

\textbf{Code.} \certlink{plane-meeting-fixed-pauli-labels}

\end{finitefact}

By \textup{(\ref{cert:plane-meeting-fixed-pauli-labels})}, every listed $M$ with $|N(M)|-1<63$ has a two-dimensional subspace $\Pi_M\le V_4$ with
\begin{equation}\label{eq:finite-plane-lock}
N(h)\cap\Pi_M\ne\{0\}
\qquad(h\in\{M\}\cup\mathcal K(M)).
\end{equation}
Fix $v\ne0$ with $\rho(h_v)$ minimal among the $\rho(h_w)$, $w\ne0$. By \textup{(\ref{cert:r4-representatives-up-to-clifford-conjugacy})}, there exists $C\in\C_2(4)$ with $Ch_vC^\dagger=M$. Replacing $U$ by $CU$ replaces each $h_w$ by $Ch_wC^\dagger$, which preserves $\rho$, the hypotheses, and both alternatives of the lemma. We may therefore assume $h_v=M$.
\begin{itemize}
\item \step{Case 1:}{$|N(M)|-1\ge63$.} By minimality, $|N(h_w)|-1\ge63$ for every $w\ne0$. Since each $h_w$ is Clifford conjugate to a listed representative with the same $N$-count, \textup{(\ref{cert:r4-representatives-up-to-clifford-conjugacy})} gives $h_w\in\C_2(4)$. Hence, $U\in\C_3(4)\subseteq\SC(4)$ by Theorem~\ref{thm:sc-six}, and $\mathscr L(U)\ne\varnothing$.

\item \step{Case 2:}{$|N(M)|-1<63$.} We first show that $N(h_w)\cap \Pi_M\neq\{0\}$ for every $w\neq 0$. Let $w\ne0$ and $w \neq v$, $h=h_w$, and $\epsilon=\omega(v,w)$, so that $hM=(-1)^\epsilon Mh$. Then $g=i^\epsilon Mh=\pm h_{v+w}\in\mathscr R_4$, and minimality gives $\rho(h)\ge\rho(M)$ and $\rho(g)\ge\rho(M)$, so $h\in\mathcal K(M)$. With $h_v=M$,~\eqref{eq:finite-plane-lock} gives $N(h_w)\cap\Pi_M\ne\{0\}$ for every $w\ne0$.

Now, for $0\ne s\in\Pi_M$, let $B_s=U^\dagger P_sU=\sum_ub_uP_u$, and $V_s=\{w\in V_4:P_wB_sP_w^\dagger=\pm B_s\}$. Conjugating by $U$ shows that $s\in N(h_w)$ if and only if $w\in V_s$, so the three sets $V_s$ cover $V_4$. Conjugation by $P_w$ multiplies $b_u$ by $(-1)^{\omega(w,u)}$, so
\[
V_s=\operatorname{span}(\operatorname{supp}_PB_s+\operatorname{supp}_PB_s)^\perp
\]
is a subspace. If all three $V_s$ were proper, they would be three distinct hyperplanes, the only way three proper subspaces cover a vector space over $\F_2$. Then each $B_s$ would have exactly two Pauli terms, with labels $u_s$ and $u_s+d_s$ and distinct $d_s$. Moreover, for $s\ne s'$ the product $B_sB_{s'}$ would have the four labels $u_s+u_{s'}+\{0,d_s,d_{s'},d_s+d_{s'}\}$, while $B_sB_{s'}\doteq B_{s+s'}$ has two. Hence $V_s=V_4$ for some $s$, so $B_s$ has a single Pauli term. As $B_s$ is a Hermitian unitary and $s\ne0$, $B_s=\pm P_t$ with $t\ne0$, that is, $UP_tU^\dagger=\pm P_s$.
\end{itemize}
\end{proof}

\begin{lemma}\label{app:return-block}
Suppose $U$ has a Pauli pair. Then there are $C_1,C_2\in\C_2(4)$ such that $W=C_1UC_2^\dagger$ has the following properties.
\renewcommand{\labelenumi}{(\theenumi)}
\begin{enumerate}
\item $WZ_1W^\dagger=Z_1$, so $W=\operatorname{diag}(A,B)$ with $A,B\in\C_4(3)\cap\GSC(3)$.
\item For each $v\in V_3$, let $K_v=BP_vA^\dagger$. Then $\langle\Pau(3),K_v\rangle$ is a finite $2$-group modulo scalars, and it normalizes $\A_L$ for some Lagrangian $L\le V_3$.
\item After replacing $W$ by $W(I\otimes C)$ for a single $C\in\C_2(3)$, the gate $W$ is a scalar multiple of a matrix over $\Q(\zeta_8)$, and for the automorphism $\sigma$ of $\Q(\zeta_8)$ with $\sigma(\zeta_8)=-\zeta_8$ there are $e\in\F_2$ and $p\in\F_2^3$ with
\[
\sigma(W)\doteq W(Z_1^e\otimes Z^p),\qquad
\sigma(A)\doteq AZ^p,\qquad \sigma(B)\doteq BZ^p.
\]
In particular $(0,p)\in L$ for every $(L,M)\in\mathscr L(A)\cup\mathscr L(B)$.
\item For every $(L,M)\in\mathscr L(A)$, we have $A=\lambda C_3DC_4$ with $\lambda\in\T$, $C_3,C_4\in\C_2(3)$, $C_4\A_LC_4^\dagger=\A_Z$ and
\[
D=\pi_0\,i^{axy+bxz+cyz+dxyz}\,T_{p'},
\qquad \pi_0\in\{I,\pi\},\quad
a,b,c\in\F_2,\ d\in\Z/4,
\]
where $Z^{p'}\doteq C_4Z^pC_4^\dagger$. The same holds for $B$. Here $i^{f}$ is the diagonal gate $\ket{x,y,z}\mapsto i^{f(x,y,z)}\ket{x,y,z}$, $\pi$ is the permutation gate of $(x,y,z)\mapsto(x,y,z+xy)$, and $T_{p'}=(I+Z^{p'})/2+\zeta_8(I-Z^{p'})/2$.
\end{enumerate}
\end{lemma}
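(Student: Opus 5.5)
Starting from the Pauli pair $UP_tU^\dagger=\pm P_s$, I choose $C_2$ with $C_2^\dagger Z_1C_2\doteq P_t$ and $C_1$ with $C_1P_sC_1^\dagger\doteq Z_1$. Any leftover sign is absorbed by a Pauli factor $X_1$ on one side. Then $W=C_1UC_2^\dagger$ commutes with $Z_1$, so $W=\diag(A,B)$. By Lemma~\ref{lem:tool-blocks}(2), $A,B\in\C(3)$.

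For the level bound I use the fact that the corrections of $W$ are Clifford conjugates of the $h_v$, hence Gaussian. The corrections of $A$ at $P_v$ are the diagonal blocks of $W(I\otimes P_v)W^\dagger$, so they are Gaussian. By Theorem~\ref{thm:n3-gaussian} for $n=3$ (proved in Appendix~\ref{gauss-app:ge3}), $A,B\in\C_4(3)\cap\GSC(3)$. This gives (1).

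For (2), I note that $(X_1\otimes I)W(X_1\otimes P_v)W^\dagger$ has blocks $BP_vA^\dagger$ and $AP_vB^\dagger$. Lemma~\ref{lem:tool-blocks}(3) reduces the claim to showing that $\langle\Pau(4),h\rangle$ is a finite $2$-group for $h=W(X_1\otimes P_v)W^\dagger$. Since $h$ is a Clifford conjugate of some $h_w\in\mathscr R_4$, Proposition~\ref{app:lem-R_n} gives exactly this. Then Lemma~\ref{lem:tool-two-group}(2) yields a Pauli MASA normalized by $K_v$, and the group also contains the Paulis.

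For (3), I apply the conductor bound of Proposition~\ref{lem:conductor-plane}: all corrections have conductor $2$, so $\operatorname{cond}(W)\le3$, and $W$ is defined over $\Q(\zeta_8)$. Every correction is fixed by $\sigma$ up to scalar, so Proposition~\ref{lem:tool-galois}(3) gives $\sigma(W)\doteq WP_a$. Comparing blocks of a block-diagonal matrix forces $P_a$ to commute with $Z_1$, so $P_a\doteq Z_1^e\otimes P_{a'}$. The labels $a'$ are then aligned to $(0,p)$ by a right Clifford $I\otimes C$, which is $\sigma$-fixed, after checking that $a'$ behaves like a $Z$-type label. Concretely, $\sigma(A)\doteq AP_{a'}$ and $\sigma$ is an involution, so $P_{a'}$ is Hermitian, and any nonzero $a'$ can be sent to $(0,p)$ by some Clifford. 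The inclusion $(0,p)\in L$ then follows from Proposition~\ref{lem:tool-galois}(2).

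For (4), I take $(L,M)\in\mathscr L(A)$ and use Proposition~\ref{lem:tool-monomial} to write $A\doteq C_3\Pi D'C_4$, where $\pi_0\in\{I,\pi\}$ and $D'$ has entry ratios in $\mu_8$. The Galois relation becomes $\sigma(D')\doteq D'Z^{p'}$, with $p'$ determined by $C_4Z^pC_4^\dagger\doteq Z^{p'}$. Hence $D'=i^{g}T_{p'}$ up to scalar, for some $g:\F_2^3\to\Z/4$. Since $D'\in\C_4(3)$, the diagonal level criterion of~\cite{CGK2017} lets me write $g$ as a degree-three polynomial. Its linear part, and the degree-$2$ terms with even coefficients (each $2x_ix_j$ gives a $\CZ$), are Clifford and are moved into $C_3$ and $C_4$. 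Moving them past $\Pi$ requires commuting these diagonal Cliffords past $\pi$, which may produce terms that I absorb by adjusting coefficients. What remains is $i^{axy+bxz+cyz+dxyz}$. The same argument applies to $B$.

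The main obstacle is this last normalization: commuting diagonal Cliffords through $\pi$ while keeping the $T_{p'}$ factor in place and the coefficients in the stated ranges.
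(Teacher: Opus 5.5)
Your proposal is correct and follows the paper's proof essentially step by step. The differences are cosmetic:
\begin{itemize}
\item In (1) you invoke Theorem~\ref{thm:n3-gaussian} for $n=3$, which packages Lemma~\ref{app:lem-ge-equiv} with Proposition~\ref{gauss-prop:n3-gsc}. The paper instead cites Proposition~\ref{gauss-prop:n3-gsc} directly, getting the generalized semi-Clifford property of the blocks from Lemma~\ref{lem:tool-blocks}(1).
\item In (3) you obtain $\operatorname{cond}(W)\le3$ from Proposition~\ref{lem:conductor-plane}. The paper reads it off the explicit form of $\mathscr R_4$ and the fact that $W$ is determined up to scalar by its corrections, which is the same linear-system argument.
\end{itemize}

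The obstacle you flag in (4) does not actually arise. In $A\doteq C_3\Pi D'C_4$ the diagonal factor already sits to the right of $\Pi$. The Clifford pieces of $i^{g}$, namely $i^{\ell}$ with $\ell$ linear and $(-1)^{x_ix_j}$, can therefore simply be absorbed into $C_4$, with no commutation past $\pi$ at all.

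Being diagonal, these pieces commute with $T_{p'}$ and with $Z^{p'}$, and they normalize $\A_Z$. So both $C_4\A_LC_4^\dagger=\A_Z$ and $Z^{p'}\doteq C_4Z^pC_4^\dagger$ survive the absorption, and this is exactly what the paper does.

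Do not try to move them into $C_3$ instead. Conjugating through $\pi$ can destroy the Clifford property: for example $\pi S_z\pi^{-1}=i^{\,z+xy+2xyz}$, which is not Clifford.

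Your appeal to the diagonal level criterion to bound $\deg g$ is also unnecessary. Every function $\F_2^3\to\Z/4$ is already a multilinear polynomial of degree at most three.
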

\begin{proof}
Choose $0\neq t\in\mathsf P(U)$. Then, for some $0\neq s\in V_4$ and $\eta\in\{\pm1\}$, we have $UP_tU^\dagger=\eta P_s$. Choose $C_1,C_2\in\C_2(4)$ such that $C_2P_tC_2^\dagger=Z_1$ and $C_1(\eta P_s)C_1^\dagger=Z_1$. Then $WZ_1W^\dagger=Z_1$, so $W=\operatorname{diag}(A,B)$ on the eigenspaces of $Z_1$, with blocks on qubits $2,3,4$. As $\mathscr R_4$ is closed under Clifford conjugation and sign,
\[
W(I\otimes P_v)W^\dagger=\operatorname{diag}(AP_vA^\dagger,\,BP_vB^\dagger)\in\mathscr R_4\qquad(0\ne v\in V_3).
\]
Its blocks, the corrections of $A$ and $B$, satisfy the hypotheses of Proposition~\ref{gauss-prop:n3-gsc}: they are Gaussian, they are generalized semi-Clifford by Lemma~\ref{lem:tool-blocks}(1) applied to a GSC pair of the left side, and they have finite level by Lemma~\ref{lem:tool-blocks}(2). Therefore, $A,B\in\C_4(3)\cap\GSC(3)$.

For $v\in V_3$ the correction with $X$ on the first qubit is
\[
W(X_1\otimes P_v)W^\dagger
=\begin{pmatrix}0&K_v^\dagger\\K_v&0\end{pmatrix},
\]
which anticommutes with $Z_1$. This correction lies in $\mathscr R_4$, so it generates, together with $\Pau(4)$, a finite $2$-group by Proposition~\ref{app:lem-R_n}. By Lemma~\ref{lem:tool-blocks}(3) with $j=1$ and $e=1$, so is $\langle\Pau(3),K_v\rangle$, and it fixes a Pauli MASA $\A_L$ by Theorem~\ref{thm:two-group-chamber}.

Let $\sigma$ be the automorphism of $\Q(\zeta_8)$ with $\sigma(\zeta_8)=-\zeta_8$, which fixes $i$. By Proposition~\ref{app:lem-R_n}, every $h'_v=WP_vW^\dagger$ has entries in $\Q(\zeta_8)$, so $W$, determined up to scalar by the $h'_v$, may be taken over $\Q(\zeta_8)$. Each $h'_v$ is Gaussian, so $\sigma(h'_v)\doteq h'_v$, and Proposition~\ref{lem:tool-galois}(3) gives $\sigma(W)\doteq WP_a$ for some $a\in V_4$. Applying this to $Z_1=\sigma(WZ_1W^\dagger)$ shows that $P_a$ commutes with $Z_1$, so $P_a\doteq Z_1^e\otimes P_{a'}$ with $e\in\F_2$ and $a'\in V_3$. Replacing $W$ by $W(I\otimes C)$ for one $C\in\C_2(3)$ with $C^\dagger P_{a'}C\doteq Z^p$, which keeps (1) and (2) with $A,B$ replaced by $AC,BC$, and using $\sigma(C)\doteq C$ via Proposition~\ref{lem:tool-galois}(1), we get $\sigma(W)\doteq W(Z_1^e\otimes Z^p)$, so $\sigma(A)\doteq AZ^p$ and $\sigma(B)\doteq BZ^p$. If $(L,M)\in\mathscr L(A)$, then Proposition~\ref{lem:tool-galois}(2) gives $Z^p\doteq A^{-1}\sigma(A)\in\A_L$, that is, $(0,p)\in L$. The same argument applies to $B$.

Lastly, let $(L,M)\in\mathscr L(A)$. By Proposition~\ref{lem:tool-monomial} applied with $n=3$ to $A$, which is defined over $K_3$ up to scalar, $A\doteq C_3\pi_0\Delta C_4$ with $C_4\A_LC_4^\dagger=\A_Z$, $\pi_0\in\{I,\pi\}$, and $\Delta$ diagonal with entries $\zeta_8^{\,g(x)}$ after a scalar, $g:\F_2^3\to\Z/8$. Let $Z^{p'}\doteq C_4Z^pC_4^\dagger$, a diagonal Pauli by (3). By Proposition~\ref{lem:tool-galois}(1), $\Delta^{-1}\sigma(\Delta)\doteq C_4A^{-1}\sigma(A)C_4^\dagger\doteq Z^{p'}$, and $\Delta^{-1}\sigma(\Delta)=\operatorname{diag}((-1)^{g(x)})$, so $g(x)\equiv p'\cdot x+c_0\pmod2$ for a constant $c_0$. Hence $\zeta_8^{\,g}=\zeta_8^{\,c_0}\,T_{p'}\,i^{g'}$ for some $g':\F_2^3\to\Z/4$. Write $g'$ as a multilinear polynomial with coefficients in $\Z/4$. The diagonal Cliffords $i^{\ell}(-1)^{q}$, with $\ell$ linear and $q$ quadratic, remove its linear part and reduce its quadratic coefficients modulo $2$. Being diagonal, they commute with $T_{p'}$ and $Z^{p'}$, so absorbing them into $C_4$ keeps $C_4\A_LC_4^\dagger=\A_Z$ and $Z^{p'}\doteq C_4Z^pC_4^\dagger$. This gives $D=\pi_0\Delta$ with $\Delta\doteq i^{axy+bxz+cyz+dxyz}\,T_{p'}$. The same argument applies to $B$.
\end{proof}

\begin{lemma}\label{app:return-single}
Let $W=\operatorname{diag}(A,B)$ be as in Lemma~\ref{app:return-block}, with $K_v=BP_vA^\dagger$, and suppose that $|X(K_q)|=1$ for some $q\in V_3$. Then at least one of the following holds.
\renewcommand{\labelenumi}{(\theenumi)}
\begin{enumerate}
\item $\mathscr L(W)\ne\varnothing$;
\item $W$ has a Pauli pair $(s,t)$ with $P_t\ne Z_1$ and $P_tZ_1=Z_1P_t$;
\item there exist $0\ne r\in V_3$ and $a,b\in V_4$ with $\omega(a,b)=1$ and $W^\dagger(I\otimes P_r)W=(P_a\pm P_b)/\sqrt2$.
\end{enumerate}
\end{lemma}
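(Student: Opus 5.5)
The plan is to exploit the uniqueness of the input Lagrangian of $K_q$. Let $X(K_q)=\{L_0\}$. By Lemma~\ref{app:return-block}(2), $\langle\Pau(3),K_q\rangle$ normalizes some Pauli MASA $\A_L$, so $(L,L)\in\mathscr L(K_q)$. Hence $L=L_0$, and $K_q$ normalizes $\A_{L_0}$, with $\mathscr L(K_q)=\{(L_0,L_0)\}$. By Lemma~\ref{lem:tool-blocks}(1), the first alternative $\mathscr L(W)\ne\varnothing$ holds as soon as $A$ and $B$ share a GSC pair $(N,N')$. This is because then $(\Span\{z_1\}\oplus N,\Span\{z_1\}\oplus N')\in\mathscr L(W)$. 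So the goal is to produce a common pair of $A$ and $B$, or else to land in alternative (2) or (3).

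\emph{Step 1: reduce to a statement about $A$.} For $(N,N')\in\mathscr L(A)$, the identity $K_q=BP_qA^\dagger$ gives $K_q\A_{N'}K_q^\dagger=B P_q\A_N P_q B^\dagger=B\A_NB^\dagger$. Consequently $(N,N')\in\mathscr L(B)$ if and only if $(N',N')\in\mathscr L(K_q)$, that is, if and only if $N'=L_0$. The task is therefore to find $(N,L_0)\in\mathscr L(A)$, i.e.\ to show $A^\dagger\A_{L_0}A=\A_N$ for some Lagrangian $N$. By the same computation with $K_v$ in place of $K_q$, and using $K_vK_q^{-1}\doteq BP_{v+q}B^\dagger$, one also needs $\A_{L_0}$ to be compatible with the corrections of $B$; this mirrors the structure of Lemma~\ref{lem:main-local-global}.

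\emph{Step 2: the normal form and a finite check.} I would put $A$ and $B$ into the normal form of Lemma~\ref{app:return-block}(4): $D=\pi_0\,i^{axy+bxz+cyz+dxyz}T_{p'}$. This leaves finitely many diagonal/permutation cores, each with a constraint $(0,p)\in L$ coming from part (3). The remaining freedom is the Clifford pair $(C_3,C_4)$ for $A$ and the analogous pair for $B$. Together with the requirement that $\langle\Pau(3),K_v\rangle$ be a finite $2$-group for every $v$, this is a finite search modulo scalars. The search would check whether $A^\dagger\A_{L_0}A$ is a Pauli MASA. When it is not, the search would determine how it fails, recorded as a further finite certificate. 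I expect exactly two failure modes:
\begin{itemize}
\item some Pauli $P_t$ commuting with $Z_1$ with $t\neq z_1$, for instance $I\otimes Z^p$ or a label in $L_0$, is mapped by both blocks to the same Pauli up to sign, giving the Pauli pair of alternative (2);
\item $A^\dagger P_rA$ and $B^\dagger P_rB$ differ for some $r\in L_0$, so that $W^\dagger(I\otimes P_r)W$ is a non-Pauli involution. The Galois constraint $\sigma(W)\doteq W(Z_1^e\otimes Z^p)$ then forces it, being defined over $\Q(\zeta_8)$ with $\sigma$ acting by a sign, into the two-term shape $(P_a\pm P_b)/\sqrt2$ with $\omega(a,b)=1$. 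This is alternative (3).
\end{itemize}

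\emph{Main obstacle.} The hard step is ruling out every other possibility in Step 2. This means showing that whenever $\A_{L_0}$ fails to pull back to a Pauli MASA under $A$, the witness is either a shared Pauli or a balanced two-term conjugate, and never something with larger support. I would first try to bound supports using Lemma~\ref{lem:quarter-turn}, which says that supports of finite-level involutions are isotropic whenever a quarter turn stays in the hierarchy. I would also use the $\sigma$-eigenvalue argument: since $\sigma$ multiplies each entry ratio by $\pm1$, the coefficients lie in $\{0,\pm1,\pm1/\sqrt2\}$. If these two tools do not close the argument by hand, I would fall back on an exhaustive certificate over the normal forms of Lemma~\ref{app:return-block}(4) and the $135$ Lagrangians $L_0$.
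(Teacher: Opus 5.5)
Your Step~1 is correct and a clean way to phrase the goal. Since $\mathscr L(K_q)=\{(L_0,L_0)\}$, alternative (1) holds exactly when $A^\dagger\A_{L_0}A$ is a Pauli MASA, i.e.\ when every $a_v=AP_vA^\dagger$ normalizes $\A_{L_0}$.

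However, everything the lemma actually asserts lives in Step~2, and there you only state the outcome you expect from a search you have not specified. That search is also not small. After normalizing $A$, the block $B$ still carries two independent Clifford factors, roughly $|\Sp(6,2)|^2\approx2\times10^{12}$ choices modulo Paulis for each pair of cores. Each choice would have to be tested against the finite-$2$-group condition for all $K_v$.

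More seriously, the mechanism you give for (3) is false. For the Pauli coefficients $\alpha_s$ of $A^\dagger P_rA$, Lemma~\ref{app:return-block}(3) only gives $\sigma(\alpha_s)=(-1)^{\omega((0,p),s)}\alpha_s$. This sign need not be constant on the support. Even when it is, $\sum_s\alpha_s^2=1$ with all $\alpha_s\in\Z[1/2]$ or all in $\sqrt2\,\Z[1/2]$ still allows four terms $\pm\frac12$, eight terms $\pm\frac{1}{2\sqrt2}$, sixteen terms $\pm\frac14$, and so on. So the coefficients are not confined to $\{0,\pm1,\pm1/\sqrt2\}$. In particular the four-term shape, which fits neither of your two failure modes, genuinely has to be dealt with. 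Lemma~\ref{lem:quarter-turn} cannot bound supports here either: its second part needs $(I+iJ)/\sqrt2$ to have finite level, which is only available once a two-term expression as in (3) is in hand. That is how it is used in Lemma~\ref{app:return-balanced}.

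The missing ideas are as follows.

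\emph{Reduction on $\delta$.} Set $\Sigma=\mathsf C(A)+\mathsf C(B)$ and $\delta=\dim\Sigma$. Input Lagrangians of $A$ and $B$ lie in $\mathsf C(A)$ and $\mathsf C(B)$ by Lemma~\ref{lem:tool-unique}(3), so $3\le\delta\le6$. Transporting $\mathscr L(K_q)$ through $K_{q+s}\doteq K_qa_s$ and $K_{q+t}\doteq b_tK_q$ shows that every $K_v$ with $v\in q+\Sigma$ has the same unique pair $(L_0,L_0)$.
\begin{itemize}
\item For $\delta=3$, this forces $L_A=L_B=\Sigma$ with a common output.
\item For $\delta=6$, every $a_v=K_0^\dagger K_v$ normalizes $\A_{L_0}$.
\end{itemize}
Both give (1) through your Step~1 criterion.

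\emph{Choice of $r$ for $\delta\in\{4,5\}$.} Here the key is a careful choice of $r$: a nonzero $r\in L_0$ with $P_r$ fixed up to sign by $K_q$, by every $a_s$ ($s\in\mathsf C(A)$) and by every $b_t$ ($t\in\mathsf C(B)$). It comes from three ingredients: the flag structure of the $K_v$, which gives a $K_q$-invariant subspace $R_0\le L_0$; the elementary abelian group of actions of the $a_v$ on $R_0$, whose products with the action of $K_q$ are unipotent; and a rank-one linear-algebra argument producing a common fixed vector.

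\emph{Counting terms.} For such $r$ one has $B^\dagger P_rB=\pm P_q(A^\dagger P_rA)P_q$, and the common support lies in one coset of $\Sigma^\perp$, so there are at most $2^{6-\delta}\le4$ terms. The Galois sign is constant on that support because $(0,p)\in L_A\subseteq\Sigma$. The mod-$8$ count of $\sum_sm_s^2$ then leaves three possibilities:
\begin{itemize}
\item one term, giving (2);
\item two terms $\pm1/\sqrt2$, giving (3);
\item four terms $\pm\frac12$.
\end{itemize}
The last case must be turned into (1) by a separate argument. A finite check forces the cubic coefficient $d$ of the block normal form to be odd. This makes the two supports span $L_A$ and $L_B$ respectively, so $L_A=L_B$.

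Without these steps, your proposal does not establish the trichotomy.
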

\begin{proof}
For $v\in V_3$, let $a_v=AP_vA^\dagger$ and $b_v=BP_vB^\dagger$, and let $\Sigma=\mathsf C(A)+\mathsf C(B)$ and $\delta=\dim\Sigma$. Fix $(L_A,M_A)\in\mathscr L(A)$ and $(L_B,M_B)\in\mathscr L(B)$. By Lemma~\ref{lem:tool-unique}(3), $L_A\subseteq\mathsf C(A)$ and $L_B\subseteq\mathsf C(B)$, so $3\le\delta\le6$.

We first show that every $K_v$ with $v\in q+\Sigma$ has the same unique GSC pair $(N,N)$. By Lemma~\ref{app:return-block}(2), $K_q$ normalizes some $\A_N$, so $\mathscr L(K_q)=\{(N,N)\}$. For $s\in\mathsf C(A)$ and $t\in\mathsf C(B)$, $K_{q+s}\doteq K_qa_s$ and $K_{q+t}\doteq b_tK_q$ with $a_s,b_t\in\C_2(3)$. By Lemma~\ref{lem:tool-transport}(1), each $K_v$ with $v\in q+\Sigma$ has exactly one GSC pair, which is of the form $(N_v,N_v)$ by Lemma~\ref{app:return-block}(2). Comparing with $(\gamma_{a_s}^{-1}N,N)$ and $(N,\gamma_{b_t}N)$ gives $N_v=N$, $\gamma_{a_s}N=N$ and $\gamma_{b_t}N=N$.

We split into cases on $\delta$, and for $\delta\in\{4,5\}$ on the number of Pauli terms of $H=W^\dagger(I\otimes P_r)W$, with $r$ constructed in Case~3. Cases~1, 2 and~3c use \tool{U}: the unique GSC pair of $K_q$ becomes a GSC pair of $W$.

\begin{inlinefigure}
\begin{tikzpicture}
\node[pfnode] (s) at (0,0) {$K_q$, $\delta=\dim\Sigma$};
\node[pfnode] (c1) at (4,1.5) {Case 1};
\node[pfnode] (c2) at (4,0.3) {Case 2};
\node[pfnode] (c3) at (4,-1.5) {Case 3};
\node[pfnode] (c3c) at (8.4,-0.4) {Case 3c};
\node[pfnode] (c3a) at (8.4,-1.5) {Case 3a};
\node[pfnode] (c3b) at (8.4,-2.6) {Case 3b};
\node[pfgoal] (o1) at (12,0.6) {(1)};
\node[pfgoal] (o2) at (12,-1.5) {(2)};
\node[pfgoal] (o3) at (12,-2.6) {(3)};
\draw[pfedge] (s) -- node[pflab,sloped,above] {$\delta=3$} (c1);
\draw[pfedge] (s) -- node[pflab,sloped,below] {$\delta=6$} (c2);
\draw[pfedge] (s) -- node[pflab,sloped,below] {$\delta\in\{4,5\}$} (c3);
\draw[pfedge] (c3) -- node[pflab,sloped,above,pos=.6] {$4$ terms} (c3c);
\draw[pfedge] (c3) -- node[pflab,above,pos=.6] {$1$ term} (c3a);
\draw[pfedge] (c3) -- node[pflab,sloped,below,pos=.6] {$2$ terms} (c3b);
\draw[pfedge] (c1) -- (o1);
\draw[pfedge] (c2) -- (o1);
\draw[pfedge] (c3c) -- (o1);
\draw[pfedge] (c3a) -- (o2);
\draw[pfedge] (c3b) -- (o3);
\end{tikzpicture}
\caption{Case analysis in the proof of Lemma~\ref{app:return-single}.}
\end{inlinefigure}

\begin{itemize}
\item \step{Case 1:}{$\delta=3$.} In this case, $\Sigma=L_A=L_B$, since both Lagrangians lie in the three-dimensional $\Sigma$. So $K_q\A_{M_A}K_q^\dagger=BP_q\A_{L_A}P_qB^\dagger=\A_{M_B}$, and uniqueness gives $M_A=M_B=N$. Thus $(\Sigma,N)\in\mathscr L(A)\cap\mathscr L(B)$, and $(\Span\{z_1\}\oplus\Sigma,\Span\{z_1\}\oplus N)$ is a GSC pair for $W$ by Lemma~\ref{lem:tool-blocks}(1). This is (1).

\item \step{Case 2:}{$\delta=6$.} Since every $K_v$ normalizes $\A_N$, so does $a_v=K_0^\dagger K_v$ for every $v\in V_3$. So the MASA $A^\dagger\A_NA$ is invariant under conjugation by every Pauli, and Proposition~\ref{lem:lagrangian-expanded} gives $A^\dagger\A_NA=\A_L$ for a Lagrangian $L$. Then $(L,N)\in\mathscr L(A)$, and $B\A_LB^\dagger=K_0\A_NK_0^\dagger=\A_N$. As before this gives (1).

\item \step{Case 3:}{$\delta\in\{4,5\}$.} We construct $0\ne r\in N$ such that $K_q$, every $a_s$ with $s\in\mathsf C(A)$ and every $b_t$ with $t\in\mathsf C(B)$ map $P_r$ to $\pm P_r$, and show that $H=W^\dagger(I\otimes P_r)W$ has one, two or four Pauli terms.

For a gate $K$ normalizing $\A_N$, let $R(K)=\{r\in N:KP_rK^\dagger\in\pm\Pau(3)\}$, and let $R_0=R(K_q)$. Every $K_v$ with $v\in q+\Sigma$ normalizes $\A_N$, and $\langle\Pau(3),K_v\rangle$ is a finite $2$-group by Lemma~\ref{app:return-block}(2). In Clifford coordinates in which $\A_N$ is diagonal, $K_v$ is monomial, and its permutation preserves a complete affine flag by Lemma~\ref{lem:tool-two-group}(2). In coordinates adapted to the flag it reads $(x+c,\,y+g(x),\,z+h(x,y))$, and its inverse has the same form. As $K_vZ^rK_v^\dagger=(-1)^{r\cdot\pi_v^{-1}(x)}$ and $K_v^\dagger Z^rK_v=(-1)^{r\cdot\pi_v(x)}$, both are Paulis exactly when $r$ has zero coefficient on the last flag coordinate or the last coordinate of $\pi_v$ is affine. Hence
\[
R(K_v)=\{r\in N:K_v^\dagger P_rK_v\in\pm\Pau(3)\},
\]
which is $N$ or a plane, and $\dim R_0\in\{2,3\}$. Hence $K_q$ maps $R_0$ onto itself by a linear map $\Phi$.

Next we show that for $v\in\Sigma$, $a_v$ acts on $R_0$ by an elementary abelian group $\mathcal E$ such that $\Phi e$ is unipotent for every $e\in\mathcal E$. For $v\in\Sigma$, write $v=s+t$ with $s\in\mathsf C(A)$ and $t\in\mathsf C(B)$. Then $a_v\doteq a_sa_t$ and $a_t\doteq K_q^\dagger b_tK_q$, so every $a_v$ normalizes $\A_N$, and $K_{q+v}\doteq K_qa_v$. For $s\in\mathsf C(A)$, the Clifford $a_s$ acts on $N$ by a linear map $\bar a_s$, and $K_{q+s}\doteq K_qa_s$ gives $R(K_{q+s})=\bar a_s^{-1}R_0$ from inputs and $R(K_{q+s})=R_0$ from outputs. Hence $\bar a_sR_0=R_0$. Likewise, $K_{q+t}\doteq b_tK_q$ gives $\bar b_tR_0=R_0$ for $t\in\mathsf C(B)$. Consequently, $a_t=K_q^\dagger b_tK_q$ maps each $P_r$ with $r\in R_0$ to $\pm P_{\tilde a_tr}$ with $\tilde a_t=\Phi^{-1}\bar b_t\Phi$. As a result, every $a_v$ acts on $R_0$ by a linear map $\tilde a_v$, with $\tilde a_s=\bar a_s$ for $s\in\mathsf C(A)$. Since $a_va_w=\pm a_wa_v$ and $a_v^2\doteq I$, the maps $\tilde a_v$ commute and square to the identity, so $\mathcal E=\{\tilde a_v:v\in\Sigma\}$ is elementary abelian. Finally, $\Phi\tilde a_v$ is the action of $K_qa_v\doteq K_{q+v}$ on $R_0$, which has $2$-power order because $\langle\Pau(3),K_{q+v}\rangle$ is a finite $2$-group. So every $\Phi e$ is unipotent.

Now we find a common fixed vector $r\ne0$ of $\Phi$ and $\mathcal E$ in $R_0$. For $e\in\mathcal E$, $(e-I)^2=0$, so $\operatorname{rank}(e-I)\le1$ as $\dim R_0\le3$. Commuting rank-one square-zero maps have zero product, so $\{e-I:e\in\mathcal E\}$ is a linear space of maps of rank at most one. Hence these maps have a common image line or a common kernel hyperplane. Write $e=I+u_e\varphi_e$. Since $\Phi$ and $\Phi e$ are both unipotent,
\[
1=\frac{\det(tI-\Phi e)}{\det(tI-\Phi)}=1-\varphi_e(tI-\Phi)^{-1}\Phi u_e,
\]
so $\varphi_e\Phi^ju_e=0$ for all $j\ge1$, and $\varphi_eu_e=0$. In either case, the $\Phi$-invariant span of the vectors $\Phi^ju_e$ lies in every $\ker\varphi_e$, hence in the fixed space of $\mathcal E$, and a unipotent $\Phi$ has a nonzero fixed vector $r$ there (if $\mathcal E$ is trivial, take any fixed vector of $\Phi$).

Let $R=I\otimes P_r$, so that $H=W^\dagger RW$. Write $K_qP_rK_q^\dagger=(-1)^\varepsilon P_r$ and $A^\dagger P_rA=\sum_s\alpha_sP_s$. Since $K_q=BP_qA^\dagger$,
\[
B^\dagger P_rB=(-1)^\varepsilon P_q(A^\dagger P_rA)P_q,
\qquad
H=\sum_s\alpha_s Z_1^{\varepsilon+\omega(q,s)}\otimes P_s.
\]
Let $S=\operatorname{supp}_P(A^\dagger P_rA)$. By the display, also $S=\operatorname{supp}_P(B^\dagger P_rB)$, and the terms of $H$ correspond to the labels in $S$. Since $r$ is fixed by $\mathcal E$, every $a_s$ with $s\in\mathsf C(A)$ maps $P_r$ to $\pm P_r$, so $P_s(A^\dagger P_rA)P_s=\pm A^\dagger P_rA$ and $\omega(s,\cdot)$ is constant on $S$. Likewise, $b_t=K_qa_tK_q^\dagger$ maps $P_r$ to $\pm P_r$ for $t\in\mathsf C(B)$, so $\omega(t,\cdot)$ is constant on $S$. Hence $S$ lies in one coset of $\Sigma^\perp$, and $H$ has at most $2^{6-\delta}\le4$ Pauli terms.

Finally, the number of terms is one, two, or four. By Lemma~\ref{app:return-block}(4), $A=\lambda C_3DC_4$ with $C_3,C_4$ Cliffords and $D$ monomial with eighth-root entries. After a scalar, all three have entries in $\Z[\zeta_8,1/2]$. Hence $\alpha_s=2^{-3}\tr(P_sA^\dagger P_rA)\in\Z[\zeta_8,1/2]$, and $\alpha_s$ is real because $H$ is Hermitian, so $\alpha_s\in\Z[\sqrt2,1/2]$. By Lemma~\ref{app:return-block}(3), $\sigma(A^\dagger P_rA)=Z^pA^\dagger P_rAZ^p$, so $\sigma(\alpha_s)=(-1)^{\omega((0,p),s)}\alpha_s$. Since $(0,p)\in L_A\subseteq\Sigma$ and $S$ lies in one coset of $\Sigma^\perp$, this sign is the same for all $s\in S$. Thus the $\alpha_s$ lie either all in $\Z[1/2]$ or all in $\sqrt2\,\Z[1/2]$, and $\sum_s\alpha_s^2=2^{-4}\tr H^2=1$. Write $\alpha_s=m_s/2^k$, respectively $\alpha_s=\sqrt2\,m_s/2^k$, with $m_s\in\Z$ and $k\ge0$ minimal. Then $\sum_sm_s^2=4^k$, respectively $2^{2k-1}$, with at most four terms. Odd squares are $1$ modulo $8$, and minimality makes some $m_s$ odd. Reducing modulo $8$ leaves only $k=0$ with one term $\pm1$, $k=1$ with four terms $\pm1/2$, and, in the second case, $k=1$ with two terms $\pm1/\sqrt2$. This gives the three subcases below.

\item \step{Case 3a:}{$H$ has one term.} Write $H=\pm P_t$. Then $WP_tW^\dagger=\pm I\otimes P_r$, so $(s,t)$ with $P_s=I\otimes P_r$ is a Pauli pair of $W$. Since $W$ commutes with $Z_1$ and $I\otimes P_r$ commutes with $Z_1$, so does $P_t$. Also $P_t\ne Z_1$ because $WZ_1W^\dagger=Z_1\ne\pm I\otimes P_r$. This is (2).

\item \step{Case 3b:}{$H$ has two terms.} Then $H=(P_a\pm P_b)/\sqrt2$, and $H^2=I$ forces $P_a$ and $P_b$ to anticommute, that is, $\omega(a,b)=1$. This is (3).

\item \step{Case 3c:}{$H$ has four terms.} Then $\delta=4$, since $2^{6-\delta}\ge4$. Write $A=\lambda C_3DC_4$ as in Lemma~\ref{app:return-block}(4) for the GSC pair $(L_A,M_A)$, with $D=D'T_{p'}$, $Z^{p'}\doteq C_4Z^pC_4^\dagger$ and $D'=\pi_0i^{axy+bxz+cyz+dxyz}$. The form $D'$ is determined by $(\pi_0,a,b,c,d)\in\{I,\pi\}\times\F_2^3\times\Z/4$, which gives $64$ forms, and $D$ also by $p'\in\F_2^3$, which gives $512$. The labels $v$ and the Paulis $P$ range over $V_3$, and Clifford membership is decided on six Paulis. Running over these forms gives the following.

\begin{finitefact}{cert:pauli-terms-and-cubic-coefficient}{Pauli terms and the cubic coefficient}{Lemma~\ref{app:return-single}}
Let $D=\pi_0\,i^{axy+bxz+cyz+dxyz}$ with $\pi_0\in\{I,\pi\}$, $a,b,c\in\F_2$ and $d\in\Z/4$. (a) If $d$ is odd, then $D^\dagger PD$ has at least eight Pauli terms for every non-diagonal three-qubit Pauli $P$. (b) If $d$ is even, then for every $p\in\F_2^3$ the labels $v$ with $DT_pP_vT_p^\dagger D^\dagger\in\C_2(3)$ form a subspace of dimension at least five.
\tcblower
\textbf{Code.} \certlink{pauli-terms-and-cubic-coefficient}
\end{finitefact}
If $d$ were even, \textup{(\ref{cert:pauli-terms-and-cubic-coefficient})}(b) would make the labels $v$ with $DP_vD^\dagger$ Clifford a subspace of dimension at least five. As $\mathsf C(A)$ is its image under the symplectic action of $C_4^{-1}$, this would give $\delta\ge5$. So $d$ is odd, and by \textup{(\ref{cert:pauli-terms-and-cubic-coefficient})}(a), $D'^\dagger PD'$ has at least eight Pauli terms for every non-diagonal Pauli $P$. The same holds for $B$. Since the coefficients of $H$ are $\pm1/2$, $\sigma(H)=H$. By Lemma~\ref{app:return-block}(3), $\sigma(H)=(Z_1^e\otimes Z^p)H(Z_1^e\otimes Z^p)$. Since $H$ commutes with $Z_1$, it commutes with $I\otimes Z^p$, and $A^\dagger P_rA$ commutes with $Z^p$. So $C_4(A^\dagger P_rA)C_4^\dagger$ commutes with $C_4Z^pC_4^\dagger\doteq Z^{p'}$, hence with $T_{p'}$, and $C_4(A^\dagger P_rA)C_4^\dagger=D'^\dagger(C_3^\dagger P_rC_3)D'$ has four Pauli terms. By \textup{(\ref{cert:pauli-terms-and-cubic-coefficient})}(a), $C_3^\dagger P_rC_3$ is a diagonal Pauli $Z^u$. The same holds for $B$. The diagonal factor of $D'$ commutes with $Z^u$, so
\[
D'^\dagger Z^uD'=\pi_0^\dagger Z^u\pi_0,\qquad
\pi^\dagger Z^u\pi=Z^u\CZ_{xy}^{u_z}.
\]
Four terms force $\pi_0=\pi$ and $u_z=1$. The four labels of $Z^u\CZ_{xy}$ then span $L_Z$. As $C_4\A_{L_A}C_4^\dagger=\A_Z$, the support of $A^\dagger P_rA$ spans $L_A$, and likewise that of $B^\dagger P_rB$ spans $L_B$. By the display in Case~3 the two supports agree, so $L_A=L_B$. As in Case~1, uniqueness of $\mathscr L(K_q)$ gives a common GSC pair of $A$ and $B$, hence (1). \qedhere
\end{itemize}
\end{proof}

\begin{lemma}\label{app:return-balanced}
Let $W=\operatorname{diag}(A,B)$ be as in Lemma~\ref{app:return-block}. If there exist $0\ne r\in V_3$, $a,b\in V_4$ with $\omega(a,b)=1$, and $W^\dagger(I\otimes P_r)W=(P_a\pm P_b)/\sqrt2$, then $\mathscr L(W)\ne\varnothing$.
\end{lemma}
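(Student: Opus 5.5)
The plan is to use the balanced correction $H=W^\dagger(I\otimes P_r)W=(P_a\pm P_b)/\sqrt2$ to build a quarter-turn that removes the $\sqrt2$, and then to apply Lemma~\ref{app:return-two} to the resulting gate.

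\textbf{Step 1: the Clifford $H$.} Since $\omega(a,b)=1$, the operator $H$ is a Hermitian Clifford involution: it is conjugate to a Hadamard-type rotation in the plane spanned by $P_a,P_b$. Set $J=I\otimes P_r$, so that $WHW^\dagger=J$.

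\textbf{Step 2: the quarter-turn $G$.} Consider $G=(I+iJ)/\sqrt2$, a Clifford gate since $J$ is a Pauli. Its pullback $W^\dagger GW=(I+iH)/\sqrt2$ then has finite level whenever $W$ does. Lemma~\ref{lem:quarter-turn}(2) applies to the finite-level Hermitian involution $H$ once $(I+iH)/\sqrt2$ has finite level. Here $\operatorname{supp}_PH=\{a,b\}$ is not isotropic. I intend to turn this tension into a contradiction, or at least a strong constraint, by computing carefully which level $(I+iH)/\sqrt2$ would need.

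\textbf{Step 3 (alternative if Step 2 fails): a new gate $W'$.} Replace $W$ by
\[
W'=W\,R,\qquad R=\tfrac{I+P_c}{2}+\zeta_8\,\tfrac{I-P_c}{2},
\]
where $P_c$ is chosen from $\{P_a,P_b,P_{a+b}\}$ so that conjugating $H$ by $R$ gives a single Pauli. For example, a $\pi/4$ rotation about $P_{a+b}$ sends $(P_a\pm P_b)/\sqrt2$ to $P_a$ or $P_b$ up to sign. Then $W'$ maps a Pauli to $\pm J=\pm I\otimes P_r$. Two things must be checked: that $W'$ lies in $\C(4)$ with Gaussian generalized semi-Clifford corrections, using the Galois relation of Lemma~\ref{app:return-block}(3) to control $\zeta_8$; and that $\mathscr L(W')\neq\varnothing$ implies $\mathscr L(W)\neq\varnothing$, via Lemma~\ref{lem:root-witness} as in Lemma~\ref{lem:conductor-step}. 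With both in hand, $W'$ has a Pauli pair $(s,t)$ with $P_s=I\otimes P_r$ commuting with $Z_1$ and different from $Z_1$. If moreover $W'$ still fixes $Z_1$, Lemma~\ref{app:return-two} applies to $W'$ and gives a GSC pair. That pair is then transported back to $W$, using that the rotation $R$ lies in the input Pauli MASA.

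\textbf{Main obstacle.} I expect the hardest part to be making sure the rotation keeps us inside the hierarchy and preserves $Z_1$, that is, $P_c$ commuting with $Z_1$. The approach is to use $\omega$-constraints: $a$ and $b$ differ in the $Z_1$-component only through $\varepsilon+\omega(q,s)$, as in the display of Case~3 of Lemma~\ref{app:return-single}. If necessary, I would fall back on a finite certificate over the normal forms in Lemma~\ref{app:return-block}(4).
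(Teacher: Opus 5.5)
\textbf{Step~2 is based on a false claim.} Conjugating the Clifford gate $(I+iJ)/\sqrt2$ by the non-Clifford gate $W^\dagger$ need not give a finite-level gate. So there is no tension to exploit, and no contradiction is available. For example, $W=I\otimes T\otimes I\otimes I$ with $I\otimes P_r=X_2$ is as in Lemma~\ref{app:return-block} and satisfies the hypotheses, with $H=(X_2-Y_2)/\sqrt2$. By Lemma~\ref{lem:quarter-turn}(2) itself, $(I+iH)/\sqrt2$ then lies outside the hierarchy.

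The quarter-turn lemma has to be applied on the output side, as the paper does. Write $h'_v=WP_vW^\dagger$. Then $h'_a(I\otimes P_r)=(I+iJ')/\sqrt2$ with $J'\doteq h'_{a+b}$. This gate has finite level because it is a correction times a Pauli. Hence $\operatorname{supp}_PJ'$ is isotropic, and this is what lets the paper normalize $J'$ to $Z_1\otimes(-1)^f$.

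\textbf{Step~3 is the right construction, but the transfer back to $W$ is missing.} Your gate $W'=WR$, with $R$ the rotation about $P_{a+b}$ (only $P_{a+b}$ works), is exactly the paper's $U_0=W(Q_1^\dagger\otimes I)$. What you call the main obstacle is routine:
\begin{itemize}
\item $P_a$ and $P_b$ commute with $Z_1$, because $H$ commutes with $W^\dagger Z_1W=Z_1$. Hence $R$ commutes with $Z_1$.
\item The corrections of $W'$ are $h'_v$ when $\omega(v,a+b)=0$, and $\doteq h'_{v+a}(I\otimes P_r)$ otherwise. All of them lie in $\mathscr R_4$.
\end{itemize}
So Lemma~\ref{app:return-two} does give $\mathscr L(W')\ne\varnothing$.

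The gap is the step from $W'$ back to $W$. You need a pair $(L,M)\in\mathscr L(W')$ with $a+b\in L$.
\begin{itemize}
\item Lemma~\ref{lem:root-witness} supplies such a pair only if $W'$ maps $\A_{\Span\{a+b\}}$ onto some $\A_T$. That requires $W'P_{a+b}W'^\dagger=h'_{a+b}\doteq J'$ to be a Pauli.
\item That Pauli condition is precisely what Lemma~\ref{lem:conductor-step} had, namely $UZ^pU^\dagger\in\Pau(3)$. Here it holds only when $f$ is affine, that is, when $E=0$.
\item The pairs that Lemma~\ref{lem:root-witness} can build from your Pauli pair with input $a$ contain $a$ in the input Lagrangian. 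Since $\omega(a,a+b)=1$, they never contain $a+b$.
\end{itemize}

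Passing from $\mathscr L(W')\ne\varnothing$ to $\mathscr L(W)\ne\varnothing$ is the real content of the lemma, and your proposal leaves it out. The paper does it as follows:
\begin{itemize}
\item It writes $W=\Lambda(Q_1\otimes V)$ with $\Lambda=I-2\Pi_-^{X_1}\otimes E$.
\item It uses $\mathscr L(U_0)\ne\varnothing$ only to get a common GSC pair of $V$ and $(I-2E)V$, via $H_1U_0H_1=\operatorname{diag}(V,(I-2E)V)$. It then makes both gates monomial.
\item It splits on the projector $E$. If $E$ is diagonal, $W$ normalizes the diagonal MASA. If $E$ is non-diagonal of rank one, a parity argument on permutations gives two new Pauli pairs, and Lemma~\ref{app:return-two} applies again. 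If $E$ has rank two, the finite checks \textup{(\ref{cert:normal-form-of-exceptional-balanced-configurations})}--\textup{(\ref{cert:balanced-rank-two-branch})} exclude one configuration and force the phases in the other.
\end{itemize}
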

\begin{proof}
Steps~1 and~2 write $W=\Lambda(Q_1\otimes V)$ with $V$ and $I-2E$ monomial, for a projector $E$ on qubits $2,3,4$, and the cases split on $E$.

\begin{inlinefigure}
\begin{tikzpicture}
\node[pfnode] (s) at (0,0) {Steps 1--2};
\node[pfnode] (c1) at (3.8,1.2) {Case 1};
\node[pfnode] (c2) at (3.8,-1.2) {Case 2};
\node[pfnode] (c2a) at (7.8,-0.4) {Case 2a};
\node[pfnode] (c2b) at (7.8,-2) {Case 2b};
\node[pfnode] (two) at (11.4,-0.4)
  {\hyperref[app:return-two]{Lemma~\ref*{app:return-two}}};
\node[pfgoal] (goal) at (14.8,-1.2) {$\mathscr L(W)\ne\varnothing$};
\node[pfgoal] (bot) at (11.4,-3) {Contradiction \textup{(\ref{cert:balanced-rank-two-branch})}};
\draw[pfedge] (s) -- node[pflab,sloped,above] {$E\in\A_Z$} (c1);
\draw[pfedge] (s) -- node[pflab,sloped,below] {$E\notin\A_Z$} (c2);
\draw[pfedge] (c2) -- node[pflab,sloped,above] {$\operatorname{rank}E=1$} (c2a);
\draw[pfedge] (c2) -- node[pflab,sloped,below] {$\operatorname{rank}E=2$} (c2b);
\draw[pfedge] (c2a) -- (two);
\draw[pfedge] (c2b) -- node[pflab,sloped,above] {otherwise} (two);
\draw[pfedge] (c2b) -- node[pflab,sloped,below,pos=.4] {$D=\{s=1\}$} (goal);
\draw[pfedge] (c2b) -- node[pflab,sloped,below,pos=.4] {$D=\{z=1\}$} (bot);
\draw[pfedge] (two) -- (goal);
\draw[pfedge] (c1) -| (goal);
\end{tikzpicture}
\caption{Case analysis in the proof of Lemma~\ref{app:return-balanced}.}
\end{inlinefigure}

\begin{itemize}
\item \step{Step 1:}{$W=\Lambda(Q_1\otimes V)$, where $E$ is $0$ or a diagonal projector of rank one or two.} Let $R=I\otimes P_r$ and $h'_v=WP_vW^\dagger$, so $R=(h'_a\pm h'_b)/\sqrt2$. Since $\omega(a,b)=1$, $h'_a$ and $h'_b$ anticommute, and
\[
J=\mp ih'_ah'_b=\pm h'_{a+b}\in\mathscr R_4,\qquad
h'_aR=(I+iJ)/\sqrt2.
\]
The gate $(I+iJ)/\sqrt2=h'_aR$ has finite level, because conjugating a Pauli by it gives $\pm h'_aP{h'_a}^\dagger$, and $J$ is a finite-level Hermitian involution. Lemma~\ref{lem:quarter-turn}(2) therefore shows that $\operatorname{supp}_PJ$ is isotropic. Since $R$ commutes with $Z_1=WZ_1W^\dagger$, the operator $(P_a\pm P_b)/\sqrt2=W^\dagger RW$ commutes with $Z_1$, so $a$ and $b$ commute with $z_1$. Since $J$ anticommutes with $h'_a$ and $h'_b$, it anticommutes with $R$, so every Pauli in its support anticommutes with $R$. Therefore, there exists a Clifford gate sending $R$ to $X_1$ and $J$ to $Z_1\otimes(-1)^f$, where $f:\F_2^3\to\F_2$ and $(-1)^f$ is the diagonal gate on qubits $2,3,4$: send one support label $u_0$ of $J$ to $z_1$, and a Lagrangian of $\Span\{r,u_0\}^\perp$ that contains the differences of support labels to the diagonal labels of qubits $2,3,4$. Choose another Clifford sending $P_a$ to $X_1$ and $P_{a+b}$ to $Z_1$. After these changes, and possibly an input $X_1$,
\[
WX_1'W^\dagger=X_1,\qquad WZ_1W^\dagger=Z_1\otimes(-1)^f,
\qquad X_1'=(X_1\pm Y_1)/\sqrt2.
\]
The original pair $(z_1,z_1)$ of $W$ becomes a Pauli pair whose input and output commute with $X_1$ and $Z_1$.

Let $Q_1$ be the power of $T_1$ with $Q_1^\dagger X_1Q_1=X_1'$, let $E=\Pi_{f=1}$ be the diagonal projector onto $\{f=1\}$, write $\Pi_\pm^{X_j}=(I\pm X_j)/2$, and let
\[
\Lambda=I-2\Pi_-^{X_1}\otimes E
=\Pi_+^{X_1}\otimes I+\Pi_-^{X_1}\otimes(-1)^f.
\]
Then $\Lambda X_1\Lambda^\dagger=X_1$ and $\Lambda Z_1\Lambda^\dagger =Z_1\otimes(-1)^f$. So $\Lambda^\dagger W(Q_1^\dagger\otimes I)$ commutes with $X_1$ and $Z_1$, and therefore equals $I\otimes V$:
\[
W=\Lambda(Q_1\otimes V).
\]
Output Clifford gates that fix $X_1$ act on $f$: a translation $X^c$ on qubits $2,3,4$ replaces $f$ by $f(x+c)$, a linear permutation of qubits $2,3,4$ by CNOT gates replaces it by $f\circ A$, and the controlled Pauli $\Pi_+^{X_1}\otimes I+\Pi_-^{X_1}\otimes(-1)^\ell$ with $\ell$ affine adds $\ell$. On three bits these reduce $f$ to $0$, $xy$ or $xyz$, so $E=0$ or $E$ is a diagonal projector of rank one or two.

\item \step{Step 2:}{$V$ and $(I-2E)V$ share a GSC pair, so we may take $V$ and $I-2E$ monomial.} Let $U_0=\Lambda(I\otimes V)=W(Q_1^\dagger\otimes I)$. If $P_v$ commutes with $Z_1$, then $Q_1$ commutes with $P_v$ and $U_0P_vU_0^\dagger=h'_v$, and $U_0X_1U_0^\dagger=X_1$. If $P_v$ anticommutes with $Z_1$, write $P_v\doteq X_1P_w$ with $P_w$ commuting with $Z_1$. Then $U_0P_vU_0^\dagger\doteq X_1h'_w$. This operator is a traceless Hermitian involution, being conjugate to a nonidentity Hermitian Pauli, and it is Gaussian because $h'_w$ is. It has finite level, since it conjugates a Pauli $P$ to $\pm X_1(h'_wP{h'_w}^\dagger)X_1$. Finally, $\Gamma(X_1h'_w)=X_1\Gamma(h'_w)X_1$, and $\Gamma(h'_w)\le\langle\Pau(4),h'_w\rangle$ is a finite $2$-group by Proposition~\ref{app:lem-R_n}. Therefore $X_1h'_w$ is generalized semi-Clifford by Theorem~\ref{thm:two-group-chamber}. Hence, by Proposition~\ref{app:lem-R_n}, $U_0P_vU_0^\dagger\in\mathscr R_4$ for every $v\ne0$. The pair $(x_1,x_1)$ and the original pair are commuting, independent Pauli pairs of $U_0$, so Lemma~\ref{app:return-two} gives a GSC pair for $U_0$. Since $U_0=\Pi_+^{X_1}\otimes V+\Pi_-^{X_1}\otimes(I-2E)V$, the gate $H_1U_0H_1=\operatorname{diag}(V,(I-2E)V)$ commutes with $Z_1$, and Lemmas~\ref{lem:tool-transport}(1) and~\ref{lem:tool-blocks}(1) give $(L,M)\in\mathscr L(V)\cap\mathscr L((I-2E)V)$.

Choose $C_1,C_2\in\C_2(3)$ with $C_2\A_LC_2^\dagger=\A_Z$ and $C_1\A_ZC_1^\dagger=\A_M$, and replace $V$ by $C_1^\dagger VC_2^\dagger$ and $E$ by $C_1^\dagger EC_1$. This replaces $W$ by a Clifford multiple of itself. Now, $V$ and $(I-2E)V$ both normalize $\A_Z$. Hence $V=\rho\Delta$ with $\rho$ a permutation and $\Delta$ diagonal, and $I-2E$ is monomial. The projector $E$ is now a stabilizer projector, not necessarily diagonal.

\item \step{Case 1:}{$E$ is diagonal.} Then $\Lambda$ commutes with every $I\otimes Z^c$ and $\Lambda Z_1\Lambda^\dagger=Z_1\otimes(I-2E)$ is diagonal. Since $Q_1$ commutes with $Z_1$ and $V$ normalizes $\A_Z$, $W=\Lambda(Q_1\otimes V)$ normalizes the diagonal MASA of all four qubits. This is a GSC pair for $W$.

\item \step{Case 2:}{$E$ is not diagonal.} We first determine $E$. Let $\mathcal T\le\F_2^3$ be the space of $X$-parts of the stabilizer group of $E$. The entry $E_{x,y}$ vanishes unless $y-x\in\mathcal T$, and each nonzero column of $E$ has exactly $2^{\dim\mathcal T}$ nonzero entries, of modulus $2^{-\dim\mathcal T}$, including its diagonal entry. Monomiality of $I-2E$ therefore forces $\dim\mathcal T=1$, say $\mathcal T=\Span\{u\}$. After a diagonal Clifford,
\[
E=\Pi_D\cdot\tfrac12(I-X^u),
\]
where $\Pi_D$ is the diagonal projector onto an affine set $D$ with $D+u=D$. On $\Span_{\CC}\{\ket x:x\in D\}$ the gate $I-2E$ acts as $X^u$, and it is the identity elsewhere. Thus $I-2E$ is the permutation gate of $\tau(x)=x+u\,[x\in D]$, and $\operatorname{rank}E=|D|/2$.

Next, $\rho$ and $\tau\rho$ are affine or affinely equivalent to $(x,y,z)\mapsto(x,y,z+xy)$. Indeed, since every $U_0P_vU_0^\dagger$ lies in $\mathscr R_4$, $U_0$ has finite level. Conjugating by $H_1$ makes $U_0$ block diagonal with blocks $V$ and $(I-2E)V$, so both blocks have finite level by Lemma~\ref{lem:tool-blocks}(2). This is unchanged by the Clifford changes above. Apply Proposition~\ref{lem:tool-monomial}(1) to the monomial gates $V=\rho\Delta$ and $(I-2E)V=\tau\rho\Delta$ with the GSC pair $(L_Z,L_Z)$: the Cliffords $C_1,C_2$ there normalize $\A_Z$, so their permutation parts are affine.

Finally, two independent $c\in\F_2^3$ with $c\cdot u=0$ and $c\cdot\rho$ affine give $\mathscr L(W)\ne\varnothing$. Let $c\cdot u=0$. Then $Z^c$ commutes with $I-2E$. If moreover $c\cdot\rho$ is affine, then $V^\dagger Z^cV=(-1)^{c\cdot\rho(x)}$ is a diagonal Pauli, and since $Q_1$ acts on the first qubit, $W$ sends it on qubits $2,3,4$ to $I\otimes Z^c$ in both eigenspaces of $X_1$. Two independent such $c$ therefore give two commuting, independent Pauli pairs of $W$, and Lemma~\ref{app:return-two} gives $\mathscr L(W)\ne\varnothing$.

\item \step{Case 2a:}{$\operatorname{rank}E=1$.} Since $|D|=2$, $\tau$ is a transposition. Affine permutations of $\F_2^3$ are even, and $(x,y,z)\mapsto(x,y,z+xy)$ is a transposition, so exactly one of $\rho$ and $\tau\rho$ is affine. For $c$ with $c\cdot u=0$ we have $c\cdot\tau\rho=c\cdot\rho$, so $c\cdot\rho$ is affine. The two independent choices of $c$ give $\mathscr L(W)\ne\varnothing$ as in Case~2.

\item \step{Case 2b:}{$\operatorname{rank}E=2$.} Name the coordinates of qubits $2,3,4$ as $(s,y,z)$. Here $\rho$ and $\tau\rho$ are affine or affinely equivalent to $(x,y,z)\mapsto(x,y,z+xy)$, $u\ne0$, and $D$ is an affine plane with $D+u=D$, since $\operatorname{rank}E=2$. These data range over finite sets. The replacements $(\rho,u,D)\mapsto(\alpha\rho\beta,\bar\alpha u,\alpha(D))$, with $\alpha,\beta$ among the $1{,}344$ affine permutations, multiply $W$ by monomial Clifford gates and preserve the number of independent $c$ as in Case~2, so the triples with fewer than two such $c$ are found by a finite orbit computation, which gives the following.

\begin{finitefact}{cert:normal-form-of-exceptional-balanced-configurations}{Normal form of the triples $(\rho,u,D)$}{Lemma~\ref{app:return-balanced}}
Let $\tau_0(x,y,z)=(x,y,z+xy)$, and let $\rho$ be a permutation of $\F_2^3$ that is affine or affinely equivalent to $\tau_0$. Let $u\ne0$ and let $D$ be an affine plane with $D+u=D$ such that $\tau\rho$, where $\tau(x)=x+u\,[x\in D]$, is again affine or affinely equivalent to $\tau_0$. If fewer than two independent $c$ with $c\cdot u=0$ make $c\cdot\rho$ affine, then after replacing $(\rho,u,D)$ by $(\alpha\rho\beta,\bar\alpha u,\alpha(D))$ for affine $\alpha,\beta$, with $\bar\alpha$ the linear part of $\alpha$, in coordinates $(s,y,z)$ we have $\rho(s,y,z)=(s,y,z+sy)$, $u=e_y$, and $D=\{s=1\}$ or $D=\{z=1\}$.
\tcblower
\textbf{Code.} \certlink{normal-form-of-exceptional-balanced-configurations}
\end{finitefact}
By \textup{(\ref{cert:normal-form-of-exceptional-balanced-configurations})}, there are two independent $c$ as in Case~2, which gives $\mathscr L(W)\ne\varnothing$, unless after such a replacement $\rho(s,y,z)=(s,y,z+sy)$, $u=e_y$, and $D=\{\ell=1\}$ with $\ell\in\{s,z\}$. We split into further cases by $\ell$.
\begin{itemize}
\item \step{Case $\ell=z$:}{this case does not occur.} Here $\rho$, $u$ and $D$ are explicit, and by \textup{(\ref{cert:balanced-rank-two-branch})}, for the correction $h=WX_sW^\dagger$ the gate $Z_1h\in\langle\Pau(4),h\rangle$ has order divisible by three, although this group is a finite $2$-group by Proposition~\ref{app:lem-R_n}, a contradiction.

\item \step{Case $\ell=s$:}{$D=\{s=1\}$.} Let $C=(H_1H_y\CZ_{1y}H_1H_y)\CNOT_{y\to z}$. On the eigenspace $s=0$, $E$ vanishes, so $\Lambda=I$, and $\rho$ is the identity. On $s=1$, $\Lambda=I-2\Pi_-^{X_1}\otimes\Pi_-^{X_y}=H_1H_y\CZ_{1y}H_1H_y$ and $\rho=\CNOT_{y\to z}$. Hence $W=\operatorname{diag}(G_0,G_1)$ with respect to $Z_s$, with blocks on the qubits $1,y,z$,
\[
G_0=Q_1\Delta_0,\qquad G_1=CQ_1\Delta_1,
\]
where $\Delta_0,\Delta_1$ are diagonal gates on $y,z$.

We now determine $\Delta_0$ and $\Delta_1$. For $t\in\F_2^2$, the Pauli $X_sX_1X^t$ anticommutes with $Z_s$, and the off-diagonal block of its conjugate is $G_1X_1X^tG_0^\dagger$. Since $Q_1X_1Q_1^\dagger\doteq S_1X_1$ and $\Delta_1X^t\Delta_0^\dagger=\operatorname{diag}(\theta_t)X^t$ with $\theta_t(x)=\Delta_1(x)/\Delta_0(x+t)$, this block equals $C(S_1\otimes\operatorname{diag}\theta_t)$ up to a Pauli factor and a scalar. By Proposition~\ref{app:lem-R_n} and Lemma~\ref{lem:tool-blocks}(3), with $Z_s$ in place of $Z_1$, it generates a finite $2$-group with the Paulis, so it normalizes a Pauli MASA, and \textup{(\ref{cert:balanced-rank-two-branch})} then makes every $\operatorname{diag}\theta_t$ a diagonal Clifford. In particular, every ratio $\theta_t/\theta_0=\Delta_0(x)/\Delta_0(x+t)$ is a diagonal Clifford. Modulo diagonal Clifford gates common to both blocks, and the input gates $\CZ_{sy}$, $\CZ_{sz}$, this gives
\[
\Delta_0=T_y^aT_z^b\mathrm{CS}_{yz}^c,\qquad
\Delta_1=\lambda S_y^dS_z^e\CZ_{yz}^f\Delta_0,
\quad a,b,c,d,e,f\in\F_2.
\]
For the Paulis $X_sX_1$, $X_sX_1X_y$, $X_sX_1X_z$, the off-diagonal blocks are Clifford and, as above, generate finite $2$-groups with the Paulis, so their symplectic actions are unipotent by Lemma~\ref{lem:tool-two-group}(2). Up to a scalar, $G_0$ and $G_1$ are explicit gates determined by $(a,\dots,f)\in\F_2^6$, so the three off-diagonal blocks are explicit, and a symplectic map $\gamma$ of $V_3$ is unipotent exactly when $(\gamma+I)^6=0$. The two earlier uses of \textup{(\ref{cert:balanced-rank-two-branch})} are finite in the same way: for $\ell=z$ all data are explicit, and for $\ell=s$ the values of $\theta_t$ are ratios of entries of $\Delta$, which lie in $\mu_8$ as in Proposition~\ref{lem:tool-monomial}(2), since $W$ is defined over $\Q(\zeta_8)$ up to scalar (Lemma~\ref{app:return-block}(3)). These checks give the following.

\begin{finitefact}{cert:balanced-rank-two-branch}{The case $\operatorname{rank}E=2$}{Lemma~\ref{app:return-balanced}}
In Case~2b of the proof of Lemma~\ref{app:return-balanced}: for $\ell=z$ and the correction $h=WX_sW^\dagger$, the monomial gate $Z_1h$, in the basis $\ket\pm\otimes\ket x$, has a permutation of cycle type $2^26^2$, so this case does not occur. For $\ell=s$, the Pauli-MASA condition on the off-diagonal block makes every $\operatorname{diag}\theta_t$ Clifford, and among the $64$ choices of $a,b,c,d,e,f\in\F_2$ the three unipotence conditions hold exactly when $a=b=c=e=0$ and $d=f$.
\tcblower
\textbf{By hand} ($\ell=z$). In the eigenbasis of $X_1$, $h=\operatorname{diag}(M,\tau M\tau)$ with $M$ monomial of permutation $\sigma(s,y,z)=(s+1,y,y+z)$ and $\tau(s,y,z)=(s,y+z,z)$. So $(Z_1h)^2$ has permutation $\tau\sigma\tau\sigma(s,y,z)=(s,y+z,y)$, of cycle type $1^23^2$.

\textbf{Code.} \certlink{balanced-rank-two-branch}
\end{finitefact}
By \textup{(\ref{cert:balanced-rank-two-branch})}, the three conditions are, respectively,
\[
(d+f,e)=(0,0),\quad
(d+f+a+c,e+c)=(0,0),\quad
(d+f,e+b)=(0,0).
\]
Hence $a=b=c=e=0$ and $d=f$. For $d=f=0$ both blocks normalize the Pauli MASA generated by $Z_1,X_y,X_z$, and for $d=f=1$ the one generated by $Z_1,X_y,Y_z$. Adjoining $Z_s$ gives a GSC pair for $W$. \qedhere
\end{itemize}
\end{itemize}
\end{proof}

\begin{lemma}\label{app:return-multiple}
Let $W=\operatorname{diag}(A,B)$ be as in Lemma~\ref{app:return-block}, with $K_v=BP_vA^\dagger$, and suppose that $|X(K_v)|\ge2$ for every $v\in V_3$. Then at least one of the following holds.
\renewcommand{\labelenumi}{(\theenumi)}
\begin{enumerate}
\item $\mathscr L(W)\ne\varnothing$.
\item $W$ has a Pauli pair $(s,t)$ with $P_t\ne Z_1$ and $P_tZ_1=Z_1P_t$.
\end{enumerate}
\end{lemma}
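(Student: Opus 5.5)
The plan is to mirror the structure of Lemma~\ref{app:return-single}, replacing uniqueness \tool{U} by the parity technique \tool{P} and by Galois constraints \tool{G}. Fix $(L_A,M_A)\in\mathscr L(A)$ and $(L_B,M_B)\in\mathscr L(B)$, and again set $\Sigma=\mathsf C(A)+\mathsf C(B)$ with $\delta=\dim\Sigma\in\{3,\dots,6\}$ (Lemma~\ref{lem:tool-unique}(3)). By Lemma~\ref{app:return-block}(2), each $K_v$ generates a finite $2$-group with $\Pau(3)$ and normalizes some $\A_{N_v}$. The first step is to understand gates $K$ of this type with $|X(K)|\ge2$. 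By Lemma~\ref{lem:tool-unique}(3), every input Lagrangian of $K$ lies in $\mathsf C(K)$, so two distinct input Lagrangians force $\dim\mathsf C(K)\ge4$. Writing $K$ in Clifford coordinates as $\Pi_\pi D$ with $\pi\in\{1,(x,y,z+xy)\}$ (Proposition~\ref{lem:tool-monomial}(1)) and $D$ having $\mu_8$ ratios (Lemma~\ref{app:return-block}(3),(4)), I expect a finite certificate in the style of \textup{(\ref{cert:pauli-terms-and-cubic-coefficient})} to show the following. Either $K$ is Clifford, or $\mathsf C(K)$ is a hyperplane $r^\perp$ and every input Lagrangian contains a fixed label $r_K$ with $KP_{r_K}K^\dagger\in\pm\Pau(3)$; the latter is a rigid analogue of the root in Proposition~\ref{gauss-lem:exceptional}(1).

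Next I would split on whether all $K_v$ are Clifford. \textbf{Case 1: every $K_v\in\C_2(3)$.} Then $A^\dagger B\doteq K_0^\dagger\cdot$ Pauli structure gives $a_v=K_0^\dagger K_v$ and $b_v=K_vK_0^\dagger$ Clifford, so $A,B\in\C_3(3)$. The condition for a common GSC pair of $A$ and $B$ reduces, after writing $B=K_0A$ up to Paulis, to finding a Lagrangian $L$ with $(L,M)\in\mathscr L(A)$ and $\gamma_{K_0}M=M'$ appropriate. Here I would apply Lemma~\ref{lem:tool-bridge}, with $F=K_0$ and $D$ a Clifford conjugate of $\mathrm{CS}^a\CCZ^b$ realising $A$ via Proposition~\ref{lem:tool-monomial}, to get unipotence of $\gamma_{K_0}e$ for $e\in E_A$. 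Then Lemma~\ref{lem:tool-two-group}(1) gives a common fixed Lagrangian, hence (1) via Lemma~\ref{lem:tool-blocks}(1).

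\textbf{Case 2: some $K_q$ is not Clifford.} I would use the root $r=r_{K_q}$ and repeat the Case~3 argument of Lemma~\ref{app:return-single}. Using $K_{q+s}\doteq K_qa_s$ and $K_{q+t}\doteq b_tK_q$, the rigidity from the first step shows that the roots are transported consistently, so $a_s$ and $b_t$ map $P_r$ to $\pm P_{r}$ after a unipotent fixed-vector argument. As there, $H=W^\dagger(I\otimes P_r)W$ then has Pauli support in a single coset of $\Sigma^\perp$. The Galois sign argument, using $\sigma(A)\doteq AZ^p$ and $(0,p)\in\Sigma$, together with the sums-of-squares modulo $8$, leaves $1$, $2$ or $4$ terms. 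One term gives (2). Four terms give $L_A=L_B$ as in Case~3c, and since $r$ lies in every input Lagrangian of $K_q$, this yields a common output and hence (1). For two terms, I would apply Lemma~\ref{app:return-balanced}, which returns (1).

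The main obstacle is the classification in the first step: proving that a non-Clifford $K$ with at least two input Lagrangians has a well-defined root that every other $K_v$ in the coset $q+\Sigma$ transports compatibly. I would try to settle it by a finite certificate over the $512$ normal forms $\pi_0\,i^{axy+bxz+cyz+dxyz}T_{p'}$, checking $|X(K)|$ and the intersection of its input Lagrangians. If that intersection can be empty, the fallback is to show directly that $\delta\ge5$ in that situation. Then Case~2 of Lemma~\ref{app:return-single}, which uses only that all $K_v$ normalize a common $\A_N$, would go through after choosing $N$ by parity \tool{P} among the odd number of Lagrangians fixed by the $2$-group generated by the $K_v$'s actions.
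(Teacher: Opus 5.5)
Your proposal has a genuine gap at its first step, and everything after it depends on that step. You expect a non-Clifford $K$ with $|X(K)|\ge2$ to have $\mathsf C(K)=r^\perp$ a hyperplane, with a root $r_K$ lying in every input Lagrangian. This is false, and the truth is the opposite. Take $K=\CCZ$: it is not Clifford, it has $15$ GSC pairs, and since it is third-level, $\mathsf C(\CCZ)=V_3$. Its input Lagrangians include $L_Z$, $\Span\{x_1,z_2,z_3\}$, $\Span\{z_1,x_2,z_3\}$ and $\Span\{z_1,z_2,x_3\}$, whose intersection is $0$, so there is no root. Likewise $\mathrm{CS}_{yz}$ has three GSC pairs and $\mathsf C=V_3$.

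What the paper proves runs as follows. Each $K_v$ is Gaussian, since $\sigma(K_v)\doteq BZ^pP_vZ^pA^\dagger\doteq K_v$; so the entry ratios lie in $\mu_4$, not $\mu_8$. Hence $K_v$ is Clifford-equivalent to one of $64$ forms $\pi_0\,i^{axy+bxz+cyz+dxyz}$. A finite check shows that $|X(\cdot)|\ge2$ forces such a form to be third-level. So every $K_v\in\C_3(3)$, $\mathsf C(K_v)=V_3$, and $K_v$ is Clifford-equivalent to some $\mathrm{CS}_{yz}^a\CCZ^b$.

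Your Case~2 therefore has nothing to transport. The Case~3 machinery of Lemma~\ref{app:return-single} rests at every step on $\mathscr L(K_q)$ being a single pair. That includes the one $\A_N$ normalized by all $K_v$ with $v\in q+\Sigma$, the map $\Phi$ on $R_0\subseteq N$, and the common fixed vector $r$. It also includes the passage from $L_A=L_B$ to a common output Lagrangian in the four-term case. Uniqueness is exactly what fails here. Your fallback does not repair this. The $K_v$ are not Clifford, so they do not act on the set of Lagrangians. Each generates a finite $2$-group with $\Pau(3)$ only individually; that they do so jointly is essentially the conclusion (Theorem~\ref{thm:two-group-chamber}). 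No argument for $\delta\ge5$ is given either.

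Your Case~1 does not close either. Lemma~\ref{lem:tool-bridge} needs a product $FD$, with $D$ a Clifford conjugate of some $\mathrm{CS}_{yz}^a\CCZ^b$, that is already known to generate a finite $2$-group with $\Pau(3)$. The relevant products here are the $K_v=K_0a_v$. When $K_0$ is Clifford, they give only that each $\gamma_{K_0}f$, $f\in E_A$, is unipotent. Elementwise unipotence does not make $\langle\gamma_{K_0},E_A\rangle$ a $2$-group. For example, in $S_5$ the elements $(1234)$, $(15)(23)$ and their product, a $4$-cycle, are all $2$-elements, yet they generate a group of order divisible by $5$. So Lemma~\ref{lem:tool-two-group}(1) cannot be invoked.

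The paper closes exactly this gap by exhaustive computation. It first reduces $A$, and also $B$ when $A\notin\C_3(3)$, to finitely many normal forms. It writes each $K_v$ as a Clifford times a conjugate of some $\mathrm{CS}_{yz}^a\CCZ^b$. Lemma~\ref{lem:tool-bridge} then turns the finite $2$-groups $\langle\Pau(3),K_v\rangle$ into unipotence conditions on a single $\gamma_F\in\Sp(6,2)$. A search over all of $\Sp(6,2)$ verifies that these conditions force one of two things: a common GSC pair of $A$ and $B$, \emph{or} a common Pauli pair $AP_rA^\dagger\doteq BP_rB^\dagger\doteq P_q$. The second alternative, which gives outcome~(2), is part of that conclusion. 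So your claim that parity always produces outcome~(1) is not supported.
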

\begin{proof}
The mixed block $K_v$ is Gaussian, since $\sigma(K_v)\doteq BZ^pP_vZ^pA^\dagger\doteq K_v$ by Lemma~\ref{app:return-block}(3). By Lemma~\ref{app:return-block}(2), $\langle\Pau(3),K_v\rangle$ is a finite $2$-group and $K_v$ normalizes a Pauli MASA, so $K_v$ has a GSC pair, and it has finite level by the second paragraph of the proof of Proposition~\ref{app:lem-R_n}. Proposition~\ref{lem:tool-monomial} gives $K_v\doteq C_1\Pi_\pi i^fC_2$ with $C_1,C_2\in\C_2(3)$, $\pi=1$ or $\pi(x,y,z)=(x,y,z+xy)$, and $f:\F_2^3\to\Z/4$. Write $f=\ell+2q+axy+bxz+cyz+dxyz$ with $\deg\ell\le1$, $q$ quadratic, $a,b,c\in\{0,1\}$ and $d\in\Z/4$. Then $i^{\ell+2q}$ is a diagonal Clifford gate, which we absorb into $C_2$. So $K_v\doteq C_1DC_2$ with $D$ one of the $64$ forms of \textup{(\ref{cert:mixed-blocks-with-two-input-lagrangians})}, and Clifford factors change neither membership in $\C_3(3)$ nor $|X(\cdot)|$ (Lemma~\ref{lem:tool-transport}(1)), so it suffices to examine these $64$ forms.

\begin{finitefact}{cert:mixed-blocks-with-two-input-lagrangians}{Mixed blocks with two input Lagrangians}{Lemma~\ref{app:return-multiple}}
Let $D=\pi_0\,i^{axy+bxz+cyz+dxyz}$ with $\pi_0\in\{I,\pi\}$, $a,b,c\in\F_2$ and $d\in\Z/4$, a block form of Lemma~\ref{app:return-block}\textup{(4)} with $p'=0$. If $|X(D)|\ge2$, then $D\in\C_3(3)$.
\tcblower
\textbf{Code.} \certlink{mixed-blocks-with-two-input-lagrangians}
\end{finitefact}
By \textup{(\ref{cert:mixed-blocks-with-two-input-lagrangians})}, $K_v\in\C_3(3)$ for every $v\in V_3$. Conjugating $W$ by $X_1$ swaps $A$ and $B$ and replaces each $K_v$ by $K_v^\dagger$, which preserves the hypothesis. We may therefore assume that $A\notin\C_3(3)$ or that both blocks lie in $\C_3(3)$. Let $a_v=AP_vA^\dagger$ and $b_v=BP_vB^\dagger$, so that $K_v=K_0a_v$.

Write $\mathcal B(A,B)$ for the alternative
\[
\mathscr L(A)\cap\mathscr L(B)\ne\varnothing
\quad\text{or}\quad
\exists\,r,q\ne0:\ AP_rA^\dagger\doteq BP_rB^\dagger\doteq P_q.
\]
It suffices to prove $\mathcal B(A,B)$. Indeed, if $(L,M)\in\mathscr L(A)\cap\mathscr L(B)$, then $W\A_{\Span\{z_1\}\oplus L}W^\dagger=\A_{\Span\{z_1\}\oplus M}$ by Lemma~\ref{lem:tool-blocks}(1), which is (1). Otherwise, $AP_rA^\dagger=\eta_0P_q$ and $BP_rB^\dagger=\eta_1P_q$ for some $r,q\ne0$ and signs $\eta_0,\eta_1$, so
\[
W(I\otimes P_r)W^\dagger=\eta_0Z_1^\epsilon\otimes P_q,\qquad(-1)^\epsilon=\eta_0\eta_1,
\]
and $P_t=I\otimes P_r$ commutes with $Z_1$ and differs from it, which is (2). In both cases, Lemma~\ref{lem:tool-bridge} turns the finite $2$-groups $\langle\Pau(3),K_v\rangle$ of Lemma~\ref{app:return-block}(2) into unipotence conditions on one Clifford gate $F$, and \textup{(\ref{cert:block-pair-domains})} checks that these conditions give $\mathcal B(A,B)$.

\begin{inlinefigure}
\begin{tikzpicture}
\node[pfnode] (s) at (0,0) {$A,B$};
\node[pfnode] (c1) at (3.6,0.9) {Case 1};
\node[pfnode] (c2) at (3.6,-0.9) {Case 2};
\node[pfnode] (b) at (7.4,0) {$\mathcal B(A,B)$};
\node[pfgoal] (o1) at (11.2,0.9) {(1)};
\node[pfgoal] (o2) at (11.2,-0.9) {(2)};
\draw[pfedge] (s) -- node[pflab,sloped,above] {$A\notin\C_3(3)$} (c1);
\draw[pfedge] (s) -- node[pflab,sloped,below] {$A,B\in\C_3(3)$} (c2);
\draw[pfedge] (c1) -- node[pflab,sloped,above] {\textup{(\ref{cert:block-pair-domains})}(a)} (b);
\draw[pfedge] (c2) -- node[pflab,sloped,below] {\textup{(\ref{cert:block-pair-domains})}(b)} (b);
\draw[pfedge] (b) -- node[pflab,sloped,above] {common GSC pair} (o1);
\draw[pfedge] (b) -- node[pflab,sloped,below] {common Pauli pair} (o2);
\end{tikzpicture}
\caption{Case analysis in the proof of Lemma~\ref{app:return-multiple}.}
\end{inlinefigure}

\begin{itemize}
\item \step{Case 1:}{$A\notin\C_3(3)$.} We first reduce $A$ to a normal form and show that $X(K_0)$ contains its input Lagrangian $L_Z$. The $512$ block forms $D$ of Lemma~\ref{app:return-block}(4) are given by $(\pi_0,a,b,c,d,p')\in\{I,\pi\}\times\F_2^3\times\Z/4\times\F_2^3$. Their classes under $D\mapsto C_1DC_2$, $C_1,C_2\in\C_2(3)$, preserve the level, $\C_2(3)$ is finite modulo scalars, and membership in $\C_3(3)$ is decided on a basis of $V_3$. Sorting the forms into classes gives the following.

\begin{finitefact}{cert:block-form-clifford-classes}{Clifford classes of the block forms}{Lemma~\ref{app:return-multiple}}
Under $D\mapsto C_1DC_2$ with $C_1,C_2\in\C_2(3)$, the $512$ block forms $D$ of Lemma~\ref{app:return-block}\textup{(4)} fall into $30$ classes. Ten classes are third-level, and every correction $DP_vD^\dagger$ of the other twenty is third-level. We call the listed representative of a class its normal form. The third-level classes containing a form with $p'=0$ are those of the four gates $\mathrm{CS}_{yz}^a\CCZ^b$.
\tcblower
\textbf{Code.} \certlink{block-form-clifford-classes}
\end{finitefact}
Replace $W$ by $(I\otimes C_1^\dagger)W(I\otimes C_2^\dagger)$ with $C_1,C_2\in\C_2(3)$ so that $A$ is one of the twenty non-third-level normal forms. Then $L_A=L_Z$, $A$ is a block form $\pi_0\,i^{axy+bxz+cyz+dxyz}\,T_{p_A}$ of Lemma~\ref{app:return-block}(4), and Lemma~\ref{app:return-block}(3) gives $\sigma(A)\doteq AZ^{p_A}$ and $\sigma(B)\doteq BZ^{p_A}$. Let $J_A\le\Sp(V_3)$ be generated by the symplectic actions of the Clifford gates $a_vP_wa_v^\dagger$, where $v,w\in V_3$. The set $X(K_0)$ is invariant under $J_A$: as $a_v=K_0^\dagger K_v$, each Clifford $c=a_vP_wa_v^\dagger$ satisfies $K_0c=(K_vP_wK_v^\dagger)K_0$ with $K_vP_wK_v^\dagger$ Clifford, and Lemma~\ref{lem:tool-transport}(2) applies. The normal form $A$ is one of twenty, and $J_A$ depends only on $A$. By the argument before \textup{(\ref{cert:mixed-blocks-with-two-input-lagrangians})}, $K_0\doteq C_1DC_2$ with $C_1,C_2\in\C_2(3)$ and $D$ one of the $64$ forms there, and $D$ is third-level since $K_0$ is. By \textup{(\ref{cert:block-form-clifford-classes})}, $D\doteq CYC'$ with $C,C'\in\C_2(3)$ and $Y$ one of the four gates $\mathrm{CS}_{yz}^a\CCZ^b$. So $K_0\doteq F\,(C''^\dagger YC'')$ with $F=C_1CC'C_2$ and $C''=C'C_2$ in $\C_2(3)$, and since a left Clifford factor does not change input Lagrangians (Lemma~\ref{lem:tool-transport}(1)), $X(K_0)=X(C''^\dagger YC'')$, and it remains to check the following for the twenty normal forms.

\begin{finitefact}{cert:invariant-sets-of-input-lagrangians}{Invariant sets of input Lagrangians}{Lemma~\ref{app:return-multiple}}
Let $A$ be a non-third-level normal form of \textup{(\ref{cert:block-form-clifford-classes})}, and let $D$ be a Clifford conjugate of one of the four gates $\mathrm{CS}_{yz}^a\CCZ^b$. If $X(D)$ is invariant under $J_A$, then $L_Z\in X(D)$.
\tcblower
\textbf{Code.} \certlink{invariant-sets-of-input-lagrangians}
\end{finitefact}
By \textup{(\ref{cert:invariant-sets-of-input-lagrangians})}, $L_Z\in X(K_0)$.

Next we reduce $B$ to a normal form. Since $A$ normalizes $\A_Z$, $K_0\A_ZK_0^\dagger=B\A_ZB^\dagger$ is a Pauli MASA, so $B=F_0B_0$ with $F_0\in\C_2(3)$ and $B_0$ monomial. By Proposition~\ref{lem:tool-monomial}, after a scalar $B_0$ has eighth-root entries and its permutation is affine or affinely equivalent to $\pi$ of Lemma~\ref{app:return-block}(4), and Proposition~\ref{lem:tool-galois}(1) gives $B_0^{-1}\sigma(B_0)\doteq B^{-1}\sigma(B)\doteq Z^{p_A}$. As in the proof of Lemma~\ref{app:return-block}(4), this gives $B_0\doteq\Pi\,i^gT_{p_A}$ with $\Pi$ a permutation gate and $g:\F_2^3\to\Z/4$. The data are $A$ (twenty normal forms), the permutation of $\Pi$ (among the $10{,}752$ permutations that are affine or affinely equivalent to $\pi$) and $g\in(\Z/4)^{\F_2^3}$. Left multiplication of $B_0$ by monomial Clifford gates, absorbed into $F_0$, acts on these finitely many data, and an orbit computation gives the following.

\begin{finitefact}{cert:reduction-to-common-input-pair-domain}{Reduction of $B_0$ to $256$ gates}{Lemma~\ref{app:return-multiple}}
Let $A$ be a non-third-level normal form with $T$-factor $T_{p_A}$, and $B_0=\Pi_\rho\,i^{g}T_{p_A}$ with $g:\F_2^3\to\Z/4$ and $\rho$ affine or affinely equivalent to $\pi$ of Lemma~\ref{app:return-block}\textup{(4)}. Let $\rho_0$ be the identity, and for $0\ne k\in\F_2^3$ let $\rho_k=\lambda_k^{-1}\pi\lambda_k$ with $\lambda_k$ linear and $\lambda_k(k)=(0,0,1)$. Modulo left multiplication by monomial Clifford gates, $B_0$ is one of the $256$ gates $\Pi_{\rho_k}\,i^{axy+bxz+cyz+dxyz}\,T_{p_A}$ with $k\in\F_2^3$, $a,b,c\in\F_2$ and $d\in\Z/4$, or $\rho(x+k)+\rho(x)$ does not depend on $x$ exactly for $k\in\{0,(1,1,1)\}$, and then some $B_0P_vA^\dagger\notin\C_3(3)$.
\tcblower
\textbf{Code.} \certlink{reduction-to-common-input-pair-domain}
\end{finitefact}
By \textup{(\ref{cert:reduction-to-common-input-pair-domain})}, after absorbing monomial Cliffords into $F_0$, $B_0$ is one of the $256$ gates listed there, or the permutation $\rho$ of $B_0$ is not affine and $\rho(x+k)+\rho(x)$ does not depend on $x$ exactly for $k\in\{0,(1,1,1)\}$. In the latter case some $B_0P_vA^\dagger$ is not third-level, contrary to \textup{(\ref{cert:mixed-blocks-with-two-input-lagrangians})}.

Now every $D_v=B_0P_vA^\dagger$ is third-level, since $K_v=F_0D_v$ is. The pair $(A,B_0)$ is one of the finitely many pairs left by \textup{(\ref{cert:reduction-to-common-input-pair-domain})}, so the $D_v$ are explicit, and their decompositions are found by a search over $\C_2(3)$, which is finite modulo scalars. The unipotence conditions below and $\mathcal B(A,B)$ depend on $F$ only through $\gamma_F\in\Sp(6,2)$, a group of order $1{,}451{,}520$, since $\mathscr L(FB')=\{(L,\gamma_FM):(L,M)\in\mathscr L(B')\}$ by Lemma~\ref{lem:tool-transport}(1) and $FP_qF^\dagger\doteq P_{\gamma_Fq}$. A search over $\Sp(6,2)$ for each pair gives the following.

\begin{finitefact}{cert:block-pair-domains}{Pair domains for blocks}{Lemma~\ref{app:return-multiple}}
(a) Let $A$ be a non-third-level normal form of \textup{(\ref{cert:block-form-clifford-classes})} and $B_0$ one of the $256$ gates of \textup{(\ref{cert:reduction-to-common-input-pair-domain})} for this $A$ ($5{,}120$ pairs). For the $360$ pairs with every $D_v=B_0P_vA^\dagger\in\C_3(3)$, each $D_v=G_vY_v$ with $G_v\in\C_2(3)$ and $Y_v$ a Clifford conjugate of some $\mathrm{CS}_{yz}^a\CCZ^b$, and $\mathcal B(A,FG_0^{-1}B_0)$ holds for every $F\in\C_2(3)$ with $\gamma_F\gamma_{G_0^{-1}G_v}e$ unipotent for all $v\in V_3$ and $e\in E_{Y_v}$.

(b) Let $A$ be a third-level normal form and $D$ a Clifford conjugate of some $\mathrm{CS}_{yz}^a\CCZ^b$ with $X(D)$ invariant under $E_A$ ($13{,}960$ pairs, reduced to $2{,}076$). Then $\mathcal B(A,FDA)$ holds for every $F\in\C_2(3)$ with $\gamma_Fef$ unipotent for all $e\in E_D$ and $f\in E_A$.
\tcblower
\textbf{Code.} \certlink{block-pair-domains}
\end{finitefact}
By \textup{(\ref{cert:block-pair-domains})}(a), $D_v=G_vY_v$ with $G_v\in\C_2(3)$ and $Y_v$ a Clifford conjugate of some $\mathrm{CS}_{yz}^a\CCZ^b$. Let $F=F_0G_0$, so that $B=F_0B_0=FG_0^{-1}B_0$. Then $K_v=F_0D_v=(FG_0^{-1}G_v)Y_v$, and $\langle\Pau(3),K_v\rangle$ is a finite $2$-group, so Lemma~\ref{lem:tool-bridge} makes $\gamma_F\gamma_{G_0^{-1}G_v}e$ unipotent for every $v\in V_3$ and $e\in E_{Y_v}$. By \textup{(\ref{cert:block-pair-domains})}(a), these conditions give $\mathcal B(A,B)$.

\item \step{Case 2:}{$A,B\in\C_3(3)$.} We first reduce $K_0$ to a normal form. As in Case~1, take $A$ to be one of the ten third-level normal forms of \textup{(\ref{cert:block-form-clifford-classes})}, which are diagonal. The mixed block $K_0=BA^\dagger$ is Gaussian, since $\sigma(K_0)\doteq BZ^pZ^pA^\dagger=K_0$ by Lemma~\ref{app:return-block}(3). It lies in $\C_3(3)$ by \textup{(\ref{cert:mixed-blocks-with-two-input-lagrangians})}, and has a GSC pair. By Proposition~\ref{lem:tool-monomial} it is $C_1D_0C_2$ up to scalar, with $C_1,C_2$ Clifford and $D_0=\pi_0i^{axy+bxz+cyz+dxyz}$, a form $D$ of Lemma~\ref{app:return-block}(4) with $p'=0$. It is third-level since $K_0$ is, so by \textup{(\ref{cert:block-form-clifford-classes})} it lies in the Clifford class of one of the four gates $\mathrm{CS}_{yz}^a\CCZ^b$. So $K_0\doteq CYC'$ with $C,C'$ Clifford and $Y$ one of these gates, and $K_0\doteq(CC')(C'^\dagger YC')=FD$ with $F\in\C_2(3)$ and $D$ a Clifford conjugate of $Y$. The set $X(D)=X(K_0)$ is invariant under $E_A$: both $a_v$ and $b_v$ are Clifford, and $K_0a_v=K_v\doteq b_vK_0$, so Lemma~\ref{lem:tool-transport}(2) applies.

For every $v$, $K_v=(Fa_v)(a_v^\dagger Da_v)$ with $a_v^\dagger Da_v$ again a Clifford conjugate of some $\mathrm{CS}_{yz}^a\CCZ^b$, and $\langle\Pau(3),K_v\rangle$ is a finite $2$-group. Lemma~\ref{lem:tool-bridge} makes $\gamma_F\gamma_{a_v}e'$ unipotent for every $e'\in E_{a_v^\dagger Da_v}=\gamma_{a_v}^{-1}E_D\gamma_{a_v}$, that is, $\gamma_Fef$ is unipotent for all $e\in E_D$ and $f\in E_A$. As in Case~1, this is a finite check: $A$ is one of ten normal forms, $D$ is a Clifford conjugate of one of four gates, so it takes finitely many values modulo scalars, and only $\gamma_F\in\Sp(6,2)$ matters. By \textup{(\ref{cert:block-pair-domains})}(b), $\mathcal B(A,B)$ holds. \qedhere
\end{itemize}
\end{proof}

\begin{lemma}\label{app:return-two}
Let $W\in\mathrm U(16)$ satisfy $WP_vW^\dagger\in\mathscr R_4$ for every $0\ne v\in V_4$. Suppose $W$ has Pauli pairs $(s_1,t_1)$ and $(s_2,t_2)$ with $t_1,t_2$ linearly independent and $\omega(t_1,t_2)=0$. Then $\mathscr L(W)\ne\varnothing$.
\end{lemma}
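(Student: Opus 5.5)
The plan is to reduce to two-qubit blocks, mirroring the proof of Lemma~\ref{lem:tool-two-paulis} on three qubits. First normalize by Clifford gates. Since $\omega(t_1,t_2)=0$ and $t_1,t_2$ are independent, the images $s_1,s_2$ are also independent and commute, because conjugation by $W$ preserves commutation and products. Choose $C_2,C_1\in\C_2(4)$ with $C_2Z_jC_2^\dagger=P_{t_j}$ and $C_1(\pm P_{s_j})C_1^\dagger=Z_j$ for $j=1,2$. Then $V=C_1WC_2$ commutes with $Z_1,Z_2$, and the hypothesis $VP_vV^\dagger\in\mathscr R_4$ is preserved, since $\mathscr R_4$ is closed under Clifford conjugation and sign. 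By Lemma~\ref{lem:tool-transport}(1) it suffices to find a GSC pair of $V$. Write $V=\bigoplus_{u\in\F_2^2}V_u$ with two-qubit blocks $V_u$ on qubits $3,4$. By Lemma~\ref{lem:tool-blocks}(1) it suffices to find a common GSC pair of the four $V_u$.

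Next, use pairwise compatibility. The corrections of $V$ lie in $\mathscr R_4\subseteq\C(4)$, so $V\in\C(4)$. Lemma~\ref{lem:tool-blocks}(2) then gives $V_u\in\C(2)$ and $\diag(V_u,V_{u'})\in\C(3)$ for $u\ne u'$. Since $\C_k(3)\subseteq\GSC(3)$ by Theorem~\ref{thm:main-small-width}, each pair $V_u,V_{u'}$ shares a GSC pair by Lemma~\ref{lem:tool-blocks}(1). Going further, for any three distinct $u,u',u''$, compress onto the span of three blocks. This is not a Pauli eigenspace, so I would instead apply Lemma~\ref{lem:tool-blocks}(1) with $j=2$ directly only for all four blocks; that is the goal itself. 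So the remaining task is to pass from pairwise to common GSC pairs of two-qubit gates.

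The two-qubit step is modeled on Lemma~\ref{lem:tool-one-qubit}. Suppose some $V_u$ is not Clifford. By Lemma~\ref{lem:tool-unique}(2), its input Lagrangians are the Lagrangians contained in $\mathsf P(V_u)$. If $\dim\mathsf P(V_u)\le2$, the pair is unique, and technique \tool{U} finishes. Otherwise $\dim\mathsf P(V_u)=3$, since $\mathsf P(V_u)=V_2$ would make $V_u$ Clifford. Then the input Lagrangians of $V_u$ are exactly the three Lagrangians in the hyperplane $\mathsf P(V_u)=c^\perp$, namely those containing $c$.

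If all $V_u$ are Clifford, use \tool{P}. Here I need the maps $\gamma_{V_0}^{-1}\gamma_{V_u}$ to generate a $2$-group. To get this, use the hypothesis $VP_vV^\dagger\in\mathscr R_4$ with $v=(x^e,0)$. By Proposition~\ref{app:lem-R_n} and Lemma~\ref{lem:tool-blocks}(3), each $\langle\Pau(2),V_{u+e}V_u^\dagger\rangle$ is a finite $2$-group, so each $\gamma_{V_{u+e}}\gamma_{V_u}^{-1}$ is unipotent by Lemma~\ref{lem:tool-two-group}(2). In the mixed case, a Clifford $V_u$ must send some Lagrangian through $c$ to a compatible output, and uniqueness of outputs for non-Clifford blocks lets me chain the pairwise pairs.

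\textbf{Main obstacle.} The obstacle is the all-Clifford and mixed cases. Pairwise unipotence of $\gamma_{V_{u'}}\gamma_{V_u}^{-1}$ need not give a $2$-group generated by three such maps in $\Sp(4,2)\cong S_6$. I would first try the $S_6$ model from Case~2a of Lemma~\ref{gauss-lem:three-correction-compatibility}, using the additional relations coming from the corrections at $X_1X_2$. Failing that, I would fall back on a finite certificate over $\Sp(4,2)^3$ and the GSC data of the two-qubit blocks.
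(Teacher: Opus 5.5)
Your reduction matches the paper's: Clifford normalization so that $V$ commutes with $Z_1,Z_2$, the four two-qubit blocks $V_u$, pairwise compatibility, and the unique-pair case when some $\dim\mathsf P(V_u)=2$. Getting pairwise compatibility from Theorem~\ref{thm:main-small-width} instead of Proposition~\ref{gauss-prop:n3-gsc} is fine and not circular. The gap is in the two remaining cases.

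\emph{First gap: some block is non-Clifford and every block has $\dim\mathsf P(V_u)\ge3$.} Your ``chaining'' has nothing to chain. A non-Clifford block with $\mathsf P(V_u)=c_u^\perp$ has three GSC pairs, one for each Lagrangian through $c_u$, so there is no uniqueness to propagate. Different blocks may a priori have different $c_u$, and Clifford blocks have fifteen pairs, so pairwise compatibility alone does not glue.

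The missing idea is the Galois automorphism $\sigma$ of $\Q(\zeta_8)$ with $\sigma(\zeta_8)=-\zeta_8$. Every correction lies in $\mathscr R_4$, so the corrections have entries in $\Q(\zeta_8)$ and are fixed by $\sigma$ up to scalar. Hence $V$ is defined over $\Q(\zeta_8)$ up to scalar, and Proposition~\ref{lem:tool-galois}(3) gives $\sigma(V)\doteq VP_a$. Since $V$ fixes $Z_1,Z_2$, the Pauli $P_a$ commutes with them, so a single label $p$ on qubits $3,4$ satisfies $\sigma(V_u)\doteq V_uP_p$ for every $u$. By Proposition~\ref{lem:tool-galois}(2), $p$ lies in every input Lagrangian of every block.
\begin{itemize}
\item If $p=0$, the blocks are Gaussian, hence Clifford-equivalent to $I$ or $\mathrm{CS}$. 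A non-Clifford block would then have $\dim\mathsf P=2$, which is already excluded.
\item If $p\ne0$, no block can be Clifford by Proposition~\ref{lem:tool-galois}(1). So the mixed case never occurs, and $c_u=p$ for all $u$.
\end{itemize}
Now write $V_uP_pV_u^\dagger=\pm P_{q_u}$. If the $q_u$ are not all equal, pairwise compatibility makes them pairwise orthogonal and forces a common output Lagrangian. If they are all equal, the maps $\phi_0^{-1}\phi_u$ permute the three Lagrangians through $p$ and any two agree somewhere, so Lemma~\ref{lem:tool-one-qubit}(1) gives a common fixed Lagrangian.

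\emph{Second gap: every block is Clifford.} You rightly doubt that pairwise unipotence of $\gamma_{V_{u'}}\gamma_{V_u}^{-1}$ suffices, but you leave the case open. The paper closes it by noting that $V$ then lies in $\mathcal F_2(2,2)$ with finite level, and citing Theorem~\ref{lem:fib-sharp-small}(2).

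That argument needs the stronger input that each set $\mathcal T_a=\{\gamma_{V_{u+a}}\gamma_{V_u}^{-1}\}$ generates a $2$-group, which is then fed into \textup{(\ref{cert:s6-two-group-criterion})}. The paper obtains this by induction on the level. In your setting you could also extract it from the hypothesis. Let $h=V(X^a\otimes I)V^\dagger$; then $\langle\Pau(4),h\rangle$ is a finite $2$-group. Both $(X^a\otimes I)h$ and its conjugate by $X^b\otimes I$ are block diagonal. Restricting to a single block, as in the proof of Lemma~\ref{lem:tool-blocks}(3), places $V_{u+a}V_u^\dagger$ and $V_{u+a+b}V_{u+b}^\dagger$ in one $2$-group.

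As written, neither of these two cases is proved, so the proposal does not establish the lemma.
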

\begin{proof}
Since $\omega(t_1,t_2)=0$ and $\omega(s_1,s_2)=0$, there are $C_1,C_2\in\C_2(4)$ with $C_2P_{t_j}C_2^\dagger=Z_j$ and $C_1(\pm P_{s_j})C_1^\dagger=Z_j$ for $j=1,2$. Replacing $W$ by $C_1WC_2^\dagger$, which keeps the hypothesis, we get $WZ_jW^\dagger=Z_j$, so
\[
W=\bigoplus_{x\in\F_2^2}B_x,
\]
where $B_x$ acts on qubits $3,4$ in the joint eigenspace $Z_j=(-1)^{x_j}$. By Lemma~\ref{lem:tool-blocks}(1), it suffices to find a common GSC pair of the four blocks.

We first show that the blocks are pairwise compatible. For $x\ne y$, after conjugating $W$ by an affine permutation of the two control qubits, which keeps the hypothesis, $x$ and $y$ differ only in the second control bit. Then $\operatorname{diag}(B_x,B_y)$ is a block of $W$ with respect to $Z_1$, and it satisfies the hypotheses of Proposition~\ref{gauss-prop:n3-gsc}: its corrections are blocks of corrections $WP_vW^\dagger\in\mathscr R_4$ with $P_v$ commuting with $Z_1$, so they are Gaussian, generalized semi-Clifford by Lemma~\ref{lem:tool-blocks}(1), and of finite level by Lemma~\ref{lem:tool-blocks}(2). Proposition~\ref{gauss-prop:n3-gsc} and Lemma~\ref{lem:tool-blocks}(1) give
\[
\mathscr L(B_x)\cap\mathscr L(B_y)\ne\varnothing\qquad\text{for all }x,y\in\F_2^2.
\]
Each $B_x$ has finite level by Lemma~\ref{lem:tool-blocks}(2), so by Lemma~\ref{lem:tool-unique}(2) the input Lagrangians of its GSC pairs are exactly the Lagrangians inside $\mathsf P(B_x)$. Hence $\dim\mathsf P(B_x)\in\{2,3,4\}$, with $4$ exactly when $B_x$ is Clifford. We split on these dimensions. Case~1a uses \tool{U}, Cases~1b and~1c combine the Galois automorphism with \tool{U} on the three Lagrangians containing a fixed label, and Case~2 is a special case of Theorem~\ref{lem:fib-sharp-small}.

\begin{inlinefigure}
\begin{tikzpicture}
\node[pfnode] (s) at (0,0) {$B_x$, $x\in\F_2^2$};
\node[pfnode] (c1) at (4,1) {Case 1};
\node[pfnode] (c2) at (4,-1.6) {Case 2};
\node[pfnode] (c1a) at (8.4,2.2) {Case 1a};
\node[pfnode] (c1b) at (8.4,1) {Case 1b};
\node[pfnode] (c1c) at (8.4,-0.2) {Case 1c};
\node[pfgoal] (goal) at (13,0.4) {$\bigcap_x\mathscr L(B_x)\ne\varnothing$};
\draw[pfedge] (s) -- node[pflab,sloped,above] {$\exists x:\ B_x\notin\C_2(2)$} (c1);
\draw[pfedge] (s) -- node[pflab,sloped,below] {$\forall x:\ B_x\in\C_2(2)$} (c2);
\draw[pfedge] (c1) -- node[pflab,sloped,above] {$\exists x:\ \dim\mathsf P(B_x)=2$} (c1a);
\draw[pfedge] (c1) -- node[pflab,above,pos=.62] {$q_x$ not all equal} (c1b);
\draw[pfedge] (c1) -- node[pflab,sloped,below] {all $q_x$ equal} (c1c);
\draw[pfedge] (c1a) -- (goal);
\draw[pfedge] (c1b) -- (goal);
\draw[pfedge] (c1c) -- (goal);
\draw[pfedge] (c2) -| (goal);
\end{tikzpicture}
\caption{Case analysis in the proof of Lemma~\ref{app:return-two}.}
\end{inlinefigure}

\begin{itemize}
\item \step{Case 1:}{some $B_x$ is not Clifford.} If $\dim\mathsf P(B_x)=2$ for some $x$, we are in Case~1a discussed below. Otherwise, we first find a label $p$ common to all blocks. As in the proof of Lemma~\ref{app:return-block}(3), Proposition~\ref{lem:tool-galois}(3) gives $\sigma(W)\doteq WP_a$ with $P_a$ commuting with $Z_1,Z_2$. Hence there is one label $p\in V_2$ with $\sigma(B_x)\doteq B_xP_p$ for all $x$, and by Proposition~\ref{lem:tool-galois}(2), $p$ lies in the input Lagrangian of every GSC pair of every $B_x$.

If $p=0$, each $B_x$ is defined over $\Q(i)$ up to scalar, so by Proposition~\ref{lem:tool-monomial}(2) it is Clifford equivalent to a diagonal gate with entries in $\mu_4$, and after diagonal Cliffords to $I$ or $\mathrm{CS}$. A non-Clifford block would then have $\dim\mathsf P=2$, as $\mathrm{CS}$ does, which we have excluded. Hence $p\ne0$. Then no $B_x$ is Clifford, because $\sigma(C)\doteq C$ for every Clifford $C$ by Proposition~\ref{lem:tool-galois}(1). So $\dim\mathsf P(B_x)=3$, that is, $\mathsf P(B_x)=p_x^\perp$ for a nonzero $p_x$, and the input Lagrangians of $B_x$ are the three Lagrangians $\mathcal L_{p_x}=\{L=L^\perp:p_x\in L\}$. Since $p$ lies in all three, whose intersection is $\Span\{p_x\}$, $p_x=p$ for every $x$.

Let $B_xP_pB_x^\dagger=\pm P_{q_x}$, and let
\[
\phi_x:\mathcal L_p\to\mathcal L_{q_x},\qquad
B_x\A_LB_x^\dagger=\A_{\phi_x(L)},
\]
be the resulting bijection. Pairwise compatibility says that for all $x,y$ there is $L\in\mathcal L_p$ with $\phi_x(L)=\phi_y(L)$. Cases~1b and~1c split on whether the $q_x$ are all equal.

\item \step{Case 1a:}{$\dim\mathsf P(B_x)=2$ for some $x$.} By Lemma~\ref{lem:tool-unique}(2), $\mathscr L(B_x)$ consists of a single pair $(L,M)$. By pairwise compatibility, $(L,M)\in\mathscr L(B_y)$ for every $y$.

\item \step{Case 1b:}{the $q_x$ are not all equal.} Pairwise compatibility gives, for every pair with $q_x\ne q_y$, a Lagrangian containing both $q_x$ and $q_y$, so all $q_x$ are pairwise orthogonal and lie in one Lagrangian $M$. For $q_x\ne q_y$ the common output is $M$, so $\phi_x^{-1}(M)=\phi_y^{-1}(M)$. Comparing two blocks with the same $q$ through a block with a different one shows that all $\phi_x^{-1}(M)$ agree. This is a common GSC pair.

\item \step{Case 1c:}{all $q_x$ equal $q$.} Then $\pi_x=\phi_0^{-1}\phi_x$ is a permutation of the three-element set $\mathcal L_p$, and pairwise compatibility says that any two of $\pi_0=1,\pi_1,\pi_2,\pi_3$ agree at some point. By Lemma~\ref{lem:tool-one-qubit}(1), all $\pi_x$ fix a common $L$, and $(L,\phi_0(L))$ is a common GSC pair.

\item \step{Case 2:}{every $B_x$ is Clifford.} Then $W=\bigoplus_xB_x$ lies in $\mathcal F_2(2,2)$, of finite level since every correction lies in $\mathscr R_4$. Thus $W\in\GSC(4)$ by Theorem~\ref{lem:fib-sharp-small}. \qedhere
\end{itemize}
\end{proof}

\medskip

\begin{proof}[Proof of Theorem~\ref{thm:n4-gaussian} when $n=4$]
    The lemmas above combine as in Fig.~\ref{fig:F}. Let every $h_v$ be Gaussian and generalized semi-Clifford. By Lemma~\ref{app:finite-endpoint}, $\mathscr L(U)\ne\varnothing$ or $U$ has a Pauli pair, and in the latter case Lemma~\ref{app:return-block} gives $W=\operatorname{diag}(A,B)$. If $|X(K_q)|=1$ for some $q$, Lemma~\ref{app:return-single} applies, and otherwise Lemma~\ref{app:return-multiple} applies. Their alternative~(2) gives a Pauli pair commuting with and independent of $(z_1,z_1)$, so Lemma~\ref{app:return-two} applies, and alternative~(3) of Lemma~\ref{app:return-single} is settled by Lemma~\ref{app:return-balanced}. So $U\in\GSC(4)$ whenever every $h_v$ is Gaussian and generalized semi-Clifford. This is the hypothesis of Lemma~\ref{app:lem-ge-equiv} for $n=4$, so the first assertion of Theorem~\ref{gauss-thm:rigidity} holds for $n=4$. Now let every $h_v$ be Gaussian. By Lemma~\ref{app:lem-ge-equiv}, every $h_v$ is generalized semi-Clifford and $U\in\GSC(4)$. Then Proposition~\ref{lem:level-bound} with $m=2$ gives $U\in\C_6(4)$, and therefore $h_v\in\C_5(4)$ for every $v$.
\end{proof}

\subsection{A finite level reduction}\label{subsec:n4-finite-cutoffs}
We show that, to prove $\C_k(4)\subseteq\GSC(4)$ for every $k$, it suffices to check the single level $k=4290$.
\begin{proposition}\label{prop:n4-finite-cutoffs}
Suppose that $\C_k(4)\not\subseteq\GSC(4)$ for some $k$. Then there is a gate $U'\in\C_{4290}(4)\setminus\GSC(4)$ whose corrections are generalized semi-Clifford and with $\operatorname{cond}(U')\le285$. In particular, $\C_k(4)\subseteq\GSC(4)$ for every $k$ if and only if $\C_{4290}(4)\subseteq\GSC(4)$.
\end{proposition}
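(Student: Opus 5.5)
The plan is to take a counterexample that is minimal in the lexicographic order on $(\operatorname{cond}(U),k)$, exactly as in the induction of Theorem~\ref{thm:main-small-width}, and then to bound its conductor and its level separately.

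\emph{Step 1 (minimal counterexample).} Assume $\C_k(4)\not\subseteq\GSC(4)$ for some $k$. Choose $U\in\C_k(4)\setminus\GSC(4)$ with $(r,k)=(\operatorname{cond}(U),k)$ lexicographically minimal. By Proposition~\ref{lem:conductor-plane} each correction $h_v$ lies in $\C_{k-1}(4)$ and has $\operatorname{cond}(h_v)\le r$. Minimality in $r$, or in $k$ when the conductor is equal, then makes every $h_v$ generalized semi-Clifford. Theorem~\ref{thm:n4-gaussian} excludes the case that every $h_v$ is Gaussian, so $m=\max_v\operatorname{cond}(h_v)>2$ and $m\in\{r-1,r\}$. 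Thus $U$ satisfies the standing hypotheses of the conductor descent setting, without the three-qubit-specific Steps~4--5.

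\emph{Step 2 (level from conductor).} Each $h_v$ is generalized semi-Clifford, and every correction $h_vP_wh_v^\dagger\doteq h_{v}h_wh_v$ is defined over $K_r$ up to scalar. Proposition~\ref{lem:level-bound} with $n=4$ and $m=r$ therefore gives $h_v\in\C_{r+4}(4)$, hence $U\in\C_{r+5}(4)$. So it remains to bound $r$, and any cutoff on the conductor yields a cutoff on the level. Since the hierarchy is nested, $U\in\C_j(4)$ with $j\le4290$ gives $U\in\C_{4290}(4)$.

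\emph{Step 3 (conductor bound, the main obstacle).} The argument of Lemma~\ref{lem:galois-normal-form}, Steps~1--3, is valid for $n\le4$. After a right Clifford it gives a Galois cocycle
\[
W^{-1}\sigma(W)\doteq\diag_x i^{p\cdot x}(-1)^{q(x)},
\]
with $(p,q)$ in one of finitely many shapes, $2^4\cdot2^{11}$ modulo affine $q$. I would try to show that a large conductor forces a descent anyway. For each $j$, apply the same analysis to $\sigma^{2^j}$ acting on the intermediate fields; this gives a sequence of cocycle data attached to the tower $K_r\supset K_{r-1}\supset\cdots$. The data live in a finite set of combinatorial types: the shape $(p,q)$, the Clifford normalizations modulo the finite group $\C_2(4)/\T$, and the finite set $\mathscr R_4$ of possible GSC normal forms of corrections, up to the Galois twist. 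A pigeonhole or periodicity argument along this sequence should produce two levels of the tower carrying the same data. Their combination should yield a rotation $R_P(\xi)$ about a Pauli $P=UQU^\dagger$, as in the discussion before Conjecture~\ref{conj:n4-descent}, which lowers the conductor while preserving membership in $\GSC(4)$. That would contradict minimality, and the number of available types would give the explicit cap $r\le285$. Combined with Step~2 this gives the level $4290$, perhaps after allowing for the Clifford and rotation factors in the descent.

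Making this pigeonhole rigorous is the step I expect to be hardest. In particular, the counting must yield exactly the constants $285$ and $4290$. Plainly, I cannot see from Step~2 alone why the level cap is $4290$ rather than $r+5=290$; this suggests the paper's counting is organized differently from my sketch.

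The converse direction of the ``in particular'' is immediate from the nesting $\C_k(4)\subseteq\C_{4290}(4)$ for $k\le4290$ together with the first assertion.
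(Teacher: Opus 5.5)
Your Steps~1 and~2 are sound, but Step~3 carries the entire content of the proposition, and as sketched it does not work. What you want there is a rotation $R_P(\xi)$ about a Pauli $P=UQU^\dagger$ that lowers the conductor while preserving membership in $\GSC(4)$. That is exactly Conjecture~\ref{conj:n4-descent}, which the paper leaves open because on four qubits it cannot produce such a $Q$.

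The ``tower of cocycle data'' you want to pigeonhole is also not available:
\begin{itemize}
\item Steps~1--3 of Lemma~\ref{lem:galois-normal-form} use that $\sigma$ generates $\operatorname{Gal}(K_M/K_{m-1})$. Hence $\sigma$ multiplies each entry ratio in $\mu_{2^m}$ of a monomial normal form of a correction by a sign, which gives $d_v^2$ scalar and then the commuting $d_v$.
\item The powers $\sigma^{2^j}$ with $j\ge2$ are trivial. An automorphism fixing only a lower field $K_j$ multiplies those ratios by higher roots of unity, so that proof breaks down and you get no finite list of cocycle shapes.
\item Even if two levels carried the same finite ``type'', nothing in your sketch turns that coincidence into a Pauli $Q$ with $UQU^\dagger$ Pauli.
\end{itemize}
So there is no path from your Step~3 to $285$, or to any bound at all.

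The missing idea is that you never need to lower the conductor of the given $U$. The proposition only asks for \emph{some} counterexample of small conductor, and the paper manufactures one by specializing phases rather than multiplying by a rotation.
\begin{itemize}
\item Take a counterexample of least level, so its corrections are generalized semi-Clifford. By Proposition~\ref{lem:tool-monomial}, write $h_{e_i}=C_{1i}\Pi_{\pi_i}\diag_x(e^{2\pi i\theta_{i,x}})C_{2i}$ for $1\le i\le8$.
\item Freeze all discrete data: the Clifford frames, the permutations, the monomial frames of every $h_v$, and, for each of the $2{,}295$ Lagrangians $M$, a nonzero coefficient witnessing that $\A_M$ is not normalized. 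Treat the $128$ phases $\theta\in(\Z[1/2]/\Z)^{128}$ as unknowns.
\item The required conditions are: the $g_i(\theta)$ are Hermitian with the Pauli commutation relations, every product $w_v$ is monomial in its frame, and the witnesses are nonzero. All of these are conditions on sums of $2$-power roots of unity.
\item Roots of unity from distinct cosets of $\mu_8$ are linearly independent over $K_3$. So fixing the cancellation pattern observed at $\theta_U$ turns the conditions into integer congruences $A\theta=b$ with $b\in\frac18\Z/\Z$, plus fewer than $2^{26}$ linear disequalities.
\item Smith normal form and Hadamard's inequality, with rows of norm at most $4$ and $16\sqrt2$ respectively, give a solution whose coordinates have order dividing $2^{285}$.
\item Uniqueness of the $16$-dimensional representation of the Pauli relations yields $U'$ with $U'P_{e_i}U'^\dagger=g_i(\theta)$. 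Kronecker's theorem puts the monomial entries in $\mu_{2^{285}}$. The linear system $XP_{e_i}=g_iX$ over $K_{285}$ gives $\operatorname{cond}(U')\le285$.
\end{itemize}

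The level $4290=15\cdot285+15$ is only the nilpotency-class bound for the $2$-group $\mathcal M_F(N)$ of order $2^{15N+15}$ that contains each monomial correction. Your Step~2 is valid: the corrections of $U'$ are generalized semi-Clifford, lie in $\C(4)$, and have corrections defined over $K_{285}$. Applying Proposition~\ref{lem:level-bound} to them would sharpen the bound to $U'\in\C_{290}(4)$. So your puzzlement about the constants does not point to an error in Step~2; the paper simply uses a cruder bound.
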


\begin{proof}
Let $e_1,\dots,e_8$ be the standard basis of $V_4$ and $D=\Z[1/2]/\Z$. A coordinate $\theta\in D$ stands for the root of unity $e^{2\pi i\theta}$, and $\zeta^{a\cdot\theta}=e^{2\pi i\,a\cdot\theta}$ for $a\in\Z^{128}$ and $\theta\in D^{128}$. An exact Clifford gate is a product of $H$, $S$ and $\CNOT$. Its entries lie in $\Z[\zeta_8,1/2]$, its determinant is a root of unity of $2$-power order, and every Clifford gate is a scalar multiple of one.
\begin{itemize}
\item \step{Step 1:}{a counterexample of least level has a finite description with $128$ phases.} Let $U\in\C(4)\setminus\GSC(4)$ have least level. Its corrections have lower level, so they are generalized semi-Clifford. By Proposition~\ref{lem:tool-monomial}, $h_{e_i}=\lambda_iC_{1i}\Pi_{\pi_i}D_iC_{2i}$ with exact Clifford gates $C_{1i},C_{2i}$, a permutation $\pi_i$, a diagonal $D_i$ with entries in $\mu_{2^s}$ for some $s$, and $\lambda_i\in\CC^\times$. As $h_{e_i}$ is a traceless Hermitian involution, $\det h_{e_i}=1$, so $\lambda_i^{16}$ is a root of unity of $2$-power order, and we absorb $\lambda_i$ into $D_i$. Thus $h_{e_i}=g_i(\theta_U)$ for some $\theta_U\in D^{128}$, where
\[
g_i(\theta)=C_{1i}\Pi_{\pi_i}\diag_{x\in\F_2^4}\bigl(e^{2\pi i\theta_{i,x}}\bigr)C_{2i}\qquad(1\le i\le8).
\]
For every $v\in V_4$, Proposition~\ref{lem:tool-monomial} also gives exact Clifford gates $C_v,C'_v$ and a permutation $\rho_v$ preserving a complete affine flag $F_v$ such that $C_v^{-1}h_vC_v'^{-1}$ is monomial with permutation $\rho_v$. Finally, $U\notin\GSC(4)$, and $h_{u+v}\doteq h_uh_v$, so by Lemma~\ref{lem:main-local-global}, for each of the $2{,}295$ Lagrangians $M$ (Lemma~\ref{lem:lagrangian-count}) there are $i$, a basis label $b$ of $M$ and $w\notin M$ such that the coefficient $c_M$ of $P_w$ in $h_{e_i}P_bh_{e_i}^\dagger$ is nonzero. Fix these discrete data, and for $\theta\in D^{128}$ consider the conditions:
\begin{itemize}
\item[(E1)] $g_i=g_i^\dagger$, and $g_ig_j=\pm g_jg_i$ with the sign of $P_{e_i}P_{e_j}=\pm P_{e_j}P_{e_i}$;
\item[(E2)] for every $v=\sum_iv_ie_i$, the entries of $C_v^{-1}w_vC_v'^{-1}$ off the permutation $\rho_v$ vanish, where $w_v=g_1^{v_1}\cdots g_8^{v_8}$;
\item[(N)] for every Lagrangian $M$, the coefficient $c_M$, computed from the $g_i$, is nonzero.
\end{itemize}
All of them hold at $\theta_U$, since $w_v(\theta_U)\doteq h_v$.

\item \step{Step 2:}{if $\theta\in D^{128}$ satisfies \textup{(E1)}, \textup{(E2)} and \textup{(N)}, and its coordinates have order dividing $2^N$ with $N\ge3$, then there is $U'\in\C_{15N+15}(4)\setminus\GSC(4)$ with generalized semi-Clifford corrections and $\operatorname{cond}(U')\le N$.} The $g_i$'s are unitary, and by (E1) they are Hermitian and satisfy the commutation relations of the $P_{e_i}$. A $16$-dimensional representation of these relations is unique up to unitary equivalence, so $g_i=U'P_{e_i}U'^\dagger$ for a unitary $U'$, and $h'_v=U'P_vU'^\dagger\doteq w_v$. By (E2), $C_v^{-1}h'_vC_v'^{-1}$ is monomial with permutation $\rho_v$. Its nonzero entries $\alpha$ satisfy $\alpha\bar\alpha=1$ and lie in $\mathcal O_{K_N}[1/2]$. The prime $2$ is totally ramified in $K_N$, and complex conjugation fixes the prime above it, so $\alpha$ is an algebraic integer whose conjugates all have absolute value one. By Kronecker's theorem $\alpha$ is a root of unity in $K_N$, so $\alpha\in\mu_{2^N}$. Hence $[C_v^{-1}h'_vC_v'^{-1}]$ lies in the group $\mathcal M_{F_v}(N)$ of the proof of Proposition~\ref{app:lem-R_n}, which has order $2^{15}\cdot2^{15N}$ by Lemma~\ref{lem:tool-flags}(1). As shown there, a finite $2$-group containing $\overline{\Pau}(4)$ lies in $[\C_c(4)]$, where $c$ is its nilpotency class, here at most $15N+14$. Therefore every $h'_v$ lies in $\C_{15N+14}(4)$ and is generalized semi-Clifford, and $U'\in\C_{15N+15}(4)$. By (N), no $\A_M$ is normalized by all $g_i$, so $U'\notin\GSC(4)$ by Lemma~\ref{lem:main-local-global}. Finally, the solutions of the linear system $XP_{e_i}=g_iX$, $1\le i\le8$, which has coefficients in $K_N$, are the multiples of $U'$, so $\operatorname{cond}(U')\le N$.

\item \step{Step 3:}{on a fixed cancellation pattern the conditions are linear in $\theta$.} Each entry or coefficient in (E1), (E2), and (N) is a sum $\sum_tc_t\zeta^{a_t\cdot\theta}$ with distinct $a_t\in\Z^{128}$ and $c_t\in K_3$. Roots of unity of $2$-power order in distinct cosets of $\mu_8$ are linearly independent over $K_3$, since $\zeta_{2^m}^j$, $0\le j<2^{m-3}$, form a basis of $K_m$ over $K_3$. Group the terms of each sum into blocks by the coset of $\zeta^{a_t\cdot\theta_U}$, fix a term $t_0$ in each block, and let $\zeta^{(a_t-a_{t_0})\cdot\theta_U}=\zeta_8^{\ell_t}$. Let $S\subseteq D^{128}$ be the set of $\theta$ with
\[
(a_t-a_{t_0})\cdot\theta=\ell_t/8\ \text{ for all blocks},\qquad
8(a_t-a_{t'})\cdot\theta\ne0\ \text{ for $t,t'$ in distinct blocks of a sum in (N)}.
\]
For $\theta\in S$, each block sum equals the block sum at $\theta_U$ times a root of unity. So the sums in (E1) and (E2) still vanish, and each sum in (N) still has a nonzero block in a coset of its own, so it is nonzero. Thus every point of $S$ satisfies (E1), (E2) and (N), and $\theta_U\in S$.

\item \step{Step 4:}{the equations have integer rows of norm at most $4$, and there are fewer than $2^{26}$ disequalities, with rows of norm at most $16\sqrt2$.} The entries of $g_i$ are linear in the $16$ variables $e^{2\pi i\theta_{i,x}}$, and those of $g_i^\dagger$ in their inverses. In (E1), differences of exponents have norm at most $2$. In (E2), $w_v$ contains each $g_i$ at most once, so a difference has norm at most $\sqrt2$ in each of the eight groups of variables, and at most $4$ in total. In (N), $g_iP_bg_i^\dagger$ has the $241$ exponents $0$ and $e_{i,x}-e_{i,y}$, $x\ne y$, so differences have norm at most $2\sqrt2$, and the $2{,}295$ sums give at most $2{,}295\binom{241}{2}<2^{26}$ disequalities.

\item \step{Step 5:}{$S$ contains a point whose coordinates have order dividing $2^{285}$.} Write the equalities as $A\theta=b$ with $A$ an integer matrix of rank $r\le128$ and $b$ with entries in $\frac18\Z/\Z$. Let $A=P\operatorname{diag}(d_1,\dots,d_r,0,\dots,0)Q$ be its Smith normal form, with $P,Q$ unimodular, and let $y=Q\theta$. The equalities prescribe $d_jy_j$ for $j\le r$ and leave $y_{r+1},\dots,y_{128}$ free. At $\theta_U$, each $y_j$ with $j\le r$ has order dividing $2^{3+\nu_2(d_j)}$, and $\sum_j\nu_2(d_j)\le\log_2\delta_r(A)\le r\log_24\le256$ by Hadamard's inequality, where $\delta_j$ is the gcd of the nonzero $j\times j$ minors. Keep these coordinates of $\theta_U$. Every disequality that is constant in the free coordinates then still holds. Another disequality $\lambda\cdot\theta\ne0$ becomes $q\cdot(y_{r+1},\dots,y_{128})\ne c$ for an integer vector $q\ne0$, so $r\le127$, and
\[
\nu_2(\gcd q)=\nu_2\,\delta_{r+1}\!\begin{pmatrix}A\\\lambda\end{pmatrix}-\nu_2\,\delta_r(A)\le\log_2\bigl(16\sqrt2\cdot4^{r}\bigr)<259,
\]
so $\nu_2(\gcd q)\le258$. On the grid of free coordinates of order dividing $2^{285}$, such a disequality fails on a proportion at most $2^{258-285}=2^{-27}$ of the points. There are fewer than $2^{26}$ of them, so some grid point satisfies all, and the resulting $\theta=Q^{-1}y$ has coordinates of order dividing $2^{\max\{259,285\}}=2^{285}$.
\end{itemize}
Steps~2 and~5 with $N=285$ give $U'\in\C_{4290}(4)\setminus\GSC(4)$ with generalized semi-Clifford corrections and $\operatorname{cond}(U')\le285$. In the last assertion the forward direction is trivial, and the converse follows from the first assertion.
\end{proof}

\section{Beyond generalized semi-Clifford}\label{sec:beyond-gsc}

In this section, we prove Lemma~\ref{lem:control-form} and Proposition~\ref{clifford-fiber} and construct gates outside $\GSC$ of every exact level $k\ge5$ on every $n\ge5$ qubits. We use the sets $\mathcal F(n_c,n_t)$ and $\mathcal F_j(n_c,n_t)$ defined in Section~\ref{sec:conjectures} with $n=n_c+n_t$. For $U\in\mathcal F_2(n_c,n_t)$, Lemma~\ref{lem:basic-operations}(1) allows us to drop the Clifford factors $C_1,C_2$ and write
\begin{equation}\label{eq:fib-form}
 U=\sum_{x\in\F_2^{n_c}}\ket{\pi(x)}\bra x\otimes V_x
\end{equation}
with a permutation $\pi$ of $\F_2^{n_c}$ and Clifford gates $V_x$. Let $S_x=\gamma_{V_x}\in\Sp(V_{n_t})$ be the symplectic action of $V_x$, so that $V_xP_vV_x^\dagger\doteq P_{S_xv}$.

\begin{proof}[Proof of Lemma~\ref{lem:control-form}]
Let $R\le V_n$ be the span of the labels of $Z_1,\dots,Z_{n_c}$, and for $x\in\F_2^{n_c}$ let $Q_x=\ket x\bra x\otimes I$. The $Q_x$ are pairwise orthogonal projections with sum $I$, and they span $\A_R$. Hence every projection in $\A_R$ is $\sum_{y\in Y}Q_y$ for some $Y\subseteq\F_2^{n_c}$, and it has rank $2^{n_t}|Y|$. Recall that $\gamma_C$ is symplectic and $C\A_SC^\dagger=\A_{\gamma_CS}$ for $C\in\C_2(n)$ and $S\le V_n$.

We first show that every $U\in\mathcal F(n_c,n_t)$ satisfies the condition. Write $U=C_1MC_2$ with $C_1,C_2\in\C_2(n)$ and $M=\sum_x\ket{\pi(x)}\bra x\otimes V_x$. Then $MQ_xM^\dagger=Q_{\pi(x)}$ for every $x$, so $M\A_RM^\dagger=\A_R$. The subspaces $S=\gamma_{C_2}^{-1}R$ and $T=\gamma_{C_1}R$ are isotropic of dimension $n_c$, and
\[
U\A_SU^\dagger=C_1M\A_RM^\dagger C_1^\dagger=C_1\A_RC_1^\dagger=\A_T.
\]

Conversely, let $U\in\C(n)$ satisfy $U\A_SU^\dagger=\A_T$ for isotropic subspaces $S,T\le V_n$ of dimension $n_c$. A basis of an isotropic subspace extends to a symplectic basis of $V_n$, so $\Sp(V_n)$ acts transitively on the isotropic subspaces of dimension $n_c$. Since every binary symplectic map is induced by a Clifford gate, there are $C,C'\in\C_2(n)$ with $\gamma_CS=R$ and $\gamma_{C'}T=R$. Then $M=C'UC^\dagger$ satisfies $M\A_RM^\dagger=\A_R$. For each $x$, $MQ_xM^\dagger$ is a projection in $\A_R$ of rank $2^{n_t}$, so $MQ_xM^\dagger=Q_{\pi(x)}$ for some $\pi(x)\in\F_2^{n_c}$. For $x\ne x'$, the projections $MQ_xM^\dagger$ and $MQ_{x'}M^\dagger$ are orthogonal, so $\pi$ is a permutation of $\F_2^{n_c}$. Hence $MQ_x=Q_{\pi(x)}M$, and the unitary $M$ maps the range $\ket x\otimes\CC^{2^{n_t}}$ of $Q_x$ onto the range $\ket{\pi(x)}\otimes\CC^{2^{n_t}}$ of $Q_{\pi(x)}$. Therefore
\[
M=\sum_xQ_{\pi(x)}MQ_x=\sum_x\ket{\pi(x)}\bra x\otimes V_x,\qquad V_x=(\bra{\pi(x)}\otimes I)\,M\,(\ket x\otimes I)\in\mathrm U(2^{n_t}),
\]
and $U=C'^\dagger MC$ lies in $\mathcal F(n_c,n_t)$.
\end{proof}

\subsection{Proof of Proposition~\ref{clifford-fiber}}\label{subsec:clifford-fibers}

\begin{theorem}\label{lem:fib-sharp-small}
Let $U=C_1\bigl(\sum_x\ket{\pi(x)}\bra x\otimes V_x\bigr)C_2\in\mathcal F_2(n_c,n_t)$, and let $S_x$ be the symplectic action of $V_x$.
\begin{enumerate}
\item $U\in\GSC(n)$ if and only if there are Lagrangians $L,L'$ of $V_{n_t}$ with $S_xL=L'$ for every $x\in\F_2^{n_c}$.
\item If $n_c\le1$ or $n_t\le2$, then $U\in\GSC(n)$.
\item If $n_c\ge2$ and $n_t\ge3$, then $\mathcal F_2(n_c,n_t)\not\subseteq\mathsf{GSC}(n)$.
\end{enumerate}
In particular, (2) and (3) imply Proposition~\ref{clifford-fiber}.
\end{theorem}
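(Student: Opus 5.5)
By Lemma~\ref{lem:basic-operations}(1) we drop $C_1,C_2$ and work with $U=\sum_x\ket{\pi(x)}\bra x\otimes V_x$ as in \eqref{eq:fib-form}. Let $R$ be the span of the labels of $Z_1,\dots,Z_{n_c}$; we have $U\A_RU^\dagger=\A_R$.

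For (1), the plan is to apply Lemma~\ref{lem:root-witness} with $S=T=R$. It turns any GSC pair of $U$ into a GSC pair of the form $(R\oplus L,R\oplus L')$ with $L,L'\le V_{n_t}$ Lagrangian, exactly as in the proof of Lemma~\ref{lem:tool-blocks}(1). The algebra $\A_{R\oplus L}$ consists of the block-diagonal operators $\bigoplus_x a_x$ with $a_x\in\A_L$. Conjugation by $U$ sends this to $\bigoplus_{x}V_xa_xV_x^\dagger$, placed in block $\pi(x)$. So $U\A_{R\oplus L}U^\dagger=\A_{R\oplus L'}$ holds exactly when $V_x\A_LV_x^\dagger=\A_{L'}$ for every $x$, that is, when $S_xL=L'$ for every $x$. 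The converse direction is the same computation read backwards.

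For (2), I would reduce via (1) to finding a common $L$ with all $S_x^{-1}S_yL=S_0^{-1}S_yL$, i.e.\ an $L$ fixed by every $S_0^{-1}S_x$. The first step is to extract constraints from finite level. Conjugating $X^c\otimes I$ by $U$ gives a correction whose blocks are $V_{x+c}V_x^\dagger$ (up to Pauli factors and permutation). Iterating corrections as in Proposition~\ref{lem:controlled-clifford}, finite level should force $\langle \Pau(n),h\rangle$ to be a finite $2$-group for each correction $h$. Consequently the group generated by the maps $S_yS_x^{-1}$ should be a $2$-group, or at least each such map should be unipotent.

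Case $n_c\le1$: there are only two blocks, so the single map $S_1S_0^{-1}$ is unipotent. It then generates a $2$-group, which fixes a Lagrangian by Lemma~\ref{lem:tool-two-group}(1).

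Case $n_t\le2$: the blocks are pairwise compatible, because each two-block restriction $\diag(V_x,V_y)$ lies in $\C(n_t+1)$ with $n_t+1\le3$ and is GSC. For $n_t=1$, Lemma~\ref{lem:tool-one-qubit} finishes. For $n_t=2$ I would use $\Sp(4,2)\cong S_6$: the elements $S_0^{-1}S_x$ are $2$-power order, and I would expect that pairwise compatibility, together with the $2$-group structure from the corrections, yields a common fixed Lagrangian. This is the main obstacle. The plan is to prove directly that $\langle S_0^{-1}S_x\rangle$ is a $2$-group, using that products $V_{x+c}V_x^\dagger$ appear inside a single correction, and then apply Lemma~\ref{lem:tool-two-group}(1). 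Failing that, I would do a finite check in $S_6$.

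For (3), by Lemma~\ref{lem:basic-operations}(2) (tensoring with $I$) it suffices to handle $n_c=2$, $n_t=3$ and to exhibit four Cliffords $V_x$ on three qubits with no common $L$ mapped to a common $L'$, while $U$ has finite level. The idea is to choose $V_x$ with $S_x$ generating a non-$2$-group acting freely enough on Lagrangians, for instance involving an element of order $3$ realized only through quotients $S_xS_y^{-1}$ that never occur as single corrections. Concretely, I would take $\pi=1$ and $V_x$ with $V_x^{\,2^j}$ Pauli for each $x$ and for each difference pattern, so that controlled-type criteria like Proposition~\ref{lem:controlled-clifford} give finite level, and then verify the absence of a common Lagrangian pair by a search over the $135$ Lagrangians of $V_3$.
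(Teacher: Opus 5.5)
Your part (1) and the case $n_c\le1$ of part (2) agree with the paper. The case $n_t=2$, $n_c\ge2$ has a genuine gap. You name the right target: all $S_x$ lie in one right coset of a Sylow $2$-subgroup of $\Sp(4,2)\cong S_6$, equivalently $\langle S_0^{-1}S_x\rangle$ is a $2$-group. But pairwise compatibility cannot deliver it. Identify Lagrangians of $V_2$ with partitions of $\{1,\dots,6\}$ into three pairs, as in Case~2a of the proof of Lemma~\ref{gauss-lem:three-correction-compatibility}. The four actions $I,(12),(34),(13)(25)(46)$ then have all pairwise quotients of $2$-power order. Yet $(12)$ and $(34)$ jointly fix only the partition $\{12,34,56\}$, which $(13)(25)(46)$ moves.

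The missing idea is to induct on the level with the stronger hypothesis that all block actions lie in one Sylow coset. The correction $U(X^a\otimes I)U^\dagger$ is again of the form~\eqref{eq:fib-form}, with block actions $\mathcal T_a=\{S_{x+a}S_x^{-1}\}$. So by induction $\mathcal T_a$ lies in a Sylow coset, and, being inverse-closed, it generates a $2$-group. Take an affine plane $\{x,x+a,x+b,x+a+b\}$ and let $\alpha,\beta,\gamma$ be the actions at $x+a,x+b,x+a+b$ times $S_x^{-1}$. Then $\langle\alpha,\gamma\beta^{-1}\rangle$, $\langle\beta,\gamma\alpha^{-1}\rangle$ and $\langle\gamma,\beta\alpha^{-1}\rangle$ are $2$-groups. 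In the example above, placed in this order on a plane, $\alpha\gamma\beta^{-1}$ is a $6$-cycle, so these triple conditions are exactly what excludes it. The finite fact \textup{(\ref{cert:s6-two-group-criterion})} upgrades the triple conditions to each plane. Since any three points lie on a plane, \textup{(\ref{cert:s6-sylow-cosets-and-three-element-subsets})} then upgrades them to all of $\F_2^{n_c}$. Some input specific to $S_6$ is unavoidable, because the conclusion fails for $n_t=3$.

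Separately, deriving pairwise compatibility from Lemma~\ref{lem:tool-blocks}(2) needs $U$ to commute with the control $Z$'s. That fails for non-affine $\pi$ once $n_c\ge3$. Instead, by induction on the level every $S_yS_x^{-1}$ has $2$-power order, because $(S_{x+a}S_x^{-1})^2$ is the quotient of two block actions of $U(X^a\otimes I)U^\dagger$. With this repair, your treatment of $n_t=1$ via Lemma~\ref{lem:tool-one-qubit}(2) is a valid alternative to the paper's reduction to $n_t=2$.

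Part (3) is only a search plan. No gate is given, and exhibiting one is the whole content. Proposition~\ref{lem:controlled-clifford} cannot certify the level either: it treats one control qubit, whereas with two controls the level depends on products of block quotients around the control plane. Also, every single quotient $S_{x+a}S_x^{-1}$ does occur as a block action of a correction and has $2$-power order. So the obstruction must come from a product of several quotients.

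The paper takes $U_{E_6}$ of~\eqref{eq:fib-e6-family}, with blocks $I,\widehat B,\widehat A,\widehat A\widehat B$ built from transvections along the $E_6$ Dynkin diagram. It certifies $U_{E_6}\in\C_5(5)$ by \textup{(\ref{cert:e6-gate-level-five})}. It rules out Lagrangians $L,L'$ as in part (1) because $A^2B$ has the irreducible characteristic polynomial $z^6+z^3+1$ by \textup{(\ref{cert:e6-root-words})}. Hence $A$ and $B$ have no common invariant subspace other than $0$ and $V_3$.
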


\begin{proof}
By Lemma~\ref{lem:basic-operations}(1), we may assume $C_1=C_2=I$, which changes neither the blocks nor the level.

\textit{Part (1).} Let $R\le V_n$ be spanned by the $Z$-labels of the $n_c$ control qubits. For a Lagrangian $L$ of $V_{n_t}$, the space $R\oplus L$ is a Lagrangian of $V_n$, and $\A_{R\oplus L}$ is spanned by the operators $\ket x\bra x\otimes P_v$ with $x\in\F_2^{n_c}$ and $v\in L$. Since
\[
 U\bigl(\ket x\bra x\otimes P_v\bigr)U^\dagger=\ket{\pi(x)}\bra{\pi(x)}\otimes V_xP_vV_x^\dagger\doteq\ket{\pi(x)}\bra{\pi(x)}\otimes P_{S_xv},
\]
the condition $S_xL=L'$ for every $x$ gives $U\A_{R\oplus L}U^\dagger\subseteq\A_{R\oplus L'}$, with equality because both algebras have dimension $2^n$. Hence $U\in\GSC(n)$ by Lemma~\ref{lem:lagrangian}.

Conversely, let $(S,T)\in\mathscr L(U)$. Since $U(\ket x\bra x\otimes I)U^\dagger=\ket{\pi(x)}\bra{\pi(x)}\otimes I$, we have $U\A_RU^\dagger=\A_R$, and Lemma~\ref{lem:root-witness}, applied to $(R,R)$ and $(S,T)$, gives a GSC pair of $U$ whose two Lagrangians contain $R$. A Lagrangian $N$ with $R\le N\le R^\perp=R\oplus V_{n_t}$ equals $R\oplus(N\cap V_{n_t})$, as in the proof of Lemma~\ref{lem:tool-blocks}, so this pair is $(R\oplus L,R\oplus L')$ for Lagrangians $L,L'$ of $V_{n_t}$. By the display above, $\ket{\pi(x)}\bra{\pi(x)}\otimes P_{S_xv}$ lies in $\A_{R\oplus L'}$ for every $x$ and $v\in L$, so $S_xL\subseteq L'$, with equality by dimension.

\textit{Parts (2) and (3).} A \emph{Sylow coset} is a right coset $Pg$ of a Sylow $2$-subgroup $P$ of $\Sp(4,2)\cong S_6$. We split into three cases according to $(n_c,n_t)$.

\begin{itemize}
\item \step{Case 1:}{$n_c\leq1$ or $n_t=0$.} If $n_t=0$, take $L=L'=0$, and if $n_c=0$, take any $L$ and $L'=S_0L$. If $n_c=1$, the gate $(X^{\pi(0)}\otimes I)\,U\,(I\otimes V_0^\dagger)=\operatorname{diag}(I,V_1V_0^\dagger)$ has finite level $k$. By Proposition~\ref{lem:controlled-clifford}, $(V_1V_0^\dagger)^{2^{k-2}}\in\mu_4\mathcal P(n_t)$, so $S_1S_0^{-1}$ has $2$-power order. The cyclic $2$-group it generates fixes a Lagrangian $L'$, since the number of Lagrangians is odd (Lemma~\ref{lem:lagrangian-count}), and $L=S_0^{-1}L'$ gives $S_0L=S_1L=L'$. Part~(1) applies.

\item \step{Case 2:}{$n_c\geq2$ and $n_t\in\{1,2\}$.} For $n_t=1$, tensoring with the identity on one additional target qubit gives $U\otimes I\in\mathcal F_2(n_c,2)$, which is generalized semi-Clifford exactly when $U$ is, by Lemma~\ref{lem:basic-operations}(2). We may therefore take $n_t=2$. We show that the $S_x$ lie in one Sylow coset $Pg$. Then $P$ fixes a Lagrangian $L'$, since the number of Lagrangians is odd, and $L=g^{-1}L'$ gives $S_xL=L'$ for every $x$, so Part~(1) applies.

The proof is by induction on the level for all gates of the form~\eqref{eq:fib-form} with the same $n_c$ and $n_t=2$ at once. At level one, the blocks are Paulis, so every $S_x=I$. In the induction step, we first show that each set $\mathcal T_a=\{S_{x+a}S_x^{-1}:x\in\F_2^{n_c}\}$ generates a $2$-group. The correction
\[
 U(X^a\otimes I)U^\dagger=\sum_x\ket{\pi(x+a)}\bra{\pi(x)}\otimes V_{x+a}V_x^\dagger
\]
is again of the form~\eqref{eq:fib-form}, with the permutation $x\mapsto\pi(\pi^{-1}(x)+a)$ and lower level, and the actions of its blocks form $\mathcal T_a$. By induction, $\mathcal T_a$ lies in one Sylow coset, so $H=\langle s\tau^{-1}:s\in\mathcal T_a\rangle$ is a $2$-group for fixed $\tau\in\mathcal T_a$. Since $\mathcal T_a$ is inverse-closed, $\tau^{-2}\in H$ and
\[
 \tau(s\tau^{-1})\tau^{-1}=(s^{-1}\tau^{-1})^{-1}\tau^{-2}\in H.
\]
Thus $\tau$ normalizes $H$ and $\langle\mathcal T_a\rangle=H\langle\tau\rangle$ is a $2$-group.

Next let $\{x,x+a,x+b,x+a+b\}$ be an affine plane, and let $\alpha=S_{x+a}S_x^{-1}$, $\beta=S_{x+b}S_x^{-1}$, $\gamma=S_{x+a+b}S_x^{-1}$. Then $\gamma\beta^{-1},\alpha\in\mathcal T_a$, $\gamma\alpha^{-1},\beta\in\mathcal T_b$ and $\beta\alpha^{-1},\gamma\in\mathcal T_{a+b}$, so $\langle\alpha,\gamma\beta^{-1}\rangle$, $\langle\beta,\gamma\alpha^{-1}\rangle$ and $\langle\gamma,\beta\alpha^{-1}\rangle$ lie in the $2$-groups $\langle\mathcal T_a\rangle$, $\langle\mathcal T_b\rangle$ and $\langle\mathcal T_{a+b}\rangle$. For every $n_c$, $\alpha,\beta,\gamma\in\Sp(4,2)\cong S_6$, a group of order $720$, in which a subgroup is a $2$-group exactly when it lies in one of the $45$ Sylow $2$-subgroups. Checking all triples gives the following.

\begin{finitefact}{cert:s6-two-group-criterion}{A $2$-group criterion in $S_6$}{Theorem~\ref{lem:fib-sharp-small}}
Let $a,b,c\in\Sp(4,2)\cong S_6$. If $\langle a,cb^{-1}\rangle$, $\langle b,ca^{-1}\rangle$ and $\langle c,ba^{-1}\rangle$ are $2$-groups, then so is $\langle a,b,c\rangle$.
\tcblower
\textbf{Code.} \certlink{s6-two-group-criterion}
\end{finitefact}
By \textup{(\ref{cert:s6-two-group-criterion})}, $\langle\alpha,\beta,\gamma\rangle$ is a $2$-group, so the four actions $S_x,\alpha S_x,\beta S_x,\gamma S_x$ on the plane lie in one Sylow coset.

Finally, every three points of $\F_2^{n_c}$ lie in an affine plane, so every subset of $F=\{S_x:x\in\F_2^{n_c}\}$ with at most three elements lies in a Sylow coset. For every $n_c$, $F\subseteq\Sp(4,2)\cong S_6$. Right multiplication preserves Sylow cosets, so we may assume $1\in F$. Then $F\setminus\{1\}$ is a clique of the graph on the $255$ nonidentity elements of $2$-power order, joined when they lie in a common Sylow $2$-subgroup, so it lies in one of the finitely many maximal cliques, and listing them gives the following.

\begin{finitefact}{cert:s6-sylow-cosets-and-three-element-subsets}{Sylow cosets in $S_6$ and three-element subsets}{Theorem~\ref{lem:fib-sharp-small}}
Let $F\subseteq\Sp(4,2)\cong S_6$. If every subset of $F$ with at most three elements lies in a right coset $Pg$ of a Sylow $2$-subgroup $P$, then so does $F$.
\tcblower
\textbf{Code.} \certlink{s6-sylow-cosets-and-three-element-subsets}
\end{finitefact}
By \textup{(\ref{cert:s6-sylow-cosets-and-three-element-subsets})}, $F$ lies in one Sylow coset.

\item \step{Case 3:}{$n_c\geq2$ and $n_t\geq3$.} By Theorem~\ref{thm:exact-cells} with $k=n=5$, the gate $U_{E_6}\in\mathcal F_2(2,3)$ is not generalized semi-Clifford. Letting it act as the identity on $n_c-2$ further control qubits and $n_t-3$ further target qubits gives a gate of $\mathcal F_2(n_c,n_t)$ for every $n_c\ge2$ and $n_t\ge3$, which is not generalized semi-Clifford by Lemma~\ref{lem:basic-operations}(2). \qedhere
\end{itemize}
\end{proof}

\subsection{Constructing non-generalized semi-Clifford gates}\label{subsec:orthogonal-fibers}

For a Hermitian Pauli $P_p$ let
\begin{equation}\label{eq:fib-root-lift}
 R_p=\frac{I+P_p}{2}+i\,\frac{I-P_p}{2},
\end{equation}
a Clifford gate with entries in $\mathbb Z[i,1/2]$ whose symplectic action is the transvection $t_p(v)=v+\omega(v,p)p$. On three qubits take
\[
 (P_{p_0},\dots,P_{p_5})=(Z_0,\ X_0,\ Z_0Z_1,\ X_1,\ Z_1Z_2,\ X_1X_2),
\]
in which $P_{p_i}$ and $P_{p_j}$ anticommute exactly for $ij\in\{01,12,23,34,25\}$, the edges of the Dynkin diagram of $E_6$. Let $\widehat A=R_{p_1}R_{p_2}R_{p_3}R_{p_4}R_{p_5}$ and $\widehat B=R_{p_0}R_{p_2}$, with symplectic actions $A=t_{p_1}\cdots t_{p_5}$ and $B=t_{p_0}t_{p_2}$. For a root of unity $\xi$, define the five-qubit gate with control qubits $c,t$
\begin{equation}\label{eq:fib-e6-family}
 U_\xi=\sum_{c,t\in\F_2}|ct\rangle\langle ct|\otimes(\xi\widehat A)^{c}\,\widehat B^{\,t},
 \qquad U_{E_6}=U_1.
\end{equation}

\begin{lemma}\label{lem:fib-phase}
Let $f\in\C_a(m)$ with $a\ge1$ satisfy $fZ_cf^\dagger=\pm Z_c$ for a qubit $c$, $\lambda$ a root of unity of order $2^\ell$, and $\Delta_\lambda= \operatorname{diag}(\lambda,1)$ act on the qubit $c$. Then $f\Delta_\lambda\in\C_{\max(a,\ell)}(m)$.
\end{lemma}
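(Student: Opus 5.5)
I will prove the lemma by induction on $\max(a,\ell)$, or more precisely by induction on $k=\max(a,\ell)$ applied to the whole family of statements at once. The key identity is how $\Delta_\lambda$ acts on Paulis. It commutes with every Pauli not involving $X_c$ or $Y_c$. Moreover $\Delta_\lambda X_c\Delta_\lambda^\dagger=\lambda X_c\Delta_{\lambda^{-2}}$, up to rewriting $\operatorname{diag}(\lambda,1)X\operatorname{diag}(\bar\lambda,1)=\operatorname{diag}(1,\ldots)$. Concretely, $\Delta_\lambda X_c\Delta_\lambda^\dagger\doteq X_c\,\Delta_{\lambda^{2}}^{\pm1}$, where $\lambda^2$ has order $2^{\ell-1}$. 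So for a Pauli $P=X_c^{e}P'$ with $P'$ free of $X_c$, we get
\[
(f\Delta_\lambda)P(f\Delta_\lambda)^\dagger\doteq fPf^\dagger\cdot\bigl(f\Delta_{\lambda^{2}}^{\pm e}f^\dagger\bigr),
\]
with the order of factors arranged suitably.

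First I handle the base cases. If $\ell\le 2$, then $\Delta_\lambda$ is Clifford, and right multiplication by a Clifford preserves $\C_a(m)$ for $a\ge2$. For $a=1$ the claim $f\Delta_\lambda\in\C_{\max(1,\ell)}$ follows because $f$ is a Pauli. The case $\ell\le1$ is the same: either $f\Delta_\lambda\doteq fZ_c$, which is in $\C_a$, or $\Delta_\lambda=I$.

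For the inductive step, write a correction of $f\Delta_\lambda$ as above. The hypothesis $fZ_cf^\dagger=\pm Z_c$ is used to rewrite $f\Delta_{\mu}f^\dagger$. Since $\Delta_\mu=\frac{I+Z_c}{2}+\mu\frac{I-Z_c}{2}$, the conjugate $f\Delta_\mu f^\dagger$ equals $\Delta_\mu$ or $Z$-flipped $\Delta_{\mu^{-1}}$ up to a scalar. The correction therefore takes the form $g\,\Delta_{\mu}$ with $g=fPf^\dagger\in\C_{a-1}(m)$ and $\mu$ of order dividing $2^{\ell-1}$. When $e=0$ there is no $\Delta$ factor and the correction is just $g\in\C_{a-1}$. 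I also need $g$ to satisfy the hypothesis $gZ_cg^\dagger=\pm Z_c$. This holds because $Z_c$ commutes or anticommutes with $P$, so $gZ_cg^\dagger=fPZ_cPf^\dagger=\pm fZ_cf^\dagger=\pm Z_c$. The induction hypothesis then gives $g\Delta_\mu\in\C_{\max(a-1,\ell-1)}$, and hence $f\Delta_\lambda\in\C_{\max(a,\ell)}$.

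Two points need care. The order-$2^{\ell}$ root $\lambda$ becomes $\lambda^{\pm2}$, and the product $fPf^\dagger\cdot f\Delta f^\dagger$ must be put in the form (gate)$\cdot\Delta_\mu$ on the right. Because $f\Delta_\mu f^\dagger\doteq\Delta_{\mu^{\pm1}}$, this is fine. When $a-1=0$ (that is, $a=1$), $g$ is scalar-free only for $a\ge2$, and the level bookkeeping there needs $\C_0$ excluded; that boundary is covered by the base case above.

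I expect the main obstacle to be this bookkeeping: getting the exact scalar and sign in $\Delta_\lambda X_c\Delta_\lambda^\dagger$ right, and verifying that $f\Delta_\mu f^\dagger$ is again a diagonal phase on qubit $c$ even when $fZ_cf^\dagger=-Z_c$. In that case it is $\mu\,\Delta_{\mu^{-1}}$. Neither affects the order of the root, so the induction closes.
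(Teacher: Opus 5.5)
This is correct and is essentially the paper's proof. Both argue by induction on $\max(a,\ell)$, using the identity $\Delta_\lambda Q\Delta_\lambda^\dagger=\lambda Q\Delta_{\lambda^{-2}}$ for a Pauli $Q$ anticommuting with $Z_c$, the fact that $f\Delta_\mu f^\dagger$ is $\Delta_\mu$ or $\mu\Delta_{\mu^{-1}}$, and the check that $g=fQf^\dagger$ again satisfies $gZ_cg^\dagger=\pm Z_c$.

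One bookkeeping fix is needed. The case $a=1$, $\ell\ge3$ is not covered by your base case $\ell\le2$, contrary to what you claim. There, however, $g$ is a Pauli, so you can take $g\in\C_{\max(a-1,1)}(m)$ as the paper does. The scalars you hide with $\doteq$ are harmless, because in the inductive step the corrections only need to lie in $\C_{N-1}(m)$ with $N-1\ge2$.
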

\begin{proof}
We use induction on $N=\max(a,\ell)$. If $N\le2$, then $f$ and $\Delta_\lambda$ lie in the group $\C_N(m)$, and so does $f\Delta_\lambda$. Let $N\ge3$ and let $Q$ be a Hermitian Pauli. If $Q$ commutes with $Z_c$, then so does $\Delta_\lambda$, and the correction of $f\Delta_\lambda$ at $Q$ is $fQf^\dagger\in\C_{N-1}(m)$. Otherwise, $Q$ swaps the two eigenspaces of $Z_c$, so $Q^\dagger\Delta_\lambda Q=\operatorname{diag}(1,\lambda)$ and
\[
 \Delta_\lambda Q\Delta_\lambda^\dagger=Q\operatorname{diag}(\bar\lambda,\lambda)=\lambda\,Q\Delta_{\lambda^{-2}},\qquad
 (f\Delta_\lambda)Q(f\Delta_\lambda)^\dagger=\lambda\,(fQf^\dagger)(f\Delta_{\lambda^{-2}}f^\dagger).
\]
Since $f$ fixes or swaps the eigenspaces of $Z_c$, $f\Delta_{\lambda^{-2}}f^\dagger$ is $\Delta_{\lambda^{-2}}$ or $\lambda^{-2}\Delta_{\lambda^2}$. So the correction is a scalar multiple of $g\Delta_\mu$, where $g=fQf^\dagger\in\C_{\max(a-1,1)}(m)$ satisfies $gZ_cg^\dagger=\pm Z_c$ and $\mu$ has order $2^{\max(\ell-1,0)}$. By induction, $g\Delta_\mu\in\C_{N-1}(m)$, and as $N-1\ge2$ the scalar does not matter. Hence $f\Delta_\lambda\in\C_N(m)$.
\end{proof}

\begin{theorem}\label{thm:exact-cells}
For every $k\ge5$ and $n\ge5$, there is a gate in $\mathcal F_2(2,n-2)\cap\C_k(n)$ that is neither in $\C_{k-1}(n)$ nor generalized semi-Clifford.
\end{theorem}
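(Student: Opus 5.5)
The plan is to use the family $U_\xi$ of~\eqref{eq:fib-e6-family}, with $\xi$ a root of unity of order $2^\ell$ chosen according to $k$, and then pad to $n$ qubits. The padding step comes from Lemma~\ref{lem:basic-operations}(2): tensoring with identities on $n-5$ extra target qubits keeps the gate in $\mathcal F_2(2,n-2)$, keeps its exact level, and keeps it outside $\GSC$. So it suffices to treat $n=5$.

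\emph{Non-GSC.} By Theorem~\ref{lem:fib-sharp-small}(1), $U_\xi\in\GSC(5)$ exactly when there are Lagrangians $L,L'$ of $V_3$ with $S_xL=L'$ for the four block actions $S_{00}=I$, $S_{01}=B$, $S_{10}=A$, $S_{11}=AB$. The scalar $\xi$ does not change these actions. Taking $L'=L$ from $S_{00}=I$, we need a Lagrangian fixed by $A$ and by $B$, equivalently a Lagrangian invariant under the group generated by the transvections $t_{p_0},\dots,t_{p_5}$. These six transvections realize the $E_6$ Dynkin diagram, so they generate (a quotient of) the Weyl group $W(E_6)$, and one expects them to act on $V_3$ as all of $\Sp(6,2)$. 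Since $\Sp(6,2)$ is transitive on Lagrangians, no Lagrangian is fixed. To verify this I would check one of two things:
\begin{itemize}
\item that the group generated by $A$ and $B$ fixes no Lagrangian, using the $135$ Lagrangians;
\item more cheaply, that no Lagrangian is fixed simultaneously by $A$ and $B$, which can be done by hand from the Jordan forms or by a finite check.
\end{itemize}

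\emph{Level.} First I would show $U_1\in\C_5(5)$. The Pauli corrections on the targets are block-diagonal Cliffords of the form $\sum|ct\rangle\langle ct|\otimes P_{S_{ct}v}$ up to signs, so they are controlled Paulis (Clifford) times diagonal sign gates that are at most quadratic. The corrections at $X_c$ and $X_t$ are, up to Cliffords, controlled gates $\operatorname{diag}(I,C)$ with $C$ built from $\widehat A^{2}$, commutators such as $\widehat B^{-1}\widehat A^{-1}\widehat B\widehat A$, and similar words. Proposition~\ref{lem:controlled-clifford} converts membership of each such correction in $\C_4$ into the condition that a power $C^{4}$ is a Pauli. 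That reduces to the orders of the corresponding symplectic elements of $W(E_6)$ and a Pauli-phase check, which I would do as a finite computation.

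To get exact level $k$, I would take $\xi$ of order $2^{\ell}$ and write $(\xi\widehat A)^c$ as $\widehat A^c$ times $\Delta_\xi$ on the control qubit $c$. Lemma~\ref{lem:fib-phase} then applies, since $U_1$ maps $Z_c$ to $Z_c$, and gives $U_\xi\in\C_{\max(5,\ell)}(5)$. Choosing $\ell=k$ gives the upper bound $U_\xi\in\C_k(5)$.

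\emph{Exact level.} For the lower bound I would look at the correction of $U_\xi$ at $X_c$. It equals $X_c$ times $\operatorname{diag}(I,\xi^{\pm2}\cdot\text{Clifford})$-type controlled phases, and iterating corrections at $X_c$ peels off one factor of two from the order of $\xi$ each time, as in the proof of Lemma~\ref{lem:fib-phase}. So after $k-2$ steps the result is still not Pauli, hence $U_\xi\notin\C_{k-1}(5)$. Alternatively, the lower bound follows from the conductor bound in Proposition~\ref{lem:conductor-plane}, since the entries force $\operatorname{cond}(U_\xi)\ge k$ and hence $U_\xi\notin\C_{k-1}$. For $k=5$ with $\xi=1$, I would instead argue directly that the correction at $X_c$ is not third-level, again by Proposition~\ref{lem:controlled-clifford} and the order of the relevant Coxeter-type element.

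The main obstacle is the exact-level computation for $U_{E_6}$, namely showing it lies in $\C_5\setminus\C_4$. This requires controlling the scalar phases in the powers of the lifted products $\widehat A$ and $\widehat B$, not just their symplectic orders. I would handle it by exact computation over $\Z[i,1/2]$.
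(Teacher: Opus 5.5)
Your plan has the same structure as the paper's proof. You use the family $U_\xi$, pad with identities on extra targets, get the upper bound from Lemma~\ref{lem:fib-phase}, and reduce non-GSC via Theorem~\ref{lem:fib-sharp-small}(1) to a Lagrangian fixed by both $A$ and $B$. Membership $U_{E_6}\in\C_5(5)$ is left to an exact computation, as in the paper's certificate~(\ref{cert:e6-gate-level-five}). For $k=5$, $\xi=1$, you use the correction at $X_c$, Proposition~\ref{lem:controlled-clifford}, and the fact that $A$, a Coxeter element of the $D_5$ subdiagram, has order $8$.

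\textbf{Where you differ.}
\begin{itemize}
\item Your conductor argument for the lower bound is valid and simpler than the paper's Step~3. The entry $1$ of the $c=t=0$ block and a nonzero entry $\xi\alpha$, $\alpha\in\Q(i)$, of a $c=1$ block force $\xi\in K_r$ whenever $\lambda U_\xi$ is defined over $K_r$. Hence $\operatorname{cond}(U_\xi)\ge k$, while Proposition~\ref{lem:conductor-plane} gives $\operatorname{cond}\le k-1$ on $\C_{k-1}(5)$. This uses neither the order of $A$ nor the fact that unit-modulus elements of $\Z[i,1/2]$ lie in $\mu_4$, and it is unchanged by tensoring with $I$.
\item If you also took $\xi$ of order $32$ at $k=5$, Lemma~\ref{lem:fib-phase} would still give level $\max(5,5)=5$, and the order of $A$ would drop out entirely. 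The paper keeps $\xi=1$ so that the Gaussian gate $U_{E_6}$ itself is the level-five example.
\item Your iteration variant, by contrast, never says why $\widehat A^{2^{k-3}}$ cannot absorb the order-$8$ phase $\xi^{2^{k-3}}$. Gaussianity of $\widehat A$ is what settles this.
\item For non-GSC, the paper replaces your search over $135$ Lagrangians with one observation. Since $\chi_{A^2B}(z)=z^6+z^3+1$ is irreducible over $\F_2$, $\langle A,B\rangle$ has no proper nonzero invariant subspace at all.
\end{itemize}

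\textbf{Side remarks to drop.} The steps you would actually carry out do not rely on these, but they are wrong.
\begin{itemize}
\item Being fixed by $A$ and $B$ is not equivalent to being fixed by $\langle t_{p_0},\dots,t_{p_5}\rangle$. Only $\langle A,B\rangle$ matters, so showing the larger group fixes no Lagrangian would prove nothing.
\item That larger group is not $\Sp(6,2)$. Each $t_{p_i}$ preserves the quadratic form $Q$ polarizing to $\omega$ with $Q(p_i)=1$, so the group is the image of $W(E_6)$, namely $O_6^-(2)$, of index $28$.
\item Your sketch of $U_{E_6}\in\C_5(5)$ is not a proof as it stands. Checking the corrections at $X_c$, $X_t$ and at target Paulis is not enough, since $\C_4(5)$ is not closed under products.
\item The target corrections are not Clifford in general. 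The coefficient of $ct$ in $A^cB^tv$ is $(I+A)(I+B)v$, which equals $p_1$ for $v=x_0+x_1$, so that correction contains a doubly controlled Pauli. Since you fall back on exact computation, as the paper does, this is harmless.
\item Lemma~\ref{lem:basic-operations}(2) only gives $U\otimes I\in\C_k$. To keep $U_{E_6}\otimes I$ out of $\C_4(n)$ you need the one-line induction showing that $V\otimes I\in\C_j$ implies $V\in\C_j$. For $k\ge6$ your conductor argument already covers this.
\end{itemize}
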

\begin{proof}
Let $\xi$ be chosen so that $\xi=1$ if $k=5$, and $\xi$ has order $2^k$ if $k\ge6$. We show that $U_\xi\in\C_k(5)\setminus\C_{k-1}(5)$ and $U_\xi\notin\GSC(5)$. Then $U_\xi\otimes I$, with $n-5$ further target qubits, proves the theorem: it lies in $\mathcal F_2(2,n-2)\cap\C_k(n)$ and is not generalized semi-Clifford by Lemma~\ref{lem:basic-operations}(2), and it is not in $\C_{k-1}(n)$ by the following fact, which we use twice. If $V\otimes I\in\C_j$, then $V\in\C_j$, by induction on $j$ from $(V\otimes I)(P\otimes I)(V\otimes I)^\dagger=VPV^\dagger\otimes I$; here the identity may act on any of the qubits.
\begin{itemize}
\item \step{Step 1:}{$U_{E_6}\in\C_5(5)$.} $U_{E_6}$ is an explicit $32\times32$ matrix with entries in $\mathbb Z[i,1/2]$, and membership in $\C_5(5)$ is decided by finitely many conjugations by Paulis and Clifford tests, recorded in the following fact.

\begin{finitefact}{cert:e6-gate-level-five}{The gate $U_{E_6}$ lies in $\C_5(5)$}{Theorem~\ref{thm:exact-cells}}
The gate $U_{E_6}$ of~\eqref{eq:fib-e6-family} lies in $\C_5(5)$.
\tcblower
\textbf{Code.} \certlink{e6-gate-level-five}
\end{finitefact}
This is \textup{(\ref{cert:e6-gate-level-five})}.

\item \step{Step 2:}{$U_\xi\in\C_k(5)$.} Since $U_\xi=U_{E_6}\operatorname{diag}(1,\xi)_c=\xi\,U_{E_6}\Delta_{\bar\xi}$ and $U_{E_6}$ commutes with $Z_c$, Lemma~\ref{lem:fib-phase} with $a=5$ gives $U_\xi\in\C_{\max(5,k)}(5)=\C_k(5)$.

\item \step{Step 3:}{$U_\xi\notin\C_{k-1}(5)$.} The correction at $X_c$ is
\[
 U_\xi X_cU_\xi^\dagger=\sum_{t\in\F_2}\Bigl(|1t\rangle\langle0t|\otimes\xi\widehat A+|0t\rangle\langle1t|\otimes(\xi\widehat A)^{-1}\Bigr).
\]
It acts as the identity on $t$, and on $c$ and the targets as $(X\otimes\xi\widehat A)\operatorname{diag}\bigl(I,(\xi\widehat A)^{-2}\bigr)$. If $U_\xi\in\C_{k-1}(5)$, then this gate lies in $\C_{k-2}(4)$, so $\operatorname{diag}(I,(\xi\widehat A)^{-2})\in\C_{k-2}(4)$ and $(\xi\widehat A)^{2^{k-3}}\in\mu_4\mathcal P(3)$ by Proposition~\ref{lem:controlled-clifford}. The matrices $A$ and $B$ are explicit $6\times6$ products of transvections over $\F_2$, and computing with them gives the following.

\begin{finitefact}{cert:e6-root-words}{Root words for $E_6$}{Theorem~\ref{thm:exact-cells}}
Let $A$ and $B$ be as in Appendix~\ref{subsec:orthogonal-fibers}. Then $A$ has order $8$ and $\chi_{A^2B}(z)=z^6+z^3+1$.
\tcblower
\textbf{Code.} \certlink{e6-root-words}
\end{finitefact}
By \textup{(\ref{cert:e6-root-words})}, $A$ has order $8$. For $k=5$, this contradicts $\widehat A^4\in\mu_4\mathcal P(3)$. For $k\ge6$, $\widehat A^8$ is a Pauli up to a unit-modulus element of $\mathbb Z[i,1/2]$, which lies in $\mu_4$. Hence $\widehat A^{2^{k-3}}\in\mu_4\mathcal P(3)$, and then $\xi^{2^{k-3}}\in\mu_4$, although it has order $8$.

\item \step{Step 4:}{$U_\xi\notin\GSC(5)$.} The blocks of $U_\xi$ have the symplectic actions $A^cB^t$. If $U_\xi$ were generalized semi-Clifford, Theorem~\ref{lem:fib-sharp-small}(1) would give Lagrangians $L,L'$ of $V_3$ with $A^cB^tL=L'$ for all $c,t$. Then $c=t=0$ gives $L'=L$, and $AL=BL=L$. But by \textup{(\ref{cert:e6-root-words})}, $A^2B$ has characteristic polynomial $z^6+z^3+1$, the ninth cyclotomic polynomial, which is irreducible over $\F_2$ because $2$ has order $6$ modulo $9$. The characteristic polynomial of $A^2B$ on a subspace invariant under $A$ and $B$ divides it, so such a subspace is $0$ or $V_3$.
\end{itemize}
\end{proof}

\end{document}